\documentclass[12pt]{article}
\usepackage{mheck}

\usepackage{blkarray}

\usepackage{tikz} 
\usetikzlibrary{matrix}

\theoremstyle{theorem}
\newtheorem{fact}{Fact}
\newtheorem{corl}{Corollary}
\newtheorem{prop}{Proposition}

\newtheorem*{lem}{Lemma}
\newtheorem*{php}{Physical expectation}

\theoremstyle{definition}
\newtheorem{defn}{Definition}
\newtheorem{rem}{Remark}
\newtheorem{exe}{Example}
\newtheorem{ass}{Assumption}

\date{December, 2016}



\title{Homological $S$--Duality in 4d $\cn=2$ QFTs
}

%


\authors{Matteo Caorsi\footnote{e-mail: {\tt matteocao@gmail.com}} and Sergio Cecotti\footnote{e-mail: {\tt cecotti@sissa.it}}\vskip 9pt

\centerline{SISSA, via Bonomea 265, I-34100 Trieste, ITALY}}

\abstract{The $S$--duality group $\mathbb{S}(\cf)$ of a 4d $\cn=2$ supersymmetric theory $\cf$ is identified with 
the group of triangle equivalences of its cluster category $\mathscr{C}(\cf)$ modulo the subgroup acting trivially on the physical quantities. $\mathbb{S}(\cf)$ is a discrete group commensurable to a subgroup of the Siegel modular group $Sp(2g,\Z)$ ($g$ being the dimension of the Coulomb branch).  This identification reduces the determination of the $S$-duality group of a given $\cn=2$ theory to a problem in homological algebra.
In this paper we describe the techniques which make the computation straightforward for a large class of $\cn=2$ QFTs. The group $\mathbb{S}(\cf)$ is naturally presented as a generalized braid group.

The $S$-duality groups are often larger than expected. In some models the  enhancement of $S$-duality is quite spectacular. For instance, a QFT with a huge $S$-duality group is the Lagrangian SCFT with gauge group $SO(8)\times SO(5)^3\times SO(3)^6$ and half-hypermultiplets in the bi- and tri-spinor representations.

We focus on four families of examples: the $\cn=2$ SCFTs of the form $(G,G^\prime)$, $D_p(G)$, and $E_r^{(1,1)}(G)$, as well as the asymptotically-free theories $(G,\widehat{H})$ (which contain $\cn=2$ SQCD as a special case).
For the $E_r^{(1,1)}(G)$ models we confirm the presence of the $PSL(2,\Z)$ $S$-duality group predicted by Del Zotto, Vafa and Xie, but for most models in this class $S$-duality gets enhanced to a larger group.}

\begin{document}

\maketitle

\newpage

\tableofcontents

\section{Introduction}

Dualities between quantum theories (with enough supersymmetry) are most conveniently understood as 
exact equivalences between the linear triangle categories which describe their  BPS objects \cite{book}. For instance, mirror symmetry is best described as the equivalence of the bounded derived category of coherent sheaves on one manifold $X$ and the bounded derived Fukaya category of its mirror space $X^\vee$ (homological mirror symmetry \cite{homokon}). $S$-duality of 
a 4d $\cn=2$ QFT, being an internal duality, is described by the auto-equivalences of a single BPS category rather than by the comparison of two \emph{a priori} different categories as in mirror symmetry. The basic example is the $SL(2,\Z)$ duality of $\cn=4$ SYM: through its relation to $T$-duality \cite{Vafa:1997mh}, it gets identified with the  group of auto-equivalences of the derived category of coherent sheaves over an elliptic curve, which has an explicit realization in terms of Fourier-Mukai transforms \cite{FM}.  

For a general $\cn=2$ QFT, it is natural to \emph{define} the group $\mathbb{S}$ of (generalized) 
$S$-dualities as the group $\mathrm{Aut}(\mathscr{C})$ of triangle auto-equivalences of the category $\mathscr{C}$ describing its BPS objects,
modulo the physically trivial ones
\be
\mathbb{S}=\mathrm{Aut}(\mathscr{C})/\text{(physically trivial)}.
\ee
The Kontsevich-Soibelman
wall-crossing formula \cite{kswcf}, and related physical arguments
\cite{Gaiotto:2008cd,Gaiotto:2009hg,framed,Cecotti:2010fi}, show that the appropriate triangle category $\mathscr{C}$ to describe the BPS sector of a $\cn=2$ QFT is the \textit{cluster category}\footnote{\ For nice introductions to cluster categories (and algebras) see refs.\cite{clustercat,clustercat2,kellerrev}.} associated to the mutation class of its BPS quivers \cite{Alim:2011kw}.
The categorical viewpoint reduces the problem of determining the dualities of a given 4d $\cn=2$ theory to a well-posed mathematical problem, which may be tackled with standard methods of homological algebra. 

In this paper we lay down a general framework for homological $S$-duality,
developing ideas and techniques which allow to determine the $S$-duality group $\mathbb{S}$ very explicitly in a large class of rather complicated $\cn=2$ models. The homological  viewpoint leads to a presentation of $\mathbb{S}$ in the form of a higher braid group. As expected, modulo commensurability,
 $\mathbb{S}$ is an arithmetic subgroup of $Sp(2g,\R)$,
where $g$ is the dimension of the Coulomb branch.


In this paper homological $S$-duality is worked out in detail for the large class of 4d $\cn=2$ models  (superconformal or asymptotically-free) which, under the 4d/2d correspondence of ref.\!\cite{Cecotti:2010fi}, are related to well-behaved 2d (2,2) theories. However the idea of homological $S$-duality is more general, and some of our examples
actually do not belong to this  class. 
The advantage of having a nice 2d $(2,2)$ correspondent, is that the
4d BPS category $\mathscr{C}$ may be constructed as the cluster category of the 2d brane category
$\mathscr{B}$ which in many cases is  well understood \cite{orlov1,orlov2,orlov3}.

The class of 4d $\cn=2$ theories whose category
$\mathscr{C}$ we can read from 2d contains, for instance,
$\cn=2$ SQCD 
 with simply-laced gauge groups and matter in representations which are ``nice'' in the sense of \cite{Tachikawa:2011yr}, quiver gauge theories, SCFTs engineered by polynomial singularities (such as the Arnold ones \cite{arnold}),  \emph{etc.} To keep the paper of finite length, we shall focus mainly on four groups of $\cn=2$ models:
 \begin{itemize}
 \item[a)] The $(G,G^\prime)$ SCFTs \cite{Cecotti:2010fi} labelled by two
 simply-laced Dynkin graphs\footnote{ In this paper we abuse notations, and use the same symbol $G$ to denote the Dynkin graph, the corresponding Lie algebra, and its (simply-connected) Lie group.} 
 $G,G^\prime\in ADE$;
  \item[b)] the $(\widehat{H},G)$ asymptotically-free theories \cite{Cecotti:2012jx,Cecotti:2013lda} labelled by a
 simply-laced Dynkin graph $G\in ADE$ and a mutation class of acyclic affine quivers
 \begin{equation}
 \widehat{H}= \widehat{A}(p_1,p_2)\ (p_1\geq p_2\geq 1),\ \text{or }\widehat{D}_r,\ (r\geq 4),\ \text{or } \widehat{E}_6,\ \widehat{E}_7,\ \widehat{E}_8. 
 \end{equation}
In these theories, $G$ is always a factor of the gauge group, and the $\beta$-function of the corresponding gauge coupling is strictly negative; 
 \item[c)] the $D_p(G)$ SCFTs \cite{Cecotti:2012jx,Cecotti:2013lda} labelled by $G\in ADE$ and the period\footnote{\ It is convenient to extend the definition to $p=1$ by declaring $D_1(G)$ to be the empty SCFT for all $G$.} $p\geq 2$. The flavor group of $D_p(G)$ contains $G$ as a subgroup;
 \item[d)] the DZVX models \cite{DelZotto:2015rca,Cecotti:2013lda} labelled by one of the four affine stars ($\widehat{D}_4$, $\widehat{E}_6$, $\widehat{E}_7$, or $\widehat{E}_8$) and a Lie algebra $G\in ADE$.
 Again $G$ is a factor of the gauge group, but now its gauge coupling is exactly marginal.
 \end{itemize}
Many SQCD models and quiver gauge theories are recovered as special cases of b), c), and d) \cite{DelZotto:2015rca,Cecotti:2013lda}. Our methods may be  extended to other classes of theories.

Models b) and d) have the physical interpretation of $\cn=2$ SYM with group $G$ gauging the diagonal $G$-symmetry of a number of $D_{p_i}(G)$
SCFTs (see table \ref{mmmatt}). The interplay between the dualities of the sub-constituents and of the full theory allows to perform many crossed checks between the various models.

\begin{table}
\begin{center}
\begin{tabular}{cc||cc}\hline\hline
model & matter sector & model & matter sector
\\\hline
$(\widehat{A}(p_1,p_2),G)$ &
$D_{p_1}(G)$, $D_{p_2}(G)$ &
$D^{(1,1)}_4(G)$ &
$D_2(G)$, $D_2(G)$, $D_2(G)$, $D_2(G)$\\
$(\widehat{D}_r,G)$ &
$D_{r-2}(G)$, $D_2(G)$, $D_2(G)$
&$E^{(1,1)}_6(G)$ &$D_3(G)$, $D_3(G)$, $D_3(G)$\\
$(\widehat{E}_r,G)$ &
$D_{r-3}(G)$, $D_3(G)$, $D_2(G)$ &
$E^{(1,1)}_7(G)$ &$D_4(G)$,
$D_4(G)$, $D_2(G)$\\
&& $E^{(1,1)}_8(G)$ &$D_6(G)$,
$D_3(G)$, $D_2(G)$\\\hline\hline
\end{tabular}
\caption{\label{mmmatt} \textsc{Left:} the QFT in item b); \textsc{Right:} the SCFT in item c). All theories are written as $G$ SYM coupled to a ``matter'' sector. The subscript $p$ of $D_p(G)$ is called the \emph{period} of the sub-constituent.
By the \emph{set of periods} of a QFT of class b) or d) we mean the list $\{p_1,p_2,\cdots,p_s\}$ of the periods of its ``matter'' sub-systems. }
\end{center}
\end{table}

The homological approach to $S$-duality has been used in ref.\!\cite{shepard} to address the question of the action of the $S$-duality group
$SL(2,\Z)$ on the observables of the four elliptic (\emph{a.k.a.}\! tubular) complete $\cn=2$ SCFT\footnote{ The elliptic complete $\cn=2$ SCFTs \eqref{completetubular} are in one-to-one correspondence with the affine Dynkin graphs which are also stars
\cite{Cecotti:2011rv}.} 
\be\label{completetubular}
D^{(1,1)}_4,\quad E^{(1,1)}_6,\quad
E^{(1,1)}_7,\quad E^{(1,1)}_8,
\ee
 building on \cite{Cecotti:2012va} and the mathematical literature \cite{ringelDER,lenzingmeltzer,meltzer,BKL} on the relevant categories  (i.e.\! the cluster categories of the four tubular canonical algebras). The family of SCFTs \eqref{completetubular} has been generalized by DZVX in ref.\!\cite{DelZotto:2015rca}\footnote{\ For previous work see \cite{Cecotti:2012jx,Cecotti:2013lda}.} : the DZVX models are labelled by one of the four affine stars \eqref{completetubular} together with a simply--laced Lie algebra $G$,
\begin{equation}\label{delvafa}
D^{(1,1)}_4(G),\quad E^{(1,1)}_6(G),\quad
E^{(1,1)}_7(G), \quad E^{(1,1)}_8(G),\qquad G\in ADE.
\end{equation}  
For $G=A_1$ they reduce to the  complete SCFTs \eqref{completetubular}. In ref.\!\cite{DelZotto:2015rca} the SCFTs \eqref{delvafa} were geometrically engineered in $F$--theory:
 the geometry contains an elliptic curve, and it was predicted that
all these SCFTs have (at least)
a $SL(2,\Z)$ group of $S$--dualities which however should act on the physical observables in a counter-intuitive way. The initial motivation of the present work was to check the prediction of \cite{DelZotto:2015rca}, understand the action of $SL(2,\Z)$ on the observables, and discuss the possible enhancements of $S$-duality to a group strictly bigger than $SL(2,\Z)$. 

A simple example may illustrate why
the action of the $S$-duality group
in these models looks rather puzzling. Consider the simplest sequence of such theories,
the $D^{(1,1)}_4(A_{2N-1})$ ones ($N\in\mathbb{N}$). They are the quiver gauge theories in figure \ref{figquiv}; 
in some corner of their parameter spaces they have a weakly coupled Lagrangian formulation.
For $N=1$ the four bifundamentals reduce to fundamental, and we recover $SU(2)$ SQCD with $N_f=4$
which is known\footnote{ See also \cite{shepard} for the corresponding homological analysis.} \cite{SW2} to have a $SL(2,\Z)$ $S$-duality which acts on the $SO(8)$ flavor weights by $Spin(8)$ triality; only the congruence subgroup $\Gamma(2)\subset SL(2,\Z)$ commutes with the flavor. Going to higher $N$'s,
 the $SU(2)$ electric/magnetic
 charges get replaced by the $SU(2N)$ electric/magnetic charges of the central node, while the role of the flavor charges are played by the four $U(1)$ flavor charges of the bifundamentals together with the electric/magnetic charges of the peripheral $SU(N)$ gauge groups. It is expected \cite{DelZotto:2015rca} that we still have a $SL(2,\Z)$
 group rotating the electric and magnetic charges of the central node as before, but then $SL(2,\Z)$ has to act by $Spin(8)$ on all other charges, which means that gauge charges of the distinct peripheral gauge groups should mix together. This conclusion is quite counter-intuitive from the weak-coupling physics, but seems forced on us from geometric engineering. Of course, there is no contradiction, since $S$-duality is quite a strong-coupling property, yet the picture deserves a more detailed analysis which
 may be performed in the homological approach after having developed all the necessary tools.

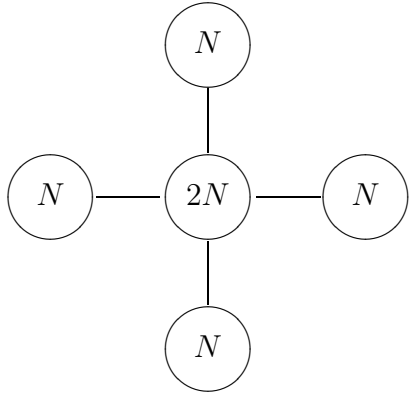
\begin{figure}
$$
\xymatrix{& *++[o][F-]{\,\phantom{\Big|}N
\phantom{\Big|}\,}\ar@{-}[d]\\
*++[o][F-]{\,\phantom{\Big|}N
\phantom{\Big|}\,}
\ar@{-}[r]&
*++[o][F-]{\phantom{\Big|}2N
\phantom{\Big|}}\ar@{-}[r]&
*++[o][F-]{\,\phantom{\Big|}N
\phantom{\Big|}\,}\\
& *++[o][F-]{\,\phantom{\Big|}N
\phantom{\Big|}\,}\ar@{-}[u]
}
$$
\caption{\label{figquiv} The
$D^{(1,1)}(A_{2N-1})$ models as quiver gauge theories. A circle with a $N$ (resp.\! $2N$) stands for a $SU(N)$ (resp.\! $SU(2N)$) $\cn=2$ SYM sector and an edge connecting two circles to a $SU(2N)\times SU(N)$ bifundamental hypermultiplet.}
\end{figure}

%
%
%

\subsection{Notations/definitions/conventions} 
We systematically abuse notation, and use the same symbol $G$ to denote a Dynkin graph, the corresponding Lie algebra, and its Lie group. The same symbol also denotes a generic quiver obtained by orienting the graph $G$, except when we choose a reference orientation, in which case we write $\vec G$ for the chosen orientation.
If $\psi$ is an arrow in a quiver $Q$, we write $s(\psi)$ (resp.\! $t(\psi)$) for its source (resp.\! target) node.
$h(G)$ and $r(G)$ denote the Coxeter number and (respectively) the rank of $G$.

\begin{ass} In this paper, \emph{all} categories are assumed to be $\C$-linear with finite-dimensional Hom/Ext spaces and split idempotents (hence Krull-Schimdt).
In particular, all algebras and modules are finite-dimensional and defined over $\C$. \end{ass}

The identity functor is written $\mathrm{Id}$. In all triangle categories, we write $[1]$ for the suspension functor and $[m]$ for its $m$-fold iteration. We write $\mathsf{vect}$ for the semi-simple category of finite-dimensional vector $\C$-spaces and linear maps.

\begin{defn} For the benefit of the reader, we recall here some standard definitions we shall use throughly:
\begin{itemize}
\item[a)] If $T$ is an object (or, more generally, a class of objects) in a category $\cx$ (satisfying our general assumptions) we write $\mathsf{add}\,T$ for its additive closure, i.e.\! the full subcategory of $\cx$ formed by direct summands of direct sums of copies of $T$. 
\item[b)] If $\cx$ is an Abelian (resp.\! a triangle) category, its Grothendieck group $K_0(\cx)$ is the free Abelian group over the isoclasses $[X]$ of its objects $X$ modulo the relations $[Z]=[X]+[Y]$ whenever $X\to Z\to Y$ is a short exact sequence (resp.\! a distinguished triangle). 
\item[c)] An Abelian category $\ch$ is called \emph{hereditary} iff $\mathrm{Ext}^k(X,Y)=0$ for all $k\geq 2$ and all objects $X$, $Y\in\ch$. 
\item[d)] A triangle category is \textit{$n$-periodic} ($n\in\mathbb{N}$) if, for all objects $X$, $X[n]\simeq X$.
\item[e)] A triangulated category $\ct$
 has \emph{Serre duality} if there is an exact functor $S\colon \ct\to\ct$
such that we have the bi-functorial isomorphism
\be\label{Seerre}
\mathrm{Hom}_\ct(X,Y)\simeq D\,\mathrm{Hom}_\ct(Y, SX), \qquad \forall\; X,Y \in \ct.
\ee
where $D(-)\equiv\mathrm{Hom}_\mathsf{vect}(-,\C)$ is the standard duality in the category of $\C$-spaces.
When it exists, $S$ is unique (up to natural isomorphism).
\item[f)]
 A triangulated category $\ct$ with Serre functor $S$ is said to be
 \emph{$n$-Calabi-Yau} ($n\in\mathbb{N})$
 iff $S\simeq [n]$, that is,
 \be
 \mathrm{Hom}_\ct(X,Y)\simeq D\,\mathrm{Hom}_\ct(Y,X[n]).
 \ee 
 \item[g)] an object $X\in\cx$ is a \emph{brick} iff $\mathrm{End}_\cx X=\C$. 
 \end{itemize}\end{defn}
  
%
%

\paragraph{Organization of the paper.} The rest of this paper is organized as follows. In section 2 we discuss the relation between cluster categories and $S$-duality groups. In section 3 we describe the 4d/2d correspondence of \cite{Cecotti:2010fi}  in the categorical language, introduce the root category of the 2d model, and relate the auto-equivalence group of the 2d root category to the auto-equivalence group of the 4d cluster category. In section 4 we discuss the auto-equivalences of triangle categories, introducing the Thomas-Seidel twists, the telescopic functors, and their braid relations. In section 5 we study two interesting classes of models whose $S$-duality groups have simple descriptions, namely the $(G,G^\prime)$ SCFTs \cite{Cecotti:2010fi}  and the $(G,\widehat{H})$ QFTs
\cite{Cecotti:2012jx,Cecotti:2013lda}. In section 6 we discuss the relation of the duality group of the fully interacting theory with the ones for its decoupled constituents.
In section 7 we introduce a more general framework which allows to study the homological $S$-duality of  the DZVX models. In section 8 we give  some additional
details on special models.
Some side material is presented in the appendices.
   

\section{Cluster categories and
$S$-duality groups}

\subsection{BPS objects vs.\! cluster categories}\label{bpsvscluster}

The BPS objects of a 4d $\cn=2$ QFT $\cf$ are encoded in its cluster-category
$\mathscr{C}(\cf)$. The cluster-categorical framework automatically incorporates the Kontsevich-Soibelman
wall-crossing formula \cite{kswcf}, and hence is the right language to formulate the BPS spectrum problem in an intrinsic and global way over the full deformation space of the theory $\cf$, that is, independently of the duality-frame and the particular BPS chamber. 
To fix the ideas, we review how the connection
\be
\mathscr{C}(\cf)\longleftrightarrow \text{(the BPS sector of $\cf$)}
\ee
works and, in particular, how we read the BPS spectrum along the Coulomb branch from $\mathscr{C}(\cf)$, referring to the existing literature \cite{Cecotti:2010fi,Alim:2011kw,bonn} for further details.
\medskip

In general terms, a cluster category is a gadget of the following form:  

 \begin{defn}\label{whatclucat} (see e.g.\! \cite{palu1})
 A triangle category\footnote{\ Recall that in this paper ``category'' always stands for ``$\C$-linear category with finite-dimensional Hom spaces and split idempotents''. } $\mathscr{C}$ is called a \emph{cluster category} iff it is 2-Calabi-Yau, and admits a \emph{cluster-tilting} object $\ct$, that is, an object such that:
 \begin{itemize}
 \item[\textit{i)}]$\mathrm{Hom}_\mathscr{C}(\ct,\ct[1])=0$,
  \item[\textit{ii)}] $\forall\,X\in\mathscr{C}$, $\mathrm{Hom}(X,\ct[1])=0$ $\Rightarrow$ $X\in\mathsf{add}\,\ct$.
  \end{itemize}
 \end{defn}

 \begin{fact} [\!\!\cite{clustercat,palu1}] If $\mathscr{C}$ is a cluster category and $\ct$ a cluster-tilting object, we have an equivalence of categories
 \be\label{uttrsd}
 \boldsymbol{J}_\ct\colon \mathscr{C}/\langle\mathsf{add}\,\ct[1]\rangle\xrightarrow{\;\sim\;} \mathsf{mod}\,\mathrm{End}_\mathscr{C}(\ct),\qquad 
 X\mapsto \mathrm{Hom}_\mathscr{C}(\ct,X).
 \ee
 where $\langle\mathsf{add}\,\ct[1]\rangle$ denotes the ideal of morphisms which factor through objects in $\mathsf{add}\,\ct[1]$. 
 \end{fact}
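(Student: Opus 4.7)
My proof plan follows the classical cluster-categorical reduction of Buan--Marsh--Reiten--Reineke--Todorov and Keller--Reiten. First, I would verify that the functor $\mathrm{Hom}_\mathscr{C}(\ct,-)$ does descend to the quotient: if $f\colon X\to Y$ factors through some $T[1]\in\mathsf{add}\,\ct[1]$, then the induced map on $\mathrm{Hom}_\mathscr{C}(\ct,-)$ factors through $\mathrm{Hom}_\mathscr{C}(\ct,T[1])$, which vanishes by axiom \textit{i)}. The same axiom, combined with Yoneda, gives an equivalence $\mathsf{add}\,\ct\xrightarrow{\sim}\mathsf{proj}\,\mathrm{End}_\mathscr{C}(\ct)$ that I would invoke throughout the rest of the argument to translate lifting problems from modules to objects of $\mathscr{C}$.

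The technical heart of the proof is the construction, for every $X\in\mathscr{C}$, of a \emph{cluster-tilting triangle}
\be
T_1^X \xrightarrow{\,a\,} T_0^X \xrightarrow{\,b\,} X \xrightarrow{\,c\,} T_1^X[1], \qquad T_i^X\in\mathsf{add}\,\ct,
\ee
in which $b$ is a right $\mathsf{add}\,\ct$-approximation of $X$ (so that $\mathrm{Hom}_\mathscr{C}(\ct,b)$ is surjective). The approximation $b$ is produced by sending a generating set of the finitely generated $\mathrm{End}_\mathscr{C}(\ct)$-module $\mathrm{Hom}_\mathscr{C}(\ct,X)$ into $X$. The non-trivial point is that the fibre $T_1^X$ again lies in $\mathsf{add}\,\ct$: to establish this, apply $\mathrm{Hom}_\mathscr{C}(-,\ct[1])$ to the triangle and use the 2-CY identification $\mathrm{Hom}_\mathscr{C}(X,\ct[2])\simeq D\,\mathrm{Hom}_\mathscr{C}(\ct,X)$ together with the surjectivity of $b$ to conclude $\mathrm{Hom}_\mathscr{C}(T_1^X,\ct[1])=0$; axiom \textit{ii)} then forces $T_1^X\in\mathsf{add}\,\ct$.

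With this triangle in hand, essential surjectivity of $\boldsymbol{J}_\ct$ is immediate. Given $M\in\mathsf{mod}\,\mathrm{End}_\mathscr{C}(\ct)$, choose a projective presentation $P_1\xrightarrow{\alpha}P_0\to M\to 0$, use the Yoneda equivalence of Step 1 to lift $\alpha$ to a morphism $T_1\to T_0$ in $\mathsf{add}\,\ct$, complete to a triangle $T_1\to T_0\to X\to T_1[1]$, and apply $\mathrm{Hom}_\mathscr{C}(\ct,-)$: since $\mathrm{Hom}_\mathscr{C}(\ct,T_1[1])=0$, the long exact sequence collapses to give $\mathrm{Hom}_\mathscr{C}(\ct,X)\simeq M$. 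For fullness and faithfulness, given $\varphi\colon\mathrm{Hom}_\mathscr{C}(\ct,X)\to\mathrm{Hom}_\mathscr{C}(\ct,Y)$ I would lift the induced maps on projective covers $P_0^X\to P_0^Y$ and on the kernels $P_1^X\to P_1^Y$ via Yoneda to a morphism of cluster-tilting triangles, producing a third arrow $X\to Y$ realising $\varphi$. The octahedral axiom applied to the cluster-tilting triangle of $X$ together with axiom \textit{i)} identifies the ambiguity in the lift, as well as the kernel of $\boldsymbol{J}_\ct$ on $\mathrm{Hom}_\mathscr{C}(X,Y)$, with the subspace of morphisms that factor through an object of $\mathsf{add}\,\ct[1]$, proving both fullness and faithfulness.

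The main obstacle is Step 2: proving that the cone $T_1^X$ of the chosen $\mathsf{add}\,\ct$-approximation automatically sits in $\mathsf{add}\,\ct$. This is precisely the point at which both defining properties of a cluster-tilting object and the 2-Calabi--Yau property are used simultaneously; once it is in place, the remaining steps reduce to routine diagram chases with long exact Hom sequences and the Yoneda equivalence on $\mathsf{add}\,\ct$.
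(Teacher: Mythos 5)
The paper does not prove this Fact; it is quoted verbatim from the cited literature (Buan--Marsh--Reiten--Reineke--Todorov, Keller--Reiten, Palu), and your sketch is precisely the standard argument from those sources, with the key step (the cone $T_1^X$ of a right $\mathsf{add}\,\ct$-approximation lies in $\mathsf{add}\,\ct$, via Serre duality and axiom \textit{ii)}) correctly identified and correctly carried out. The only cosmetic remark is that faithfulness needs no octahedron: if $\boldsymbol{J}_\ct(f)=0$ then $f\circ b=0$ already by the Yoneda equivalence on $\mathsf{add}\,\ct$, so $f$ factors through $T_1^X[1]\in\mathsf{add}\,\ct[1]$ directly from the approximation triangle.
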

 
The module category $\mathsf{mod}\,\mathrm{End}_\mathscr{C}(\ct)$ comes with a skew-symmetric integral form
\cite{palu1}
   \begin{gather}
  \langle X,Y\rangle_D=
 -\langle Y, X\rangle_D\in\Z, \qquad X,Y\in\mathsf{mod}\,\mathrm{End}_\mathscr{C}(\ct),\\
 \langle X, Y\rangle_D\equiv\dim \mathrm{Hom}(X,Y)-\dim \mathrm{Ext}^1(X,Y)-
\dim \mathrm{Hom}(Y,X)+\dim \mathrm{Ext}^1(Y,X)
 \end{gather}
  which has the physical interpretation of the Dirac electro-magnetic pairing between the BPS objects corresponding to the two objects $X$, $Y$ \cite{Alim:2011kw}. 
The Dirac form $\langle-,-\rangle_D$ is well-defined on the Grothendieck group $K_0(\mathsf{mod}\,\,\mathrm{End}_\mathscr{C}(\ct))$ \cite{palu1}. Objects are \emph{mutually local} if their Dirac pairing is zero.
\medskip

To recover the BPS spectrum of $\cf$  from $\mathscr{C}(\cf)$, one works through the following steps.

\subparagraph{Step 1.} One chooses a tilting object $\ct$ which is appropriate for the physical regime we are interested in, and consider the algebra
$\cb_\ct\equiv \mathrm{End}_{\mathscr{C}(\cf)}\,\ct$ (which we assume to be finite-dimensional, this being guaranteed in the physical context). Let $S_i$, $i=1,2,\dots,r$ be the simple modules of $\cb_\ct$. Construct the 2-acyclic\footnote{\ A quiver is \emph{2-acyclic} iff there are no loops (arrows with starting and ending at the same node) nor pairs of opposite arrows $i\leftrightarrows j$. } quiver $Q_\ct$ whose nodes are in one-to-one correspondence with the $S_i$'s and connect nodes $i$ and $j$ by a signed\footnote{\ That is, a negative number of arrows means arrows in the opposite direction.} number of arrows from $i$ to $j$ equal to 
\be\label{lllashtb}
\#\text{(arrows from $i$ to $j$)}\equiv\langle S_i,S_j\rangle_D.
\ee
The algebra $\cb_\ct$ is then the path algebra $\C Q_\ct$ bounded by the ideal $(\partial\cw_\ct)$ generated by the cyclic derivatives of a certain non-degenerate superpotential\footnote{\ In this abstract sense, a superpotential $\cw_\ct$ is a  complex linear combination of closed oriented cycles in $Q_\ct$. In physical terms, this translates in the statement that $\cw_\ct$ is a gauge-invariant, single-trace, 1d chiral operator.} $\cw_\ct$
\cite{DWZ}. We call this algebra the \emph{Jacobian algebra}
$\mathsf{Jac}(Q_\ct,\cw_\ct)$ of the quiver with superpotential $(Q_\ct,\cw_\ct)$.  The Grothendieck group
$K_0(\mathsf{mod}\,\,\mathrm{End}_\mathscr{C}(\ct))\equiv\Gamma$ gets identified with the lattice generated by the isoclasses of the $S_i$'s, $\Gamma\equiv \bigoplus_{i=1}^r \Z [S_i]$. The \emph{positive cone} $\Gamma_+\subset \Gamma$ of actual modules
is $\Gamma_+=\bigoplus_{i=1}^r\Z_{\geq0}[S_i]$. If $X$ is a module, and
$[X]=\sum_{i=1}^r x_i[S_i]$, the non-negative integer $x_i$ is called the \textit{dimension of $X$ at the $i$-th node,}
and $\boldsymbol{x}\equiv(x_1,\dots,x_r)$ its \textit{dimension vector.} As we shall see momentarily, $\Gamma$ is the lattice of conserved charges\footnote{\ More precisely: the conserved charges of an effective IR description of $\cf$ which is specified by the choice of $\ct$). See discussion below.} of $\cf$, and we shall use the terms Grothendieck class, dimension vector, and charge vector interchangeably.

\subparagraph{Step 2.} One needs to introduce a further datum which specifies the values of the various parameters of the physical theory: 
couplings, masses, and the specific vacuum in the Coulomb branch we are considering. These data are encoded in the central charge of the $\cn=2$ \textsc{susy} algebra, $Z$, which is a linear combination of the IR internal charges with complex coefficients which depend on the physical parameters. 
In the categoric language $Z$ becomes the 
 stability function: a group homomorphism 
\be
Z\colon \Gamma\equiv K_0(\mathsf{mod}\,\mathsf{Jac}(Q_\ct,\cw_\ct))\to \C
\ee
 such that the positive cone $\Gamma_+\subset \Gamma$ is mapped in the upper half-plane \cite{stab}. Then we have a well-defined map $\arg Z\colon\Gamma_+\to [0,\pi)$. A Jacobian module
$X\in\mathsf{mod}\,\mathsf{Jac}(Q_\ct,\cw_\ct)$ is \emph{stable} iff for all proper submodules $Y$ one has
\be\label{ssstableee}
\arg Z(Y)<\arg Z(X).
\ee
We note that a stable object is necessarily a brick.

\begin{rem}\label{ooooxb} Let $\ct=\bigoplus_{i=1}^r T_i$ with $T_i$ indecomposable and pairwise non-isomorphic. Since $\Gamma\otimes \C\simeq \bigoplus_{i=1}^r \C [T_i]$, 
to specify $Z(-)$ it is enough to give the $r$ complex numbers $Z([T_i])$.\end{rem}

\subparagraph{Step 3.} The data $(\mathscr{C}(\cf),\ct,Z)$
 define a 1d quantum system with 4 supercharges\footnote{\ We may think of this class of 1d \textsc{susy} theories as obtained by dimensional reduction of 4d $\cn=1$ \textsc{susy}. We use the 4d language in describing the various couplings.} for each
$\boldsymbol{x}\in \Gamma_+$ \cite{Alim:2011kw,halo}, namely the 1d quiver gauge theory over $Q_\ct$ having gauge group
$\prod_{i=1}^r U(x_i)$, with one bi-fundamental Higgs chiral superfield $\phi_\alpha$ in the representation $(\boldsymbol{\overline{x}_i},\boldsymbol{x}_j)_{-1,1}$ of $U(x_i)\times U(x_j)$ per arrow $i\xrightarrow{\alpha} j$ of $Q_\ct$, and the gauge invariant superpotential $\cw_\ct$. $Z$ specifies all other couplings, such as the FI terms of the Abelian gauge groups \cite{halo}. This 1d system is physically interpreted as the world-line description of a 4d particle.
A state of the particle is BPS in the 4d sense iff it preserves 4 supercharges;
from the 1d perspective, a state preserving 4 supersymmetries is a \textsc{susy} vacuum. Hence we have one 4d BPS ultra-short supermultiplet with charges $\boldsymbol{x}$ per vacuum of the corresponding 1d system. To get the 1d quantum vacua one proceeds through two steps: first determines the manifold $\cm_{\boldsymbol{x}}$ of the classical vacua, and then quantizes the 1d \text{susy} $\sigma$-model with target space $\cm_{\boldsymbol{x}}$. 
Classical vacua are determined by solving the $F$-condition, $\partial \cw_\ct=0$, and the $D$-condition
\cite{Alim:2011kw,king}\footnote{\ In eqn.\eqref{uuuujjkka} $e_i\in \mathsf{Jac}(Q_\ct,\cw_\ct)$ is the minimal idempotent associated to the $i$--th node of $Q_\ct$.}
\be\label{uuuujjkka}
\sum_{t(\alpha)=i}\phi_\alpha\phi^\ast_\alpha-\sum_{s(\alpha)=i}\phi_\alpha^\ast\phi_\alpha =\theta_i\cdot e_i\quad \text{for all }i\in Q_\ct,
\ee
(with $\theta_i$ the FI term of the $i$--th $U(1)$ gauge factor).
A Higg field configuration (modulo gauge transformation) is then identified with an isoclass of representations of the quiver $Q_\ct$ which is actually a Jacobian module by the $F$-condition.
One shows that the $D$-condition is equivalent to the requirement that the module is stable in the sense of 
\eqref{ssstableee}. 

Had we started with a different cluster-tilting object $\ct^\prime\not\simeq \ct$,
we would get a different 1d model.
The two quantum models are however equivalent under 1d Seiberg duality \cite{seidu}. Therefore, a choice of $\ct$ is just a choice of 1d Seiberg duality-frame. This is the precise sense in which physics is independent of the choice of $\ct$. At the level of quivers with superpotential, Seiberg duality is mutation in the sense of DWZ \cite{DWZ}.

\subparagraph{Step 4.}
The stable objects of $\mathsf{mod}\,\mathsf{Jac}(Q_\ct,\cw_\ct)$
organize themselves into continuous families $\{\co_\lambda\}_{\lambda\in \cm_{\boldsymbol{x}}}$ (with $\co_{\lambda}\not\simeq \co_{\lambda^\prime}$ for $\lambda\neq\lambda^\prime$),  parametrized by irreducible complex
projective varieties $\cm_{\boldsymbol{x}}$ (so $\cm_{\boldsymbol{x}}$ are, in particular, compact K\"ahler). An object $\co$ is said to be \emph{rigid} iff its index variety $\cm_{\boldsymbol{x}}$ reduces to a point. 
\textbf{Step 3} associates 
to each (generically) stable family $\{\co_\lambda\}_{\lambda\in \cm_{\boldsymbol{x}}}$ a quantum system, namely the 1d supersymmetric $\sigma$-model with target space $\cm_{\boldsymbol{x}}$, which is an irreducible component of the space of classical \textsc{susy} vacua of the 1d quiver gauge theory. The supersymmetric vacua of this 1d $\sigma$-model then correspond to quantum states of a 4d BPS particle with charges $\boldsymbol{x}$. 
It is well-known \cite{Witten:1982df} that, for a 1d $\sigma$-model, the vector space of vacua is isomorphic to the
Dolbeault cohomology $H^{\bullet\bullet}(\cm_{\boldsymbol{x}})$ of $\cm_{\boldsymbol{x}}$. On $H^{\bullet\bullet}(\cm_{\boldsymbol{x}})$ there is a natural action of the 1d $R$-symmetry
\be\label{mmmkkklq}
U(1)_R\times SU(2)_L 
\ee
where the $SU(2)_L$ action is induced by the Lefshetz decomposition of harmonic forms \cite{GH}, while $U(1)_R$ acts as $e^{i\alpha(p-q)}$ on $(p,q)$ forms. Physically, $U(1)_R$ is interpreted as the torus of the (unbroken) 4d  $R$-symmetry group $SU(2)_R$, while $SU(2)_L$ is the Lorentz little group of a massive particle.
The space $H^{\bullet\bullet}(\cm_{\boldsymbol{x}})$ then decomposes into a set $I(\cm_{\boldsymbol{x}})$ of irreducible representations of $SU(2)_R\times SU(2)_L$ labelled by their respective dimensions $(\boldsymbol{r},\boldsymbol{s})\in \mathbb{N}^2$. A celebrated conjecture \cite{framed,DelZotto:2014bga} states that only the representations $(\boldsymbol{1},\boldsymbol{s})$ actually appear. In conclusion, the stable family $\{\co_\lambda\}_{\lambda\in \cm_{\boldsymbol{x}}}$ yields 4d BPS states with charge $\boldsymbol{x}$ in the following representation of $SU(2)_R\times SU(2)_L$
\be\label{ptqz}
\Big((\boldsymbol{2},\boldsymbol{1})\oplus 
(\boldsymbol{1},\boldsymbol{2})\Big)\otimes\bigoplus_{(\boldsymbol{r},\boldsymbol{s})\in I(\cm_{\boldsymbol{x}})} (\boldsymbol{r},\boldsymbol{s}).
\ee
To complete the BPS multiplet
one needs to add to \eqref{ptqz} the PCT conjugate states which correspond to the shifted family $\{\co_\lambda[1]\}_{\lambda\in\cm_{\boldsymbol{x}}}$, which obviously produce the same $SU(2)_R\times SU(2)_L$ content. 

\begin{rem} The square of PCT is the identity, and one would naively expect that the category describing the BPS objects of a QFT should be 2-periodic.
In general this is not the case:
the double shift $[2]$ needs only to be quasi-isomorphic to a functor $\mathbb{M}\colon \mathscr{C}(\cf)\to \mathscr{C}(\cf)$ called the
 \emph{quantum monodromy}
\cite{Cecotti:2010fi}. The Kontsevich-Soibelman wall-crossing formula \cite{kswcf} is equivalent to the requirement that $\mathbb{M}$ is well-defined (up to conjugacy) \cite{Cecotti:2010fi}. In a $\cn=2$ theory with a weakly-coupled Lagrangian formulation, $\mathbb{M}$ acts as the identity on the microscopic degrees of freedom, in agreement with the perturbative analysis. 
If $\cf$ is a SCFT with a weakly-coupled Lagrangian,
$\mathbb{M}\simeq \mathrm{Id}$ and the category
$\mathscr{C}(\cf)$ is 2-periodic as naively expected.  
 \end{rem}

From eqn.\eqref{ptqz} we see that the maximal spin produced by a family $\{\co_\lambda\}_{\lambda\in\cm_{\boldsymbol{x}}}$ is equal to 
\be\label{m-spin}
\text{max-spin}=\frac{1}{2}\big(1+\dim_\C \cm_{\boldsymbol{x}}\big).
\ee
The BPS states arising from a rigid stable object $X$ are said to form a \emph{hypermultiplet}, while a $\mathbb{P}^1$-family of stable objects
$\{W_\lambda\}_{\lambda\in\mathbb{P}^1}$ produces a \emph{vector} multiplet (maximal spin 1). 

Interaction vertices between BPS particles correspond to exact sequences of $\mathsf{mod}\,\mathsf{Jac}(Q_\ct,\cw_\ct)$ whose objects $\co^{(i)}_{\lambda_i}\in\mathsf{mod}\,\mathsf{Jac}(Q_\ct,\cw_\ct)$ ($i=1,2,3$) are all generically stable
\be\label{vertices}
0\to \co^{(1)}_{\lambda_1}\to \co^{(2)}_{\lambda_2}\to \co^{(3)}_{\lambda_3}\to 0\qquad\longleftrightarrow\qquad \begin{gathered}\xymatrix{&& \co^{(1)}_{\lambda_1}\\
\co^{(2)}_{\lambda_2}\ar@{~}[r]&\bullet\ar@{-}[ru]
\ar@{-}[rd]\\
&&\co^{(3)}_{\lambda_3}}\end{gathered}
\ee
From this identification it is obvious that the module category needs to be ``completed'' to a triangle category in order to implement crossing symmetry
(rotation of the triangles).
 From \eqref{vertices} it is also clear that all conserved quantities factorize through the Grothendieck group $\Gamma\equiv \mathsf{mod}\,\mathsf{Jac}(Q_\ct,\cw_\ct)$, which is then the \emph{universal} group of conserved (additive) charges. Its rank $r$ is equal to the total number of the electric, magnetic, and flavor charges of the $\cn=2$ QFT, that is,
 \be
 \begin{gathered}
 r=2\dim_\C\text{(Coulomb branch)}+\mathrm{rank}\, F,\\
 g\equiv \dim_\C\text{(Coulomb branch)}=\frac{1}{2}\mathrm{rank}\,\langle -,-\rangle_D.
 \end{gathered}
 \ee 
Here $F$ is the flavor group (a compact Lie group). The sub-lattice of the flavor charges, $\Gamma_\text{flavor}\subset \Gamma$ is the \emph{radical} of the Dirac form  
\be
\Gamma_\text{flavor}=\Big\{\boldsymbol{x}\in \Gamma\;\Big|\;
\langle \boldsymbol{y},\boldsymbol{x}\rangle_D=0\ \ \forall\,\boldsymbol{y}\in \Gamma\Big\}.
\ee
The electro-magnetic lattice $\Gamma_\text{e.m.}$ is defined as
\be
\Gamma_\text{e.m.}=\Gamma\big/\Gamma_\text{flavor}.
\ee
The Dirac pairing $\langle -,-\rangle_D$ induces a \emph{non-degenerate} skew-symmetric, integral pairing in
$\Gamma_\text{e.m.}$ denoted by the same symbol. 

\begin{rem} We stress that the lattice 
$\Gamma$ is the Grothendieck group $K_0(\mathsf{mod}\,\mathsf{Jac}(Q_\ct,\cw_\ct))$ which is not isomorphic (in general) to the cluster category Grothendieck group $K_0(\mathscr{C})$. See \cite{kusss} for examples of 
$K_0(\mathscr{C})$ groups.
The physical meaning of the relation between these Abelian groups is as follows:
the cluster category $\mathscr{C}$ describes the UV microscopic theory in all its  physical regimes and phases.
For a generic Coulomb regime, we may
describe the theory in the IR   as an effective Abelian gauge theory with conserved electric, magnetic, and flavor charges.
Such an IR regime selects (non uniquely in general) a pair $(\ct,Z)$, leading to the category
$\mathsf{mod}\,\mathsf{Jac}(Q_\ct,\cw_\ct)$. The equivalence \eqref{uttrsd} should be though of as the ``dictionary'' between the UV and IR viewpoints. From the UV viewpoint, however, the gauge group is non-Abelian, and the electric-magnetic charges do not take value in a lattice. We shall discuss this issue in more detail elsewhere \cite{toappear}. For the present purposes it suffices to remark that $\Gamma$ is not an intrinsic property of the UV category $\mathscr{C}$. 
\end{rem}

\subsection{$\mathrm{Aut}(\mathscr{C}(\cf))$ and the $S$-duality group $\mathbb{S}(\cf)$}

We write $\mathrm{Aut}(\mathscr{C}(\cf))$ for the group of the triangle auto-equivalences of the cluster category $\mathscr{C}(\cf)$. Let $\mu\in\mathrm{Aut}(\mathscr{C}(\cf))$. If $\ct$ is a cluster-tilting object,  so is $\mu \ct$. If the family of objects $\{\co_\lambda\}_{\lambda\in\cm_{\boldsymbol{x}}}\subset \mathscr{C}(\cf)$
had the property that its Jacobian images $\{\boldsymbol{J}_\ct\,\co_\lambda\}_{\lambda\in\cm_{\boldsymbol{x}}}$ are stable for the given central charge $Z(-)$,
the family $\{\mu \co_\lambda\}_{\lambda\in\cm_{\boldsymbol{x}}}$ has Jacobian images $\{\boldsymbol{J}_{\mu \ct}(\mu \co_\lambda)\}_{\lambda\in\cm_{\boldsymbol{x}}}$ which are stable for the pushed-forward central charge
$\mu_\ast Z(-)$ defined by
(cfr.\! \textbf{Remark \ref{ooooxb}})
 \be
\mu_\ast Z([\mu T_i])=Z([T_i]).
\ee
The new stable family has the same index variety $\cm_{\boldsymbol{x}}$ as the old one, and hence the same 1d $\sigma$-model in \textbf{Step 4} of the procedure in \S.\,\ref{bpsvscluster}.
Therefore, the family $\{\mu \co_\lambda\}_{\lambda\in\cm_{\boldsymbol{x}}}$ produces the same $SU(2)_R\times SU_L(2)$ representation content \eqref{ptqz} as the original family.
Since the pair $(\ct,Z)$ encodes the physical regime of the QFT $\cf$, we learn that the physics looks identical in the original point in parameter space, $(\ct,Z)$, and in the pushed-forward one, $(\mu \ct,\mu_\ast Z)$.

The statement that the (BPS) physical observables of a QFT $\cf$ look identical in two distinct regimes specified by $(\ct,Z)$ and $(\mu \ct,\mu_\ast Z)$  (e.g.\! at strong and weak coupling) is what we mean by a (generalized) $S$-duality.
We see that all $\mu\in\mathrm{Aut}(\mathscr{C}(\cf))$ produce a duality in this broad sense.

However not all elements of $\mu\in\mathrm{Aut}(\mathscr{C}(\cf))$ induce non-trivial dualities.

\begin{defn}
An element $\sigma\in\mathrm{Aut}(\mathscr{C}(\cf))$ is said to be \emph{physically trivial} if for all families of objects
$\{X_\lambda\}_{\lambda\in\cm_X}$,
 there exist maps $f_X\colon \cm_X\to \cm_X$ such that
 \be
 \sigma X_\lambda\simeq X_{f_X(\lambda)},
 \ee
that is, the only effect of $\sigma$ is to re-parametrize the moduli varieties $\cm_X$. The re-parametrization has a trivial effect on the 1d $\sigma$-models on the particle world-lines, and the physics remains unchanged.
We write $\mathrm{Aut}(\mathscr{C}(f))^0\subset \mathrm{Aut}(\mathscr{C}(f))$ for the normal subgroup of physically trivial auto-equivalences.
\end{defn}

Let us describe the trivial subgroup
$\mathrm{Aut}(\mathscr{C}(f))^0$ a bit more explicitly. Let $\ct=\bigoplus_i T_i$ a cluster-tilting object of $\mathscr{C}(\cf)$. An element $\mu\in \mathrm{Aut}(\mathscr{C}(f))^0$ fixes all $T_i$ since they are rigid, and induces linear maps
\be\label{arrqaun}
\mu_{ij}\colon\mathrm{Hom}_{\mathscr{C}(\cf)}(T_i,T_j)\to \mathrm{Hom}_{\mathscr{C}(\cf)}(T_i,T_j).
\ee
At the level of the endo-quiver of $\ct$, $\mu_{ij}$ yields a family of linear redefinitions of the arrows between nodes $i$ and $j$ which leave the endo-algebra invariant. The action of $\mu$ on a generic representation $X$ of the Jacobian quiver then produces a representation $\mu X$ with the same vector spaces at the nodes and arrows redefined by the above linear maps $\mu_{ij}$.

\begin{exe} Let $\cf$ be pure $SU(2)$ SYM. The endo-quiver of the canonical tilting object is the Kronecker quiver $1\rightrightarrows 2$. Then $\mathrm{Aut}(\mathscr{C}(f))^0=PSL(2,\C)$.
\end{exe}

\begin{defn}
The quotient group
\be
\mathbb{S}(\cf)\equiv \mathrm{Aut}(\mathscr{C}(\cf))/\mathrm{Aut}(\mathscr{C}(\cf))^0,
\ee
is the \textit{$S$-duality group} of the QFT $\cf$.
\end{defn}

On general physical grounds \cite{zumn}\footnote{\ For a review of electro-magnetic dualities in Lagrangian field theory, see \S.1.4 of the book \cite{mybook}.} the group of dualities of any 4d $\cn=2$ theory should act on the electro-magnetic charges by an arithmetic  subgroup\footnote{\ As before, $g$ is the complex dimension of the Coulomb branch.} 
\be\cg_\Z\subset Sp(2g,\R).\ee 
The $S$-duality $\mathbb{S}(\cf)$ is slightly more general than plain electro-magnetic duality, since it  acts non-trivially on the flavor charges, as it happens in $SU(2)$ $N_f=4$ SQCD, where the action is through $Spin(8)$ triality
\cite{SW2}. 
 $\mathbb{S}(\cf)$ needs only to be \emph{commensurable}\footnote{\ We say that two discrete groups $\cg_1$, $\cg_2$ are \emph{commensurable} if they have finite-indices subgroup $\ch_a\subset \cg_a$ ($a=1,2$) with $\ch_1$, $\ch_2$ isomorphic. Commensurability
 equivalence will be written as $\thickapprox$.} to a discrete group of the form $\cg_\Z$,
 \be\label{physexp}
 \mathbb{S}(\cf)\thickapprox \cg_\Z.
 \ee
At this point property \eqref{physexp} is far from obvious. Indeed, the physical motivation for eqn.\eqref{physexp} is the preservation of the Dirac pairing $\langle -,-\rangle_D$ under duality;  
but 
the Dirac pairing is
defined on $\Gamma\equiv K_0(\mathsf{mod}\,\mathsf{End}_{\mathscr{C}(\cf)}(\ct))$
rather than on $\mathscr{C}(\cf)$ itself.
In section \ref{spsp2g}
we shall show that property
\eqref{physexp} holds (in the  appropriate sense) at least for all models mentioned in the \textbf{Introduction}, those listed there under letters a), b), c), and d) as well as the ones which may be reduced to these cases. The group $\cg_\Z$ may be explicitly realized as a  concrete group of symplectic integral $2g\times 2g$ matrices, and the full $\mathbb{S}(\cf)$ group as an explicit group of
$r\times r$ integral unimodular matrices ($r=2g+\mathrm{rank}\,F$).

To get the explicit realization of the $S$-duality group we introduce an \emph{auxiliary} triangle category $\mathscr{R}(\cf)$ (the root category of $\cf$), which has essentially the same auto-equivalence group as the cluster category $\mathscr{C}(\cf)$, but is more amenable for direct computation of this group. The root category will be introduced \emph{via} the 4d/2d correspondence \cite{Cecotti:2010fi} that we now review in a language suited for our present purposes.

\section{Categorical 4d/2d correspondence
and branes}\label{4d2d3ct}

In ref.\!\cite{Cecotti:2010fi} the relation between cluster algebras and BPS objects of a $\cn=2$ QFT was also explained in terms of a correspondence between the 2d (2,2) theories $\tilde\cf$ with $\hat c< 2$ and certain 4d $\cn=2$ models $\cf$.
The cluster category $\mathscr{C}(\cf)$ of a 4d theory $\cf$ with a ``nice'' 2d correspondent $\tilde\cf$ is easier to visualize, and the analysis of its $S$-duality becomes simpler.

We refer to \cite{Cecotti:2010fi} for the physical motivations behind the 4d/2d correspondence. 

The discussion in this section is adequate (at least) for all theories in classes a), b), c) and d) listed in the \textbf{Introduction}, and, more generally, for all models which have a ``good'' 4d/2d correspondence.

\subsection{The correspondence  $\tilde\cf\rightarrow \cf$}\label{lllksdv}
In the categorical language, the relation $\tilde\cf\rightarrow \cf$ may be (roughly) summarized as follows.
There is an Abelian category $\mathscr{A}(\tilde\cf)$, of global dimension $n\leq 2$ with tilting object (in the sense of \cite{miya}) $\ct=\bigoplus_{i=1}^rT_i$, whose bounded derived category $\mathscr{B}(\tilde\cf)\equiv D^b\mathscr{A}(\tilde\cf)$ has Serre functor $S$, and is the \emph{brane category} \cite{orlov1,orlov2,orlov3} of the 2d (2,2) theory $\tilde\cf$.
The endo-quivers of $\ct$ are interpreted as the BPS quivers of the 2d theory $\tilde\cf$, see \cite{Cecotti:2010fi}.

Consider the orbit category
\be\label{ororbt}
\mathscr{B}(\tilde \cf)/\langle S^{-1}[2]\rangle^\Z,
\ee
whose objects are the branes in $\mathscr{B}(\tilde\cf)$ and morphism spaces
\be
\mathrm{Hom}_\text{orb}(X,Y)=\bigoplus_{k\in\Z} \mathrm{Hom}_{\mathscr{B}(\tilde\cf)}(X, S^k Y[-2k]).
\ee
The orbit category is 2-Calabi-Yau by construction, and has tilting objects inherited from $\mathscr{A}(\tilde F)$, but it is not a cluster category (in general) because it is not triangulated.
However, there is a canonical ``smallest'' triangle category which contains the orbit category, its \textit{triangular hull} \cite{kellerorb,kellerrev}. We shall write $\mathsf{Hu}_\triangle$ for the operation of completing an orbit category to its triangular hull. The embedding of the orbit category in its triangular hull  
\be
\mathscr{B}(\tilde \cf)/\langle S^{-1}[2]\rangle^\Z \to \mathsf{Hu}_\triangle\!\!\left(\mathscr{B}(\tilde \cf)/\langle S^{-1}[2]\rangle^\Z\right)
\ee
is fully faithful \cite{kellerorb,kellerrev}.  The category in the \textsc{rhs} is now triangulated, $2$-Calabi-Yau, and has a tilting objet, so, according to \textbf{Definition \ref{whatclucat}}, it is a cluster category.
We say that $\cf$ is the 4d correspondent of the 2d theory $\tilde \cf$ iff the cluster category constructed in this way out of $D^b\mathscr{A}(\tilde\cf)$ is the cluster category of $\cf$, that is, iff
\be\label{jjjjasd}
\mathscr{C}(\cf)\simeq\mathsf{Hu}_\triangle\!\left(D^b\mathscr{A}(\tilde\cf)/\langle S^{-1}[2]\rangle^\Z\right).
\ee 
From this equivalence we deduce the relation between the 4d Jacobian quiver and the 2d BPS quiver \cite{Cecotti:2010fi}. For $\mathscr{A}(\tilde\cf)$ hereditary these two quivers are just equal, otherwise the Jacobian 4d quiver is the ``completion'' of the 2d one. It was this graphical connection which suggested the 4d/2d correspondence in the first place \cite{Cecotti:2010fi}.
\medskip

An important property of the triangular hull \eqref{jjjjasd} is the following:

\begin{fact}[\!\!\cite{imaorbit} \textbf{Theorem 5.4}]\label{norigg} 
The objects $X$ of the triangular hull which do not belong to the orbit category appear in families of dimension at least 1 and are never rigid, $\mathrm{Ext}^1(X,X)\neq0$.
\end{fact}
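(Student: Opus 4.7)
The plan is to work with a dg-enhanced model of the triangular hull. Since $\mathscr{A}(\tilde\cf)$ has finite global dimension, its bounded derived category $\mathscr{B}(\tilde\cf)$ carries a canonical dg enhancement $\mathscr{B}^{dg}$, and the Serre functor $S$ lifts to a dg auto-equivalence. Applying Keller's construction, I would realize
$$\mathsf{Hu}_\triangle\!\left(\mathscr{B}(\tilde\cf)/\langle S^{-1}[2]\rangle^\Z\right)\simeq \mathsf{per}\!\left(\mathscr{B}^{dg}/(S^{-1}[2])^{\Z}\right),$$
where the ordinary orbit category embeds through the Yoneda functor as the full subcategory spanned by objects of $\mathscr{B}^{dg}$. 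Morphism spaces in the hull then acquire the $\Z$-graded form
$$\mathrm{Hom}_{\mathsf{Hu}_\triangle}(X, Y[1]) \simeq \bigoplus_{n\in\Z}\mathrm{Hom}_{\mathscr{B}}\!\left(X,\, S^{-n}Y[1+2n]\right),$$
and the assumption $\mathrm{gl.dim}\,\mathscr{A}\le 2$ forces this sum to be finite for all $X,Y$.

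Next, I would characterize non-orbit objects by their ``gluing data''. Every object $Z\in\mathsf{Hu}_\triangle$ admits a finite triangulated filtration by objects of the orbit category; if $Z$ itself does \emph{not} lie in the orbit category (up to isomorphism), it must fit in a non-split triangle
$$U\longrightarrow Z\longrightarrow V\xrightarrow{\ \delta\ }U[1]$$
with $U,V$ in the orbit category and connecting morphism $\delta$ having a non-zero component of degree $n\neq 0$ in the $\Z$-grading above. Indeed, if $\delta$ sat entirely in degree $0$, then $\delta$ would lift to an honest morphism $V\to U[1]$ in $\mathscr{B}$, whose cone in $\mathscr{B}$ would represent $Z$ and place it in the orbit category.

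The key step is then to exhibit a non-vanishing class in $\mathrm{Ext}^1_{\mathsf{Hu}_\triangle}(Z,Z)$. I would apply $\mathrm{Hom}(V,-)$ and $\mathrm{Hom}(-,U)$ to the defining triangle, combine the two resulting long exact sequences via the $2$-Calabi-Yau duality $\mathrm{Ext}^1(X,Y)\simeq D\,\mathrm{Ext}^1(Y,X)$, and show that the degree-$n$ component of $\delta$ descends to a non-zero self-extension of $Z$: it cannot be killed from either side without forcing $\delta$ back into degree $0$, contradicting the hypothesis that $Z$ is a genuinely ``new'' object. Once $\mathrm{Ext}^1(Z,Z)\neq 0$ is established, the $2$-CY property makes first-order deformations automatically unobstructed (the obstruction spaces $\mathrm{Ext}^2(Z,Z)\simeq D\,\mathrm{Hom}(Z,Z)$ pair trivially with the Calabi-Yau form on primitive classes), so $Z$ deforms in a family of dimension $\ge 1$, yielding both halves of the statement simultaneously.

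The main obstacle is precisely the third step: transporting the non-zero degree-$n$ component of $\delta$ into the self-Ext of $Z$ without losing it to a coboundary. I expect this to require a careful bookkeeping of the $\Z$-grading through the octahedral axiom applied to the defining triangle and its $S^{-1}[2]$-shifts, combined with the boundedness provided by $\mathrm{gl.dim}\,\mathscr{A}\le 2$ to ensure that the spectral sequence computing $\mathrm{Ext}^1_{\mathsf{Hu}_\triangle}(Z,Z)$ from the graded pieces of $\mathrm{Ext}^\bullet_{\mathscr{B}}$ degenerates in a usable range and isolates the class built from the degree-$n$ piece of $\delta$.
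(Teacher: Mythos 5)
You should first be aware that the paper contains no proof of this statement: it is imported wholesale as Theorem~5.4 of Amiot--Oppermann \cite{imaorbit}, so your proposal has to be measured against the published proof there. That proof does not manipulate cones over the orbit category directly. It uses Amiot's realization of the generalized cluster category of an algebra of global dimension $\leq 2$ as the Verdier quotient $\mathsf{per}\,\Pi/\mathcal{D}^b(\Pi)$ for the $3$-Calabi--Yau Ginzburg (derived preprojective) dg algebra $\Pi$, together with the canonical t-structure on $\mathsf{per}\,\Pi$ and a fundamental-domain argument: the image of the derived category is characterized by a homological amplitude condition, and rigidity of an object is shown to force that condition. Your opening move (Keller's dg orbit construction and the $\Z$-graded Hom formula) is correct and consistent with this, but the route you take from there has real gaps.

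Concretely: (i) the reduction of an arbitrary non-orbit object $Z$ to a single triangle $U\to Z\to V\xrightarrow{\delta}U[1]$ with \emph{both} $U$ and $V$ in the orbit category does not follow from ``$Z$ admits a finite triangulated filtration'' --- the hull is generated under \emph{iterated} cones, and in an induction the intermediate objects need not be orbit objects, which destroys the grading argument you want to run on $\delta$. (Such a two-term presentation does exist, but via the cluster-tilting object, $T_1\to T_0\to Z\to T_1[1]$ with $T_i\in\mathsf{add}\,\ct$, not by the argument you give.) (ii) The step you yourself flag as the obstacle is the entire theorem: exactness of $\mathrm{Hom}(Z,V)\to\mathrm{Hom}(Z,U[1])\to\mathrm{Hom}(Z,Z[1])$ only says that the candidate class $i[1]\circ\delta\circ p$ lies in the image of the second map, and nothing in your sketch prevents it from vanishing there; ``it cannot be killed without forcing $\delta$ into degree $0$'' is an assertion of the conclusion, not a proof. (iii) The deformation step is not available in this generality: in a $2$-CY category $\mathrm{Ext}^2(Z,Z)\simeq D\,\mathrm{Hom}(Z,Z)$ is never zero (it contains the dual of $\mathrm{id}_Z$), so obstructions do not vanish for free, and an abstract triangulated category carries no intrinsic notion of ``family of objects'' to which a Mukai-type unobstructedness argument could apply; in \cite{imaorbit} the positive-dimensional families are again read off from the dg model, not from deformation theory of $Z$ inside the hull.
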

Comparing with eqn.\eqref{m-spin}, we see that the ``extra'' BPS particles 
 arising from the triangle completion of an orbit category have higher spin and are never hypermultiplets.
 \medskip
 
 In order to understand the properties of the ``good'' brane categories $\mathscr{B}(\tilde \cf)$, we introduce a definition.
 
\begin{defn} 1) An object $X$ in a triangle category $\mathscr{T}$, with Serre functor $S$, is said to be \textit{fractional Calabi-Yau} if there are positive integers $a$, $b$ such that
\be\label{cycycyc}
S^bX=X[a].
\ee
2) A triangle category with Serre functor $S$ is said to have \textit{fractional Calabi-Yau dimension $\tfrac{a}{b}$} if $a$, $b$ are the smallest positive integers such that \eqref{cycycyc} holds for all objects.
\end{defn}

If we have a triangle category $\mathscr{T}$ with Serre functor $S$, we write $\mathsf{CY}(\mathscr{T})$ for the additive closure in $\mathscr{T}$ of the class of its fractional Calabi-Yau objects. $\mathscr{T}$ is fractionally Calabi-Yau iff $\mathscr{T}=\mathsf{CY}(\mathscr{T})$. 
If the brane category $\mathscr{B}(\tilde\cf)$ has fractional Calabi-Yau dimension $\tfrac{a}{b}$ we say that the 2d theory $\tilde\cf$ has a conformal UV fixed point with Virasoro central charge $\hat c=\tfrac{a}{b}$.
Otherwise we say that $\tilde \cf$ is asymptotically-free (logarithmic violation of scaling \cite{Cecotti:1992rm}).

\begin{rem} We note that  we are not allowed to simplify the common factors of the integers $a$, $b$ in the fractional Calabi-Yau dimension $\tfrac{a}{b}$. For instance, the $A_{2n-1}$ (2,2) minimal model has fractional CY dimension $\tfrac{2(n-1)}{2n}\neq \tfrac{n-1}{n}$. Indeed, the statement that the Calabi-Yau dimension is $\tfrac{a}{b}$ corresponds to the following \emph{two} properties of the UV fixed point of the 2d QFT
$\tilde\cf$
\be
\text{fractional CY dimension}=\frac{a}{b}\quad\Leftrightarrow\quad
\begin{cases}
\hat c=\frac{a}{b}\ \text{in }\mathbb{Q}\\
\text{for chiral primaries }h=\frac{k}{b}\ \text{with }k\in\mathbb{N}.
\end{cases}
\ee
\end{rem} 

For the 2d theory which have a 4d correspondent, the fractional Calabi-Yau dimension of all CY objects is bounded by 2
\be
\frac{a}{b}<2.
\ee
Rougly speaking, Calabi-Yau objects in a brane category correspond to operators which are protected with respect to mixing under RG flow. In all ``good'' 2d theory, we have at least one such operator, namely the identity.

Thus we shall assume that our brane category
$\mathscr{B}(\tilde\cf)$, has the following property:

\begin{ass}\label{mmmmhas}
The 2d brane triangle category $\mathscr{B}(\tilde\cf)\equiv D^b\mathscr{A}(\tilde\cf)$,
in addition to tilting objects and Serre duality, has some CY object with $\tfrac{a}{b}<2$.
\end{ass}

Of course, this condition is satisfied by all models listed in the \textbf{Introduction}.
It is shown in \cite{kellerorb} that, under this additional assumption, the cluster category $\mathsf{Hu}_\triangle\!\!\left(\mathscr{B}(\tilde \cf)/\langle S^{-1}[2]\rangle^\Z\right)$ is equal to the orbit category if and only if $\mathscr{A}(\tilde\cf)$ is derived equivalent to a hereditary category; comparing with ref.\!\cite{Cecotti:2011rv}, we see that this happen only when the 2d QFT
$\tilde\cf$ is either minimal or affine or elliptic 
(in particular,  this requires $\tfrac{a}{b}\leq 1$).

\subsection{The root category $\mathscr{R}(\tilde\cf)$}\label{rrrorott}

If $X\in\mathscr{B}(\tilde\cf)$ is a brane, its anti-brane is $X[1]$.
Since we think of the anti-brane of the anti-brane as the original brane, it seems more appropriate to replace $\mathscr{B}(\tilde\cf)$ with the orbit category
\be
\mathscr{B}(\tilde\cf)\big/[2\Z],
\ee
in order to implement the equivalence
$[2]\sim \mathrm{Id}$. However, again, the orbit category is not triangulated in general, and we have to take its triangular hull
\be
\mathscr{R}(\tilde\cf)\simeq\mathsf{Hu}_\triangle\!\!\left(D^b\mathscr{A}(\tilde\cf)/[2\Z]\right).
\ee

We call $\mathscr{R}(\tilde\cf)$ the \emph{root category} since
for $\mathscr{A}(\tilde\cf)$ the module category of a Dynkin algebra we recover the
 Peng-Xiao root category \cite{pengxiao} of the associated Dynkin graph.  
 
 By construction, $\mathscr{R}(\tilde\cf)$ is triangulated, 2-periodic, and has tilting object $\ct$. We write $S$ for the image of the Serre functor in the root category.
 
\subsection{The 2d and 4d quantum monodromies} 

In the categorical framework, the quantum monodromy is just the Serre functor $S$.
However, $S$ describes different monodromies in 2d and 4d. For the 2d model, $S$ acts on the root category $\mathscr{R}(\tilde\cf)$, while for the 4d one $S$ acts on the cluster category $\mathscr{C}(\cf)$.
If the 2d theory flows in the UV to a non-degenerate SCFT, that is, if $\mathscr{B}(\tilde\cf)$ has fractional Calabi-Yau dimension $\tfrac{a}{b}$, both quantum monodromies have finite order.

For the 2d monodromy $H$ of a UV SCFT, we have two possibilities: either $a$ is even or it is odd.
For $a$ even $S^b\sim [a]\sim \mathrm{Id}$ in $\mathscr{R}(\tilde \cf)$
and the order is $b$. Otherwise the order of $H$ is $2b$. Then
\be
\text{order }H=\frac{2b}{\gcd(a,2)}.
\ee

The 4d monodromy $\mathbb{M}$ for $a$ even satisfies
\be
S^{b}\sim [a]\sim S^{a/2},
\ee
so that\footnote{\ We used that $\hat c=a/b<2$.}
\be
\text{order }\mathbb{M}\equiv \mathfrak{m}=\frac{2b-a}{\gcd(a,2)}>0.
\ee

Now $(S[-2])^b=[a-2b]$ so that
we have the intersection of groups of auto-equivalences
\be\label{ttttiixvv}
\big\langle S[-2]\big\rangle^\Z\bigcap\, \big[2\Z\big]=\big[2 \mathfrak{m}\Z\big],\quad\text{where }0<2\mathfrak{m}=\begin{cases} 2b-a &a\ \text{even}\\
4b-2a &a\ \text{odd.} \end{cases}
\ee
We define the auxiliary category
\be
\mathscr{CR}(\tilde\cf)=\mathsf{Hu}_\triangle\!\!\left(D^b\mathscr{A}(\tilde\cf)/[2\mathfrak{m}\Z]\right),
\ee
which is fractional Calabi-Yau of dimension $\langle a\rangle_\mathfrak{m}/b$,
where $\langle a\rangle_\mathfrak{m}$ stands for the smallest non-negative integer congruent to $a$ mod $2\mathfrak{m}$.

\subsection{Comparing $\mathrm{Aut}(\mathscr{R}(\tilde\cf))$ and $\mathrm{Aut}(\mathscr{C}(\cf))$}\label{comppparing}

For simplicity, we assume the 2d theory $\tilde\cf$ to have a non-degenerate UV fixed point with $\hat c=\tfrac{a}{b}<2$, so that $\mathscr{B}(\tilde\cf)\equiv D^b\mathscr{A}(\tilde\cf)$ is fractional Calabi-Yau of dimension $\tfrac{a}{b}$. 
We consider the Abelian subgroup $\mathfrak{A}\subset\mathrm{Aut}(\mathscr{B}(\tilde\cf))$ generated by $T\equiv S[-2]$ and $[1]$ i.e.
\be\label{qqqq1z}
\mathfrak{A}=\Big\langle T, [1]\;\Big|\; T^b=[a-2b]\Big\rangle=\Big\{ T^k[\ell],\ \ell\in\Z,\ k=0,1,\dots, b-1\Big\}.
\ee
Since $\mathscr{A}(\tilde\cf)$ is Hom-finite with finite global dimension,
for all objects $X,Y\in\mathscr{B}(\tilde\cf)$
\be\label{lljazfrr}
\sum_{\rho\in\mathfrak{A}} \dim \mathrm{Hom}_{\mathscr{B}(\tilde\cf)}(X, \rho Y)<\infty.
\ee
We define the four subgroups of $\mathfrak{A}$: 
\be
\mathfrak{A}_\emptyset=1,\qquad\mathfrak{A}_c=T^\Z,\qquad\mathfrak{A}_r=[2\Z],\qquad
\mathfrak{A}_{cr}=[2\mathfrak{m}\Z]=\mathfrak{A}_c\cap\mathfrak{A}_{r}.
\ee

From eqn.\eqref{ttttiixvv} 
one infers the following diagram of functors
\be
\begin{gathered}\label{rrrqwwzaX}
\xymatrix{&& \mathscr{B}(\tilde\cf)\equiv \mathscr{B}(\tilde\cf)/\mathfrak{A}_\emptyset\ar[dll]_{\pi_r}\ar[d]^{\pi_{cr}}\ar[drr]^{\pi_c}\\
\mathscr{B}(\tilde\cf)/\mathfrak{A}_r\ar[d]_{\iota_r} &&\mathscr{B}(\tilde\cf)/\mathfrak{A}_{cr}\ar[d]_{\iota_{cr}}\ar[ll]_{\pi_1}\ar[rr]^{\pi_2}&& \mathscr{B}(\tilde\cf)/\mathfrak{A}_c\ar[d]_{\iota_c}\\
\mathscr{R}(\tilde\cf)\ar@/_2pc/@{<..>}[rrrr]^\thickapprox_\rho
&&\mathscr{CR}(\tilde\cf)\ar[ll]_{\varpi_1}\ar[rr]^{\varpi_2}&&\mathscr{C}(\cf)}\end{gathered}
\ee
where $\pi_\star$ are canonical projections and $\iota_\star$ fully faithful embeddings in the respective triangular hulls. The orbit categories
$\mathscr{B}(\tilde\cf)/\mathfrak{A}_\star$ (with $\star=\emptyset, c,r,cr$)
have the same objects as $\mathscr{B}(\tilde\cf)$ and morphism spaces
\be\label{jjjjkasqw0}
\mathrm{Hom}_{\mathscr{B}(\tilde\cf)/\mathfrak{A}_\star}\!(A,B)=\bigoplus_{\rho\in\mathfrak{A}_\star}\mathrm{Hom}_{\mathscr{B}(\tilde\cf)}(A,\rho B).
\ee
The images of the tilting summands, $\{T_i\}_{i=1}^r$ of $\mathscr{A}(\tilde\cf)$,
  generate homologically
  the various linear categories appearing in the diagram \eqref{rrrqwwzaX}.
Then an auto-equivalence of one of these categories is uniquely determined by its action on the corresponding  full subcategory image of 
$\mathsf{add}\,\ct$.

We write $\varpi_\star=\iota_\star\circ \pi_\star$ for $\star=r,c,cr$.
 $\varpi_\star$ send auto-equivalences into auto-equivalences
\begin{align}
\mathrm{Aut}\,{\varpi_r}\colon& \mathrm{Aut}(\mathscr{B}(\tilde\cf))\to \mathrm{Aut}(\mathscr{R}(\tilde\cf)),\\
\mathrm{Aut}\,{\varpi_c}\colon& \mathrm{Aut}(\mathscr{B}(\tilde\cf))\to \mathrm{Aut}(\mathscr{C}(\cf)).
\end{align}
Since all auto-equivalences commute with $T$ and $[1]$, $\mathfrak{A}$-orbits are sent into $\mathfrak{A}$-orbits. 
An auto-equivalence permutes the $\mathfrak{A}$-orbits $\{\co_\xi\}_{\xi\in\Xi}$ between themselves and acts on each orbit by a functor of the form $S^{k_\xi}[\ell_\xi]$ (the integers $k_\xi$, $\ell_\xi$ depending on the orbit $\xi$).

We claim that 
$\ker \mathrm{Aut}(\varpi_1)=\Z/\mathfrak{m}\Z$. Indeed, let $\ct=\oplus_i T_i$ be the tilting object of $\mathscr{A}(\tilde\cf)$. One has
\be
\dim \mathrm{Hom}_{\mathscr{CR}}(T_i,T_j[k])= c_{ij}\,\delta^{(2\mathfrak{m})}_{k,0}
\ee
where $\delta^{(p)}_{i,j}$ is the mod $p$ Kronecker delta:
\be\label{modelta}
\delta^{(p)}_{i,j}=\begin{cases}
1 & i-j=0\mod p\\
0 &\text{otherwise.}
\end{cases}
\ee
If $\mu\in\ker \mathrm{Aut}(\varpi_1)$, $\mu T_i=T_i[2k_i]$, for certain integers $k_i$. Then
\be
\begin{split}
c_{ij}\,\delta^{(2\mathfrak{m})}_{k,0}&=\dim \mathrm{Hom}_{\mathscr{CR}}(\mu T_i,\mu T_j[k])=\\
&=\dim\mathrm{Hom}_{\mathscr{CR}}(T_i,T_j[k+2k_j-2k_i])=c_{ij}\,\delta^{(2\mathfrak{m})}_{k,2k_i-2k_j},
\end{split} 
\ee
hence $k_i=k_j\equiv\kappa$ (mod. $m$) for all $i$,$j$ (since the category is connected) and $\mu=[2\kappa]$.
The same argument applied to $\mathrm{Aut}(\varpi_2)$ shows that $\ker \mathrm{Aut}(\varpi_2)$ is finite\footnote{\ In many models $\ker \mathrm{Aut}(\varpi_2)$ is as small as $(S[-2])^k$ where $k\in\Z/(2b/\gcd(a,2))\Z$.}. 
Hence
\begin{fact}\label{amam36t} If the 2d theory $\tilde\cf$ flows in the UV to a non-degenerate (2,2) SCFT, the groups $\mathrm{Aut}(\mathscr{CR}(\tilde\cf))$, $\mathrm{im}\,\mathrm{Aut}(\varpi_1)$, and $\mathrm{im}\,\mathrm{Aut}(\varpi_2)$,
 are \emph{commensurable.}
 Then so are $\mathrm{im}\,\mathrm{Aut}(\varpi_c)$ and $\mathrm{im}\,\mathrm{Aut}(\varpi_r)$.
\end{fact}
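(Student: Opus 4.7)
The plan is to carry out for $\mathrm{Aut}(\varpi_2)$ the same kernel computation the text performed for $\mathrm{Aut}(\varpi_1)$, and then to translate the resulting finiteness statements into commensurabilities of images.

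The first step is to bound $\ker\mathrm{Aut}(\varpi_2)$. Any $\mu$ in this kernel must fix each tilting summand $T_i$ up to an element of $\mathfrak{A}_c=T^\Z$, so within $\mathscr{CR}(\tilde\cf)$ one has $\mu T_i\simeq T^{\kappa_i}T_i$ for integers $\kappa_i$ (well-defined modulo the order $\mathfrak{m}$). The bi-functorial identity
\[
\dim\mathrm{Hom}_{\mathscr{CR}}(\mu T_i,\mu T_j[k])=\dim\mathrm{Hom}_{\mathscr{CR}}(T_i,T_j[k]),
\]
combined with the Hom-formula of the orbit category and connectedness of the endo-quiver of $\ct$, forces all the $\kappa_i$ to coincide in $\Z/\mathfrak{m}\Z$. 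Hence $\mu$ is a fixed power of $T$, lying in a finite cyclic group of order dividing $2b/\gcd(a,2)$, as anticipated in the footnote.

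Next I invoke the elementary group-theoretic observation that, if $\phi\colon G\twoheadrightarrow H$ is surjective with finite kernel $K$, then $G$ and $H$ are commensurable: the preimage of any finite-index subgroup remains finite-index, and a finite-index subgroup of $G$ disjoint from $K$ maps isomorphically onto its image. Applying this to the two surjections
\[
\mathrm{Aut}(\mathscr{CR}(\tilde\cf))\twoheadrightarrow\mathrm{im}\,\mathrm{Aut}(\varpi_i)\qquad(i=1,2),
\]
both of which have finite kernel by the previous paragraph (and by the analogous computation given in the text for $i=1$), yields $\mathrm{Aut}(\mathscr{CR}(\tilde\cf))\thickapprox \mathrm{im}\,\mathrm{Aut}(\varpi_i)$ for $i=1,2$, and by transitivity of $\thickapprox$ all three groups are mutually commensurable.

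For the last clause, the diagram \eqref{rrrqwwzaX} supplies factorisations $\varpi_r=\varpi_1\circ\varpi_{cr}$ and $\varpi_c=\varpi_2\circ\varpi_{cr}$, where $\varpi_{cr}=\iota_{cr}\circ\pi_{cr}$. Thus $\mathrm{im}\,\mathrm{Aut}(\varpi_r)$ and $\mathrm{im}\,\mathrm{Aut}(\varpi_c)$ are the images, under the two finite-kernel maps $\mathrm{Aut}(\varpi_1)$ and $\mathrm{Aut}(\varpi_2)$, of the \emph{same} subgroup $H\equiv\mathrm{im}\,\mathrm{Aut}(\varpi_{cr})\subseteq\mathrm{Aut}(\mathscr{CR}(\tilde\cf))$; restricting the elementary observation above to $H$, both images are commensurable to $H$ and hence to each other. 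The only real obstacle I anticipate is cosmetic: in $\mathscr{C}(\cf)$ the shift $[2]$ is identified with $\mathrm{Id}$ only up to the finite-order quantum monodromy $\mathbb{M}$, so ``a fixed power of the shift'' designates a finite cyclic subgroup rather than $\{\mathrm{Id}\}$, and these finite ambiguities must be tracked carefully; staying inside $\mathscr{CR}(\tilde\cf)$ for all Hom computations and projecting to $\mathscr{R}(\tilde\cf)$ or $\mathscr{C}(\cf)$ only at the very end keeps the bookkeeping clean.
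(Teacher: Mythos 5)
Your argument is correct and follows the paper's own route: the paper establishes $\ker \mathrm{Aut}(\varpi_1)=\Z/\mathfrak{m}\Z$ by exactly the Hom-dimension computation on the tilting summands that you reproduce, then simply asserts that ``the same argument'' bounds $\ker\mathrm{Aut}(\varpi_2)$ --- the step you carry out explicitly --- and the factorisations $\varpi_r=\varpi_1\circ\varpi_{cr}$, $\varpi_c=\varpi_2\circ\varpi_{cr}$ are the intended (unstated) justification of the final clause. The one caveat, which you share with the paper, is that ``surjection with finite (here central) kernel implies commensurability'' is being used in the loose sense of isomorphism up to finite groups: a finite-index subgroup meeting the kernel trivially need not exist under the paper's strict definition of $\thickapprox$, so your second justifying clause is not quite automatic, though it is harmless at the level of rigor the paper itself adopts.
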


It remains to determine the images of $\mathrm{Aut}(\varpi_r)$, $\mathrm{Aut}(\varpi_c)$. We shall see in \S.\ref{ujjujuja} that the maps 
$\mathrm{Aut}(\varpi_1)$, $\mathrm{Aut}(\varpi_2)$ are onto at least for the subgroups
\be\label{mmmmxzzw}
\mathrm{Tel}(\mathscr{R}(\tilde\cf))\subseteq\mathrm{Aut}(\mathscr{R}(\tilde\cf)),\qquad \mathrm{Tel}(\mathscr{C}(\cf))\subseteq\mathrm{Aut}(\mathscr{C}(\cf))\ee
 which are generated by (generalized) Thomas-Seidel
twists, telescopic functors\footnote{\ See section \ref{autttoequiv}.}, ``obvious'' auto-equivalences of $\mathscr{A}(\tilde\cf)$, and $T$, $[1]$. The continuous deformation subgroups
$\mathrm{Tel}(\mathscr{R}(\tilde\cf))^0$, $\mathrm{Tel}(\mathscr{C}(\cf))^0$ are also set in correspondence by $\rho$. It is reasonable to expect that ``generically''  the sub-groups \eqref{mmmmxzzw} coincide with the full auto-equivalence groups; if this is the case, the dashed arrow in \eqref{rrrqwwzaX} is an equivalence modulo commensurability and
\be\label{ooopqw}
\mathbb{S}(\cf)\thickapprox \mathrm{Tel}(\mathscr{R}(\tilde\cf))\big/\mathrm{Tel}(\mathscr{R}(\tilde\cf))^0.
\ee  
Pragmatically, we shall study the \textsc{rhs} of this equation rather than $\mathbb{S}(\cf)$ itself. In doing this, we may sometimes detect only a proper subgroup of the actual $S$-duality group, but a very interesting one.
The \textsc{rhs} of eqn.\eqref{ooopqw} is easily written as an explicit group of integral $r\times r$ matrices ($r=2g+f$, where $g$ is the dimension of the Coulomb branch and $f=\mathrm{rank}\,F$).

\begin{rem}
In the above argument we have supposed, for simplicity, that the 2d theory is UV superconformal ($\equiv$ the corresponding brane category is fractional Calabi-Yau). The conclusion may be extended to the asymptotically-free case, as the explicit description of the duality groups will show (see \S.\ref{assysyfree}). \end{rem}

\subsection{The $S$-duality group as a concrete matrix group}\label{uuuuqmmm82}

The (pragmatic) identification of the $S$-duality group (up to commensurability) with the group of auto-equivalences of
the root category, modulo the ones acting trivially on the Grothendieck group, eqn.
\eqref{ooopqw}, allows to identify it with a concrete group of integral matrices, as we now explain.
\medskip 

In the set-up and with the \textbf{Assumptions} of the previous subsections, we have
\be
K_0(\mathscr{B}(\tilde\cf))\equiv K_0(\mathscr{A}(\tilde f))\simeq \bigoplus_{i=1}^r \Z [T_i]\equiv \Gamma.
\ee
Since $[X[1]]=-[X]$, the map
\be
[-]\colon \mathscr{B}(\tilde\cf)\to \Gamma, \qquad X\mapsto [X],
\ee
factorizes through the orbit category
\be
\mathscr{B}(\tilde\cf)\to \mathscr{B}(\tilde\cf)\big/[2\Z]\to K_0(\mathscr{B}(\tilde\cf)\big/[2\Z])\equiv\Gamma.
\ee
On $K_0(\mathscr{B}(\tilde \cf))$ there is an (integral) Euler bilinear form
\be
\chi([X],[Y])=\sum_{k\in\Z}(-1)^k\,\dim\mathrm{Hom}(X,Y[k]),
\ee
which induces a bilinear form on the
root category $\mathscr{R}(\tilde\cf)$.
Given $X$, $Y\in \mathscr{R}(\tilde\cf)$, we write their Grothendieck classes as
\be[X]=\sum_i x_i[T_i],\qquad 
[Y]=\sum_i y_i[T_i],
\ee
and the Euler pairing in the matrix form
\be
\chi([X],[Y])=x_i \, \chi([T]_i,[T]_j)\,y_j \equiv x_i \,\boldsymbol{E}_{ij}\, y_j.
\ee
The Euler matrix $\boldsymbol{E}$ is unimodular. From eqn.\eqref{Seerre} we see that
\be
\chi(X,SY)=\chi(Y,X).
\ee

We introduce the 2d quantum monodromy matrix\footnote{\ Minus the 2d quantum monodromy is called the Coxeter matrix $\boldsymbol{\Phi}$; however it seems more natural to work with $\boldsymbol{H}$.} $\boldsymbol{H}$ \cite{Cecotti:1992rm} as the image of the functor $S$ into the Grothendieck group
\be
[ST_i]=[T_j]\boldsymbol{H}_{ji}
\ee 
Then
\be
\boldsymbol{E}_{ij}= \chi(T_i,T_j)= \chi(T_j, ST_i)=\chi(T_j,T_k)\boldsymbol{H}_{ki}=\boldsymbol{E}_{jk}\boldsymbol{H}_{ki}
\ee
and
\be\label{2dmono}
\boldsymbol{H}= \boldsymbol{E}^{-1}\boldsymbol{E}^t.
\ee
$\boldsymbol{H}$ is an isometry of the Euler form $\boldsymbol{E}$
\be
\boldsymbol{H}^t \boldsymbol{E}\boldsymbol{H}= \boldsymbol{E},
\ee
and hence leaves invariant its symmetric and anti-symmetric parts
\be\label{formsforms}
C=\boldsymbol{E}+\boldsymbol{E}^t,\qquad \omega=\boldsymbol{E}-\boldsymbol{E}^t.
\ee

The statement of the 4d/2d correspondence \cite{Cecotti:2010fi} implies  (whenever a ``good'' 2d correspondent exists) that the Dirac pairing  is given by
the skew-symmetric bilinear form
\be\label{heeua}
\boldsymbol{E}-\boldsymbol{E}^t\equiv \boldsymbol{E}(\boldsymbol{1}-\boldsymbol{H}).
\ee
The flavor lattice $\Gamma_\text{flavor}$ is the radical of the Dirac form $\boldsymbol{E}-\boldsymbol{E}^t$. From \eqref{heeua} $\Gamma_\text{flavor}$ is the invariant sublattice of $\boldsymbol{H}$
\be\label{whatflaH}
\Gamma_\text{flavor}\equiv\Big\{ \boldsymbol{x}\in\Gamma\;\Big|\; \boldsymbol{H}\boldsymbol{x}=
\boldsymbol{x}\Big\},\qquad f=\mathrm{rank}\,\Gamma_\text{flavor}.
\ee
One shows that the restriction of the symmetric integral form $C$
on the flavor lattice $\Gamma_\text{flavor}$ is positive-definite.

\subsubsection{The matrix form of the duality group}

We have a group homomorphism
\be\label{uuuumzf}
\mathbf{bf}\colon \mathrm{Aut}(\mathscr{R}(\tilde\cf))\to GL(r,\Z),\qquad \sigma\longmapsto \boldsymbol{\sigma},
\ee 
where $\boldsymbol{\sigma}$ is the integral $r\times r$ matrix
\be
[\sigma T_i]=[T_j]\boldsymbol{\sigma}_{ji}.
\ee
Since all auto-equivalences commute with $S$,
\be
\big[\boldsymbol{H},\boldsymbol{\sigma}\big]=0.
\ee
By the argument around eqn.\eqref{arrqaun}, the kernel of the map $\mathbf{bf}$ is precisely $\mathrm{Aut}(\mathscr{R}(\tilde\cf))^0$. Then its
 image $\mathbf{Aut}(\mathscr{R}(\tilde\cf))\equiv\mathbf{bf}(\mathrm{Aut}(\mathscr{R}(\tilde\cf))\subset GL(r,\Z)$ is related by the correspondence $\rho$ in \eqref{rrrqwwzaX} with the $S$-duality group $\mathbb{S}(\cf)$ of the corresponding 4d theory. We conclude that

\begin{fact}
We have the correspondence
\be
\xymatrix{\mathbb{S}(\cf)\ar@{<..>}[r] &\mathbf{Aut}(\mathscr{R}(\tilde\cf))\subset C(\boldsymbol{H})\subset GL(r,\Z),}
\ee
where $C(\boldsymbol{H})$ is the \emph{centralizer} of $\boldsymbol{H}$ in $GL(r,\Z)$.
Thus the $S$-duality is (roughly) the concrete matrix group $\mathbf{Aut}(\mathscr{R}(\tilde\cf))$.
\end{fact}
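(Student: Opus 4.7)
The plan is to assemble pieces already established in the preceding subsections; the statement has three ingredients to verify: (i) the map $\mathbf{bf}$ takes values in $GL(r,\Z)$; (ii) its image is contained in the centralizer $C(\boldsymbol{H})$; and (iii) modulo commensurability, the quotient $\mathbf{Aut}(\mathscr{R}(\tilde\cf))$ corresponds to $\mathbb{S}(\cf)$.

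For (i), I would first observe that the tilting summands $T_1,\ldots,T_r$ of $\mathscr{A}(\tilde\cf)$ have images in $\mathscr{R}(\tilde\cf)$ whose Grothendieck classes form a $\Z$-basis of $K_0(\mathscr{R}(\tilde\cf))\simeq \Gamma$, because $\mathscr{R}(\tilde\cf)$ is homologically generated by $\mathsf{add}\,\ct$ and the $2$-periodicity $[2]\simeq \mathrm{Id}$ is compatible with the defining Grothendieck relation $[X[1]]=-[X]$. Hence for any $\sigma\in \mathrm{Aut}(\mathscr{R}(\tilde\cf))$ the expansion $[\sigma T_i]=[T_j]\boldsymbol{\sigma}_{ji}$ has integer entries; applying the same to $\sigma^{-1}$ yields $\boldsymbol{\sigma}\in GL(r,\Z)$, and functoriality makes $\mathbf{bf}$ a group homomorphism.

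For (ii), I would use that the Serre functor $S$ on $\mathscr{R}(\tilde\cf)$ is unique up to natural isomorphism, so every triangle auto-equivalence $\sigma$ satisfies $\sigma\circ S\cong S\circ \sigma$. Passing to Grothendieck classes via $[ST_i]=[T_j]\boldsymbol{H}_{ji}$ gives $\boldsymbol{\sigma}\,\boldsymbol{H}=\boldsymbol{H}\,\boldsymbol{\sigma}$, i.e.\! $\boldsymbol{\sigma}\in C(\boldsymbol{H})$. For (iii), I would identify $\ker\mathbf{bf}$ with the physically trivial subgroup $\mathrm{Aut}(\mathscr{R}(\tilde\cf))^0$: if $\boldsymbol{\sigma}=\boldsymbol{1}$ then each $[\sigma T_i]=[T_i]$, and since the $T_i$ are indecomposable rigid bricks preserved as such by equivalences, Krull-Schmidt forces $\sigma T_i\simeq T_i$ for every $i$. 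Therefore $\sigma$ can only re-scale the arrows of the endo-quiver of $\ct$, which is precisely the description of $\mathrm{Aut}^0$ given around eqn.~\eqref{arrqaun}; conversely, any such $\sigma$ preserves all Grothendieck classes and so lies in the kernel. Combining with eqn.~\eqref{ooopqw} yields the asserted commensurability $\mathbb{S}(\cf)\thickapprox \mathbf{Aut}(\mathscr{R}(\tilde\cf))$.

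The main obstacle, flagged in the discussion around \eqref{mmmmxzzw} and in \textbf{Fact}~\ref{amam36t}, is that the bridge $\rho$ between $\mathrm{Aut}(\mathscr{R}(\tilde\cf))$ and $\mathrm{Aut}(\mathscr{C}(\cf))$ has been established only for the telescopic subgroup $\mathrm{Tel}\subseteq \mathrm{Aut}$; the \emph{pragmatic} step is to assume that ``generically'' $\mathrm{Tel}=\mathrm{Aut}$. If this assumption fails in a given model, the matrix group $\mathbf{Aut}(\mathscr{R}(\tilde\cf))$ still captures a (still arithmetic, still very interesting) subgroup of $\mathbb{S}(\cf)$ up to commensurability, which is the sense of the dashed arrow in the statement. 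The subsequent sections of the paper carry out the explicit computation of $\mathbf{Aut}(\mathscr{R}(\tilde\cf))$ as a concrete subgroup of $GL(r,\Z)$ for the families a)--d) of the Introduction, thereby giving an effective handle on $\mathbb{S}(\cf)$.
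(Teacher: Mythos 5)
Your proposal follows essentially the same route as the paper's own (compressed) argument: integrality and invertibility of $\mathbf{bf}(\sigma)$ from the $\Z$-basis $\{[T_i]\}$, commutation with $\boldsymbol{H}$ via uniqueness of the Serre functor, identification of $\ker\mathbf{bf}$ with $\mathrm{Aut}(\mathscr{R}(\tilde\cf))^0$ by the argument around eqn.~\eqref{arrqaun}, and the pragmatic commensurability $\rho$ of \textbf{Fact~\ref{amam36t}}, with the same honest caveat about $\mathrm{Tel}$ versus the full auto-equivalence group. The one justification to tighten is the kernel step: Krull--Schmidt by itself does not imply that a rigid indecomposable object is determined by its Grothendieck class (it only gives uniqueness of decompositions), so you should instead invoke the standard fact that exceptional objects in these derived-hereditary categories are classified by their class — which is the fact the paper also relies on implicitly.
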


In \S.\ref{actgro}. we describe effective methods to  determine the group $\mathbf{Aut}(\mathscr{R}(\tilde\cf))$ very explicitly.

\begin{rem}\label{rathersublte} It is tempting to interpret the matrices $\boldsymbol{\sigma}$ as giving the action of $S$-duality on the charges $v\in\Gamma$. However this is not correct in general: the lattice $\Gamma$ is the Grothendieck group of $\mathscr{R}(\tilde\cf)$ not of $\mathscr{C}(\cf)$. Since $\Gamma\simeq K_0(\mathsf{mod}\,\mathsf{Jac}(Q_\ct,\cw_\ct))$ the identification of $\Gamma$ with the IR charge lattice depends on $\ct$, i.e.\! on the region in parameter space.
However it is still true that the group generated by the matrices $\boldsymbol{\sigma}$ is commensurable with a group of $S$-dualities.  
\end{rem}

\subsubsection{Review of the 2d quantum monodromy}
The matrix $\boldsymbol{H}$ representing 2d quantum monodromy on the Grothendieck group of the brane category $\mathscr{B}(\tilde\cf)$
of a ``good'' 2d theory $\tilde\cf$,
which has a 4d correspondent should satisfy a number of properties \cite{Cecotti:2010fi,Cecotti:2012va}.

As before, we assume that the 2d brane category $\mathscr{B}(\tilde\cf)$ has some fractional CY object, and write $\hat c$ for the maximum of their fractional 
CY dimensions (seen as elements of $\mathbb{Q}$). Physical regularity requires $0\leq \hat c<2$.

\begin{fact}[\!\!\cite{Cecotti:1992rm,Cecotti:2012va}] 
One has:
\begin{itemize}
\item $\boldsymbol{H}$ has spectral radius $1$ (hence all its eigenvalues are roots of unity). Then its characteristic polynomial factorizes into cyclotomic polynomials
\be\label{char-pol}
\det\!\big[\boldsymbol{H}-z\big]=(z-1)^f \prod_{d\in D}\Phi_d(z)^{w_d},
\ee
where $f$ is the rank of the flavor group $F$ of the 4d theory, $D$ is a finite subset of $\mathbb{N}_{\geq 2}$, and $\Phi_d(z)$ is the $d$-th cyclotomic polynomial. $w_d$ are \emph{positive} integers satisfying $\sum_{d\in D}\phi(d)\,w_d=2g$, $g$ being the dimension of the 4d Coulomb branch;
\item the Jordan blocks of $\boldsymbol{H}$ are at most of size 2, and there is no non-trivial block for the eigenvalue
$\lambda=1$. In other words, the minimal polynomial of $\boldsymbol{H}$ has the form
\be\label{min-pol}
\text{min-pol}\boldsymbol{H}(z)=(z-1)^a\prod_{d\in D}\Phi_d(z)^{m_d},\quad a=\begin{cases}0 & f=0\\
1& f\geq 1,\end{cases}\ \text{and }\begin{cases}m_d\in\{1,2\}\\
m_d\leq w_d;\end{cases}
\ee
\item if $\hat c<1$, $\boldsymbol{H}$ is semisimple. If $\hat c\geq1$, there cannot be Jordan blocks associated to eigenvalues of the form\footnote{\ This mysterious statement becomes obvious when stated in physical terms. The $h$'s in the range \eqref{veryrelevant} correspond to 2d \emph{relevant} chiral primary operators, which are irrelevant in the UV, and hence do not produce logarithmic violation of scaling ($\equiv$ non-trivial Jordan blocks).} 
\be\label{veryrelevant} 
\lambda=e^{2\pi i h}\qquad\text{with}\quad
0\leq h < 1-\hat c/2.\ee
Indeed,
the symmetric matrix $\boldsymbol{E}+\boldsymbol{E}^t$ is positive definite when restricted to the correspondent eigenspaces.
\end{itemize}
\end{fact}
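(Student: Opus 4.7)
The plan is to transfer the three bullets into statements about the Coxeter-like operator $\boldsymbol{H}=\boldsymbol{E}^{-1}\boldsymbol{E}^t$ acting on $\Gamma\otimes\C$. Since $\boldsymbol{H}$ is the image on $K_0$ of the Serre functor $S$, under the 2d quantum-monodromy interpretation of \cite{Cecotti:1992rm} its generalized eigenspaces organize the (anti-)chiral primaries of the UV fixed point according to their $U(1)_R$-charges via $\lambda=e^{2\pi i h}$; all three claims then become estimates on those charges.

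For the spectral-radius and cyclotomic-factorization statement \eqref{char-pol}, on every fractional Calabi-Yau summand $X$ the identity $S^{b_X}X\simeq X[a_X]$ implies $\boldsymbol{H}^{b_X}[X]=(-1)^{a_X}[X]$, so $[X]$ lies in a root-of-unity eigenspace. The remaining eigenvalues are forced onto $S^1$ by the $tt^*$ unitarity of the monodromy (\cite{Cecotti:1992rm}); being integral and Galois-invariant, $\det(\boldsymbol{H}-z)$ must then be a product of cyclotomic polynomials. Finally, \eqref{whatflaH} identifies $\Gamma_\mathrm{flavor}=\ker(\boldsymbol{H}-\mathbf{1})$ which has rank $f$, producing the factor $(z-1)^f$ with exactly the claimed exponent.

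For the Jordan structure \eqref{min-pol} I would exploit the two $\boldsymbol{H}$-invariant integral forms $C=\boldsymbol{E}+\boldsymbol{E}^t$ and $\omega=\boldsymbol{E}-\boldsymbol{E}^t$ from \eqref{formsforms}. The key physical input is that the Hermitian lift of $C$ coming from the Zamolodchikov metric is positive-definite on every eigenspace where no logarithmic violation of scaling can occur; an isometry of a positive-definite form is diagonalizable, hence no Jordan block on such eigenspaces. On the remaining eigenspaces the pairing between a chiral primary and its conjugate of charge $\hat c-h$ is non-degenerate but not definite, and the standard argument combining the skew form $\omega$ with the relation $\boldsymbol{H}^t\boldsymbol{E}\boldsymbol{H}=\boldsymbol{E}$ upgrades this to the bound ``Jordan blocks of size $\leq 2$''. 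For $\lambda=1$ the generalized eigenspace coincides with $\Gamma_\mathrm{flavor}$ on which $C$ is positive-definite (the last sentence of the fact), so $(\boldsymbol{H}-\mathbf{1})$ vanishes identically there and no non-trivial block is allowed.

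The third bullet follows by translating the phase of the eigenvalue into the UV scaling dimension. A chiral primary of $U(1)_R$-charge $h$ is UV-relevant precisely when $h\geq 1-\hat c/2$; primaries in the complementary range $0\leq h<1-\hat c/2$ are UV-irrelevant and thus cannot source the logarithmic violation of scaling which a non-trivial Jordan block would encode, hence $C$ is positive-definite on those eigenspaces. For $\hat c<1$ the forbidden range $[0,1-\hat c/2)$ covers every phase actually accessible, which yields global semisimplicity. The main obstacle is producing a clean categorical translation of the physical statement ``logarithmic violation of scaling $\Leftrightarrow$ UV-relevant perturbation''; in line with the authors' presentation as a cited \emph{Fact}, I would ultimately invoke \cite{Cecotti:1992rm,Cecotti:2012va} for this last step rather than re-derive it from scratch.
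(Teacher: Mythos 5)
First, note that the paper does not actually prove this statement: it is presented as a \textbf{Fact} imported from \cite{Cecotti:1992rm,Cecotti:2012va}, and the only in-text justification is the footnote giving the physical reading of \eqref{veryrelevant} and the closing sentence that $C=\boldsymbol{E}+\boldsymbol{E}^t$ is positive definite on the eigenspaces in the range $0\leq h<1-\hat c/2$. Measured against that, much of your reconstruction is on target: the Kronecker-type argument (integral matrix, all eigenvalues on the unit circle, hence cyclotomic characteristic polynomial) is the right way to get \eqref{char-pol}; the observation that positive definiteness of $C$ on an $\boldsymbol{H}$-invariant subspace forces $\boldsymbol{H}$ to lie in a compact group there, hence to be semisimple, is exactly the mechanism behind the last sentence of the Fact and behind the absence of a block at $\lambda=1$; and your remark that for $\hat c<1$ the window $[0,1-\hat c/2)$ covers all phases up to the PCT reflection $h\leftrightarrow -h$ correctly yields global semisimplicity. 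Two smaller points: the exponent $f$ in $(z-1)^f$ is the dimension of the \emph{generalized} eigenspace, so your first bullet silently uses the no-block-at-$1$ claim of the second bullet and the two must be ordered accordingly; and the logical direction in your second paragraph is inverted --- one first establishes positive definiteness of $C$ on the low-$h$ eigenspaces (from the unitarity bound on the $2d$ charges) and then \emph{deduces} the absence of logarithmic behaviour, not the other way around.

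The genuine gap is the claim that the bound ``Jordan blocks of size $\leq 2$'' follows from a standard argument combining the skew form $\omega$ with $\boldsymbol{H}^t\boldsymbol{E}\boldsymbol{H}=\boldsymbol{E}$. No such argument exists: being an isometry of a non-degenerate (symmetric, skew, or general) bilinear form does not constrain Jordan block sizes --- $Sp(2g,\Z)$ contains unipotent elements with a single Jordan block of size $2g$, and complex orthogonal groups likewise contain large unipotent blocks. The actual source of the bound in \cite{Cecotti:1992rm} is the inequality $\hat c<2$: a fixed eigenvalue $e^{2\pi i(q-\hat c/2)}$ is realized by chiral primary charges $q\in[0,\hat c]$ that differ by integers, so at most two such charges occur, the associated weight filtration has length at most two, and the nilpotent part $\log\boldsymbol{H}_{\mathrm{unip}}$ squares to zero. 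Your hedge (``invoke the references for the last step'') covers only the relevance/irrelevance dictionary in the third bullet, not this step, so as written the second bullet is not established. Either cite the $\hat c<2$ weight-filtration argument explicitly or reproduce it; the forms $C$ and $\omega$ alone will not do the job.
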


\subsection{$\mathbb{S}(\cf)$ and the Siegel modular group $Sp(2g,\Z)$}\label{spsp2g}

On general physical grounds \cite{zumn,mybook}, the group of dualities of any 4d $\cn=2$ theory should act on the electro-magnetic charges by an arithmetic  subgroup $\cg_\Z\subset Sp(2g,\R)$ where $g$ is the complex dimension of the Coulomb branch. 
%
Our $S$-duality group is slightly more general than plain electro-magnetic duality, since it  acts non-trivially on the flavor charges, as it happens in $SU(2)$ $N_f=4$ SQCD, where the action on the flavor is through $Spin(8)$ triality
\cite{SW2}. Let us describe how the group $\cg_\Z$ arises in the homological approach. 

By construction, the Dirac form $\omega$ (cfr.\! eqn.\eqref{formsforms}) induces a non-degenerate (integral) symplectic form $\Omega$ on the rank $2g$ lattice of electro-magnetic charges
\be
\Gamma_\text{e.m.}\equiv \Gamma\big/\Gamma_\text{flavor}.
\ee
$\Gamma_\text{flavor}$ spans the $(+1)$-eigenspace of $\boldsymbol{H}$
(cfr.\! eqn.\eqref{whatflaH}).
Since $\boldsymbol{H}$ centralizes
$\mathbf{Aut}(\mathscr{R})$ in $GL(r,\Z)$, all auto-equivalences of $\mathscr{R}(\tilde\cf)$ preserve the flavor sublattice $\Gamma_\text{flavor}$ and the electromagnetic one $\Gamma_\text{e.m.}$, and we have an embedding
\be
\mathbf{Aut}(\mathscr{R})\subset GL(2g,\Z)\times GL(f,\Z).
\ee
Moreover, any element of $\mathbf{Aut}(\mathscr{R})$ leaves invariant both the positive symmetric form $C$ and the symplectic form $\Omega$ (cfr.\! eqn.\eqref{formsforms}).
Hence
\be\label{uuyyhX}
\mathbf{Aut}(\mathscr{R})\subset Sp(2g,\Z)_\Omega \times O(f,\Z)_C,
\ee
where the groups in the \textsc{rhs}
are the arithmetic groups preserving
the non-degenerate forms $\Omega$ and $C$. They are lattices in $Sp(2g,\R)$ and $O(f,\R)$, respectively.

Since $C$ is positive-definite, the real group $O(f,\R)_C$ is compact, and then its arithmetic sub-group $O(f,\Z)_\Omega$ is finite. Hence, modulo commensurability, we may forget the second factor in the \textsc{rhs} of \eqref{uuyyhX}. Up to $\Z$-equivalence $\Omega$ may always be written in normal form \cite{book2}
\be
\Omega=\left(\begin{array}{c|c}
0 & \phantom{-}D\phantom{-}\\ \hline
-D\phantom{-}& 0\end{array}\right),\qquad D=\mathrm{diag}(d_1,\dots,d_g)\in\mathbb{N}^g,\ \ \text{with }
d_i\mid d_{i+1}.
\ee
When all $d_i$'s are equal, $Sp(2g,\Z)_\Omega$ reduces to the Siegel modular group $Sp(2g,\Z)$.
The physical meaning of the invariants $d_i$ will be discussed elsewhere \cite{toappear}.

\begin{fact}[see e.g.\! \cite{arithm}] The arithmetic group $Sp(2g,\Z)_\Omega$ is commensurable to the Siegel modular group $Sp(2g,\Z)$.
\end{fact}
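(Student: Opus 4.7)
The plan is to reduce the statement to the classical fact that any two arithmetic subgroups of a connected reductive $\Q$--algebraic group are commensurable, after exhibiting an explicit $\Q$--rational identification of $Sp(2g)_\Omega$ with the standard symplectic group $Sp(2g)_J$, where $J=\left(\begin{smallmatrix}0&I_g\\ -I_g&0\end{smallmatrix}\right)$. Since $\Omega$ and $J$ are both non-degenerate integral skew forms of rank $2g$, they are equivalent over $\Q$, and the desired commensurability should follow purely from this observation together with standard structure results.

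First I would produce the rational equivalence explicitly. Set $B=\mathrm{diag}(I_g,D^{-1})\in GL(2g,\Q)$; a direct computation gives $B^{t}\Omega B = J$. Hence conjugation by $B$ defines an isomorphism of $\Q$--algebraic groups
$$
\phi\colon Sp(2g)_\Omega\xrightarrow{\ \sim\ }Sp(2g)_J,\qquad M\mapsto B^{-1}MB,
$$
inducing an isomorphism of the real Lie groups $Sp(2g,\R)_\Omega\simeq Sp(2g,\R)$. Under this identification, both $Sp(2g,\Z)_\Omega$ and $\phi^{-1}\!\big(Sp(2g,\Z)\big)$ are arithmetic subgroups of the same $\Q$--algebraic group $Sp(2g)_\Omega$, and the classical theorem of Borel on arithmetic groups asserts that any two such subgroups are commensurable. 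Transporting back through $\phi$ yields the desired $Sp(2g,\Z)_\Omega\thickapprox Sp(2g,\Z)$.

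For a self-contained argument avoiding the appeal to Borel's theorem, I would exhibit the common finite-index subgroup directly. Let $N=d_g$ and let $\Gamma(N)\subset Sp(2g,\Z)$, $\Gamma_\Omega(N)\subset Sp(2g,\Z)_\Omega$ denote the principal level-$N$ congruence subgroups (both of finite index in their ambient groups, since reduction mod $N$ lands in a finite group). An entry-by-entry inspection of $\phi(M)=B^{-1}MB$, using that each ratio $(B^{-1})_{ii}/B_{jj}^{-1}\in\{1,\,d_k,\,d_k^{-1},\,d_k/d_\ell\}$ with $d_\ell\mid d_k\mid N$, shows that $\phi$ maps $\Gamma_\Omega(N^2)$ into $\Gamma(N)$ and $\phi^{-1}$ maps $\Gamma(N^2)$ into $\Gamma_\Omega(N)$. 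This gives the common finite-index subgroup up to the isomorphism $\phi$, hence the commensurability.

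The main obstacle is essentially bookkeeping: verifying that conjugation by $B$ takes a sufficiently deep congruence subgroup on one side into an integer symplectic matrix on the other. The divisibility conditions $d_i\mid d_{i+1}$ built into the normal form of $\Omega$ are exactly what is needed for the argument to close, and everything else is formal.
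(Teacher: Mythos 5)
Your proof is correct. Note first that the paper does not actually prove this statement: it is stated as a \emph{Fact} with a pointer to the literature (Morris's book on arithmetic groups), so there is no ``paper proof'' to compare against line by line. What you have written is precisely the standard argument that the citation implicitly invokes. The key computation checks out: with $B=\mathrm{diag}(I_g,D^{-1})$ one indeed gets $B^{t}\Omega B=J$, so conjugation by $B$ is a $\mathbb{Q}$-rational isomorphism of the two symplectic groups, and the two integral groups become arithmetic subgroups of one and the same $\mathbb{Q}$-algebraic group, hence commensurable by the general theorem. Transporting back through $\phi$ gives commensurability in the paper's sense (isomorphic finite-index subgroups), which is the right notion to check since the two groups are not presented as subgroups of a common ambient group.

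Your ``self-contained'' variant is also sound, though the bookkeeping is slightly more generous than necessary: for $M\in Sp(2g,\Z)_\Omega$ with $M\equiv I\pmod{N}$ and $N=d_g$, the block computation $\phi(M)=\left(\begin{smallmatrix}A & B'D^{-1}\\ DC & DD'D^{-1}\end{smallmatrix}\right)$ already lands in $Sp(2g,\Z)$ at level $N$ (each offending entry is a multiple of $N$ divided by some $d_j\mid N$), so $\Gamma_\Omega(N)$ maps into $Sp(2g,\Z)$ without passing to level $N^2$; the deeper level is only needed if you insist on landing inside a principal congruence subgroup on the other side. Either way the conclusion follows: $\phi(Sp(2g,\Z)_\Omega)\cap Sp(2g,\Z)$ contains a principal congruence subgroup and hence has finite index on both sides. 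The one step you assert without full verification is the ``entry-by-entry inspection''; it does close, for exactly the reason you identify, namely the divisibility chain $d_i\mid d_{i+1}$ built into the normal form of $\Omega$.
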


Thus, modulo finite groups, the $S$-duality group $\mathbb{S}(\cf)$ may be seen as a subgroup of the Siegel modular group $Sp(2g,\Z)$
(although in a rather subtle way, see \textbf{Remark \ref{rathersublte}}).

\subsection{The flavor Weyl group}\label{weylll}
$\Gamma_\text{flavor}$ is the lattice of flavor charges which remain unbroken for generic values of the masses. The masses just softly break the UV flavor group $F$, which is non-Abelian in general, to its maximal torus $U(1)^f$; 
$\Gamma_\text{flavor}$ is then the group of characters of the maximal torus of $F$. Suppose that (modulo isogeny)
\be
F=U(1)^h\times F_1\times F_2\times\cdots\times F_t,\qquad F_\ell\ \text{a simple Lie group}.
\ee
Then
\be
\Z^h \oplus \Gamma^\text{root}_1\oplus\cdots\oplus\Gamma^\text{root}_t\subseteq \Gamma_\text{flavor}\subseteq Z^h\oplus\Gamma^\text{weight}_1\oplus\cdots\oplus\Gamma^\text{weight}_t
\ee
where $\Gamma^\text{root}_\ell$ (resp.\! $\Gamma^\text{weight}_\ell$) is the root (resp.\! weight) lattice of the simple Lie group $F_\ell$. On physical grounds, the Weyl group
\be
\mathrm{Weyl}(F_1)\times 
\mathrm{Weyl}(F_2)\times\cdots\times
\mathrm{Weyl}(F_\ell) 
\ee
should be part of the $S$-duality group
$\mathbf{Aut}(\mathscr{R}(\tilde\cf))$. Since the action of this group on $\Gamma_\text{flavor}$ leaves the positive-definite quadratic form $C$ invariant, we get
\be
C\simeq  M_h\oplus a_1\,C_1\oplus a_2\,C_2\oplus\cdots \oplus a_t C_t,
\ee
where $C_\ell$ is the Cartan matrix of
$F_\ell$, $a_\ell$ is a positive integer, and $M_h$ a positive-definite, symmetric, integral $h\times h$ matrix. 
Given the category $\mathscr{R}(\tilde\cf)$,
we know $C$ and hence can infer the non-Abelian  enhancement of the flavor group in the UV, $U(1)^f\to F$.
The image of $\mathbf{Aut}(\mathscr{R}(\tilde\cf))$ in the finite group $O(f,\Z)_{C}$, eqn.\eqref{uuyyhX}, is then a subgroup of
\be\label{tttttmmmqq9}
O(h,\Z)_{M_h} \times \Big(\mathrm{Weyl}(F_1)\ltimes \mathrm{Aut}(F_1)\Big)\times\cdots \times \Big(\mathrm{Weyl}(F_\ell)\ltimes \mathrm{Aut}(F_\ell)\Big)\times \mathfrak{S},
\ee
where $\mathrm{Aut}(F_k)$ is the automorphism group of the $F_k$ Dynkin graphs, $\mathfrak{S}$ permutes isogenous factors in $F$.

\subsection{Tables of 4d/2d correspondences}

In this last part of the section, we list the 2d correspondent theories to the 4d $\cn=2$ models we focus on in this paper.  
Up to some exception (as $D^{(1,1)}_4(G)$), the 4d models $\cf$ listed in the \textbf{Introduction} are related to  
2d (2,2) Landau-Ginzburg (LG) models
with four chiral superfields and superpotential $W(x,y,u,v)$ as in the following table:

\be
\begin{tabular}{l@{\hspace{25pt}}l}\hline\hline
model & $W(x,y,u,v)$\\\hline
$(G,G^\prime)$ & $W_G(x,y)+W_{G^\prime}(u,v)$\\
$(G,\widehat{A}(p,q))$ & 
$W_{\widehat{A}(p,q)}(x,y)+W_G(u,v)$\\
$(G,\widehat{D}_r)$ & $W_{\widehat{D}_r}(x,y)+W_G(u,v)$\\
$(G, \widehat{E}_r)$ &
$W_{\widehat{E}_r}+W_G(u,v)$\\
$E^{(1,1)}_6(A_{n-1})$ & $x^3+y^3+u^3+v^n$\\
$E^{(1,1)}_7(G)$ & $x^4+y^4+W_G(u,v)$\\
$E^{(1,1)}_8(G)$ & $x^6+y^3+W_G(u,v)$\\\hline\hline
\end{tabular}
\ee
The Du Val polynomials $W_G(u,v)$ and the affine functions $W_{\widehat{H}}(x,y)$ are listed in table \ref{duval}.

  \begin{table}
  \begin{center}
  \begin{minipage}{190pt}
  \begin{center}
  \begin{tabular}{ccc}\hline\hline
  $G$ & $W_G(x,y)$ &$\hat c(G)$\\\hline
  $A_{n-1}$ & $x^n+y^2$ & $(n-2)/n$\\
$D_n$ & $x^{n-1}+xy^2$ & $(n-2)/(n-1)$\\
$E_6$ & $x^4+y^3$ & $5/6$\\
$E_7$ & $x^3+xy^3$ & $8/9$\\
$E_8$ & $x^5+y^3$ & $14/15$\\\hline\hline  
  \end{tabular}
  \caption{\label{duval} $ADE$ minimal singularities.}\end{center}
  \end{minipage}\qquad\quad
  \begin{minipage}{150pt}
  \begin{center}
  \begin{tabular}{cc}\hline\hline
  $\widehat{H}$ & $W_{\widehat{H}}(u,v)$\\\hline
  $\widehat{A}(p,q)$ & $e^{pu}+e^{-qu}+v^2$\\
$\widehat{D}_r$ & $u^{r-2}+y^2+x^2y^2$\\
$\widehat{E}_r$ & $u^{r-3}+y^3+x^2y^2$\\
\hline\hline  
  \end{tabular}
  \caption{\label{afff}Affine $W$'s.}\end{center}
  \end{minipage}
  \end{center}
  \end{table}

From the 2d perspective, the following  identifications of 4d $\cn=2$ SCFTs are obvious
\be\label{oooqe}
\begin{aligned}
E_6^{(1,1)}(A_1)&= (D_4,A_2),& E_6^{(1,1)}(A_2)&= (D_4,D_4),& 
E_6^{(1,1)}(A_3)&=(D_4, E_6),\\
E_6^{(1,1)}(A_4)&=(D_4, E_8)
&
E_7^{(1,1)}(A_1)&=(A_3,A_3),&
E_7^{(1,1)}(A_2)&=(A_3,E_6),\\
E_7^{(1,1)}(D_4)&=(E_6,E_6),
&
E_8^{(1,1)}(A_1)&=(A_5,A_2),&
E_8^{(1,1)}(A_2)&=(A_5, D_4),\\
E_8^{(1,1)}(A_3)&=(A_5,E_6),
&
E_8^{(1,1)}(A_4)&=(A_5,E_8).
\end{aligned}
\ee

In particular, for the models 
\be\label{ppaw8yX}
E^{(1,1)}_7(A_{n-1})\quad\text{and}\quad  E^{(1,1)}_8(A_{n-1})
\ee
 one is lead to the \emph{derived}  category of branes in  the Landau-Ginzburg models associated to the triangle singularities 
\be\label{wwwwz}
x^4+y^4+z^n,\quad\text{and respectively}\quad x^6+y^3+z^n.
\ee
Orlov \cite{orlov2} shows that these 
branes categories are equivalent to  the Frobenius stable categories
\begin{equation}
\underline{\mathsf{vect}}\,\mathbb{X}(p_1,p_2,p_3)
\end{equation} 
where $(p_1,p_2,p_3)$ are the exponents in the superpotential \eqref{wwwwz}, that is, $(4,4,n)$ and, respectively, $(6,3,n)$.
The category $\underline{\mathsf{vect}}\,\mathbb{X}(p_1,p_2,p_3)$
\cite{adechain}  is defined as the quotient of the category of bundles
(coherent sheaves of positive rank) over the weighted projective line $\mathbb{X}(p_1,p_2,p_3)$ \cite{wpl} by the ideal of line bundles.

The authors of ref.\!\cite{adechain} construct an explicit tilting object $T\in \underline{\mathsf{vect}}\,\mathbb{X}(p_1,p_2,p_3)$
such that
\be
\mathrm{End}(T)\simeq \C A_{p_1-1}\times\C A_{p_2-1}\times
\C A_{p_3-1}.
\ee
Then \cite{adechain}
\be
\underline{\mathsf{vect}}\,\mathbb{X}(p_1,p_2,p_3)\simeq D^b(\mathsf{mod}(\C A_{p_1-1}\times\C A_{p_2-1}\times
\C A_{p_3-1})). 
\ee
Note that the models $(A_{p_1-1},A_{p_2,-1})$ correspond to the case $\underline{\mathsf{vect}}\,\mathbb{X}(p_1,p_2,2)$.

This explicit realization of the 4d/2d correspondence for a subclass of the theories of interest allows to double check many results, by comparing the two sides of the correspondence. This is particularly helpful for the models \eqref{ppaw8yX} which do not belong to the well-understood cases $(G,G^\prime)$ and $(G,\widehat{H})$.  The models \eqref{ppaw8yX} (and their generalizations) will be studied in section \ref{moregeneral}.

\section{Auto-equivalences of triangle categories}\label{autttoequiv}

We review some general aspects of auto-equivalences in triangle categories.

\subsection{Thomas-Seidel twists in a triangle category}
\label{TSS1}

A particular class of auto-equivalences of a triangulated category $\mathscr{C}$ is given by the Thomas-Seidel twists associated to spherical objects \cite{seidel2001braid,twi}; they have the advantage of having a very explicit
form. In many cases the spherical twists generate the full group of auto-equivalences\footnote{\ In facts, all autoequivalences of a triangulated category are given by spherical twists provided one generalizes from twists around spherical \emph{objects} to twists around spherical \emph{functors} \cite{segal}. } $\mathrm{Aut}(\mathscr{C})$, and hence we may expect them to
produce a `substantial' part of the $S$-duality group.

We recall some basic definition.

\begin{defn} An object $A$ of a triangulated category $\mathscr{T}$
with Serre functor $S$ is called
\emph{spherical} iff there exists a $n\in\mathbb{N}$ such that:
\begin{align}
\label{cyau1}
SA&=A[n] &&\text{($A$ is CY of dimension $n$)}\\
\mathrm{Hom}(A,A[k])&=\begin{cases}
\C &\text{for }k=0,n\\
0 & \text{otherwise.}
\end{cases}&&\text{($A$ has the homology of a $n$-sphere)}
\label{qqquasimm}
\end{align}

As in \cite{twi},
we write $\mathrm{Hom}^\bullet(A,X)$ for the complex $\bigoplus_k \mathrm{Hom}(A[k],X)[k]$ of $\C$-spaces with the zero differential, and say that our triangulated category $\mathscr{T}$ is \textit{$\mathrm{Hom}^\bullet$-finite} iff $\dim \mathrm{Hom}^\bullet(A,B)<\infty$ for all objects $A, B$.
\end{defn}


\begin{defn} Let $A$ be a spherical object in the triangulated category $\mathscr{T}$ with Serre functor $S$ and $\mathrm{Hom}^\bullet$-finite. The \textit{Thomas-Seidel twist $T_A\colon\mathscr{T}\to\mathscr{T}$} \cite{seidel2001braid}, is the auto-equivalence $X\mapsto T_A(X)$ defined by the exact triangle
\be\label{wwtrian}
\mathrm{Hom}^\bullet(A,X) \otimes A \to X \to T_A(X)\to,
\ee
where the first arrow is the canonical evaluation.
\end{defn}

Let $\sigma\in\mathrm{Aut}(\mathscr{C})$ be an auto-equivalence; we have the adjoint action
\be
\sigma T_A\sigma^{-1}= T_{\sigma(A)}.
\ee
In particular,
\be\label{lasw2}
T_{A[1]}=T_A,\qquad\quad ST_AS^{-1}=T_{SA}=T_{A[n]}=T_A,
\ee
that is, $T_A$ depends only on the $\mathfrak{A}$-orbit of $A$.

\subsubsection{Braid group actions}

The set of all Thomas-Seidel twists
of all spherical objects generate a group of auto-equivalence which often is a braid group.

\begin{defn}
An $(A_m)$-configuration, $m \geq 1$, in the triangulated category $\mathscr{C}$ is a collection of $m$ spherical
objects $A_1,... , A_m$ such that
\be
\dim \mathrm{Hom}^\bullet_{\mathscr{T}}(A_i, A_j ) = 
\left\{\begin{array}{lr}
1,  \ \ \ |i - j| = 1,\\
0, \ \ \ |i - j| \geq 2.
\end{array}
\right.
\ee\end{defn}

\begin{fact}[\!\!\cite{seidel2001braid}]
\label{ffffccq} 
The twist $T_A$ along any spherical object $A$ is an exact autoequivalence
of $\mathscr{C}$. Moreover, if $A_1,... , A_m$ is an $(A_m)$-configuration, the
twists $T_{A_i}$
satisfy the $\mathcal{B}_{m+1}$ braid relations up to graded natural isomorphism:
\begin{equation}\label{braid}\begin{aligned}T_{A_i}T_{A_{i+1}}T_{A_i}&\cong  T_{A_{i+1}} T_{A_i}T_{A_{i+1}} \ \  for  \ \ i = 1,... ,m - 1,\\
T_{A_i}T_{A_j}&\cong T_{A_j}T_{A_i} \ \ for \ \  |i - j| \geq 2.\end{aligned}
\end{equation}\end{fact}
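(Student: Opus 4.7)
The plan is to break the statement into two parts: first, that each twist $T_A$ along a spherical object is an exact autoequivalence; second, that twists along an $(A_m)$-configuration satisfy the expected braid relations. For the first part, I would introduce the dual twist $T_A^{-1}\colon\mathscr{C}\to\mathscr{C}$ defined by the exact triangle
\begin{equation*}
T_A^{-1}(X)\to X\to \mathrm{Hom}^\bullet(X,A)^\vee\otimes A \to,
\end{equation*}
where the second arrow is the canonical coevaluation. Exactness of $T_A$ (and of $T_A^{-1}$) is a consequence of the functoriality of mapping cones together with the octahedral axiom applied to the defining triangle \eqref{wwtrian}: for a distinguished triangle $X\to Y\to Z\to$ one fits the three evaluation squares into an octahedron and reads off the triangle $T_A(X)\to T_A(Y)\to T_A(Z)\to$. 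The verification that $T_A\circ T_A^{-1}\simeq\mathrm{Id}\simeq T_A^{-1}\circ T_A$ is the core homological computation: applying $T_A$ to the defining triangle of $T_A^{-1}(X)$ and using the two spherical conditions \eqref{cyau1} and \eqref{qqquasimm} together with Serre duality $\mathrm{Hom}(A,X)\simeq D\,\mathrm{Hom}(X,A[n])$, one checks that the composite functor has vanishing cone on all objects.

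For the second part, the basic observation is the naturality identity
\begin{equation*}
\sigma\, T_A\,\sigma^{-1}\simeq T_{\sigma(A)},\qquad \sigma\in\mathrm{Aut}(\mathscr{C}),
\end{equation*}
which follows immediately by applying $\sigma$ to \eqref{wwtrian}. This turns every braid relation into an assertion about the action of twists on spherical objects. For $|i-j|\geq 2$ the $(A_m)$-hypothesis gives $\mathrm{Hom}^\bullet(A_i,A_j)=0$, so the defining triangle collapses to $T_{A_i}(A_j)\simeq A_j$. Hence $T_{A_i}T_{A_j}T_{A_i}^{-1}=T_{T_{A_i}(A_j)}=T_{A_j}$, yielding the commutation relation. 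For $|i-j|=1$, I would reduce the braid identity $T_{A_i}T_{A_{i+1}}T_{A_i}\simeq T_{A_{i+1}}T_{A_i}T_{A_{i+1}}$ to the single geometric claim
\begin{equation*}
\bigl(T_{A_i}T_{A_{i+1}}\bigr)(A_i)\simeq A_{i+1},
\end{equation*}
via the slick rewriting: setting $\psi=T_{A_i}T_{A_{i+1}}$, the braid relation is equivalent to $\psi\,T_{A_i}\,\psi^{-1}\simeq T_{A_{i+1}}$, which by naturality is $T_{\psi(A_i)}\simeq T_{A_{i+1}}$.

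The remaining step is then the explicit cone computation. Because $\mathrm{Hom}^\bullet(A_{i+1},A_i)$ is one-dimensional in a single degree $d$, the defining triangle gives $T_{A_{i+1}}(A_i)\simeq\mathrm{Cone}\!\bigl(A_{i+1}[-d]\to A_i\bigr)$. Applying $T_{A_i}$ to this triangle and using $\mathrm{Hom}^\bullet(A_i,A_i)=\C\oplus\C[-n]$ (so that $T_{A_i}(A_i)$ is computed from the cone of the canonical evaluation, which reduces to a shift of $A_i$), one checks that the resulting object is isomorphic to $A_{i+1}$ up to a shift that is absorbed by \eqref{lasw2}. The main obstacle will be bookkeeping the degree shifts in this cone computation and verifying that the natural isomorphisms assemble coherently rather than merely object-wise; this is the step where the full strength of the spherical hypothesis (both the CY property and the $S^n$-homology) must be invoked, and where one genuinely has to chase diagrams through the octahedral axiom.
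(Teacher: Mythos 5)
The paper does not prove this statement: it is imported verbatim as a cited \textbf{Fact} from Seidel--Thomas \cite{seidel2001braid}, so there is no in-paper argument to compare against. Your proposal is, in substance, a correct reconstruction of the standard proof from that reference: the conjugation formula $\sigma T_A\sigma^{-1}\simeq T_{\sigma(A)}$, the collapse $T_{A_i}(A_j)\simeq A_j$ when $\mathrm{Hom}^\bullet(A_i,A_j)=0$ giving commutation, and the reduction of the cubic relation to the object-level identity $(T_{A_i}T_{A_{i+1}})(A_i)\simeq A_{i+1}$ (up to a shift absorbed by \eqref{lasw2}) are exactly the right skeleton, and the cone computation you sketch does close: applying $\mathrm{Hom}^\bullet(A_i,-)$ to the triangle defining $T_{A_{i+1}}(A_i)$ and using the nondegeneracy of the Serre composition pairing kills the $\C[-n]$ summand of $\mathrm{Hom}^\bullet(A_i,A_i)$, leaving a one-dimensional space whose evaluation cone is $A_{i+1}[1-d]$ by rotation of the original triangle. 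The one genuine soft spot is your appeal to ``functoriality of mapping cones'': in a bare triangulated category cones are not functorial, so neither the exactness of $T_A$ nor the coherence of the natural isomorphisms in \eqref{braid} can be established at this level of generality. This is precisely why Seidel--Thomas realize the twists by Fourier--Mukai kernels (equivalently, why one needs a DG or stable enhancement), and it is the content of the ``up to graded natural isomorphism'' caveat in the statement. You correctly flag this coherence issue at the end, but it should be understood as requiring a change of framework (kernels or an enhancement), not merely more diagram chasing with the octahedral axiom.
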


\subsubsection{$p$-periodic categories}

If the category $\mathscr{T}$ is $p$-periodic, $[p]\simeq \mathrm{Id}$, it cannot be
$\mathrm{Hom}^\bullet$-finite, and the triangle \eqref{wwtrian} makes no sense. It is natural to restrict the direct sum to \emph{one} orbit, i.e.\! to replace \eqref{wwtrian} with
\be\label{wwtrianII}
\bigoplus_{k=0}^{p-1}\mathrm{Hom}(A[k],X) \otimes A[k] \to X \to T_A(X)\to.
\ee
We then define ``spherical'' an object $A\in\mathscr{T}$ iff
it is $n$--CY and
\be\label{mmbcd}
\dim\mathrm{Hom}(A,A[k])=\delta_{k,0}^{(p)}+\delta_{k,n}^{(p)},
\ee
where $\delta_{i,j}^{(p)}$ is the mod $p$ Kronecker delta (see eqn.\eqref{modelta}). This definition is a bit tricky when $n\mid p$, and for the moment we exclude this special case.
Likewise, we say that
a sequence of $m\geq 1$ spherical objects in the present sense form an $(A_m)$--configuration iff
\be
\sum_{k=0}^{p-1}\dim \mathrm{Hom}(A_i, A_j[k] ) = 
\left\{\begin{array}{lr}
1,  \ \ \ |i - j| = 1,\\
0, \ \ \ |i - j| \geq 2.
\end{array}
\right.
\ee 
With these modifications on the definitions (and some mild assumptions) the functors $T_{A_i}$ yield auto-equivalences which satisfy the braid relations \eqref{braid}.

\subsection{Telescopic functors}\label{s:ddddd4561}
We need a generalization of the twist inspired by the telescopic functors \cite{lenzingmeltzer,meltzer} which played the central role in the discussion of $S$-duality for the tubular SCFTs \cite{shepard}. 

Let us consider the first condition for a spherical objects, the CY one \eqref{cyau1}. All auto-equivalence commute with $S$ (since $S$ is unique), and \eqref{cyau1} enforces that property, cfr.\! eqn.\eqref{lasw2}.\footnote{\ The second condition \eqref{qqquasimm} guarantees that $T_A$ induces an isometry of the Euler form.} Write $S=T[n]$. If $A$ is not $n$-CY, the orbit
$\{T^k A\}_{k\in\Z}$ does not reduce to the single object $A$. However, if the orbit is periodic, $T^{k+p}A\simeq T^kA$,
we can still enforce commutativity with $T$ (hence with $S$) by ``averaging'' over the $T$-orbit. In order to achieve periodicity we are free to twist $T$ by any number of shifts. Let us make the ``averaging'' procedure more precise.
\medskip

$\mathscr{R}$ a triangular category with
Serre duality functor $S=T[n]$.
$\mathscr{R}$ is allowed to be $\ell$-periodic ($\ell=\infty$ means non periodic).

\begin{defn}
$A\in\mathscr{R}$ has a \textit{spherical $T$-orbit of period $p\in\mathbb{N}$} iff $p$ is the smallest positive integer such that
\begin{align}\label{four4}
T^p A&\simeq A\\
\dim\mathrm{Hom}(A,T^kA[j])&=\delta^{(p)}_{k,0}\,\delta^{(\ell)}_{j,0}+\delta^{(p)}_{k,1}\,\delta^{(\ell)}_{j,n},
\label{four2}
\end{align}
where $\delta^{(p)}_{i,j}$ is the mod $p$ Kronecker delta \eqref{modelta}.
\end{defn}

A spherical object is just
a spherical  $T$-orbit of period $p=1$.
\medskip

To a spherical $T$-orbit of period $p$, $\{T^k A\}_{k=0}^{p-1}$
we associate a pair of quasi-inverse auto-equivalences, $L_{A}$ and $R_{A}$, called the \textit{telescopic functors.}
If $\ell=\infty$ they are 
defined by the following cones \cite{lenzingmeltzer,meltzer}
\begin{gather}\label{telescaa}
\bigoplus_{i=0}^{p-1} \mathrm{Hom}^\bullet(T^i A,X) \otimes T^i A \to X \to L_{A}(X) \to\\
R_{A}(X) \to X \to \bigoplus_{i=0}^{p-1} D\mathrm{Hom}^\bullet(X,T^i A) \otimes T^i {A} \to\\
L_AR_A\simeq R_AL_A\simeq \mathrm{Id}.\end{gather}
For $p=1$, $L_A$ coincides with the Thomas-Seidel twist $T_A$. 

More generally, for $\ell<\infty$, we replace
\be
\mathrm{Hom}^\bullet(T^iA,X)\otimes T^i A\equiv \bigoplus_{k\in\Z}\mathrm{Hom}^\bullet(T^iA[k],X)\otimes T^i A[k]
\ee
with
\be 
\bigoplus_{k=0}^{\ell-1}\mathrm{Hom}^\bullet(T^iA[k],X)\otimes T^i A[k].
\ee

\paragraph{Uniform notation.} 
It is convenient to rewrite all functors introduced above in a more transparent notation. Let $\mathfrak{F}\simeq \Z^2$ be the free Abelian group generated by $T\equiv S[-n]$ and $[1]$
and $\mathfrak{F}_A\subset \mathfrak{F}$ the isotropy subgroup of the object $A$. Then we may 
write the defining triangle in an uniform way
\be\label{unifffr}
\bigoplus_{\rho\in \mathfrak{F}/\mathfrak{F}_A}\mathrm{Hom}(\rho A,X)\otimes \rho A\to X\to L_A(X)\to,
\ee
while for an object $A$ with a spherical orbit
\be
\sum_{\rho\in\mathfrak{F}/\mathfrak{F}_A}\dim\mathrm{Hom}(\rho A, A)=2.
\ee

The proof that $L_A$ is an auto-equivalence for general $p$ is a generalization of the one for $p=1$.
Note that $L_A$ depends only on the $\mathfrak{F}$-orbit of the object $A$, not on the object itself.

Clearly $L_A$ sends spherical orbits into spherical orbits. Then the telescopic functors associated to two spherical $T$-orbits, $\{T^kA\}$, $\{T^kB\}$, satisfy the adjoint action formula
\be\label{uuuqeer}
L_A L_B R_A\simeq  L_{L_A(B)}.
\ee

\begin{defn}\label{ttttealbbr} Let $\{T^kA_i\}_{k=0}^{p_i-1}$ be  a collection of spherical $T$-orbits of period $p_i$ ($i=1,\dots,m$). We say that they form a $(A_m)$-configuration iff
\be
\sum_{\rho\in\mathfrak{F}/\mathfrak{F}_{A_i}} \dim \mathrm{Hom}(\rho  A_i,A_j)
= 
\left\{\begin{array}{lr}
1,  \ \ \ |i - j| = 1,\\
0, \ \ \ |i - j| \geq 2.
\end{array}
\right.
\ee 
Note that the definition implies that all orbits in the $(A_m)$-configuration have the same period, $p_i=p$.
\textbf{Fact 2} generalizes to $p>1$. \end{defn}
 
\begin{fact}\label{braidtele}
The telescopic functors $L_{A_i}$ associated to a $(A_m)$-configuration of spherical orbits satisfy the $\cb_{m+1}$ relations
 \begin{equation}\label{braid2}
 \begin{aligned}L_{A_i}L_{A_{i+1}}L_{A_i}&\cong  L_{A_{i+1}} L_{A_i}L_{A_{i+1}} \ \  for  \ \ i = 1,... ,m - 1,\\
L_{A_i}L_{A_j}&\cong L_{A_j}L_{A_i} \ \ for \ \  |i - j| \geq 2.\end{aligned}
\end{equation}
\end{fact}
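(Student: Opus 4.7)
\noindent The plan is to extend the Seidel--Thomas argument (\textbf{Fact \ref{ffffccq}}) from spherical objects to spherical $\mathfrak{F}$-orbits, leveraging two structural ingredients already in place: the adjoint-action formula \eqref{uuuqeer} and the observation (immediate from the defining triangle \eqref{unifffr}) that $L_X$ depends only on the $\mathfrak{F}$-orbit of $X$.

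\textbf{Commutativity for $|i-j|\geq 2$.} First I would substitute $X = A_j$ into the defining triangle \eqref{unifffr} of $L_{A_i}$. The $(A_m)$-configuration condition of \textbf{Definition \ref{ttttealbbr}} forces $\mathrm{Hom}(\rho A_i, A_j) = 0$ for every $\rho \in \mathfrak{F}/\mathfrak{F}_{A_i}$, so the triangle collapses to $0 \to A_j \to L_{A_i}(A_j) \to$, giving $L_{A_i}(A_j) \simeq A_j$. The adjoint formula \eqref{uuuqeer} then yields $L_{A_i} L_{A_j} R_{A_i} \simeq L_{L_{A_i}(A_j)} \simeq L_{A_j}$, which is precisely $L_{A_i} L_{A_j} \simeq L_{A_j} L_{A_i}$.

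\textbf{Braid relation for adjacent pairs.} Set $f = L_{A_i}$ and $g = L_{A_{i+1}}$. The relation $fgf \simeq gfg$ rearranges to $g^{-1} f g \simeq f g f^{-1}$. Applying the adjoint formula on each side identifies $g^{-1} f g \simeq L_{R_{A_{i+1}}(A_i)}$ and $f g f^{-1} \simeq L_{L_{A_i}(A_{i+1})}$, so the braid relation reduces to
\be
L_{R_{A_{i+1}}(A_i)} \simeq L_{L_{A_i}(A_{i+1})},
\ee
and, since $L_{(-)}$ depends only on the $\mathfrak{F}$-orbit of its argument, to the assertion that $R_{A_{i+1}}(A_i)$ and $L_{A_i}(A_{i+1})$ lie in the same $\mathfrak{F}$-orbit. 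Feeding $X = A_{i+1}$ into \eqref{unifffr}, the $(A_m)$-configuration condition selects a unique coset $\rho_0 \in \mathfrak{F}/\mathfrak{F}_{A_i}$ on which $\mathrm{Hom}(\rho A_i, A_{i+1})$ is one-dimensional, so the triangle degenerates to $\rho_0 A_i \to A_{i+1} \to L_{A_i}(A_{i+1}) \to$. The dual triangle for $R_{A_{i+1}}(A_i)$ involves $\bigoplus_{\rho} D\mathrm{Hom}(A_i, \rho A_{i+1}) \otimes \rho A_{i+1}$; using Serre duality $\mathrm{Hom}(X,Y) \simeq D\mathrm{Hom}(Y, SX)$ together with $S = T[n]$ and the same configuration condition, only one summand survives, yielding $R_{A_{i+1}}(A_i) \to A_i \to \rho_1 A_{i+1} \to$ for some $\rho_1 \in \mathfrak{F}$. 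Both cones are thus realised by the essentially unique non-zero morphism between $\mathfrak{F}$-translates of $A_i$ and $A_{i+1}$, and a direct Serre-duality bookkeeping identifies them up to an overall $\mathfrak{F}$-shift, giving the desired $\mathfrak{F}$-orbit equivalence.

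\textbf{Main obstacle.} The delicate step is the final identification: to pin down that $R_{A_{i+1}}(A_i)$ is genuinely an $\mathfrak{F}$-translate of $L_{A_i}(A_{i+1})$, and not merely the cone on some ``parallel'' morphism in the same Hom-pencil. For spherical objects ($p = 1$) this is Seidel--Thomas Lemma~2.11 of \cite{seidel2001braid}, where uniqueness of the non-zero morphism $A\to B$ up to scalar makes the identification automatic. In the present generality one must track simultaneously the isotropy subgroups $\mathfrak{F}_{A_i}$, $\mathfrak{F}_{A_{i+1}}$, the common period $p$ of the orbits, and---when $\ell < \infty$---the closing of the $[1]$-factor inside $\mathfrak{F}$. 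The technical assumption that $n$ not divide $p$ (the exceptional case set aside just before \eqref{mmbcd}) is precisely what ensures \eqref{four2} picks out exactly the two Hom-spaces needed for the Seidel--Thomas calculation to propagate cleanly to the telescopic setting; beyond this, the argument is a direct generalisation requiring only careful diagram chasing with the Serre functor.
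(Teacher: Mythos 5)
Your proposal is correct and follows exactly the route the paper intends: the paper offers no explicit proof of this \textbf{Fact}, merely asserting (inside \textbf{Definition \ref{ttttealbbr}}) that the Seidel--Thomas result generalizes to $p>1$, and your argument is precisely that generalization, resting on the adjoint-action formula \eqref{uuuqeer} and the fact that $L_{(-)}$ depends only on the $\mathfrak{F}$-orbit. The ``main obstacle'' you flag is in fact harmless: the $(A_m)$-configuration condition singles out a unique coset $\rho_0$ with a one-dimensional Hom-space, so there is no pencil of parallel morphisms, and Serre duality forces $\rho_1\equiv\rho_0^{-1}$ modulo $\mathfrak{F}_{A_{i+1}}$, which closes the identification of $R_{A_{i+1}}(A_i)$ with an $\mathfrak{F}$-translate of $L_{A_i}(A_{i+1})$.
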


Given a triangle category as above, $\mathscr{R}$, we write $\mathrm{Tel}(\mathscr{R})\subset\mathrm{Aut}(\mathscr{R})$ for the subgroup generated by the telescopic functors (including twists) and the ``obvious'' auto-equivalences ($S$, shifts, etc.). 
$\mathrm{Tel}(\mathscr{R})$ is expected to be an interesting part of the auto-equivalence group and it is often the full $\mathrm{Aut}(\mathscr{R})$.

\subsection{Half-hypermultiplets and spherical half-orbits}\label{TSS3}

\subsubsection{Physical motivation}\label{phphmot}
As we shall illustrate in section 5, in a Lagrangian $\cn=2$ QFT a quark which is a \emph{full}\footnote{\ A quark is a \emph{full} hypermultiplet iff its fermionic states form a quaternionic representation of the gauge group of the form $W\oplus W^\vee$.} hypermultiplet corresponds to an object with a spherical orbit and hence to a telescopic auto-equivalence, that is, a duality of the Lagrangian QFT.

In a QFT with a weakly coupled Lagrangian description,
the duality associated to a full quark may be understood semiclassically. 
The simplest instance is $SU(2)$ coupled to $N_f$ full quarks
studied by Seiberg and Witten \cite{SW2}.
We recall their argument: the spectrum contains dyons which, as the coupling goes to zero, $g\to0$, become classical solitons with masses $O(1/g^2)$. The hypermultiplet fermions, $\psi_\pm^a$, $a=1,\dots,N_f$, have zero modes in the soliton background, whose quantization yields the Clifford algebra in dimension $2N_f$. Since the $\psi_\pm^a$ carry electric and flavor charge, it follows that the dyons of even (odd) electric charge are in the chiral (anti-chiral) representation of the flavor group $\mathrm{Spin}(2N_f)$. The telescopic duality associated with the $a$-th quark is simply the action on the dyons of the zero mode operator $\psi^a_+$ (or $\psi^a_-$ together with a shift by 2 of the electric charge); in the language of coherent sheaves over weighted projective lines, this auto-equivalence for $SU(2)$ with $N_f$ quarks is called the ``one-point shift'' (cfr.\! \S.10.3 of \cite{lenzinghandbook}). 

This physical understanding of simple telescopic auto-equivalences leaves the feeling that something is missing:
what about half-hypermultiplets?\footnote{\ A quark supermultiplet in a Lagrangian $\cn=2$ QFT is a \emph{half-hypermultiplet} if it states form a quaternionic representation of the gauge group which cannot be written in the form $W\oplus W^\vee$.}
They also should play a role in the duality group. Clearly, the above semiclassical argument may be applied also in this case (provided the theory is free from $\Z/2\Z$ anomalies \cite{Witten:1982fp}). 

In terms of \textsc{susy} representation theory, the basic difference between full- and half-hypermultiplets is that in the second case the PCT-conjugate states belong to the same irreducible representation, while in the first case we have the direct sum of a representation and its PCT-conjugate one. From the point of view of the triangle categories which describe the BPS sector, PCT conjugation is essentially the shift $X\to X[1]$.
Then the ``abstract'' categorical version of the distinction between full- and half-hypers is the following: in the first case
$X$ and $X[1]$ belong to \emph{distinct} $T$-orbits, while in the second one to the \emph{same} $T$-orbit which may be decomposed into two half-orbits. Physical intuition suggests that there must be a generalization of the telescopic auto-equivalences associated to spherical
\emph{half}-orbits.

Now we give the proper definitions.

\subsubsection{Auto-equivalences from spherical half-orbits}
\begin{defn}
An object $A$ in a $2m$-periodic triangle category 
$\mathscr{R}$, with Serre functor $S=T[n]$, has \textit{a periodic $T$-orbit of half-period $q$,}
if $q$ is the smallest positive integer such that
\be\label{hhaw}
T^qA=A[m],
\ee
so that in the notation introduced at the end of \S.\ref{s:ddddd4561}
\be
\mathfrak{F}/\mathfrak{F}_A=\big\{T^k[j],\ k=0,1,\dots, q-1, \ 
j=0,1,2\dots, 2m-1 \big\}.
\ee 
The object $A$ has a \textit{spherical half-orbit} iff, in addition to \eqref{hhaw},\footnote{\ Again, we omit some special cases as $q=1$ or $m\mid n$.}
\be
\dim\mathrm{Hom}(A,T^kA[j])=
\delta^{(2q)}_{k,0}\delta^{(2m)}_{j,0}+\delta^{(2q)}_{k,q}\delta^{(2m)}_{j,m}+
\delta^{(2q)}_{k,1}\delta^{(2m)}_{j,n}+\delta^{(2q)}_{k,q+1}\delta^{(2m)}_{j,m+n}.
\ee
\end{defn}

Then
\be
\sum_{\rho\in \mathfrak{F}/\mathfrak{F}_A} \dim \mathrm{Hom}(\rho A,A)=2,
\ee
so the functor $X\to L_A(X)$ defined in eqn.\eqref{unifffr} still makes sense and is an auto-equivalence.

\subsection{Twists in $\mathscr{C}(\cf)$ vs.\! telescopic functors in
$\mathscr{R}(\tilde\cf)$}\label{ujjujuja}

We return to the set-up of section \ref{4d2d3ct}, in particular to diagram \eqref{rrrqwwzaX}.

\subsubsection{First case: $\tilde\cf$ is a SCFT}\label{fffistcaase}

The category $\mathscr{B}(\tilde\cf)=D^b\mathscr{A}(\tilde\cf)$ has Serre functor $S\equiv T[2]$, tilting object $\ct=\oplus_i T_i$, and fractional Calabi-Yau dimension $\tfrac{a}{b}<2$. Thus $T^b=[a-2b]\equiv [-m]$. In particular, the 4d cluster category
\be
\mathscr{C}(\cf)=\mathsf{Hu}_\triangle\!\big(\mathscr{B}(\tilde\cf)/ T^\Z\big)
\ee
is $m$-periodic.
$\mathscr{C}(\cf)$ is 2-Calabi-Yau, so the condition
\eqref{cyau1}
 holds for all objects with $n=2$; in other words
all objects belong to a $T$-orbit with $p=1$. 
The condition of $A\in\mathscr{C}(\cf)$ being spherical, eqn.\eqref{mmbcd},
becomes
\be\label{mmbcdII}
\dim\mathrm{Hom}_{\mathscr{C}}(A,A[j])=\delta_{j,0}^{(m)}+\delta_{j,2}^{(m)},
\ee
the special case being $m|2$;
since $m=\mathfrak{m}\cdot\gcd(a,2)$, this is equivalent to the condition $\mathbb{M}\simeq\mathrm{Id}$, as in all Lagrangian SCFTs. Note that $m>1$ (since $m=1$ is incompatible with the existence of a cluster-tilting object).
Then
from eqn.\eqref{qqquasimm}
we see that all spherical objects are automatically 
 rigid. Hence, by \textbf{Fact \ref{norigg}}, all spherical objects of $\mathscr{C}(\cf)$ actually lay in the orbit category \eqref{ororbt}. Given a spherical object
$A$  in the orbit category, we consider one of its lifts in the derived category
$\mathscr{B}(\tilde\cf)$ (still written $A$). Then
\be
\begin{split}
\mathrm{Hom}_\mathscr{C}\big(A,A[j])&= \bigoplus_{k\in\Z} \mathrm{Hom}_\mathscr{B}(A,T^k A[j]\big)=\\
&=\bigoplus_{k\in\Z} 
D\,\mathrm{Hom}_\mathscr{B}\big(A,
T^{1-k} A[2-j]\big),
\end{split}
\ee 
so if $A$ is spherical in $\mathscr{C}(\cf)$ its lift $A\in\mathscr{B}(\tilde\cf)$ satisfies
\be\label{yyyqaz9X}
\dim\mathrm{Hom}_{\mathscr{B}}\big(A,T^k  A[j]\big)=\sum_{s\in\Z}\Big(\delta_{k,bs}\,\delta_{j,ms}+\delta_{k,bs+1}\,\delta_{j,ms+2}\Big),
\ee
or, equivalently, its image in the $2\mathfrak{m}$-periodic category $\mathscr{CR}(\tilde\cf)$
\be\label{yyyqaz10}
\dim\mathrm{Hom}\big(A,T^k A[j]\big)=\begin{cases}\delta^{(b)}_{k,0}\,\delta_{j,0}^{(2\mathfrak{m})}+\delta^{(b)}_{k,1}\,\delta^{(2\mathfrak{m})}_{j,2} &a\ \text{even}\\
\delta^{(2b)}_{k,0}\,\delta_{j,0}^{(2\mathfrak{m})}+\delta^{(2b)}_{k,1}\,\delta^{(2\mathfrak{m})}_{j,2}+
\delta^{(2b)}_{k,b}\,\delta_{j,\mathfrak{m}}^{(2\mathfrak{m})}+\delta^{(2b)}_{k,b+1}\,\delta^{(2\mathfrak{m})}_{j,\mathfrak{m}+2} &a\ \text{odd}.
\end{cases}
\ee
Conversely, if $A\in\mathscr{B}(\tilde\cf)$ satisfies \eqref{yyyqaz9X}, its image $A\in\mathscr{C}(\cf)$ is spherical 
unless $\mathfrak{m}=1$.
However, we shall see that the $\mathfrak{m}=1$ behaves very much as the general case, and we don't need to consider it an exception. 
In other terms, if $A\in\mathscr{C}(\cf)$ is spherical iff 
 $A\in\mathscr{CR}(\tilde\cf)$
 has a spherical (half-) $T$-orbit.

We denote the image of $A$ in $\mathscr{R}(\tilde\cf)$ by the same symbol. In $\mathscr{R}(\tilde\cf)$ eqns.\eqref{yyyqaz9X},\eqref{yyyqaz10} become 
\be
\dim \mathrm{Hom}_{\mathscr{R}}(A, T^k A[j])= 
\begin{cases}(\delta^{(b)}_{k,0}+\delta^{(b)}_{k,1})\delta^{(2)}_{j,0} &a\ \text{even}\\
(\delta^{(2b)}_{k,0}+\delta^{(2b)}_{k,1})\delta_{j,0}^{(2)}+(\delta^{(2b)}_{k,b}+\delta^{(2b)}_{k,b+1})\delta^{(2)}_{j,1} &a\ \text{odd}.
\end{cases}
\ee
where $p$ is the period of the object $A$ in $\mathscr{R}(\tilde\cf)$.
The last equation is identical to the characterization of spherical (half-) $T$-orbits in $\mathscr{R}(\cf)$.
Thus there is a one--to--one correspondence of twist auto-equivalences in $\mathscr{C}(\cf)$ and of telescopic auto-equivalences in
$\mathscr{R}(\tilde\cf)$.
\medskip

To be more explicit, it is convenient to define $L_A$ directly at the level of the 
derived category $\mathscr{B}(\tilde\cf)$. Let $A\in\mathscr{B}(\tilde\cf)$ satisfy
eqn.\eqref{yyyqaz9X}; then
(cfr.\! \eqref{qqqq1z})
\be
\mathfrak{F}/\mathfrak{F}_A=\mathfrak{A}
\ee
and, for all $X\in\mathscr{B}(\tilde\cf)$, we define $L_A(X)\in\mathscr{B}(\tilde\cf)$ by the cone
\be
\to \bigoplus_{\rho\in\mathfrak{A}}\mathrm{Hom}_\mathscr{B}(\rho A,X)\otimes\rho A\xrightarrow{\;\text{can}\;}
X\to L_A(X)\to
\ee
which makes sense by eqn.\eqref{lljazfrr}. $X\mapsto L_A(X)$ is an autoequivalence of the derived category.
With reference to the diagram \eqref{rrrqwwzaX}, let us apply the functor $\pi_\star$ ($\star=\emptyset, r,c,cr$) to this triangle. We get
\be\label{aaaq12}
\bigoplus_{\rho\in \mathfrak{A}/\mathfrak{A}_\star} \left(\bigoplus_{\sigma\in\mathfrak{A}_\star}\mathrm{Hom}_\mathscr{B}(\sigma \rho A, X)\right)\otimes \rho \pi_\star A\to \pi_\star X \to \pi_\star L_A(X)\to
\ee
Now the sum in the large parenthesis is just
\be
\mathrm{Hom}_{\mathscr{B}/\mathfrak{A}_\star}(\rho A, X)\equiv \mathrm{Hom}_{\mathsf{Hu}_\triangle\!(\mathscr{B}/\mathfrak{A}_\star)}(\rho A, X),
\ee
while for the image of $A$ in $\mathsf{Hu}_\triangle(\mathscr{B}/\mathfrak{A}_\star)$
\be
\mathfrak{F}/\mathfrak{F}_A=\mathfrak{A}/\mathfrak{A}_\star\qquad \star=\emptyset,c,r,cr.
\ee
Hence the triangle \eqref{aaaq12} corresponds to the triangle which defines the 
telescopic functor in the 
triangular category $\mathsf{Hu}_\triangle\!(\mathscr{B}/\mathfrak{A}_\star)$, see eqn.\eqref{unifffr}. We see that 
for $A, X\in \mathscr{B}$
with $A$ satisfying \eqref{yyyqaz9X} we have
\be
L_{\varpi_\star(A)}\big(\varpi_\star(X)\big)= \varpi_\star\!\big(L_A(X)\big).
\ee
In particular, this equation holds for $X=T_i$ (the indecomposable summands of the tilting object $\ct$).
Since an auto-equivalence is uniquely determined by its action on the full subcategory $\mathsf{add}\,\ct$, we see that 
the subgroups\,\footnote{\ \label{footTel}Properly speaking, by the groups $\mathrm{Tel}(\mathsf{Hu}_\triangle(\mathscr{B}(\tilde\cf)/\mathfrak{A}_\star))$ we mean the subgroups of the auto-equivalence groups which are \textit{generated by telescopic functors of the form $L_A$ with $A\in
\mathscr{B}(\tilde\cf)/\mathfrak{A}_\star$.} For $\star=\emptyset,c$ these are the full set of telescopic functors, since all objects with spherical orbits belong to the orbit subcategory. For $\star=r,cr$ we cannot exclude that there are other spherical orbits in the triangular hull (but we believe that there are none). } 
\be
\mathrm{Tel}(\mathsf{Hu}_\triangle(\mathscr{B}(\tilde\cf)/\mathfrak{A}_\star))\subset
\mathrm{Aut}(\mathsf{Hu}_\triangle(\mathscr{B}(\tilde\cf)/\mathfrak{A}_\star))
\ee
are related by the maps $\varpi_\star$ and are ``essentially'' equal. 
We stress that the case $\mathfrak{m}=1$ is no exception: although the object $\varpi_c(A)$ is not spherical in the standard sense, the autoequivalence $L_{\varpi_c(A)}$ still makes sense being induced from $L_A$.  
In conclusion,
\begin{fact}
$A\in \mathscr{C}(\cf)$ is spherical if and only if the canonical image in $\mathscr{R}(\tilde\cf)$ of its lift $\tilde A\in\mathscr{B}(\tilde\cf)$
belongs to a spherical $T$-orbit in $\mathscr{R}(\tilde\cf)$ (and hence also in
$\mathscr{CR}(\tilde\cf)$).
In the diagram
\be\label{ddddiagrmqi}
\begin{gathered}
\xymatrix{&& \mathrm{Tel}(\mathscr{CR}(\tilde\cf))\ar[dll]_{\mathrm{Tel}(\varpi_1)}\ar[drr]^{\mathrm{Tel}(\varpi_2)}\\
\mathrm{Tel}(\mathscr{C}(\cf))\ar@{<..>}[rrrr]^\thickapprox_\rho&&&&\mathrm{Tel}(\mathscr{R}(\tilde\cf))}\end{gathered}
\ee
the group homomorphisms $\mathrm{Tel}(\varpi_a)$ ($a=1,2$) are  surjective with finite kernels.
Then the dashed arrow is a commensurability relation (isomorphism up to finite groups). 
\end{fact}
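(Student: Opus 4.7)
My plan is to separate the claim into the characterization of spherical objects (first sentence of the Fact) and the commensurability part (second sentence), since the second essentially reduces to the first together with the compatibility of telescopic functors with the projections $\pi_\star$, $\iota_\star$ that appear in diagram \eqref{rrrqwwzaX}. The first sentence is really just a summary of the Hom computations carried out in \S\ref{fffistcaase}; I would organize those computations cleanly as a direct proof, and then build on them.

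For the first sentence, take $A\in\mathscr{C}(\cf)$ spherical. Since $\mathscr{C}(\cf)$ is 2-Calabi-Yau, every object trivially satisfies the CY condition with $n=2$, so ``spherical'' only means the Hom-dimension condition \eqref{mmbcdII}. By \textbf{Fact \ref{norigg}} the rigidity of $A$ forces $A$ to live in the orbit subcategory $\mathscr{B}(\tilde\cf)/\mathfrak{A}_c$, so it lifts to some $\tilde A\in\mathscr{B}(\tilde\cf)$. Unfolding the orbit Hom formula \eqref{jjjjkasqw0} and combining with Serre duality converts \eqref{mmbcdII} into the dimension formula \eqref{yyyqaz9X} on $\tilde A$. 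Projecting down by $\pi_r$ and $\pi_{cr}$ and applying \eqref{jjjjkasqw0} once more turns \eqref{yyyqaz9X} into exactly the defining conditions of a spherical $T$-orbit (or spherical half-orbit, when $a$ is odd) in $\mathscr{R}(\tilde\cf)$ and $\mathscr{CR}(\tilde\cf)$. The converse is the same computation run backwards; the corner case $\mathfrak{m}=1$ must be handled separately but, as noted in the text, it behaves like the general case because the telescopic functor is still induced from the derived level.

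For the second sentence, I use the observation, already derived right before the Fact, that $L_{\varpi_\star(A)}(\varpi_\star(X))\simeq \varpi_\star(L_A(X))$ whenever $A$ satisfies \eqref{yyyqaz9X}. Since an auto-equivalence is determined by its action on $\mathsf{add}\,\ct$, this gives well-defined group homomorphisms $\mathrm{Tel}(\varpi_a)$. Surjectivity is established generator by generator: every telescopic generator $L_B$ of $\mathrm{Tel}(\mathscr{C}(\cf))$ (respectively $\mathrm{Tel}(\mathscr{R}(\tilde\cf))$) is, by part one of the Fact, attached to a spherical orbit that lifts to $\mathscr{CR}(\tilde\cf)$; the corresponding $L_{\tilde B}\in\mathrm{Tel}(\mathscr{CR}(\tilde\cf))$ maps onto $L_B$ under $\mathrm{Tel}(\varpi_a)$ by the compatibility formula. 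The ``obvious'' generators ($S$, $[1]$, symmetries of $\mathscr{A}(\tilde\cf)$) clearly lift as well, so the maps $\mathrm{Tel}(\varpi_a)$ hit every generator of their target. Finiteness of the kernels is inherited from the ambient statement already proved in \textbf{Fact \ref{amam36t}}: $\ker\mathrm{Aut}(\varpi_1)\simeq \Z/\mathfrak{m}\Z$ and $\ker\mathrm{Aut}(\varpi_2)$ is finite, and the kernels of the restrictions are subgroups of these. A surjective homomorphism with finite kernel is a commensurability, so composing $\mathrm{Tel}(\varpi_2)$ with a section of $\mathrm{Tel}(\varpi_1)$ modulo its finite kernel produces the dashed arrow $\rho$ as a commensurability equivalence.

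The main subtle point is hidden in the identification of \emph{all} spherical $T$-orbits in the triangular hulls $\mathscr{R}(\tilde\cf)$ and $\mathscr{CR}(\tilde\cf)$ with lifts of orbits from the orbit subcategory. In $\mathscr{C}(\cf)$ and in the orbit category $\mathscr{B}(\tilde\cf)/\mathfrak{A}_c$ this is automatic by the rigidity supplied by \textbf{Fact \ref{norigg}}, but in the hulls of $\mathfrak{A}_r$ and $\mathfrak{A}_{cr}$ one cannot a priori exclude extra spherical (half-)orbits supported on non-rigid objects of the hull. As flagged in footnote \ref{footTel}, the authors adopt the convention that $\mathrm{Tel}$ is generated only by $L_A$'s coming from the orbit subcategory, which is precisely what the proof above handles; any additional spherical orbits in the hull would simply enlarge $\mathrm{Tel}$ on both sides and are expected not to exist. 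With this convention in place, every step of the argument is a bookkeeping exercise on the diagram \eqref{rrrqwwzaX} combined with the Hom identities of \S\ref{fffistcaase}, and no further input is required.
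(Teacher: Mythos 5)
Your proposal is correct and follows essentially the same route as the paper: the characterization of spherical objects is the Hom computation of \S\ref{fffistcaase}, the compatibility $L_{\varpi_\star(A)}\circ\varpi_\star\simeq\varpi_\star\circ L_A$ gives the homomorphisms, surjectivity is checked generator by generator via the orbit correspondence, and finiteness of the kernels is inherited from \textbf{Fact \ref{amam36t}}. Your explicit flagging of the footnote-\ref{footTel} convention (excluding hypothetical spherical orbits outside the orbit subcategory in the hulls of $\mathfrak{A}_r$, $\mathfrak{A}_{cr}$) is exactly the caveat the paper relies on, so nothing is missing.
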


\begin{proof}
$\varpi_1$, $\varpi_2$ are one-to-one on the orbits of the spherical objects,
and defines a one-to-one correspondence $\rho$ between the orbits in $\mathscr{R}(\tilde\cf)$ and $\mathscr{C}(\cf)$.
Since the telescopic functors depend only on the orbit, this sets a well-defined correspondence between the telescopic functors of the three categories. The solid morphisms in \eqref{ddddiagrmqi} act as\be
L_{A_1}L_{A_2}\cdots L_{A_w}\longmapsto L_{\varpi_a(A_1)}
L_{\varpi_a(A_2)}\cdots L_{\varpi_a(A_w)}.
\ee
In diagram \eqref{ddddiagrmqi} all relation between the generators $L_{A_i}$ which holds in the upstairs group remains valid in the downstairs ones. Therefore the two morphisms
$\mathrm{Aut}(\varpi_a)$ are epi. We already know that their kernels are finite.\end{proof}

\subsubsection{Second case: $\tilde\cf$ is asymptotically-free}\label{assysyfree}

In this case the 2d quantum monodromy $\boldsymbol{H}$ is not semi-simple but rather unipotent, that is, satisfies the minimal equation \eqref{min-pol} with some $m_d$ equal 2.
\medskip

Instead of working out the general case, we study a special class of brane categories $\mathscr{B}=D^b\mathscr{A}$ in strictly asymptotically-free (2,2) models which will suffice for the applications we have in mind. 
The general case is analogous.

\paragraph{Spherical orbits in cluster categories of tame weighted projective lines.}
We consider the categories of branes of $\sigma$-models with target space
a weighted projective line (in the sense of \cite{wpl,lenzinghandbook,ictplect})\footnote{\ We follow the notations and conventions of \cite{shepard}.} $\mathbb{X}(\boldsymbol{p})$,
of weights $\boldsymbol{p}\equiv (p_1,p_2,p_3)$)
with strictly positive Euler characteristic
 \be
 \chi(\mathbb{X}(\boldsymbol{p}))=2-\sum_{i=1}^3\left(1-\frac{1}{p_i}\right)>0.
 \ee
This is the condition for (strict) asymptotic-freedom (AF)  \cite{Cecotti:2011rv,Cecotti:2012va}.
 We have $\mathscr{B}(\boldsymbol{p})=D^b\mathsf{coh}\,\mathbb{X}(\boldsymbol{p})$, the derived category of coherent sheaves on $\mathbb{X}(\boldsymbol{p})$. The Serre duality functor is
 \be
 S\colon X\mapsto X\otimes \omega [1],
 \ee
where $\omega$ is the dualizing sheaf whose degree is, by definition, $-\chi(\mathbb{X}(\boldsymbol{p}))$.
It is negative for a AF $\sigma$-model, and indeed, the degree of $\omega$
is the coefficient of the $\beta$-function of the 2d QFT.

The corresponding cluster category
\be
\mathscr{C}(\boldsymbol{p})\equiv \mathscr{B}(\boldsymbol{p})/\langle S^{-1}[2]\rangle^\Z=D^b\mathsf{coh}\,\mathbb{X}(\boldsymbol{p})/\langle \otimes\, \omega^{-1}[1]\rangle^\Z,
\ee
has been studied in \cite{BKL}. It is the category with the same objects as $\mathsf{coh}\,\mathbb{X}(\boldsymbol{p})$ and Hom-spaces
\be
\mathrm{Hom}_{\mathscr{C}(\boldsymbol{p})}(X,Y)=
\mathrm{Hom}_{\mathsf{coh}}(X,Y)\oplus \mathrm{Ext}^1_{\mathsf{coh}}(X,Y\otimes\omega^{-1}).
\ee 
The corresponding root category
\be
\mathscr{R}(\boldsymbol{p})=D^b\mathsf{coh}\,\mathbb{X}(\boldsymbol{p})\big/[2\Z],
\ee
may be seen as the category $\mathsf{coh}\,\mathbb{X}(\boldsymbol{p})\bigvee \mathsf{coh}\,\mathbb{X}(\boldsymbol{p})[1]$ with morphism spaces ($X,Y\in\mathsf{coh}\,\mathbb{X}(\boldsymbol{p})$)
\be
\begin{gathered}
\mathrm{Hom}_\mathscr{R}(X,Y)=
\mathrm{Hom}_\mathscr{R}(X[1],Y[1])=
\mathrm{Hom}(X,Y),\\
\mathrm{Hom}_\mathscr{R}(X,Y[1])=
\mathrm{Hom}_\mathscr{R}(X[1],Y)=
\mathrm{Ext}^1(X,Y).
\end{gathered}
\ee

\begin{fact} Assume  $\chi(\mathbb{X}(\boldsymbol{p}))>0$.
Let $X\in \mathscr{C}(\boldsymbol{p})$ be spherical \emph{(respectively, let $Y\in\mathscr{R}(\boldsymbol{p})$ belong to a spherical orbit).} Then, as a coherent sheaf, $X$ \emph{(resp.\! $Y$)} is fractional Calabi-Yau with $\hat c=1$, i.e.\!
there is an integer $p$ such that
$S^pX=X[p]$ in $D^b\mathsf{coh}\,\mathbb{X}(\boldsymbol{p})$. The CY objects in $D^b\mathsf{coh}\,\mathbb{X}(\boldsymbol{p})$ are the sheaves of zero-rank; they form an Abelian category, namely a $\mathbb{P}^1$-family of stable tubes all of which but (at most) three are homogeneous. The exceptional tubes have periods $\{p_1,p_2,p_3\}$.
The simples in the tubes are the only objects in spherical orbits.
\end{fact}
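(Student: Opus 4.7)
The strategy is to reduce the spherical/spherical--orbit condition in $\mathscr{C}(\boldsymbol{p})$ and $\mathscr{R}(\boldsymbol{p})$ to a $T$-periodicity statement for bricks in the ambient derived category $\mathscr{B}(\boldsymbol{p})=D^b\mathsf{coh}\,\mathbb{X}(\boldsymbol{p})$, and then to classify the resulting objects using the explicit geometry of $\mathsf{coh}\,\mathbb{X}(\boldsymbol{p})$ in the tame regime $\chi(\mathbb{X}(\boldsymbol{p}))>0$.

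First I would recall the standard description of $\mathsf{coh}\,\mathbb{X}(\boldsymbol{p})$: it is a connected hereditary abelian category with Serre duality $S=\tau[1]$, where $\tau=-\otimes\omega$ is the Auslander--Reiten translate; it carries a canonical torsion pair $(\mathsf{coh}_+,\mathsf{coh}_0)$ separating bundles from finite-length (torsion) sheaves, and $\mathsf{coh}_0=\bigvee_{x\in\mathbb{X}}\mathcal{U}_x$ splits as an orthogonal sum of stable tubes, $\mathcal{U}_x$ having period $p(x)$ equal to $p_i$ at the three marked points and $1$ at every other point. Since $\chi(\mathbb{X}(\boldsymbol{p}))>0$ forces $\deg\omega<0$, the identity $\deg\tau^n E=\deg E+n\,\mathrm{rk}(E)\,\deg\omega$ shows that no bundle can be periodic under $\tau$; whereas every object of $\mathcal{U}_x$ satisfies $\tau^{p(x)}X=X$, hence $S^{p(x)}X=X[p(x)]$. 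This identifies the fractional Calabi--Yau subcategory $\mathsf{CY}(\mathscr{B}(\boldsymbol{p}))$ (of CY dimension $\hat c=1$) precisely with $\mathsf{coh}_0$ and its shifts, and already yields the first assertion of the Fact as well as the $\mathbb{P}^1$-family structure with three exceptional tubes of periods $p_1,p_2,p_3$.

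Next, by the dictionary established in \S.\ref{ujjujuja} a spherical object $X\in\mathscr{C}(\boldsymbol{p})$, or equivalently an object in a spherical orbit of $\mathscr{R}(\boldsymbol{p})$, lifts to $A\in\mathscr{B}(\boldsymbol{p})$ which is periodic under $T=S[-2]$; by the previous paragraph $A$ must therefore be (up to shift) a torsion sheaf in some $\mathcal{U}_x$. The orbit-Hom condition $\sum_{\rho\in\mathfrak{F}/\mathfrak{F}_A}\dim\mathrm{Hom}(\rho A,A)=2$ further forces $\mathrm{End}_{\mathscr{B}}(A)=\C$, so $A$ is a brick; in a stable tube the only bricks are the quasi-simples, since any indecomposable of quasi-length $\ell>1$ acquires nontrivial endomorphisms from its uniserial composition series. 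Conversely, for each quasi-simple $S_x\in\mathcal{U}_x$ with $p=p(x)$, AR theory on the tube combined with Serre duality yields
\[
\mathrm{Hom}_{\mathscr{B}}(S_x,\tau^k S_x[j])=\delta^{(p)}_{k,0}\delta_{j,0}\,\C\;\oplus\;\delta^{(p)}_{k,1}\delta_{j,1}\,\C,
\]
which via the translation of \S.\ref{fffistcaase} is exactly the spherical-orbit condition in both $\mathscr{R}(\boldsymbol{p})$ and $\mathscr{C}(\boldsymbol{p})$, completing the classification.

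The main technical hurdle I anticipate is the first reduction above: the dictionary of \S.\ref{ujjujuja} was developed under the assumption that $\tilde\cf$ is a 2d SCFT, so that $\mathscr{B}(\tilde\cf)$ is globally fractional Calabi--Yau, whereas here $\tilde\cf$ is strictly asymptotically-free and the global monodromy $\boldsymbol{H}$ is unipotent rather than semisimple. I expect to handle this by the observation that every lift of a spherical object must a priori lie in $\mathsf{CY}(\mathscr{B}(\boldsymbol{p}))=\mathsf{coh}_0$ (an $\hat c=1$ piece of $\mathscr{B}(\boldsymbol{p})$), so the SCFT-style arguments of \S.\ref{ujjujuja} apply verbatim once restricted to this subcategory. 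A secondary check: the triangular hulls $\mathscr{R}(\boldsymbol{p})$ and $\mathscr{C}(\boldsymbol{p})$ may in principle contain objects outside the orbit category, but by \textbf{Fact \ref{norigg}} such objects are never rigid, hence cannot meet the orbit-Hom sum $=2$, so no spurious spherical orbits are introduced.
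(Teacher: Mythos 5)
Your overall skeleton matches the paper's: rule out positive-rank objects, then classify spherical orbits inside the tubes. But two steps as written do not hold up.

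First, the reduction ``spherical in $\mathscr{C}(\boldsymbol{p})$ $\Rightarrow$ the lift lies in $\mathsf{coh}_0$'' is exactly the non-trivial point, and you resolve it by assertion (``must a priori lie in $\mathsf{CY}(\mathscr{B}(\boldsymbol{p}))$''). Your degree computation $\deg\tau^nE=\deg E+n\,\mathrm{rk}(E)\deg\omega$ only shows that a bundle is not $\tau$-periodic; since every object of the $2$-CY category $\mathscr{C}(\boldsymbol{p})$ is trivially ``periodic'' there, non-periodicity of the lift in $\mathscr{B}(\boldsymbol{p})$ does not by itself contradict sphericity downstairs. The paper closes this gap with a direct estimate: using $[1]\simeq\otimes\,\omega$ in $\mathscr{C}(\boldsymbol{p})$ one has $\dim\mathrm{Hom}_{\mathscr{C}(\boldsymbol{p})}(X,X[m])=\dim\mathrm{Hom}_{\mathsf{coh}}(X,X\otimes\omega^m)+\dim\mathrm{Hom}_{\mathsf{coh}}(X,X\otimes\omega^{2-m})$, which for a bundle of rank $\rho>0$ grows like $\rho\,|m|$ as $|m|\to\infty$ and hence violates the spherical condition outright. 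Your argument becomes correct once you replace the ``a priori'' step by this (or an equivalent Riemann--Roch) growth bound; as it stands the logic is circular.

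Second, your justification for discarding non-quasi-simple torsion sheaves is wrong: in a stable tube of rank $p$ every indecomposable of quasi-length $\ell\le p$ \emph{is} a brick (e.g.\ the quasi-length-$2$ object in a rank-$3$ tube has $\mathrm{End}=\C$, since its quasi-top and quasi-socle are non-isomorphic quasi-simples), so ``nontrivial endomorphisms from the uniserial series'' only appear for $\ell>p$. The correct obstruction is the orbit-Hom sum: for quasi-length $\ell$ one finds $\sum_{k}\sum_{j}\dim\mathrm{Hom}(M,\tau^kM[j])=2\ell$ (nonzero morphisms to the $\tau$-translates $\tau^kM$ with $1-\ell\le k\le\ell$), which equals $2$ only for $\ell=1$. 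This is the computation the paper performs for the analogous uniserial situation in \S.5.2.6. Your verification that the quasi-simples themselves do satisfy the spherical-orbit condition is correct, as is your use of \textbf{Fact \ref{norigg}} to exclude spurious spherical objects in the triangular hull.
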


\begin{proof} Let $X\in\mathsf{coh}\,\mathbb{X}(\boldsymbol{p})$ seen as an element of the cluster category $\mathscr{C}(\boldsymbol{p})$. One has
\be
\begin{split}
\dim \mathrm{Hom}_{\mathscr{C}(\boldsymbol{p})}(X,X[m])&=\dim \mathrm{Hom}_{\mathscr{C}(\boldsymbol{p})}(X,X\otimes \omega^m)=\\
&=
\dim \mathrm{Hom}_\mathsf{coh}(X,X\otimes\omega^m)+
\dim \mathrm{Hom}_\mathsf{coh}(X,X\otimes\omega^{2-m}).
\end{split}
\ee
If $X$ has positive rank $\rho>0$, i.e.\! it is a bundle, the \textsc{rhs} goes like
\be
\rho\, |m|+\mathrm{const}\quad \text{for }|m|\gg 1,
\ee 
and cannot be spherical. $Y\in\mathscr{R}(\boldsymbol{p})$ is fractional CY iff it is fractional CY in 
$\mathscr{B}(\boldsymbol{p})$.
The rest of the \textbf{Fact} are standard facts about the weighted projective lines, see e.g.\! \cite{wpl,lenzinghandbook,ictplect}.\end{proof}

The simples in the tubes indeed belong to spherical orbits. The simples $\cs_{a,i}$ ($a\in \Z/p_i\Z$) in the $i$-th exceptional stable tube $C_{p_i}$ of period $p_i$ have fractional CY dimension $\frac{p_i}{p_i}$
\be
S^{p_i}\cs_{a,i}=\cs_{a,i}[p_i],
\ee
and setting $S=\tau [1]$ one has 
\be\label{hhazx5}
\tau \cs_{a,i}=\cs_{a+1,i}
\ee
 and
\be
\begin{split}
\dim \mathrm{Hom}(\cs_{a,i},\tau^k \cs_{b,j}[m])&=
\dim \mathrm{Hom}(\cs_{a,i},\cs_{b+k,j}[m])=\\
&=\delta_{ij}\Big(\delta^{(p_i)}_{a,b+k}\,\delta_{m,0}+\delta^{(p_i)}_{a,b+k-1}\,\delta_{m,1}\Big).
\end{split}
\ee
The telescopic functors $L_{\cs_{a,i}}$
depend only on the tube $i$; in $\mathscr{R}(\boldsymbol{p})$ this follows from the fact that the $\cs_{a,i}$ with the same $i$ belong to the same $\tau$-orbit \eqref{hhazx5}. In $\mathscr{C}(\boldsymbol{p})$ one has
\be
L_{\cs_{a+1,i}}=L_{\tau\cs_{a,i}}=L_{\cs_{a,i}[1]}=L_{\cs_{a,i}}.
\ee
The $L_i\equiv L_{\cs_{a,i}}$
 commute between themselves and act on the bundles as shift in the gradings
\be\label{kkaq20v}
X\longmapsto X\otimes \co(\vec x_i)\equiv X(\vec x_i)\quad \text{for }\mathrm{rank}\,X>0
\ee
see \textbf{Theorem 10.8} of \cite{lenzinghandbook}.
Eqn.\eqref{kkaq20v} holds in 
all four categories $\mathsf{coh}\,\mathbb{X}(\boldsymbol{p})$,
$\mathscr{B}(\boldsymbol{p})$, 
$\mathscr{C}(\boldsymbol{p})$,
and $\mathscr{R}(\boldsymbol{p})$.

This shows (for this class of asymptotically-free examples) the isomorphism
\be
\mathrm{Tel}\,\mathscr{C}(\cf)\simeq \mathrm{Tel}\,\mathscr{R}(\tilde\cf).
\ee
In addition, the auto-equivalence group contains the shifts $[k]$, and the permutations of exceptional tubes of same period $p_i$ (see \cite{shepard} for their physical interpretation).

The general asymptotically-free case is expected to be similar. In particular, the analysis may be generalized to the  auto-equivalence groups
generated by objects with spherical half-orbits.

\begin{rem}
The mirror symmetric (2,2) models to the above $\sigma$-models are the LG theories with superpotentials in table
\ref{afff} whose periods may be read in table \ref{mmmatt}.
\end{rem}

\begin{rem}
In the 4d perspective, each exceptional tube of period $p_i>1$ corresponds to a $D_p$ Argyres-Douglas superconformal matter system coupled to $SU(2)$ SYM, as in table \ref{mmmatt}. The $i$-th constituent system, taken in isolation, is described by its own cluster category, which is a Dynkin cluster category of type  $D_{p_i}$. The 4d quantum monodromy of the sub-constituent, $S$, has period $(h(D_{p_i})+2)/2=p_i$ and, as always, has the interpretation of a $U(1)_R$ rotation by $2\pi$ \cite{Cecotti:2010fi} (of the sub-constituent only). On the cluster category of the fully coupled theory $\mathscr{C}(\boldsymbol{p})$ this symmetry operation becomes the telescopic functor $L_i$. Via the chiral anomaly and the Witten effect \cite{effwitten}, it implies a shift in the electric charge ($\equiv$ degree in the math language) of the dyons ($\equiv$ bundles for mathematicians), see eqn.\eqref{kkaq20v}.   
\end{rem}

\begin{rem} The story changes dramatically when $\chi(\mathbb{X}(\boldsymbol{p}))=0$, i.e.\! when the (2,2) $\sigma$-model is conformal. In this case we have other spherical orbits, and we get an auto-equivalence group containing $SL(2,\Z)$ as aspected on general physical grounds. See \cite{shepard} for details.
\end{rem}

\subsection{Explicit matrix realization of the duality group}\label{actgro}

We return to the general case and the explicit $r\times r$ matrix realization of the $S$-duality group discussed in \S.\,\ref{uuuuqmmm82} via its action on the Grothendieck group of $\mathscr{R}(\tilde\cf)$.

We write down the explicit matrix $\boldsymbol{L}_A\equiv\mathbf{bf}(L_A)$ which yields the action of the telescopic auto-equivalence $L_A$ on the Groethendieck group $\Gamma\equiv K_0(\mathscr{R}(\tilde\cf))$
\be
[L_AT_i]=[T_j](\boldsymbol{L}_A)_{ji}.
\ee
From the definition, for an orbit of period $p$
\begin{equation}
[R_AX]=[X]-\sum_{k=1}^p \chi(X,T^kA)\,[T^kA],\qquad
[L_AX]=[X]-\sum_{k=1}^p \chi(T^kA,X)\,[T^kA],\\
\end{equation}
and\footnote{ We write $\boldsymbol{a}=(a_1,\dots,a_r)$,
$\boldsymbol{x}=(x_1,\dots,x_r)$.} 
\be
\begin{cases}[A]=[T_i]a_i\\
[X]=[T_i]x_i\end{cases}\quad \Rightarrow\quad \begin{cases}[T^kA]=(-1)^{nk}\,[T_j](\boldsymbol{H}^k\boldsymbol{a})_{j}\\
\chi(T^kA,X)=(-1)^{nk}\,\boldsymbol{a}^t (\boldsymbol{H}^t)^k\boldsymbol{E}\boldsymbol{x}.\end{cases}
\ee
Then
\begin{align}
(\boldsymbol{L}_A)_{ij}=\delta_{ij}-\sum_{k=1}^p\big(\boldsymbol{H}^k\boldsymbol{a}\big)_i\,\big(\boldsymbol{a}^t\boldsymbol{E}\boldsymbol{H}^{-k}\big)_j\\
(\boldsymbol{R}_A)_{ij}=\delta_{ij}-\sum_{k=1}^p\big(\boldsymbol{H}^k\boldsymbol{a}\big)_i\,\big(\boldsymbol{a}^t\boldsymbol{E}\boldsymbol{H}^{1-k}\big)_j
\end{align}
From these expression it is obvious that $\boldsymbol{L}_A$ and $\boldsymbol{R}_A$ commute with $\boldsymbol{H}$; using the fact that $A$ belongs to a spherical orbit
\be\label{jjjaq1c}
\boldsymbol{a}^t\boldsymbol{E}\boldsymbol{H}^k\boldsymbol{a}=\delta^{(p)}_{k,0}+\delta^{(p)}_{k,1}
\ee
and $\boldsymbol{H}=\boldsymbol{E}^{-1}\boldsymbol{E}^t$, it is easy to check that
\be
\boldsymbol{L}_A\boldsymbol{R}_A=\boldsymbol{1},\qquad\Rightarrow\qquad
\boldsymbol{L}_A,\ \boldsymbol{R}_A\in GL(r,\Z).
\ee
We shall write $\mathbf{Tel}(\mathscr{R}(\tilde\cf))\subset GL(r,\Z)$ for the concrete group of matrices representing the group $\mathrm{Tel}(\mathscr{R}(\tilde\cf))$ on the charge lattice $\Gamma$.

\subsection{General properties of the matrix $\boldsymbol{L}_A$}
$\boldsymbol{L}_A$, $\boldsymbol{R}_A$ satisfy the equations
\begin{align}\label{tttttqzbv}
\big(\boldsymbol{L}_A-\boldsymbol{1}\big)\big(\boldsymbol{L}_A+\boldsymbol{H}^{-1}\big)&=0\\
\big(\boldsymbol{R}_A-\boldsymbol{1}\big)\big(\boldsymbol{R}_A+\boldsymbol{H}\big)&=0
\end{align}
which show that $\boldsymbol{L}_A$, $\boldsymbol{R}_A$ preserve $\boldsymbol{E}$. Note that the equation satisfied by $\boldsymbol{L}_A$ is independent of the particular spherical orbit $A$.

\emph{A priori,} the matrix $\boldsymbol{L}_A$ may have a 
non-trivial Jordan blocks in two cases:
\begin{itemize}
\item[A)]
 in correspondence to a non-trivial Jordan block of $\boldsymbol{H}$;
 \item[B)] 
associated with the eigenvalue $1$ in the eigenspace $-1$ of $\boldsymbol{H}$. 
\end{itemize}
Below we shall see that A) cannot happen (unless the $\boldsymbol{H}$ block is associated to the eigenvalue $-1$). We recall that the size of the Jordan blocks is at most 2.

Then the only non-trivial Jordan blocks of $\boldsymbol{L}_A$ appear in the $(-1)$-eigenspace of the 2d monodromy. Let $\Gamma_{-1}$
be the lattice
\be
\Gamma_{-1}\equiv \Big\{\boldsymbol{x}\in \Gamma\;\Big|\; \boldsymbol{H}\boldsymbol{x}=-\boldsymbol{x}\Big\}
\ee
which is equipped with a non-degenerate skew-symmetric integral form induced from $\Omega$. 
The operators $\boldsymbol{L}_A\big|_{\Gamma_{-1}}$ are unipotent of index 2
\be\label{oooaxzwq}
\left(\boldsymbol{L}_A\big|_{\Gamma_{-1}}-1\right)^2=0.
\ee

\subsubsection{$\mathrm{Tel}(\mathscr{R}(\tilde\cf))$ acts on $\Gamma_\text{flavor}$ as $\mathrm{Weyl}(L)$} \label{wweeylassdwa}

We already know that the action of $\mathrm{Tel}(\mathscr{R})$ on the flavor lattice factors through a finite group.
Here we show that it is a crystallographic  reflection group; by the Coxeter classification \cite{finiterefl} this means that the image
\be
\mathrm{Tel}(\mathscr{R}(\tilde\cf)))\to \mathrm{Aut}(\Gamma_\text{flavor})\simeq GL(f,\Z)
\ee
is
the Weyl group of a finite-dimensional semi-simple Lie algebra $L$.

\begin{fact}\label{ffffmx} For all spherical orbits $A$,
the restriction of $\boldsymbol{L}_A$ to the flavor lattice $\Gamma_\text{flavor}$, $\boldsymbol{L}_A\big|$, is an \emph{involution} $(\boldsymbol{L}_A\big|)^2=\boldsymbol{1}$, and in facts $\boldsymbol{L}_A\big|$ is a \emph{reflection} $\sigma_A$, meaning that, in addition, the matrix $(\boldsymbol{1}-\boldsymbol{L}_A)\big|$ has precisely rank 1. In particular, the restriction to $\Gamma_\text{flavor}$ of the telescopic functors of
an $(A_m)$-configuration of spherical orbits yields an action of $\mathrm{Weyl}(A_m)\equiv\mathfrak{S}_{m+1}$.
\end{fact}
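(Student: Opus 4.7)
The plan is to compute $(\boldsymbol{1}-\boldsymbol{L}_A)|$ directly on a flavor vector using the explicit formula for $\boldsymbol{L}_A$ derived in \S\ref{actgro}, and to show that the image is a one-dimensional lattice spanned by an explicit root-like vector $\boldsymbol{v}_A\in\Gamma_\text{flavor}$.

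\medskip

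\emph{Step 1 (involution).} Since $\boldsymbol{L}_A$ commutes with $\boldsymbol{H}$, it preserves the invariant sublattice $\Gamma_\text{flavor}=\ker(\boldsymbol{H}-\boldsymbol{1})$. Restricting the relation \eqref{tttttqzbv} to $\Gamma_\text{flavor}$, where $\boldsymbol{H}^{-1}$ reduces to the identity, one obtains
\be
\bigl(\boldsymbol{L}_A|-\boldsymbol{1}\bigr)\bigl(\boldsymbol{L}_A|+\boldsymbol{1}\bigr)=0,
\ee
hence $(\boldsymbol{L}_A|)^2=\boldsymbol{1}$.

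\medskip

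\emph{Step 2 (rank of $\boldsymbol{1}-\boldsymbol{L}_A|$).} From the formula for $\boldsymbol{L}_A$ in \S\ref{actgro},
\be
(\boldsymbol{1}-\boldsymbol{L}_A)\boldsymbol{x}=\sum_{k=1}^{p}\bigl(\boldsymbol{H}^k\boldsymbol{a}\bigr)\,\bigl(\boldsymbol{a}^t\boldsymbol{E}\boldsymbol{H}^{-k}\boldsymbol{x}\bigr).
\ee
For $\boldsymbol{x}\in\Gamma_\text{flavor}$ we have $\boldsymbol{H}^{-k}\boldsymbol{x}=\boldsymbol{x}$, so
\be
(\boldsymbol{1}-\boldsymbol{L}_A)\boldsymbol{x}=\boldsymbol{v}_A\cdot\bigl(\boldsymbol{a}^t\boldsymbol{E}\boldsymbol{x}\bigr),\qquad
\boldsymbol{v}_A\equiv\sum_{k=1}^{p}\boldsymbol{H}^k\boldsymbol{a}.
\ee
Periodicity $\boldsymbol{H}^p\boldsymbol{a}=\boldsymbol{a}$ yields $\boldsymbol{H}\boldsymbol{v}_A=\boldsymbol{v}_A$, so $\boldsymbol{v}_A\in\Gamma_\text{flavor}$. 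Thus $\boldsymbol{1}-\boldsymbol{L}_A|$ has image contained in $\Z\,\boldsymbol{v}_A$, so its rank is at most 1.

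\medskip

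\emph{Step 3 (rank is exactly 1).} Apply the linear functional $\boldsymbol{x}\mapsto \boldsymbol{a}^t\boldsymbol{E}\boldsymbol{x}$ to $\boldsymbol{v}_A$ and use the sphericity identity \eqref{jjjaq1c}:
\be
\boldsymbol{a}^t\boldsymbol{E}\boldsymbol{v}_A=\sum_{k=1}^{p}\bigl(\delta^{(p)}_{k,0}+\delta^{(p)}_{k,1}\bigr)=1+1=2\neq 0,
\ee
so both $\boldsymbol{v}_A$ and the functional $\boldsymbol{a}^t\boldsymbol{E}(-)|_{\Gamma_\text{flavor}}$ are nonzero. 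Consequently $(\boldsymbol{1}-\boldsymbol{L}_A)|$ has rank precisely 1, and
\be
\sigma_A\colon\boldsymbol{x}\longmapsto \boldsymbol{x}-\bigl(\boldsymbol{a}^t\boldsymbol{E}\boldsymbol{x}\bigr)\,\boldsymbol{v}_A,\qquad \sigma_A(\boldsymbol{v}_A)=-\boldsymbol{v}_A,
\ee
is a reflection through the hyperplane $\{\boldsymbol{x}:\boldsymbol{a}^t\boldsymbol{E}\boldsymbol{x}=0\}$ in $\Gamma_\text{flavor}$. By \eqref{tttttqzbv}, $\sigma_A$ preserves the Euler form, and therefore also the positive-definite form $C=\boldsymbol{E}+\boldsymbol{E}^t$ of \eqref{formsforms} restricted to $\Gamma_\text{flavor}$.

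\medskip

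\emph{Step 4 ($(A_m)$-configuration gives $\mathfrak{S}_{m+1}$).} For an $(A_m)$-configuration of spherical orbits, \textbf{Fact \ref{braidtele}} asserts that the $L_{A_i}$ satisfy the braid relations of $\cb_{m+1}$, and these relations pass to the matrices $\sigma_{A_i}\equiv\boldsymbol{L}_{A_i}|$ on $\Gamma_\text{flavor}$. Combining these braid relations with the involution relations $\sigma_{A_i}^2=\boldsymbol{1}$ from Step 1 yields exactly the Coxeter presentation of $\mathrm{Weyl}(A_m)\simeq\mathfrak{S}_{m+1}$, so the group generated by the $\sigma_{A_i}$ on $\Gamma_\text{flavor}$ is a quotient of $\mathfrak{S}_{m+1}$ (and equals it whenever the reflecting vectors $\boldsymbol{v}_{A_i}$ are linearly independent).

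\medskip

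The main obstacle is purely calculational: carefully checking that the edge cases of \textbf{Definition \ref{ttttealbbr}} (small values of $p$, $n$, and the half-orbit variant of \S\ref{TSS3}) do not spoil the count $\boldsymbol{a}^t\boldsymbol{E}\boldsymbol{v}_A=2$. For spherical half-orbits one modifies the sum defining $\boldsymbol{v}_A$ to run over $\mathfrak{F}/\mathfrak{F}_A$ and uses the analogue of \eqref{jjjaq1c}; the same three steps apply verbatim and again yield a reflection on $\Gamma_\text{flavor}$.
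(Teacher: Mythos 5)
Your proposal is correct and follows essentially the same route as the paper's own proof: the involution from restricting the quadratic relation \eqref{tttttqzbv} to $\ker(\boldsymbol{H}-\boldsymbol{1})$, the rank-one form $\boldsymbol{v}_A\otimes\boldsymbol{a}^t\boldsymbol{E}$ of $(\boldsymbol{1}-\boldsymbol{L}_A)\big|$, and non-vanishing via $\boldsymbol{a}^t\boldsymbol{E}\boldsymbol{v}_A\neq 0$ from \eqref{jjjaq1c} (your $\boldsymbol{v}_A$ is just the paper's $\boldsymbol{v}$ without the $1/p$ normalization, whence $2$ instead of $2/p$). Your Step 4 usefully spells out the Coxeter-presentation argument that the paper leaves implicit.
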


\begin{proof}
$\boldsymbol{x}\in \Gamma_\text{flavor}$ iff and only if $\boldsymbol{H}\boldsymbol{x}=
\boldsymbol{x}$. Then, for $\boldsymbol{x}\in \Gamma_\text{flavor}$
\be
0=\big(\boldsymbol{L_A}-\boldsymbol{1}\big)\big(\boldsymbol{L}_A+\boldsymbol{H}^{-1}\big)\boldsymbol{x}\equiv \big(\boldsymbol{L}_A^2-\boldsymbol{1}\big)\boldsymbol{x},
\ee
so $\boldsymbol{L}_A\big|$ is an involution. On the other hand,
\be
\big(\boldsymbol{1}-\boldsymbol{L}_A\big)\Big|= \boldsymbol{v}\otimes \boldsymbol{v}^t \boldsymbol{E}\Big|,\qquad\text{where}\quad \boldsymbol{v}=\frac{1}{p}\sum_{k=1}^p\boldsymbol{H}^k\boldsymbol{a}.
\ee
$(\boldsymbol{1}-\boldsymbol{L}_A)\big|=0$ iff $\boldsymbol{v}=0$, which is impossible since  eqn.\eqref{jjjaq1c} yields
$\boldsymbol{a}^t\boldsymbol{E}\boldsymbol{v}=2/p$.\end{proof}

\begin{rem}
This results agrees with the physical picture in \S.\,\ref{weylll}, see eqn.\eqref{tttttmmmqq9}. The fact that the Weyl group of the Lie group $L$ is a factor of the image of $\mathrm{Tel}(\mathscr{R})$ does not imply that $L$ is a factor of the flavor group. Besides the ambiguity arising from the Abelian sector, encoded in the group $O(h,\Z)_C$, we have to keep into account the automorphisms of the Dynkin graphs. For instance, in the case of
$SU(2)$ with $N_f=4$ the $S$-duality acts on the $SO(8)$ flavor charges as the group \cite{shepard}
\be
\mathrm{Aut}(SO(8))\ltimes \mathrm{Weyl}(SO(8))\simeq \mathrm{Weyl}(F_4).
\ee
In the same fashion, for $n\neq4$
\be
\mathrm{Aut}(SO(2n))\ltimes \mathrm{Weyl}(SO(2n))\simeq \mathrm{Weyl}(SO(2n+1)).
\ee
\end{rem}

\subsubsection{Complex reflection groups}

Exploiting the special form \eqref{min-pol} of the minimal polynomial for $\boldsymbol{H}$, modulo commensurability we can write
\be\label{xxx1834}
\Gamma\thickapprox \Gamma_\text{flavor}\oplus\left(\bigoplus_{d\in D} \Gamma_d\right)
\ee
where
\be
\Gamma_d\equiv \Big\{ \boldsymbol{x}\in\Gamma\;\Big|\; \Phi_d(\boldsymbol{H})^{m_d}\,\boldsymbol{x}=0\Big\}
\ee
The decomposition \eqref{xxx1834} is 
\emph{orthogonal} for the Euler form $\boldsymbol{E}$, that is,
if $\boldsymbol{x}_d\in \Gamma_d$
and $\boldsymbol{y}_{d^\prime}\in \Gamma_{d^\prime}$
\be
\boldsymbol{x}^t_d\,\boldsymbol{E}\,\boldsymbol{y}_{d^\prime}=0\quad\text{unless }d=d^\prime.
\ee

All auto-equivalence preserves the sub-lattices in the \textsc{rhs} of\eqref{xxx1834} individually. We have already discussed the action of the telescopic functors on the first summand $\Gamma_\text{flavor}$,
and also on $\Gamma_2$ (see argument around eqn.\eqref{oooaxzwq}). Now we focus on one particular $\Gamma_d$ with $d\geq 3$. 
We write $\boldsymbol{H}_d$ (resp.\! $\boldsymbol{R}_{A,d}$) for the integral matrix obtained by restricting $\boldsymbol{H}$ (resp.\! $\boldsymbol{R}_A$) to $\Gamma_d$. 
We write 
$$\mathbf{Tel}_d\subset GL(\mathrm{rank}\,\Gamma_d,\Z)$$
for the matrix group generated by the $\boldsymbol{R}_{A,d}$ of all objects $A$ with spherical orbits.

Let $\mathbb{Q}[\zeta]$ be the cyclotomic field of a primitive $2d$--th root of unity $\zeta$.
$\boldsymbol{H}_d$ (resp.\! $\boldsymbol{R}_{A,d}$) may be set in Jordan canonical form over $\mathbb{Q}[\zeta]$ with eigenvalues
of the form $\zeta^{2\ell}$ (resp.\!
$1$ and $-\zeta^{2\ell}$) with\footnote{ $(\Z/d\Z)^\times$ is the group of unities in the ring $\Z/d\Z$. By definition $|(\Z/d\Z)^\times|=\phi(d)$.} $\ell\in(\Z/d\Z)^\times$, all having the same multiplicity and Jordan structure.
We consider the $\mathbb{Q}[\zeta]$-space
$V_d= \Gamma_d\otimes \mathbb{Q}[\zeta]$. One has
\be\label{uuuqqan}
V_d=\bigoplus_{\ell\in(\Z/d\Z)^\times} W_{d,\ell},\qquad\text{where}\quad W_{d,\ell}=\Big\{v\in V_d\;\Big|\; (\boldsymbol{H}-\zeta^{2\ell})^2v=0\Big\}.
\ee
Let 
$Z$ be the operator which on $W_{d,\ell}$ acts as $\zeta^{-\ell}$.

\begin{lem}
The form (linear on the first argument, anti-linear in the second one)
\be\label{ttttghscd}
H(v,w)= \chi(Z v,w^*), \qquad v,w\in V_d,
\ee
is Hermitian and $\boldsymbol{H}_d$ invariant. It decomposes over $\mathbb{Q}[\zeta]$ into a direct sum of Hermitian forms $H_{d,\ell}$ on each space $W_{d,\ell}$. The dimension of the radical of $H_{d,\ell}$ and its signature are independent of $\ell\in(\Z/d\Z)^\times$.

\end{lem}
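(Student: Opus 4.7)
My strategy is to establish Hermiticity, $\boldsymbol{H}_d$-invariance and the orthogonal decomposition by direct manipulations with $\chi$, $Z$ and the Serre-duality identity $\chi(x,\boldsymbol{H}y)=\chi(y,x)$, and then obtain the equidistribution of radical dimensions and signatures across $\ell\in(\Z/d\Z)^\times$ from the action of $G=\mathrm{Gal}(\mathbb{Q}[\zeta]/\mathbb{Q})$ on $V_d$.

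\textbf{Formal properties.} Fix a $\Z$-basis of $\Gamma_d$; in this basis $\chi$, $\boldsymbol{H}$ and $\boldsymbol{H}^{-1}$ are integer matrices, so the complex conjugation $*$ on $V_d=\Gamma_d\otimes\mathbb{Q}[\zeta]$ satisfies $\overline{\chi(x,y)}=\chi(x^*,y^*)$ for all $x,y\in V_d$. Conjugation exchanges $W_{d,\ell}\leftrightarrow W_{d,-\ell}$, on which $Z$ acts by the complex-conjugate scalars $\zeta^{-\ell}$ and $\zeta^{\ell}$; consequently $(Zv)^*=Zv^*$. Combining these with $\chi(v,w_\ell^*)=\zeta^{2\ell}\chi(w_\ell^*,v)$ (which follows from $\chi(x,\boldsymbol{H}y)=\chi(y,x)$ and the fact that $w_\ell^*\in W_{d,-\ell}$ is a generalized $\zeta^{-2\ell}$-eigenvector of $\boldsymbol{H}$), a one-line computation gives $H(v,w)=\overline{H(w,v)}$. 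The identity $\boldsymbol{H}^t\boldsymbol{E}\boldsymbol{H}=\boldsymbol{E}$ (equivalent to $\boldsymbol{H}=\boldsymbol{E}^{-1}\boldsymbol{E}^t$), together with $[Z,\boldsymbol{H}]=0=[\boldsymbol{H},*]$, then yields $\boldsymbol{H}_d$-invariance. For the orthogonal decomposition, $\boldsymbol{H}$-invariance of $\chi$ forces $\chi(W_{d,\ell},W_{d,-m})=0$ unless $\zeta^{2(\ell-m)}=1$, i.e.\ $\ell\equiv m\pmod d$; since $H(v,w)=\chi(Zv,w^*)$ with $Zv\in W_{d,\ell}$ and $w^*\in W_{d,-m}$, this yields $H_{d,\ell}\perp H_{d,m}$ for $\ell\ne m$ and hence the splitting $H=\bigoplus_{\ell} H_{d,\ell}$.

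\textbf{Galois equivariance.} The group $G\simeq(\Z/2d\Z)^\times$ acts on $V_d$ through the scalar factor $\mathbb{Q}[\zeta]$, and any $\sigma_k\in G$ with $\sigma_k(\zeta)=\zeta^k$ sends $W_{d,\ell}$ to $W_{d,k\ell}$. The operators $\chi,\boldsymbol{H}$ are $\mathbb{Q}$-defined while $Z$ is $G$-equivariant by inspection; since $G$ is abelian, $\sigma_k$ commutes with $*$. These facts together yield $H(\sigma_k v,\sigma_k w)=\sigma_k(H(v,w))$, so $\sigma_k:(W_{d,\ell},H_{d,\ell})\to(W_{d,k\ell},H_{d,k\ell})$ is a $\sigma_k$-semilinear isometry. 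Because $(\Z/d\Z)^\times$ acts transitively on itself by multiplication, all $H_{d,\ell}$ have the same $\mathbb{Q}[\zeta]$-rank, and hence the dimension of the radical is constant in $\ell$. For the signature, the same semilinear isometry identifies the signature of $H_{d,\ell}$ at any fixed embedding $\iota:\mathbb{Q}[\zeta]\hookrightarrow\mathbb{C}$ with that of $H_{d,k\ell}$ at the conjugate embedding $\iota\circ\sigma_k^{-1}$; combined with the observation that the $\mathbb{Q}$-rational trace form $\mathrm{Tr}_{\mathbb{Q}[\zeta]/\mathbb{Q}}H$, viewed as a symmetric form on $V_d$ seen as a $\mathbb{Q}$-vector space, has signature intrinsic to $\Gamma_d$ and equal to the sum of the per-embedding signatures of the $H_{d,\ell}$, the per-embedding signatures must coincide.

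\textbf{Main obstacle.} Hermiticity, $\boldsymbol{H}_d$-invariance, and the orthogonal splitting are entirely formal consequences of the Serre-duality identity and the specific choice of $Z$. The delicate step is the signature assertion: Galois transitivity alone only shows that the multiset $\{\mathrm{sig}(H_{d,\ell})\}_{\ell}$ is Galois-stable, and the extraction of elementwise equality rests on the trace-form (equivalently, Hasse-invariant) argument sketched above, which crucially exploits the $\mathbb{Q}$-rationality of the underlying Euler pairing. This is the one point that needs care beyond routine linear algebra over $\mathbb{Q}[\zeta]$.
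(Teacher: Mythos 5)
The paper offers no proof of this Lemma, so there is nothing to compare your argument against line by line; judged on its own, the formal half of your proposal is essentially right but the signature step contains a genuine error. Hermiticity, $\boldsymbol{H}_d$-invariance and the orthogonal splitting do follow from $\chi(x,\boldsymbol{H}y)=\chi(y,x)$ exactly as you indicate, \emph{provided} $\boldsymbol{H}_d$ acts semisimply: your parenthetical justification ``$\chi(v,w_\ell^*)=\zeta^{2\ell}\chi(w_\ell^*,v)$ because $w_\ell^*$ is a \emph{generalized} eigenvector'' silently replaces $\boldsymbol{H}^{-1}v$ by $\zeta^{-2\ell}v$, which is not legitimate when $m_d=2$. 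Writing $\boldsymbol{H}|_{W_{d,\ell}}=\zeta^{2\ell}(1+N)$ with $N^2=0$, one finds $H(v,w)-\overline{H(w,v)}=\zeta^{-\ell}\chi(Nv,w^*)$; since $\boldsymbol{E}$ is unimodular, $\chi$ restricts to a perfect pairing $W_{d,\ell}\times W_{d,-\ell}\to\mathbb{Q}[\zeta]$, so this defect vanishes for all $w$ only when $Nv=0$. You should therefore either restrict to the semisimple case or pass to $\ker N$. (The same perfectness remark shows all the radicals are in fact zero, so that part of the claim is immediate; your Galois rank argument for it is correct but heavier than needed.)

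The real gap is the last step. A $\sigma_k$-semilinear isometry identifies the signature of $H_{d,k\ell}$ \emph{at the standard embedding} with the signature of $H_{d,\ell}$ \emph{at the $\sigma_k$-twisted embedding}, and these need not agree; your trace-form observation only constrains the \emph{sum} $\sum_\ell\mathrm{sig}(H_{d,\ell})$, which is perfectly compatible with the individual signatures differing, so ``the per-embedding signatures must coincide'' is a non sequitur. Indeed elementwise equality fails: in the semisimple case one computes $H_{d,\ell}=\tfrac{1}{2\cos(\pi\ell/d)}\,C_{\mathrm{Herm}}|_{W_{d,\ell}}$ with $C_{\mathrm{Herm}}(v,w)=v^t(\boldsymbol{E}+\boldsymbol{E}^t)w^*$, and the signature of $\boldsymbol{E}+\boldsymbol{E}^t$ on the $e^{2\pi ih}$-eigenspace genuinely depends on $h$ — this is precisely why the paper asserts its positive-definiteness only for $0\le h<1-\hat c/2$. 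For example, in the $(A_4,A_4)$ model ($\hat c=6/5$, $d=5$) the eigenspace $W_{5,1}$ sits at $h=1/5<2/5$ and carries a positive-definite form of signature $(3,0)$, while $W_{5,2}$ sits at $h=2/5$ and carries an indefinite form (it contains the identity operator's positive direction together with the negative directions that make $\boldsymbol{E}+\boldsymbol{E}^t$ indefinite). So no argument along your lines can close this step; the data that are actually $\ell$-independent are the radical and the abstract isomorphism class of the groups $\mathbf{Tel}_{d,\ell}$ (Galois-conjugate, as the paper's footnote to Fr\"ohlich--Taylor records), and only the definiteness of $H_{d,1}$ is used downstream. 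The correct move is to weaken the signature clause of the Lemma, not to try to prove it.
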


The decomposition \eqref{uuuqqan} yields an embedding
\be\label{uuuuqmmsd}
\mathbf{Tel}_d\to \prod_{\ell\in(\Z/d\Z)^\times} U\big(H_{d,\ell}, \mathbb{Q}[\zeta]\big)
\ee
where $U\big(H_\ell, \mathbb{Q}[\zeta]\big)$ is the group of ``unitary'' matrices preserving the Hermitian form $H_{d,\ell}$. The images in the several factors in the \textsc{rhs} of\eqref{uuuuqmmsd} are all conjugate\footnote{\ See \textbf{VI.(1.9)} of ref.\!\cite{frolich}.} under $\mathrm{Gal}(\mathbb{Q}[\zeta]/\mathbb{Q})\simeq (\Z/2d\Z)^\times$; it is enough to consider just one image, say the one with $\ell=1$, $\mathbf{Tel}_{d,1}$.
\medskip

\paragraph{Finiteness conditions.} If $H_{d,1}$ is definite, $U(H_{d,1},\C)$ is compact, and then $\mathbf{Tel}_{d,1}$ is a \emph{finite}  group. This is guaranteed to happen in two cases:
\begin{itemize}
\item if $w_d=1$ (cfr.\! eqn.\eqref{char-pol});
\item for all $d$ such that (cfr.\! eqn.\eqref{veryrelevant})
\be\label{veryrelevant2}
\frac{1}{d} < 1-\frac{\hat c}{2}.
\ee
\end{itemize}

For all $\mathbf{Tel}_{d,\ell}$, finite or infinite, we have:

\begin{fact} Let $d\geq3$.
The restriction $\boldsymbol{R}_{A,d,\ell}$ of $\boldsymbol{R}_{A}$ in the sub-space $W_{d,\ell}$  
acts as a complex reflection, that is,
is semi-simple with eigenvalues
\be\label{hhhasmo}
-\zeta^{2\ell},\ 1,\ 1,\ 1,\ \cdots,\ 1.
\ee
Thus, the image in the $\ell$--th factor in eqn.\eqref{uuuuqmmsd}, $\mathbf{Tel}_{d,\ell}$, is a \emph{complex reflection group.}
\end{fact}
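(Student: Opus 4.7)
The plan combines three inputs already in place by Section~\ref{actgro}: the explicit rank-decomposition
\[
(\mathbf{1}-\boldsymbol{R}_A)\boldsymbol{x} = \sum_{k=1}^p (\boldsymbol{H}^k\boldsymbol{a})\bigl(\boldsymbol{a}^t\boldsymbol{E}\boldsymbol{H}^{1-k}\boldsymbol{x}\bigr),
\]
the universal identity $(\boldsymbol{R}_A-\mathbf{1})(\boldsymbol{R}_A+\boldsymbol{H})=0$ of \eqref{tttttqzbv}, and the spherical-orbit normalization \eqref{jjjaq1c}. First I would restrict everything to $W_{d,\ell}$: since $[\boldsymbol{R}_A,\boldsymbol{H}]=0$, $\boldsymbol{R}_A$ preserves $W_{d,\ell}$. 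Write $\boldsymbol{a}_{d,\ell}$ for the $W_{d,\ell}$-component of $\boldsymbol{a}$ under the $\boldsymbol{E}$-orthogonal decomposition \eqref{xxx1834} (extended over $\mathbb{Q}[\zeta]$) and $\boldsymbol{H}_{d,\ell}=\zeta^{2\ell}\mathbf{1}+N$ with $N^2=0$ (so $N=0$ iff $m_d=1$). Expanding $\boldsymbol{H}_{d,\ell}^k=\zeta^{2k\ell}\mathbf{1}+k\zeta^{2(k-1)\ell}N$, the rank-decomposition projected to $W_{d,\ell}$ yields
\[
\mathrm{im}\bigl(\mathbf{1}-\boldsymbol{R}_{A,d,\ell}\bigr)\subseteq \mathrm{span}\{\boldsymbol{a}_{d,\ell},\,N\boldsymbol{a}_{d,\ell}\},
\]
giving $\mathrm{rank}(\mathbf{1}-\boldsymbol{R}_{A,d,\ell})\leq 2$, with rank $\leq 1$ already when $m_d=1$.

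The crux is to upgrade this bound to $\leq 1$ in the case $m_d=2$ by showing $N\boldsymbol{a}_{d,\ell}=0$; equivalently, this rules out case~A of the remark preceding the statement. I would extract it from \eqref{jjjaq1c} as follows. Using that $\boldsymbol{E}$ pairs $W_{d,\ell}$ nontrivially only with $W_{d,-\ell}$, the scalar $\boldsymbol{a}^t\boldsymbol{E}\boldsymbol{H}^k\boldsymbol{a}$ decomposes into a sum over $(d,\ell)$ of contributions of the form
\[
\zeta^{2k\ell}\,\boldsymbol{a}_{d,-\ell}^t\boldsymbol{E}\boldsymbol{a}_{d,\ell}+k\,\zeta^{2(k-1)\ell}\,\boldsymbol{a}_{d,-\ell}^t\boldsymbol{E}N\boldsymbol{a}_{d,\ell}.
\]
The right-hand side of \eqref{jjjaq1c} is $k$-periodic, hence bounded; the unbounded linear-in-$k$ contributions must therefore cancel, and character-independence of $\{\zeta^{2k\ell}\}_{\ell\in(\Z/d\Z)^\times}$ across Galois orbits forces each $\boldsymbol{a}_{d,-\ell}^t\boldsymbol{E}N\boldsymbol{a}_{d,\ell}=0$ separately. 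In Hermitian language \eqref{ttttghscd} this is the vanishing of $H_{d,\ell}(\boldsymbol{a}_{d,\ell},N\boldsymbol{a}_{d,\ell})$; repeating the argument with $\boldsymbol{H}^j\boldsymbol{a}$ in place of $\boldsymbol{a}$ to probe every vector in the $\boldsymbol{H}$-orbit, and invoking non-degeneracy of $H_{d,\ell}$ modulo its radical, upgrades this scalar vanishing to $N\boldsymbol{a}_{d,\ell}=0$.

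Once $N\boldsymbol{a}_{d,\ell}=0$, the image $\mathrm{im}(\mathbf{1}-\boldsymbol{R}_{A,d,\ell})\subseteq \C\boldsymbol{a}_{d,\ell}\subset\ker(\boldsymbol{H}_{d,\ell}-\zeta^{2\ell})$ is at most 1-dimensional. Were $\boldsymbol{R}_{A,d,\ell}$ not semisimple, a genuine Jordan block of $\boldsymbol{R}_{A,d,\ell}$ would be forced inside this $1$-dimensional subspace of non-fixed directions, which is impossible; so $\boldsymbol{R}_{A,d,\ell}$ is semisimple. The identity \eqref{tttttqzbv}, with the $N$-correction absorbed by $N\boldsymbol{a}_{d,\ell}=0$ on the image, collapses to the scalar-polynomial identity
\[
(\boldsymbol{R}_{A,d,\ell}-\mathbf{1})(\boldsymbol{R}_{A,d,\ell}+\zeta^{2\ell}\mathbf{1})=0.
\]
For $d\geq 3$ and $\ell\in(\Z/d\Z)^\times$ one has $\zeta^{2\ell}\neq\pm1$, so the two roots are distinct and the eigenvalues of $\boldsymbol{R}_{A,d,\ell}$ lie in $\{1,-\zeta^{2\ell}\}$. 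The $(-\zeta^{2\ell})$-eigenspace coincides with $\mathrm{im}(\mathbf{1}-\boldsymbol{R}_{A,d,\ell})$, hence is exactly $1$-dimensional (barring the degenerate case $\boldsymbol{a}_{d,\ell}=0$, in which $\boldsymbol{R}_{A,d,\ell}=\mathbf{1}$ acts trivially on $W_{d,\ell}$ and the statement is vacuous on that block). This yields precisely the spectrum \eqref{hhhasmo} and identifies $\boldsymbol{R}_{A,d,\ell}$ as a complex reflection.

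The main obstacle is unambiguously the vanishing $N\boldsymbol{a}_{d,\ell}=0$ in the unipotent case $m_d=2$: without it both the rank-one bound and the semisimplicity fail, and the reflection picture collapses. Everything else is routine linear algebra from the three ingredients listed at the outset. The scalar Galois/Hermitian route sketched above is the most elementary, but a more conceptual reformulation—via a Lefschetz-type decomposition compatible with $N$ on $W_{d,\ell}$, or via a direct interpretation of $N\boldsymbol{a}_{d,\ell}$ as an obstruction class killed by sphericality—would likely produce a cleaner proof if one is willing to pay for more machinery.
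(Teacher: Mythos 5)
Your overall skeleton coincides with the paper's: restrict the explicit sum-of-rank-one formula for $\boldsymbol{1}-\boldsymbol{R}_A$ to $W_{d,\ell}$, show the image there is one-dimensional, and read off the non-trivial eigenvalue from the quadratic identity \eqref{tttttqzbv}. The problem is the step you yourself flag as the crux. From the boundedness of $\boldsymbol{a}^t\boldsymbol{E}\boldsymbol{H}^k\boldsymbol{a}$ you correctly extract the scalar vanishing $\boldsymbol{a}_{d,-\ell}^t\boldsymbol{E}\,N\boldsymbol{a}_{d,\ell}=0$, but the proposed upgrade to $N\boldsymbol{a}_{d,\ell}=0$ does not close: probing with $\boldsymbol{H}^j\boldsymbol{a}$ only pairs $N\boldsymbol{a}_{d,\ell}$ against vectors in $\mathrm{span}\{\boldsymbol{a}_{d,-\ell},N\boldsymbol{a}_{d,-\ell}\}$, a subspace of $W_{d,-\ell}$ of dimension at most $2$, while $W_{d,-\ell}$ can be larger, and $H_{d,\ell}$ may have a non-trivial radical in which $N\boldsymbol{a}_{d,\ell}$ could sit undetected. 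Neither ingredient forces the vector to vanish. The conclusion is nevertheless true, for a much simpler reason which is what the paper implicitly uses: a spherical orbit satisfies $T^pA\simeq A$, so on Grothendieck classes $\boldsymbol{H}^p\boldsymbol{a}=(-1)^{np}\boldsymbol{a}$, i.e.\! $\boldsymbol{a}\in\ker(\boldsymbol{H}^{2p}-\boldsymbol{1})$. Hence $\boldsymbol{a}$ is a sum of \emph{genuine} eigenvectors of $\boldsymbol{H}$ and every component automatically satisfies $N\boldsymbol{a}_{d,\ell}=0$; no Galois/Hermitian analysis is needed. This is exactly why the paper can write $(\boldsymbol{1}-\boldsymbol{L}_A)\big|_{W_{d,\ell}}$ directly as the rank-one operator $\boldsymbol{v}_{d,\ell}\otimes\boldsymbol{v}_{d,\ell}^t\boldsymbol{E}$ with $\boldsymbol{v}_{d,\ell}=\tfrac{1}{d}\sum_k\zeta^{-2k\ell}\boldsymbol{H}^k\boldsymbol{a}$ a genuine $\zeta^{2\ell}$-eigenvector.

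Two smaller points. First, ``a Jordan block would be forced inside the one-dimensional image, which is impossible'' is not a valid semisimplicity argument: $\boldsymbol{1}-u\otimes w^t$ with $w^tu=0$ is a rank-one perturbation of the identity carrying a genuine $2\times2$ Jordan block for the eigenvalue $1$. What saves you is the collapsed identity $(\boldsymbol{R}_{A,d,\ell}-\boldsymbol{1})(\boldsymbol{R}_{A,d,\ell}+\zeta^{2\ell}\boldsymbol{1})=0$ with distinct roots --- but to collapse \eqref{tttttqzbv} you must also check that $(\boldsymbol{R}_{A,d,\ell}-\boldsymbol{1})N=0$, i.e.\! that the covectors $\boldsymbol{a}^t\boldsymbol{E}\boldsymbol{H}^{1-k}$ kill $\mathrm{im}\,N$; this follows from $N\boldsymbol{a}_{d,-\ell}=0$ together with the isometry property $\boldsymbol{H}^t\boldsymbol{E}\boldsymbol{H}=\boldsymbol{E}$, and should be stated. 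Second, the hedge ``barring the degenerate case $\boldsymbol{a}_{d,\ell}=0$'' concedes too much: were that case to occur the \textbf{Fact} would be false, not vacuous, since it asserts that $-\zeta^{2\ell}$ occurs in the spectrum. The paper closes this with the second half of its proof: summing \eqref{jjjaq1c} against $\zeta^{-2k\ell}$ gives $d\,\boldsymbol{a}^t\boldsymbol{E}\boldsymbol{v}_{d,\ell}=1+\zeta^{-2\ell}\neq0$ for $d\geq3$ and $(\ell,d)=1$, which shows at once that the relevant component is non-zero and that the reflection acts non-trivially on $W_{d,\ell}$. You should run the same Fourier computation instead of leaving that case open.
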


\noindent\textsc{Proof.}
One has
\be
\big(\boldsymbol{1}-\boldsymbol{L}_A\big)\Big|_{W_{d,\ell}}= \boldsymbol{v}_{d,\ell}\otimes \boldsymbol{v}_{d,\ell}^t \,\boldsymbol{E},\qquad\text{where}\quad \boldsymbol{v}_{d,\ell}=\frac{1}{d}\sum_{k=0}^{d-1} \zeta^{-2k\ell}\boldsymbol{H}^k\boldsymbol{a},
\ee
and $v_{d,\ell}\neq0$ since by \eqref{jjjaq1c}
\be
d\,\boldsymbol{a}^t \,\boldsymbol{E}\,\boldsymbol{v}_{d,\ell}=1+\zeta^{-2\ell}\neq0
\ee
since $\zeta^{-2}=e^{-2\pi i/d}$, $d\neq 1,2$, and $(\ell,d)=1$.\hfill $\square$
\medskip

Thus the matrices $\boldsymbol{R}_{A,d,\ell}$ are complex reflections\footnote{\ A matrix is a \emph{complex reflection} iff all its eigenvalues \emph{but one} are equal to 1.}; a group generated by complex reflection is called a \textit{a complex reflection group.} We conclude that each $\mathbf{Tel}_{d}$ is a complex reflection group (finite or infinite). The reflection group
$\mathbf{Tel}_d$ is crystallographic with respect to the 
cyclotomic integers $\co_{\zeta^2}\subset\mathbb{Q}[\zeta^2]$; indeed, 
\be
\Gamma_d\simeq \co_{\zeta^2}^{\mathrm{rank}\,\Gamma_d/\phi(d)}\qquad \text{as $\Z$-modules}.
\ee

\paragraph{Finite reflection groups.} The \emph{finite} complex reflection groups have been fully classified by Shephard-Todd \cite{shep-tood}. Then, whenever $H_d$ is definite, we reduce the problem of determining $\mathbf{Tel}_d$ to a comparison with a known list (this is the strategy used in \cite{shepard}).

The complete list of finite reflection groups is given by two classes: \textit{i)} an infinite sequence $G(d,e,r)$, depending on three integers $d$, $e$, $r$, which may be seen as a cyclotomic generalization of the symmetric groups $\mathfrak{S}_r=G(1,1,r)$; \textit{ii)} 34 exceptional groups denoted $G_4, G_5,\cdots, G_{37}$ whose matrices have sizes $\leq 8$.

Shephard-Todd groups are most conveniently presented as quotients of generalized braid groups, i.e.\!
they are generated by a set of complex reflections $t_A$ which satisfy two kinds of relations \cite{stbraid,broue}:
\begin{itemize}
\item[A)] braid relations of the Coxeter form
\be
\overbrace{t_At_Bt_At_B\cdots}^{s\ \text{factors}}= 
\overbrace{t_Bt_At_Bt_A\cdots}^{s\ \text{factors}};
\ee
\item[B)]
order relations for the generators $t_A$  of the form 
\be
t_A^{d_A}=1.
\ee
\end{itemize}
 The braid presentation is especially suited for our purposes:
we identify the $t_A$ with the matrices $\boldsymbol{L}_A$ restricted to a subspace $W_{d,\ell}$ with definite Hermitian form \eqref{ttttghscd}.
The $t_A$ inherit from the telescopic functors their braiding relations\eqref{braid2}, while from eqn.\eqref{hhhasmo} we get the order relations
\be
(t_A)^{2 d/\gcd(2,d)}=1.
\ee
Note that the exponent is independent of $A$. Inverting the logic, if the braid and order relations satisfied by the group $\mathbf{Tel}_{d,\ell}$ appear in the Shephard--Todd list, we conclude that  $\mathbf{Tel}_{d,\ell}$ is finite.
The relations are best written in terms of a graph \cite{stbraid,broue} which generalizes the usual Dynkin graphs of the real reflection groups.

\subsection{Spherical (half) orbits in
$\mathscr{R}(G)$}\label{jggss90}

For later use we need to classify the spherical (half) orbits in the root category of a Dynkin graph $G$ of $ADE$ type.
In this sub-section we consider \textit{$1$-spherical $\tau$-orbits,} i.e.\! in the general framework of
\S\S.\ref{TSS1}-\ref{TSS3}
we take $n=1$ and $T=S[-n]\equiv\tau$ (the AR translation \cite{assem}).
We stress that no \emph{new}  auto-equivalence of $\mathscr{R}(G)$ may arise from such orbits. Indeed,

\begin{fact}\label{kkk123}
Let $A\in\mathscr{R}(G)$ be an object with a spherical (half-) $\tau$-orbit.
Then
\be\label{Dorbitscirc-1}
L_A\simeq \begin{cases}\tau^{-1} & G\neq D_{2n}\\
\varepsilon_A\tau^{-1} & G=D_{2n},
\end{cases}
\ee
where $\varepsilon_A$ is a non-trivial involution, $\varepsilon_A^2=\mathrm{Id}$. 
\end{fact}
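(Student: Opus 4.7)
My plan is to establish the claim at the Grothendieck-group level first, then upgrade to a natural isomorphism of functors using the rigidity of Dynkin root categories. Throughout, I exploit the fact that for $G\in ADE$ the monodromy matrix $\boldsymbol{H}$ acting on $\Gamma=K_0(\mathscr{R}(G))$ is semisimple with eigenvalues roots of unity of order dividing the Coxeter number $h=h(G)$, and that $[\tau^{-1}]=-\boldsymbol{H}^{-1}$ on $\Gamma$ (using $\tau=S[-1]$).

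The first step uses the matrix expression for $\boldsymbol{L}_A$ from \S.\,\ref{actgro}, with sign conventions appropriate to the $1$-CY setting. The spherical-orbit identity for $\boldsymbol{a}^t\boldsymbol{E}\boldsymbol{H}^k\boldsymbol{a}$, together with $\boldsymbol{H}^p\boldsymbol{a}=(-1)^p\boldsymbol{a}$ (from $\tau^pA\simeq A$ in the 2-periodic root category), makes $\boldsymbol{L}_A\boldsymbol{H}^j\boldsymbol{a}$ a direct computation. I would verify that $\boldsymbol{L}_A$ agrees with $-\boldsymbol{H}^{-1}$ on the sub-lattice $\Gamma^A=\mathrm{span}\{\boldsymbol{H}^k\boldsymbol{a}\}_{k=0}^{p-1}$ spanned by the orbit. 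On the orthogonal complement $(\Gamma^A)^\perp$ with respect to $\boldsymbol{E}$ the explicit formula forces $\boldsymbol{L}_A$ to act as the identity, so agreement with $[\tau^{-1}]$ there reduces to the condition that $(\Gamma^A)^\perp$ lies in the $(-1)$-eigenspace of $\boldsymbol{H}$. A case-by-case inspection of the Coxeter spectra of the simply-laced Dynkin algebras confirms this condition for every $G$ except $D_{2n}$. For $G=D_{2n}$ there is an extra $(-1)$-eigenvector of $\boldsymbol{H}$ outside $\Gamma^A$, spanned by the vector fixed by the non-trivial diagram involution $\varepsilon$ of $D_{2n}$ that exchanges the two \emph{fork} endpoints and commutes with the Coxeter element (a feature special to even-rank $D$ graphs); this $\varepsilon$ appears exactly as the discrepancy between $\boldsymbol{L}_A$ and $[\tau^{-1}]$, giving the involution $\varepsilon_A$ of the statement.

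The second step lifts the $K_0$-level agreement to a natural isomorphism of functors. Since $\mathscr{R}(G)$ for $G\in ADE$ is combinatorially rigid — every indecomposable is a brick with no self-extensions, indexed by a root of $G$ — any auto-equivalence is determined up to natural isomorphism by its action on the isoclasses of indecomposables, equivalently by its action on $K_0$ modulo the subgroup of diagram automorphisms that commute with $\tau$. The main obstacle will be the precise identification of $\varepsilon_A$ with the $D_{2n}$ diagram involution (rather than some physically trivial auto-equivalence) and verifying that $\varepsilon_A$ is genuinely non-trivial, i.e.\! is not captured by a power of $\tau$ or by a shift: this requires examining the explicit $D_{2n}$ AR-quiver structure and comparing the two spinor $\tau$-orbits that are exchanged by $\varepsilon$ but remain distinct in $\mathscr{R}(D_{2n})$.
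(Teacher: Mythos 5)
Your strategy is essentially the paper's: compare $\boldsymbol{L}_A$ with $[\tau^{-1}]=\boldsymbol{\Phi}^{-1}=-\boldsymbol{H}^{-1}$ eigenspace by eigenspace, exploiting that the only degenerate eigenvalue of the Coxeter element of a simply-laced Dynkin graph is $-1$, occurring (with multiplicity two) precisely for $D_{2n}$. However, your case analysis as written is internally inconsistent because of a sign error. You correctly note that $\boldsymbol{L}_A$ acts as the identity on $(\Gamma^A)^\perp$ and that agreement with $[\tau^{-1}]=-\boldsymbol{H}^{-1}$ there requires $(\Gamma^A)^\perp\subseteq\ker(\boldsymbol{H}+\boldsymbol{1})$. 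But $\ker(\boldsymbol{H}+\boldsymbol{1})=\ker(\boldsymbol{\Phi}-\boldsymbol{1})=0$ for \emph{every} Dynkin $G$ (no exponent equals $0$ or $h$), so your criterion is just $(\Gamma^A)^\perp=0$; and the $D_{2n}$ exception cannot be caused by ``an extra $(-1)$-eigenvector of $\boldsymbol{H}$ outside $\Gamma^A$'' --- by your own criterion such a vector would produce \emph{no} discrepancy. The actual culprit is the $(+1)$-eigenspace of $\boldsymbol{H}$, i.e.\! the two-dimensional $(-1)$-eigenspace of $\boldsymbol{\Phi}$ coming from the doubled exponent $h/2=2n-1$ of $D_{2n}$ (physically, the rank-two flavor lattice): on the one-dimensional complement of $P_{+1}\boldsymbol{a}$ inside it one has $\boldsymbol{L}_A=+1$ while $[\tau^{-1}]=-1$, and this mismatch is exactly $\varepsilon_A$. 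With your identification of the relevant eigenspace the case analysis would conclude that \emph{no} Dynkin graph is exceptional.

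Two further points. First, the leftover direction is the \emph{anti}-invariant line of the diagram involution $\theta$, not the fixed vector as you state: since $\boldsymbol{a}=[S_v]$ is $\theta$-invariant and $\theta$ commutes with $\boldsymbol{H}$, the projection $P_{+1}\boldsymbol{a}\in\Gamma^A$ spans the $\theta$-invariant line, and $\varepsilon_A$ (acting as $-1$ only on the complement) can be identified with $\theta$ only if that complement is the anti-invariant line. Second, for $G\neq D_{2n}$ you still owe the argument that $(\Gamma^A)^\perp=0$, i.e.\! that $P_\lambda\boldsymbol{a}\neq0$ for every eigenvalue $\lambda$ of $\boldsymbol{H}$; this follows from the spherical-orbit identity $\boldsymbol{a}^t\boldsymbol{E}\boldsymbol{H}^k\boldsymbol{a}=\delta^{(p)}_{k,0}+\delta^{(p)}_{k,1}$, which gives $\boldsymbol{a}^t\boldsymbol{E}P_\lambda\boldsymbol{a}=\tfrac{1}{p}(1+\lambda^{-1})\neq0$ for $\lambda\neq-1$ --- the same computation as the paper's proof that each $\boldsymbol{L}_A$ restricts to a complex reflection on each eigenspace. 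With these corrections your argument closes and reproduces the paper's proof; your ``step two'' (upgrading from $K_0$ to functors) is more than the paper attempts, which only asserts equality in $\mathbf{Aut}(\mathscr{R}(G))$, and that is all the statement requires.
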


\begin{proof} In the Dynkin case\footnote{\ As 2d (2,2) theories these are the minimal $\cn=2$ SCFTs.}  the 2d monodromy is minus the Coxeter of $G$, $\boldsymbol{H} =-\boldsymbol{\Phi}$. With the exception of the eigenvalue $-1$ for $D_{2n}$,
all eigenspaces of $\boldsymbol{\Phi}$ have dimension $1$.  For $G\neq D_{2n}$, it follows from eqn.\eqref{hhhasmo} that  $\boldsymbol{L}_A=-\boldsymbol{H}^{-1}=\boldsymbol{\Phi}^{-1}$. Then $\tau L_A$ fixes all Grothendieck classes and is equivalent to the identity. For
$G=D_{2n}$, $\boldsymbol{\Phi}\boldsymbol{L}_A$ has all eigenvalues $+1$ but one which is $-1$.
Then
$(\tau L_A)^2$ fixes all classes so that $\tau L_A\not\simeq \mathrm{Id}$ and
$(\tau L_A)^2\simeq \mathrm{Id}$. \end{proof}

Our interest for the spherical (half)orbits in $\mathscr{R}(G)$ is purely technical:
in future sections we shall ``twist'' together spherical orbits of  several Dynkin graphs to get non-trivial new auto-equivalences. The reader may prefer to skip this subsection.

\begin{figure}
\begin{align*}
&\vec A_{n-1}\colon
&&\xymatrix{1\ar[r] & 2\ar[r] &3\ar[r] &4\ar[r]&\cdots\ar[r]& n-2\ar[r]& n-1}\\
&\vec D_n\colon&&\begin{gathered}
\xymatrix{&&&&&s
\\
n-2&n-3\ar[l]&\cdots\ar[l] &3\ar[l] &2\ar[l] &1\ar[l]\ar[u]\ar[r]&c}
\end{gathered}\\
&\vec E_n\colon&&\begin{gathered}
\xymatrix{&&n-1\ar[d]&&&
\\
1\ar[r]&2\ar[r]&n&n-2\ar[l]&\cdots\ar[l] &4\ar[l]&3\ar[l]}
\end{gathered}
\end{align*}
\caption{\label{refDyqi}Reference orientations of Dynkin quivers. The node numbers are chosen so that for all arrows $\psi$ one has $t(\psi)>s(\psi)$.}
\end{figure}

\medskip

For concreteness we fix a reference orientation of the Dynkin quivers.
$\vec G$ will always mean a Dynkin quiver of type $G$ with the reference orientation. The reference quivers (with numbered nodes) are shown in figure \ref{refDyqi}. We write $S_i$ for the simple module of $\C\vec G$ with support at the $i$--th node, $P_i$ for its indecomposable projective cover \cite{assem}, and use the standard tilting object
\be
\ct=\bigoplus_{i\in \vec G} P_i = \C\vec G_{\C\vec G}.
\ee 
The Euler matrix in the $[P_i]$ basis reads
\be
\boldsymbol{E}_{ij}=\dim \mathrm{End}(P_i,P_j)=\dim (P_j)_i.
\ee

The indecomposable objects of 
\be\label{pppbbn4x}
\mathscr{R}(G)\simeq \mathsf{mod}\,\C \vec G\vee (\mathsf{mod}\,\C \vec G)[1]
\ee
 are in one-to-one correspondence with the roots $\alpha\in \Delta(G)$ of the Lie algebra $G$ and will be labelled by the roots $\{X_\alpha\}_{\alpha\in\Delta(G)}$. Positive roots correspond to modules of $\C \vec G$, while $X_{-\alpha}=X_\alpha[1]$. 
$S_i\equiv X_{\alpha_i}$, where $\alpha_i$ is $i$--th simple root.

\subsubsection{Explicit form of the root category $\mathscr{R}(G)$}

It is convenient to give an explicit realization of the root category of the Dynkin graph $G$ in terms of its Auslander-Reiten (AR) quiver \cite{assem}; it is just a $h(G)$-periodic version of the AR quiver for the derived category, constructed by Happel \cite{happel}. We write $G$ for the \emph{opposite} quiver of $\vec G$ (i.e.\! the quiver with all arrows inverted). 

Te AR quiver of $\mathscr{R}(G)$ is given by
\be\label{iiiiiiqwaz}
\Z G/\tau^{h(G)},
\ee
that is, the quiver whose nodes $v$ are pairs $(k,i)$
with $k\in\Z/h(G)\Z$ and $i\in G$.
An arrow $\psi\colon i\to j$ in $G$ yields $2h(G)$ arrows in $\Z G/\tau^{h(G)}$
\be
\begin{aligned}
(k,\psi)\colon& (k,i)\to (k,j)\\
\sigma(k, \psi)\colon& (k-1,j)\to (k,i)\qquad k\in\Z\big/h(G)\Z.
\end{aligned}
\ee 
The operation $\sigma$ is extended to all arrows by the rule that
\be
\sigma^2(k,\psi)=(k-1,\psi).
\ee
$\tau$ acts as $(k,i)\to (k-1,i)$ and $(k,\psi)\to (k-1,\psi)$ on all nodes and arrows. Note that for an arrow $\psi\colon u\to v\equiv (k,i)$
the path  $\psi\sigma(\psi)$ has source in the node $\tau v=(k-1,i)$.
The \emph{mesh} at $v$ is
\be
r_v=\sum_{\psi\colon t(\psi)=v} \psi\sigma(\psi).
\ee
Each node $v$ in the AR quiver \eqref{iiiiiiqwaz} represents an  isoclass of indecomposable objects of $\mathscr{R}(G)$, the projective modules $P_i$ being associated to the nodes $(1,i)$. The morphism space  $\mathrm{Hom}(v,u)$ is the vector space over the paths connecting $v$ and $u$ in the AR quiver, modulo the ideal generated by all meshes $r_v$. 

\begin{exe}
The AR quiver for $\mathscr{R}(A_3)$ is shown in figure \ref{jjja194}, and (half) the AR quiver of $\mathscr{R}(D_4)$ in figure \ref{jjja196}.
\end{exe}

\begin{figure}
\begin{tiny}
$$\xymatrix{&&111\ar[dr] && -001\ar[dr]
&& -010\ar[dr]^\alpha  &&-100\ar[dr] && 111\\
&011\ar[ur]\ar[dr]&&110\ar[ur]\ar[dr]_\psi&&-011\ar[ur]^{\sigma(\alpha)}\ar[dr]_{\sigma(\beta)}&&-110\ar[ur]\ar[dr]&&011\ar[ur]&\\
001\ar[ur]&&010\ar[ur]_{\sigma(\psi)}&&100\ar[ur]&&-111\ar[ur]_\beta&&001\ar[ur]&&
}
$$
\end{tiny}
\caption{\label{jjja194}The AR quiver of $\mathscr{R}(A_3)$. Objects are labelled by their Grothendieck class in the root lattice of $G$. The objects on the right are cyclically identified with the corresponding ones on the left.
$\tau$ acts by horizontal translation to the left. 
Examples of mesh relations are
$\psi\sigma(\psi)=0$ and $\alpha\sigma(\alpha)+\beta\sigma(\beta)=0$. The $\tau$ orbit of $P_2=011$ is actually twice a half-orbit.}
\end{figure}
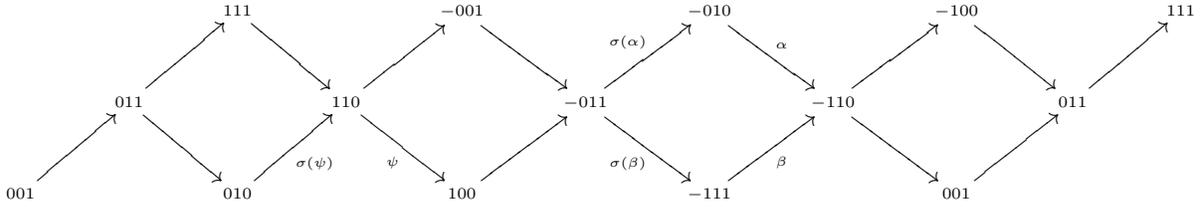

\subsubsection{Derived Picard groups
($S$-duality for Argyres-Douglas)}\label{mmmnnbaq}

 We start by reviewing the derived Picard groups of the Dynkin algebras \cite{picard}, i.e.\! the groups $\mathrm{Aut}(D^b\mathsf{mod}\,\C\vec G)$.
They are generated by the AR translation $\tau\equiv S[-1]$, the shift $[1]$, and the automorphisms of the quiver $\mathrm{aut}(\vec G)$. The relations between these generators are listed in ref.\cite{picard}.
After reducing to the root category
\be
\mathscr{R}(G)\equiv D^b\mathsf{mod}\,\C\vec G\big/[2\Z],
\ee
they take the form in table \ref{derivepicX}. In the physical terms, this table shows the $S$-duality group of the Argyres-Douglas models of type $G$.

We note a special case. For $G=A_{2n+1}$   the $\tau$-orbit of $P_{n+1}$  (nodes numbered as in figure \ref{refDyqi}) is twice an half-orbit, see figure \ref{jjja194}.

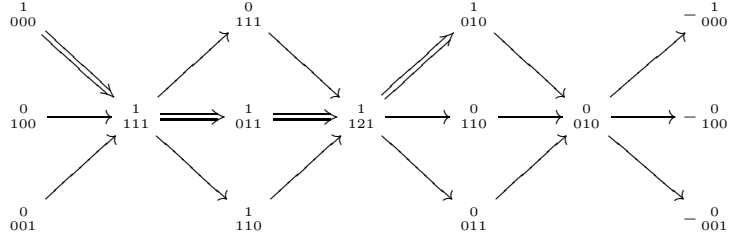
\begin{figure}
\begin{tiny}
$$\xymatrix{{\phantom{1}1\phantom{1}\atop000}\ar@{=>}[dr]&&{\phantom{1}0\phantom{1}\atop111}\ar[dr] && {\phantom{1}1\phantom{1}\atop010}\ar[dr]
&& -{\phantom{1}1\phantom{1}\atop000}\\
{\phantom{1}0\phantom{1}\atop100}\ar[r]&{\phantom{1}1\phantom{1}\atop111}\ar[ur]\ar[dr]\ar@{=>}[r]&{\phantom{1}1\phantom{1}\atop011}\ar@{=>}[r]&{\phantom{1}1\phantom{1}\atop121}\ar@{=>}[ur]\ar[dr]\ar[r]&{\phantom{1}0\phantom{1}\atop110}\ar[r]&{\phantom{1}0\phantom{1}\atop010}\ar[ur]\ar[dr]\ar[r]&-{\phantom{1}0\phantom{1}\atop100}\\
{\phantom{1}0\phantom{1}\atop001}\ar[ur]&&{\phantom{1}1\phantom{1}\atop110}\ar[ur]&&{\phantom{1}0\phantom{1}\atop011}\ar[ur]&&-{\phantom{1}0\phantom{1}\atop001}
}
$$
\end{tiny}
\caption{\label{jjja196}The first half of the AR quiver of $\mathscr{R}(D_4)$; the second half repeats up to a shift (i.e.\! an overall minus, in terms of Grothendieck classes). The double arrows represent a non-trivial path
from $P_s$ to $\tau^{-2}P_s\equiv \tau^{1-h(D_4)/2}P_s$ as described in the text.}
\end{figure}

\subsubsection{Spherical orbits and half-orbits in $\mathscr{R}(G)$} 
We are interested in periodic orbits
and half-orbits which are spherical.
Comparing with the table, we look for full orbits for: \textit{i)} $\vec A_{N-1}$, \textit{ii)} $\vec D_{2n+1}$ and objects such that
$\theta A\not\simeq A$, and \textit{iii)} $\vec E_6$ for $\theta A\not\simeq A$. In all other cases we consider half-orbits.
Two (half)orbits will be called equivalent if they are exchanged by the shift $[1]$.

\medskip

We recall that
an object $A\in\mathscr{R}(G)$ belongs to a spherical $\tau$-orbit of period (resp.\! half-period) $p$ iff $\tau^p A\simeq A$ (resp.\! $\tau^p A\simeq A[1]$) and
\begin{equation}
\sum_{k=0}^{p-1}\Big(\dim \mathrm{Hom}(A,\tau^k A)+
\dim \mathrm{Hom}(A,\tau^k A[1])\Big)= 2.
\end{equation}

In the case of a full orbit\footnote{\ A full orbit is necessarily mapped into a distinct orbit by $[1]$, whereas a half-orbit in closed under shifts.} this condition may be stated in terms of the AR quiver as the requirement that there is no non-zero path connecting the node of $A$ with another node on the same $\tau$-orbit, i.e.\! in figure \ref{jjja194} no non-trivial path connecting two nodes at the same horizontal level. It is clear from the figure that this happens precisely for the upper and lower levels since the mesh relations set all paths to zero in this case; for intermediate levels this cannot happen. The two spherical orbits in the figure are interchanged by $[1]$. 

In the case of a \emph{half}-orbit of 
half-period $p$, $\tau^{-p} A=A[1]$ and so
\be\label{yyyyyqaz}
\begin{split}
\dim\, &\mathrm{Hom}(A,\tau^{1-p}A)=
\dim\mathrm{Hom}(A,\tau A[1])=\\
&=\dim \mathrm{Hom}(\tau A,\tau A)=
\dim\mathrm{Hom}(A,A)=1.
\end{split}
\ee 
Then the \emph{half}-orbit is spherical iff the only non-trivial paths in the AR quiver which connect two distinct point in the orbit are one path from each 
$A$ to its translate $\tau^{1-p}A$ (with $p=h(G)/2$). An example of such path for $G=\vec D_4$ is shown in figure \ref{jjja196}.

\begin{table}
\begin{center}
\begin{tabular}{ccc@{\hspace{25pt}}lc}\hline\hline
$G$ & $\mathrm{aut}(\vec G)$ & $\mathrm{Aut}(\mathscr{R}(G))$ & relations\\\hline
$A_{N-1}$ & $1$ & $\Z_N\times \Z_2$ & $\tau^N=\mathrm{Id}$ & $[2]=\mathrm{Id}$\\
$D_4$ & $\mathfrak{S}_3$ & $\Z_6$
& $\tau^3=[-1]$ & $[2]=\mathrm{Id}$\\
$D_{2n}$ $n>2$ & $\mathfrak{S}_2$ & $\mathfrak{S}_2\times \Z_{2(2n-1)}$
& $\tau^{2n-1}=[-1]$ & $[2]=\mathrm{Id}$\\
$D_{2n+1}$ $n\geq2$ & $\mathfrak{S}_2$ & $\mathfrak{S}_2\times \Z_{4n}$
& $\tau^{2n}=\theta[-1]$ &$[2]=\mathrm{Id}$\\
$E_6$ & $\mathfrak{S}_2$ & $\mathfrak{S}_2\times \Z_{12}$
& $\tau^{6}=\theta[-1]$ &$[2]=\mathrm{Id}$\\
$E_7$ & $1$ &  $\Z_{18}$
& $\tau^{9}=[-1]$& $[2]=\mathrm{Id}$\\
$E_8$ & $1$ &  $\Z_{30}$
& $\tau^{15}=[-1]$& $[2]=\mathrm{Id}$\\\hline\hline
\end{tabular}
\caption{\label{derivepicX}Auto-equivalences of  $\mathscr{R}(G)$ \cite{picard}. The auto-equivalences of $D^b(\mathsf{mod}\,\C G)$ are obtained by omitting the relation $[2]=\mathrm{Id}$ in the last column. $\theta$ is the element of order 2 in $\mathrm{aut}(\vec G)$. }
\end{center}
\end{table}

Since the (half)period $p>2$ for all $G$, 
this argument shows the 

\begin{lem}
A necessary condition for an object $A\in\mathscr{R}(G)$ to have a spherical (half) orbit is that its AR triangle has an indecomposable middle term, i.e.\!
\be
\to \tau A\to M\to A\to \qquad \text{with }M\ \text{indecomposable.}
\ee
\end{lem}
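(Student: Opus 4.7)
The plan is to argue by contradiction. Assume the middle term $M=\bigoplus_{j=1}^{d}M_j$ of the AR triangle $\tau A\to M\to A\xrightarrow{\varepsilon}\tau A[1]$ has $d\geq 2$ indecomposable summands; I will produce a non-trivial morphism between two $\tau$-orbit points of $A$ that the spherical condition forbids. Under the identification $\mathrm{ind}\,\mathscr{R}(G)\simeq\Z G/\tau^{h(G)}$, the integer $d$ equals the valence $\deg_G(i)$ of the Dynkin node $i$ on which $A$ sits (each edge of $G$ incident to $i$ contributes one incoming and one outgoing arrow at $(k,i)$).

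The key estimate uses the mesh presentation of $\mathscr{R}(G)$: morphism spaces between indecomposables are spanned by paths in $\Z G/\tau^{h(G)}$ modulo mesh relations, and the $d$ length-two compositions $\tau A\xrightarrow{\sigma(\psi_j)}M_j\xrightarrow{\psi_j}A$ are constrained by precisely one such relation, the mesh at $A$,
\begin{equation*}
r_A\;=\;\sum_{j=1}^{d}\psi_j\,\sigma(\psi_j)\;=\;0,
\end{equation*}
because this is the only mesh relation whose endpoints coincide with ours. Hence these compositions span a subspace of dimension $d-1\geq 1$ in $\mathrm{Hom}(\tau A,A)$. Serre duality on $\mathscr{R}(G)$, with Serre functor $S=\tau[1]$, then yields
\begin{equation*}
\dim\mathrm{Hom}(A,\tau^2 A[1])\;=\;\dim\mathrm{Hom}(\tau A,A)\;\geq\;d-1\;\geq\;1,
\end{equation*}
so there is a non-zero morphism $A\to\tau^2 A[1]$.

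In the range $p>2$ explicitly singled out just before the Lemma, the orbit position indexed by $k=2$ is distinct from the positions $k=0,1,p-1$ at which the spherical (half-)orbit pattern allows $\mathrm{Hom}(A,\tau^k A[\ast])$ to be non-zero. The morphism just produced therefore contradicts sphericity, forcing $d\leq 1$, i.e.\! $M$ indecomposable. The main technical subtlety is the mesh-theoretic claim that the $d-1$ compositions are genuinely independent: this rests on the standard presentation of the mesh category of $\Z G/\tau^{h(G)}$ by arrows and one mesh relation $r_v$ per vertex $v$, together with the observation that only $r_A$ involves length-two paths $\tau A\to A$. The (few) cases with (half-)period $\leq 2$, most notably the half-orbit of $P_2\in\mathscr{R}(A_3)$ visible in figure \ref{jjja194}, fall outside the reach of this mesh-counting argument and would have to be inspected by hand.
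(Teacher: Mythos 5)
Your argument is correct and is essentially the paper's own proof: the paper establishes the Lemma by exactly this mesh-relation count (the $d$ length-two paths $\tau A\to M_j\to A$ subject to the single mesh relation $r_A$ leave a non-zero morphism between distinct points of the orbit whenever $d\geq 2$), concluded by the observation that the (half-)period exceeds $2$. Your explicit Serre-duality reformulation and your flagging of the (half-)period $\leq 2$ exception (the $P_2$ half-orbit in $\mathscr{R}(A_3)$, which the paper glosses over with the blanket assertion ``$p>2$ for all $G$'' but which is harmless since that half-orbit fails sphericity anyway) are welcome refinements, not a different method.
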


This happens only for nodes in $\Z G/\tau^{h(G)}$ of the form $(\ast,i_1)$ with $i_1$ a node of valency 1 in $G$. If there are no nodes of valency $3$ in the graph, the orbits in $\Z G/\tau^{h(G)}$ generated by valency 1 vertices of $Q$ are spherical: indeed, no two nodes in the same valency-1 orbit may be connected by a non-trivial path, since meshes of nodes of valency 2 just enforce commutativity (i.e.\! path independence of the morphism, which is then easily seen to be zero). However, if $G$ contains
a full subquiver  of the form
(numbers inside squares denote the valency of the node in the total quiver $G$)
\be
\text{\fbox{1}}\longrightarrow\overbrace{\text{\fbox{2}}\longrightarrow\text{\fbox{2}}\longrightarrow\cdots
 \longrightarrow\text{\fbox{2}}}^{m\ \text{\fbox{2}\,'s}}
 \longrightarrow \text{\fbox{3}}
 \longrightarrow \text{\fbox{?}}
\ee 
we may construct a non-zero path
between $A\equiv (\ast,\text{\fbox{1}})$ and $\tau^{-(m+2)}A\equiv (\ast+m+2,\text{\fbox{1}})$ which factors through to a node in the \fbox{?}-orbit.
In the case of complete orbits,
$A$ has a spherical orbit only if
\be\label{tttmsx64}
m+2\geq h(G),
\ee
 In presence of a 3-valent vertex, the only complete orbits are those of $P_s$, $P_c$ for $D_{2n+1}$ and $P_1$, $P_5$ for $E_6$, and none of them satisfies the inequality \eqref{tttmsx64}. 
In view of \eqref{yyyyyqaz}, for \emph{half}-orbits the inequality \eqref{tttmsx64} gets replaced by the equality
\be
m+2=\frac{h(G)}{2}-1.
\ee

In conclusion,

\begin{fact}[The $G=A_{N-1}$ case]\label{aNspherorb}
Write $\theta$ for the \emph{highest root} of $A_{N-1}$.
In $\mathscr{R}(\vec A_{N-1})$ there are two (equivalent)  spherical orbits interchanged by the shift $[1]$; their period is $N$. Writing the objects in the first spherical $\tau$-orbit as
$A_i\equiv \tau^{i}A_0$, with $i\in \Z/N\Z$, we have  
\begin{align}
A_i&= \begin{cases}
X_{-\theta} &\text{for } i=0\\
X_{\alpha_i}
 &\text{for } 1\leq i\leq N-1.\end{cases}\label{eeeqqz}
\end{align}
\end{fact}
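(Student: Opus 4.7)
The plan is to apply the general analysis of spherical $\tau$-orbits developed in this subsection to the specific case $G = A_{N-1}$. The AR quiver of $\mathscr{R}(A_{N-1})$ is $\Z\vec A_{N-1}/\tau^N$, since $h(A_{N-1})=N$. The orbits of interest are full (not half-): the shift $[1]$ sends positive roots to negative ones, so the two components of $\mathscr{R}(A_{N-1})\simeq \mathsf{mod}\,\C\vec A_{N-1}\vee(\mathsf{mod}\,\C\vec A_{N-1})[1]$ are disjoint under $\tau$, and any $\tau$-orbit meeting both of them cannot exist.

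First I would invoke the preceding Lemma: any object in a spherical (half-)orbit has an AR triangle with indecomposable middle term, which forces its position in the AR quiver to sit at a valency-$1$ node of $A_{N-1}$, i.e.\ at one of the two endpoints. This produces at most two candidate $\tau$-orbits of period $N$. Since $A_{N-1}$ has no valency-three vertex, the argument preceding the Fact yields sphericity: the meshes at the intermediate valency-$2$ nodes enforce path independence, so any non-stationary path between two distinct nodes of a valency-$1$ orbit is zero modulo the mesh ideal; hence the only non-trivial contributions to $\sum_{k,j}\dim\mathrm{Hom}(A,\tau^k A[j])$ come from the local AR triangle and total exactly $2$. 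Both candidates are therefore spherical.

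It remains to identify the orbits explicitly and show that $[1]$ exchanges them. The identification $X_{\alpha}[1]=X_{-\alpha}$ visibly swaps the orbit through $X_\theta=P_1$ with the orbit through $X_{-\theta}=P_1[1]$, so the two orbits are equivalent under $[1]$. For the second orbit I set $A_0:=X_{-\theta}=P_1[1]$ and use $\tau=\nu[-1]$ with $\nu P_j=I_j$, together with the fact that, for the reference orientation $1\to 2\to\cdots\to N-1$, vertex $1$ is the unique source (so $I_1=S_1$) while $I_{N-1}=P_1$ (the ``long'' module). In the $2$-periodic category $\mathscr{R}(A_{N-1})$ one then computes
\be
\tau A_0=\tau P_1\,[1]=I_1[-1][1]=S_1=X_{\alpha_1}.
\ee
For $1\le i\le N-2$, the AR sequence $0\to S_{i+1}\to [i,i+1]\to S_i\to 0$ in $\mathsf{mod}\,\C\vec A_{N-1}$ gives $\tau S_i=S_{i+1}$, yielding $A_i=X_{\alpha_i}$ for $1\le i\le N-1$. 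Finally $\tau S_{N-1}=\tau P_{N-1}=I_{N-1}[-1]=P_1[-1]=P_1[1]=A_0$, closing the orbit and confirming period $N$ and the explicit description \eqref{eeeqqz}.

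The only delicate step is bookkeeping the shifts in the $2$-periodic root category (where $[-1]=[1]$) and matching AR-translation conventions at the boundaries of the quiver; once the reference orientation is fixed, the identifications $I_1=S_1$ and $I_{N-1}=P_1$ are immediate and the rest of the computation is mechanical.
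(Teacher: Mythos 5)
Your proof is correct and follows essentially the same route as the paper: the Lemma on indecomposable middle terms restricts candidates to the two valency-$1$ $\tau$-orbits, the absence of a trivalent node plus the mesh relations gives sphericity, and the explicit identification with $\{X_{-\theta}, X_{\alpha_1},\dots,X_{\alpha_{N-1}}\}$ is the routine computation the paper leaves implicit.

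One remark is false, though harmlessly so: you assert that the two components $\mathsf{mod}\,\C\vec A_{N-1}$ and $\mathsf{mod}\,\C\vec A_{N-1}[1]$ are "disjoint under $\tau$" and that no $\tau$-orbit can meet both. In fact $\tau P_{N-1}=I_{N-1}[-1]=P_1[1]$, so $\tau$ does cross between the components, and your own computed orbit $\{P_1[1],S_1,\dots,S_{N-1}\}$ meets both. (For $A_{2n+1}$ the middle-level orbit really is twice a half-orbit, as noted after table \ref{derivepicX}.) The correct criterion for a full versus half orbit is whether $\tau^q A\simeq A[1]$ for some $q$, i.e.\ whether the orbit is closed under the shift; your explicit computation shows $\tau^N A_0\simeq A_0$ with the shifted orbit being the \emph{other} valency-$1$ orbit, which is what actually establishes that both orbits are full and interchanged by $[1]$. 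So the conclusion stands, but the preliminary justification should be replaced by this closure-under-$[1]$ argument.
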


\begin{fact}[The $G=D_r$ case]\label{Dorbitscirc}
Up to shift by $[1]$:
\begin{itemize}
\item[1)] for $r>4$ there is a unique spherical half-orbit generated by the simple $S_v$ with support at the fundamental representation node ($S_v\equiv S_{n-2}$ in  figure \ref{refDyqi}). To describe the spherical half-orbit we define the roots
\be
\beta=\text{\begin{tiny}$\left[{\phantom{11\cdots1}0\phantom{1}\atop11\cdots 110}\right]$\end{tiny}},\qquad \gamma=\text{\begin{tiny}$\left[{\phantom{11\cdots1}1\phantom{1}\atop00\cdots 011}\right]$\end{tiny}},\ee
and let $\alpha_i$ be the simple roots numbered as in figure \ref{refDyqi}).
Then the spherical half-orbit is $A_i=\tau^i A_0$ with
$\tau^{n-1}A_i\simeq A_i[1]$ and
\be
A_i=\begin{cases} X_{\alpha_{n-2}} & i=0\\
X_{-\beta} & i=1\\
X_{-\gamma}& i=2\\
X_{-\alpha_{i-1}} & 3\leq i\leq n-1.
\end{cases}
\ee
\item[2)] for $\vec D_4$ there are three spherical half-orbits generated, respectively, by the three peripheral simples $S_v$, $S_s$, and $S_c$ permuted by the automorphism $\mathfrak{S}_3$ of the quiver. 
One has ($\alpha,\beta=v,s,c$)
\be\label{uuutvcm}
\sum_{k=1}^3\Big(\dim \mathrm{Hom}(S_\alpha, \tau^k S_\beta)+
\dim \mathrm{Hom}(S_\alpha, \tau^k S_\beta[1])\Big)=\begin{cases}2 &\text{if }\alpha=\beta\\
1 & \text{otherwise.}
\end{cases}
\ee
Thus we have a $\widehat{A}_2$-configuration of spherical orbits.
All orbits are obtained form the $S_v$ one by acting with the $\mathfrak{S}_3$ automorphisms.
The three associated telescopic functors $L_a$ satisfy the $C\cb_3$ braiding relations, i.e.
\be
L_aL_bL_a=L_bL_aL_b\qquad a,b\in\{1,2,3\}.
\ee
\end{itemize}
\end{fact}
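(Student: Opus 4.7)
The plan is to apply to $\mathscr{R}(D_r)$ the two general criteria established just above in this subsection: (i) a spherical (half-)orbit must be generated by a simple $S_i$ whose node in the AR quiver $\Z D_r/\tau^{h(D_r)}$ has indecomposable middle term in its AR triangle, forcing $i$ to be a valency-1 vertex of the underlying graph; and (ii) the parameter $m$ (counting valency-2 vertices on the chain issuing from $i$ toward the nearest trivalent vertex) must satisfy $m+2 = h(D_r)/2-1 = r-2$, the candidate half-period being $r-1$. The valency-1 vertices of $D_r$ are the vector node $v=n-2$ and the two spinor nodes $s,c$ (cfr.\! figure \ref{refDyqi}). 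From $v$ one traverses $m=r-4$ valency-2 vertices before reaching the trivalent node $1$, satisfying $m+2=r-2$; from $s$ or $c$ one has $m=0$, so $m+2=2$ matches $r-2$ only in the case $r=4$. This immediately gives uniqueness of the $S_v$ half-orbit for $r>4$ and the three $\mathfrak{S}_3$-related half-orbits for $r=4$.

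For part (1) the objects $A_i=\tau^iA_0$ are identified by iterating the Coxeter action on Grothendieck classes, starting from $[S_{n-2}]=\alpha_{n-2}$ and using the reference orientation of figure \ref{refDyqi}: the orbit walks through the roots claimed in the statement and closes after $r-1$ steps via $\tau^{r-1}A_0\simeq A_0[1]$, consistent with the relation $\tau^{r-1}=[-1]$ from Table \ref{derivepicX}. One must then check that the only non-zero morphism between distinct translates required by eqn.\eqref{yyyyyqaz} is the one realizing $A_0\to\tau^{1-p}A_0=\tau A_0[1]$, given by composing the irreducible maps along the long arm; the mesh relations at valency-2 vertices enforce mere commutativity of paths, not vanishing, so this composite is non-zero, while any other candidate morphism between two translates has to pass through the trivalent vertex $1$, where the mesh relation yields a vanishing sum and kills the path. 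This gives precisely the two contributions $\delta^{(2p)}_{k,0}\delta^{(2)}_{j,0}+\delta^{(2p)}_{k,1}\delta^{(2)}_{j,1}$ needed for a spherical half-orbit.

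For part (2), the $D_4$ quiver automorphism $\mathfrak{S}_3$ acts transitively on the three orbits generated by $S_v, S_s, S_c$, so all the diagonal and all the off-diagonal pairings in \eqref{uuutvcm} are respectively equal; it suffices to inspect a single pair, say $(S_v,S_s)$, on the explicit AR quiver of figure \ref{jjja196}, where one verifies the existence of exactly one irreducible path (modulo mesh relations) between translates of $S_s$ and $S_v$ within a single $\tau$-period, giving the value $1$. Having thus established the $\widehat{A}_2$-configuration of spherical half-orbits, the braiding identities $L_aL_bL_a\simeq L_bL_aL_b$ for each pair $a\neq b$ follow at once from \textbf{Fact \ref{braidtele}} applied to each sub-pair, since an $\widehat{A}_2$-configuration contains three $(A_2)$-configurations. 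The principal technical obstacle is the bookkeeping of paths in the mesh category $\Z D_r/\tau^{h(D_r)}$: one must verify that the mesh relations kill every candidate morphism between translates \emph{except} the one long-arm composition needed for sphericity in part (1) and the single peripheral link needed for the Hom count in part (2); near the trivalent vertex, where paths bifurcate and recombine, this requires some care.
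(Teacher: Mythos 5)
Your overall strategy is the one the paper itself follows in \S\,\ref{jggss90}: restrict attention to the $\tau$-orbits of valency-one vertices via the Lemma on indecomposable middle terms of AR triangles, then impose the half-orbit equality $m+2=h(G)/2-1$, which for $D_r$ reads $m+2=r-2$ and selects the vector node for every $r$ (since $m=r-4$ there) but the spinor nodes only for $r=4$; the $\mathfrak{S}_3$ reduction of part (2) and the appeal to \textbf{Fact \ref{braidtele}} for the $C\cb_3$ relations are likewise the paper's route. The conclusions you reach are all correct.

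There is, however, one step whose justification is stated backwards, and it is precisely the step you single out as the main technical obstacle. The unique non-zero morphism $A_0\to\tau^{1-p}A_0$ is \emph{not} obtained by composing irreducible maps along the long arm, and it is \emph{not} the case that a candidate morphism passing through the trivalent orbit is killed by the mesh relation there. The situation is the reverse: a composite that turns around at a valency-two node can, using the two-term (commutativity) mesh relations, be slid out to the boundary of the arm, where the single-middle-term mesh $\psi\sigma(\psi)=0$ annihilates it; the one composite that survives is exactly the one that detours through the branch \emph{beyond} the trivalent vertex (for $A_0=S_v$, through the spinor orbit), because the mesh at a trivalent node has three middle terms and hence does not force any individual summand to vanish. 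This is the content of the paper's construction of the non-zero path from the valency-one orbit into the orbit of the node past the trivalent vertex, and it is visible in figure \ref{jjja196}, where the double-arrow path realizing $P_s\to\tau^{-2}P_s$ runs through three consecutive nodes of the central (trivalent) orbit rather than avoiding it. Since your counting criterion $m+2=r-2$ already encodes the correct answer, the statement of the \textbf{Fact} is unaffected; but the path bookkeeping as you describe it would not go through, and in particular your argument would wrongly conclude that the spinor-node orbits of $D_r$, $r>4$ (where there is no long arm at all between the generator and the trivalent vertex) fail for lack of a surviving morphism rather than for having too short a one.
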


\begin{fact}[$E_r$ case]
\label{Dorbitscirc2}
$\mathscr{R}(E_r)$ does not contain spherical (half)-orbits.
\end{fact}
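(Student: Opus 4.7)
The plan is to invoke the necessary conditions for spherical (half-)orbits that were established in the paragraphs preceding the Fact, and then reduce the statement to a finite combinatorial check on the three $E_r$ Dynkin diagrams. Specifically, by the lemma just above, any object $A\in\mathscr{R}(G)$ whose orbit is spherical must lie in a $\tau$-orbit of the form $(\ast,i_1)$ with $i_1$ a valency-$1$ vertex of $G$. Moreover, in the presence of a $3$-valent vertex (which every $E_r$ has) the text established sharper constraints: for a complete orbit one needs $m+2\ge h(G)$, while for a half-orbit one needs $m+2=h(G)/2-1$, where $m$ is the number of valency-$2$ vertices along the arm joining the valency-$1$ vertex to the branching node.

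So what I would do is simply read off the three valency-$1$ vertices from the reference orientation in figure \ref{refDyqi} for each $G\in\{E_6,E_7,E_8\}$, compute $m$ for each arm, and compare with $h(E_6)=12$, $h(E_7)=18$, $h(E_8)=30$. For $E_6$ the three arms have lengths $1,2,2$, so the admissible values are $m\in\{0,1,1\}$; for $E_7$ they are $m\in\{0,1,2\}$; for $E_8$ they are $m\in\{0,1,3\}$. The maximum of $m+2$ in all three cases is therefore $5$, which is much smaller than $h(E_r)$, so no complete orbit through a valency-$1$ vertex can be spherical. For the half-orbit equality one needs $m+2=h/2-1$, i.e.\ $5$, $8$, $14$ for $r=6,7,8$ respectively; again the available values of $m+2$ are at most $5$, and only in the $E_6$ case does $5$ even appear (as the maximum), but one checks that a value $m+2=5$ is never attained (the two longest arms of $E_6$ yield $m+2=3$, not $5$). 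Hence neither the full-orbit nor the half-orbit sphericality condition can be satisfied in any $E_r$.

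Finally I would note that the text has already restricted the possible complete orbits in the presence of a $3$-valent vertex to a short explicit list (namely $P_1,P_5$ in $E_6$, and none in $E_7,E_8$ because the derived Picard groups in table \ref{derivepicX} contain no non-trivial quiver automorphism, so $\tau^{h/2}\simeq[-1]$ forces every orbit through a valency-$1$ vertex to close up as a half-orbit). This makes the case analysis even shorter: for $E_7$ and $E_8$ there is nothing to check beyond the half-orbit inequality, and for $E_6$ one only needs to rule out the two explicit complete orbits $P_1,P_5$ (by $m+2\le 3<12$) and the single half-orbit through the fixed valency-$1$ vertex (by $m+2\le 3\neq 5$).

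I do not anticipate a genuine obstacle here: the statement is a negative one, all analytical work has been done in the preceding lemma and its discussion, and what remains is a transparent arm-length count on three diagrams. The only point requiring mild care is making sure one has correctly separated the complete-orbit from the half-orbit regime for $E_6$ (where the graph automorphism $\theta$ is present and $\tau^6=\theta[-1]$), but this has already been done in the enumeration following the lemma.
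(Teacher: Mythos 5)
Your argument is correct, and it is essentially the combinatorial route that the paper itself sets up in the discussion immediately preceding this Fact (the lemma forcing spherical (half-)orbits through valency-one vertices, plus the arm-length conditions $m+2\geq h(G)$ for complete orbits and $m+2=h(G)/2-1$ for half-orbits); your numbers — $m+2\leq 3,4,5$ for $E_6,E_7,E_8$ against the thresholds $12,18,30$ and $5,8,14$ — are right. The paper, however, explicitly labels its displayed argument ``a second proof'' and proceeds differently: assuming a spherical half-orbit, it computes
$\sum_{k}\bigl(-e^{-2\pi i m/\tilde h}\bigr)^{k}\chi(A,\tau^{k}A)=1+e^{-2\pi i m/\tilde h}\neq 0$ (nonvanishing because $\tilde h=h(G)/2$ is odd for $E_7,E_8$) and identifies this with $\tilde h$ times the Euler pairing of $[A]$ against its projection onto the $e^{2\pi i m/\tilde h}$-eigenspace of $\boldsymbol{H}=-\boldsymbol{\Phi}$; since $E_7,E_8$ have only $r(G)<h(G)/2$ distinct Coxeter eigenvalues, some eigenspace is empty and one gets a contradiction. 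The spectral proof is less elementary but is the prototype for the necessary conditions the paper reuses heavily later (e.g.\ for spherical orbits in $\mathscr{R}(G,G^\prime)$), whereas your AR-quiver path count is self-contained and, importantly, covers $E_6$ — which the paper's displayed proof does not, since there $\tilde h=6=r(E_6)$ and the eigenvalue count gives no contradiction; for $E_6$ the paper is implicitly relying on exactly the check you wrote out. One sentence of yours is garbled: the value $m+2=5$ is attained in $E_8$ (the long arm, $m=3$), not in $E_6$, while the required half-orbit value $5$ is the one belonging to $E_6$; since you then correctly verify that the $E_6$ arms only give $m+2\leq 3$, the conclusion is unaffected, but that parenthetical should be corrected.
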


\begin{proof} We present a second proof of this \textbf{Fact}, introducing an argument we shall use often.
For simplicity, we consider $G=E_7$ or $E_8$ and write $\tilde h=h(G)/2$ which is an \emph{odd} integer in both cases. Assume (absurd) that the object $A$ has a spherical half-orbit. Then
\be
(-1)^k\, \chi(A,\tau^k A)=\delta^{(\tilde h)}_{k,0}+\delta^{(\tilde h)}_{k,1}
\ee
so that
\be
\sum_{k=1}^{\tilde h}\big(-e^{-2\pi m/\tilde h}\big)^k\chi(A,\tau^k A)=1+e^{-2\pi i m/\tilde h}
= \tilde h\,\boldsymbol{a}^t \,\boldsymbol{E}\!\left(\frac{1}{\tilde h}\sum_{k=1}^{\tilde h} e^{-2\pi i mk/\tilde h}\,\boldsymbol{H}^k \right)\!\boldsymbol{a},
\ee
where $\boldsymbol{a}$ is the vector which represents $[A]$ in the Grothendieck group. The expression in the large parenthesis is just the projector on the $e^{2\pi i m/\tilde h}$--eigenspace of $\boldsymbol{H}$. If for some $m\in \Z/\tilde h\Z$ this eigenspace is zero, we get a contradiction and conclude that no object $A$ can have a spherical half-orbit. This is the case, since the number of distinct eigenvalues of $\boldsymbol{H}=-\boldsymbol{\Phi}$ is $r(G)<\tilde h\equiv h(G)/2$ for $G=E_7,E_8$.
\end{proof}

\begin{rem}\label{Dorbitscirc+1}Let us return to the involution $\varepsilon_A$ in eqn.\eqref{Dorbitscirc-1} for $G=D_{2n}$. If $n>2$ we have a unique spherical orbit (\textbf{Fact \ref{Dorbitscirc}}), so a unique telescopic functor, and a unique involution $\theta$ in the derived Picard group. Hence $\varepsilon_A=\theta$. For $n=2$ we have 3 independent telescopic functors, hence 3 distinct $\varepsilon_A$, and 3 involutions in the subgroup $\mathfrak{S}_3$ of the derived Picard group, so that $\{\varepsilon_A\}=\{\text{involutions in }\mathfrak{S}_3\}$.
\end{rem}   

\section{$S$-duality in $(G,G^\prime)$ and $(G,\widehat{H})$ models}\label{produalities}

The method used in section \ref{moregeneral} to compute explicitly the $S$-duality groups of a large class of 4d $\cn=2$ is rather abstract. In order to make the logic clear, we first consider a simpler set of models, where things are much easier to visualize. The more general approach of section \ref{moregeneral} is modeled on the present section, although at a higher level of abstraction. 
\medskip

As introductory examples we consider  
 the first two families of $\cn=2$ models,
$(G,G^\prime)$ and $(\widehat{H}, G)$. Their BPS spectra
were already studied in refs.\!\cite{Cecotti:2010fi,Cecotti:2013lda} from the point of view of the Representation Theory of quivers with superpotential \cite{Alim:2011kw}. Indeed, these models have a very convenient quiver with superpotential given by the triangle tensor product in the sense of Keller \cite{kellerper} of the 
two acyclic quivers $G$, $G^\prime$ (resp.\! $G$, $\widehat{H}$). These models are also well understood in terms of the 4d/2d correspondence \cite{Cecotti:2010fi}: their 2d counterparts are the (2,2) LG models with superpotentials
\be
W(x,y,u,v)=W_G(x,y)+W_{G^\prime}(u,v)
\ee
and, respectively,
\be
W(x,y,u,v)=W_G(x,y)+W_{\widehat{H}}(u,v)
\ee
where $W_G(x,y)$ are the $ADE$ minimal singularities (cfr.\! table \ref{duval}) and $W_{\widehat{H}}(u,v)$ the affine superpotentials (cfr.\! table \ref{afff}). In this section we study the $S$-duality groups of the $(G,L)$ models ($L=G^\prime$ or $\widehat{H}$) as a warm-up for the more complicated QFTs of section \ref{moregeneral}, and as an illustration of the ideas and techniques of homological $S$-duality.

\subsection{Review of the $(G,L)$ QFTs in the categorical language}\label{yyyqwa}

In this subsection $L$ stands for an acyclic quiver which is either of Dynkin type $G^\prime$ or of affine type $\widehat{H}$.
 The main difference between the two cases is that 
the Coxeter element of a Dynkin quiver is semi-simple, while for an (acyclic) affine quiver it is never semi-simple: this just reflects the fact that the $\cn=2$ theories $(G,G^\prime)$ are superconformal while the $(G,\widehat{H})$ ones are asymptotically-free. 
In this subsection we follow
\cite{kellerper}; we refer to that paper for more precise statements and further details.

\subsubsection{The cluster category $\mathscr{C}(G,L)$}

The triangle tensor product of $G$ and $L$ yields the quiver with superpotential
$G\boxtimes L$ \cite{kellerper}; its Jacobian algebra
\be
\mathsf{Jac}(G,L)= \C(G\boxtimes L)/\partial \cw
\ee
is a certain ``completion'' of the product path algebra $\C G\times \C L$ with extra ``diagonal'' arrows and corresponding relations \cite{kellerper}. 
The BPS states of the $(G,L)$ QFT (in a physical regime covered by the triangle product form of the quiver) are then given by the modules
of $\mathsf{Jac}(G,L)$ which are stable with respect to the $\cn=2$
central charge \cite{Alim:2011kw}
\be
Z\colon K_0(\mathsf{mod}\,\mathsf{Jac}(G,L))\to \C.\ee

The category of prime interest for us is not the module category $\mathsf{mod}\,\mathsf{Jac}(G,L)$ but rather the associated \emph{cluster category} $\mathscr{C}(G,L)$. Consider the derived category 
\be
\mathscr{D}(G,L)=D^b\big(\mathsf{mod}\,\C G\times 
\C L\big).
\ee
The AR translations in the module categories of the two factor algebras induce auto-equivalences of the above derived category which, following \cite{kellerper}, we write as $\tau\otimes 1$ and $1\otimes \tau$, respectively. Their composition is 
\be\label{serregg}
\tau\otimes\tau\equiv S[-2],
\ee
where $S$ is the Serre functor
of the triangle category $\mathscr{D}(G,L)$.

As described in \S.\ref{lllksdv}, the cluster category $\mathscr{C}(G,L)$ is the triangular hull of the orbit category of the derived category of $\mathscr{D}(G,L)$ with respect to $S[-2]\equiv \tau\otimes\tau$, that is,
\be\label{kkka45n}
\mathscr{C}(G,L)= \mathsf{Hu}_\triangle\!\Big(\mathscr{D}(G,L)\Big/\langle \tau\otimes\tau\rangle^\Z\Big).
\ee

\paragraph{Fractional Calabi-Yau objects in the derived category.}
The 2d correspondent of a 4d $(G,G^\prime)$ model is a $(2,2)$ SCFT.
Let $X\in D^b(\mathsf{mod}\,\C G\times \C G^\prime)$, and let $h(G)$, $h(G^\prime)$ be the Coxeter numbers of the Dynkin graphs $G$ and $G^\prime$. From eqn.\eqref{serregg}
\be
\begin{split}
S^{h(G)h(G^\prime)}X&= \big(\tau^{h(G)h(G^\prime)}\otimes
\tau^{h(G)h(G^\prime)}\big)X[2h(G)h(G^\prime)]=\\
&=X[2h(G)h(G^\prime)-2h(G^\prime)-2h(G)]
\end{split}
\ee
where we used that, in $D^b(\mathsf{mod}\, \C G)$ with $G$ of Dynkin type, one has $\tau^{h(G)}=[-2]$\footnote{\ For a more precise statement, see table \ref{derivepicX}, omitting the relation $[2]=\mathrm{Id}$ which does not hold in the derived category.}.
Then all objects in the derived category $D^b(\mathsf{mod}\,\C G\times \C G^\prime)$ have Calabi-Yau fractional dimension
\be\label{whatc}
\hat c(G,G^\prime)=\frac{2\big(h(G)h(G^\prime)-h(G)-h(G^\prime)\big)}{h(G)h(G^\prime)}= \hat c(G)+\hat c(G^\prime),\qquad \big(\text{in }\mathbb{Q}\big)
\ee
where $\hat c(G)$ is the Virasoro central charge of the 2d minimal $\cn=2$ SCFT of type $G$. Of course, \eqref{whatc} is precisely the physical definition of $\hat c$. 
When $\hat c(G)+\hat c(G^\prime)<1$ the 4d theory is an Argyres-Douglas model.  $\hat c(G)+\hat c(G^\prime)$ is equal $1$ only for the three pairs $(D_4,A_2)$, $(A_3,A_3)$ and $(A_5,A_2)$, which correspond to the three elliptic complete SCFTs $E^{(1,1)}_6$, $E^{(1,1)}_7$, and $E^{(1,1)}_8$, respectively. For $\hat c(G)+\hat c(G^\prime)>1$, the cluster category $\mathscr{C}(G,G^\prime)$ is strictly larger than the orbit category (\!\!\cite{imaorbit} \textbf{Theorem 1.4}). 

The 2d model corresponding to the category $D^b(\mathsf{mod}\,\C G\times \C \widehat{H})$ is not UV conformal, hence not all objects are fractional Calabi-Yau. However objects with zero $G$ magnetic charge\footnote{\ \label{ringde}An object $X\in D^b(\mathsf{mod}\, \C G\times \C \widehat{H})$ has \textit{zero $G$ magnetic charge} iff $\boldsymbol{m}([X])=0$ where $$\boldsymbol{m}\equiv \mathrm{Id}\otimes \partial_R\colon K_0(D^b(\mathsf{mod}\, \C G\times \C \widehat{H}))\to \Z^{r(G)}$$ and $\partial_R$ is the Ringel defect in $\mathsf{mod}\,\C \widehat{H}$.} are fractional CY with dimension
$1+\hat c(G)$ (in $\mathbb{Q}$).

\subsection{The root category $\mathscr{R}(G,G^\prime)$ and its
auto-equivalences}

As in \S.\ref{rrrorott} the root category $\mathscr{R}(G,G^\prime)$ is defined as
\be
\mathscr{R}(G,G^\prime)=\mathsf{Hu}_\triangle\Big(\mathscr{D}(G,G^\prime)/[2\Z]\Big).
\ee
We are interested in the group of its auto-equivalences, or, more precisely, in the group 
\be
\mathbf{Aut}(\mathscr{R}(G,G^\prime))\subset GL\big(r(G)r(G^\prime),\Z\big)
\ee which represent their action on its Grothendieck group $\sim\Z^{r(G)\,r(G^\prime)}$.

\subsubsection{Auto-equivalences inherited from $\mathscr{R}(G)$ and $\mathscr{R}(G^\prime)$}

The auto-equivalence $T\equiv \tau\otimes\tau= S[-2]$ of $\mathscr{D}(G,G^\prime)$ induces an auto-equivalence of the root category which we still write $T$. More generally, all auto-equivalences $\sigma\in \mathrm{Aut}(\mathscr{R}(G))$ (resp.\! $\sigma^\prime\in \mathrm{Aut}(\mathscr{R}(G^\prime))$), induce auto-equivalences of $\mathscr{R}(G, G^\prime)$ of the form
\be
\sigma\otimes \mathrm{Id},\qquad \mathrm{Id}\otimes \sigma^\prime.
\ee

The auto-equivalences of $\mathscr{R}(L)$, with $L=G$ or $\widehat{H}$, may be read from 
the derived Picard group of the corresponding hereditary algebra, see   
\cite{picard}. For $G$ Dynkin,
the auto-equivalence group is described 
in table \ref{derivepicX}.
For $\widehat{H}$ affine, up to physically irrelevant motions of $\mathbb{P}^1$ (if we have less than three exceptional tubes), the derived Picard group is generated by the automorphisms of the quiver, $\tau$, and $[1]$. 

In general $\mathscr{R}(G,G^\prime)$ has additional auto-equivalences of a more subtle kind which are generated by telescopic functors. These are the  more interesting ones for our physical applications. We start by studying $T$-orbits in the root category.

\subsubsection{$T$-orbits  in $\mathscr{R}(G,G^\prime)$}\label{telinR}

Given an object $A\in \mathscr{D}(G,G^\prime)$, we use the same symbol $A$ to denote its canonical
image in the root category. 
For all object $A$
\be
T^{\mathrm{\,lcm}(h(G),h(G^\prime))}A=A\!\left[-2\frac{h(G)+h(G^\prime)}{\gcd(h(G),h(G^\prime))}\right]
\simeq A \ \text{in }\mathscr{R}(G,G^\prime),
\ee
so that all objects in the orbit category 
 belong to $T$-orbits of period $p$ dividing $\mathrm{lcm}(h(G),h(G^\prime))$. To be more precise,
 let us introduce the reduced 
 Coxeter number $\tilde h(G)$
 and the automorphism $\theta(G)$
 of the Dynkin graph $G$,
 \begin{align}\label{yyymml}
 &\tilde h(G) =\begin{cases}
 \tfrac{1}{2}\,h(G) &\text{for }G=A_1,D_r,E_r\\
 \phantom{\tfrac{1}{2}}\,h(G)
 &\text{otherwise,}
 \end{cases}
 &&\theta(G)=
 \begin{cases}
 \theta & \text{for }G=D_{2n+1},\ E_6\\
 \mathrm{Id} & \text{otherwise,}\end{cases}\\
 &s(G)=2\,\tilde h(G)/h(G), && \theta(G)^2=\mathrm{Id}.
 \end{align}
The minimal relation in
$D^b\,\mathsf{mod}\,\C G$ (cfr.\!\! table \ref{derivepicX}), may then be written in an unified way as
 \be
 \tau^{\tilde h(G)}\simeq \theta(G)\,[-s(G)].
 \ee
 
The minimal relation between 
 $T\equiv\tau\otimes\tau$, the shifts $[k]$, and graph automorphisms in $\mathscr{D}(G,G^\prime)$ is then
 \be\label{miniide}
T^{\,m(G,G^\prime)} 
 = \Theta(G,G^\prime)\!\big[-\Sigma(G,G^\prime)\big],
 \ee
 where\footnote{\ For brevity, sometimes we write $\gcd(a,b)$ simply as $(a,b)$.}
\begin{align}
\Sigma(G,G^\prime)&\equiv\frac{s(G)\tilde h(G^\prime)}{(\tilde h(G),\tilde h(G^\prime))}+\frac{s(G^\prime)\tilde h(G)}{(\tilde h(G),\tilde h(G^\prime))},\\
m(G,G^\prime)&\equiv\mathrm{lcm}(\tilde h(G),\tilde h(G^\prime)),\\
\Theta(G,G^\prime)&\equiv \theta(G)^{\tilde h(G^\prime)/(\tilde h(G),\tilde h(G^\prime))}\otimes \theta(G^\prime)^{\tilde h(G)/(\tilde h(G),\tilde h(G^\prime))}, \label{bigtheta}
 \end{align}
from which we may read the minimal period $p$ for each object in
$\mathscr{D}(G,G^\prime)/[2\Z]$. 
There are two possibilities:
\begin{itemize}
\item[A)] the integer $\Sigma(G,G^\prime)$ is \emph{odd:} 
the objects fixed by the automorphism 
$\Theta(G,G^\prime)$
have \textit{half-orbits}
of half-period $q=m(G,G^\prime)$, while all other objects have periodic orbits of period $p=2 m(G,G^\prime)$;
\item[B)] the integer $\Sigma(G,G^\prime)$ is \emph{even:}
 the objects fixed by the automorphism 
$\Theta(G,G^\prime)$
have periodic orbits of period $p=m(G,G^\prime)$ while all other objects have double period $2p$.
\end{itemize}
In particular, for the models $(A_{r-1},A_{r^\prime-1})$ ($r,r^\prime\geq 3$) all objects have period $p=\mathrm{lcm}(r,r^\prime)$. 
In case B), $\Theta(G,G^\prime)=\mathrm{Id}$, except when one (or both) Dynkin graphs $G$, $G^\prime$ are of type $D_{2\ell+1}$ or $E_6$
and moreover
\be
s(G) \tilde h(G^\prime)=s(G^\prime) \tilde h(G)=\big(\tilde h(G),\tilde h(G^\prime)\big)\mod2 \big(\tilde h(G),\tilde h(G^\prime)\big).
\ee 
Except in this special case, all objects in $\mathscr{R}(G,G^\prime)$ have the same period $p$. 
With the exception of the free hypermultiplet,
\be
\mathscr{R}(A_1,A_1)\simeq \mathscr{C}(A_1,A_1)\simeq \mathsf{vect}\oplus \mathsf{vect}[1],
\ee
we always have $p>1$.
From now on we assume
$G,G^\prime\neq A_1$, unless otherwise stated.\footnote{\ If $G$ or $G^\prime$ is equal $A_1$ the model is equivalent to an Argyres-Douglas theory whose duality group has already been discussed in the previous section.}
\medskip

The first condition in \textbf{Definition 4}, eqn.\eqref{four4}, is then satisfied by all objects in $\mathscr{R}(G,G^\prime)$.
Let us consider the second one, eqn.\eqref{four2}. In the present case $n=2$, and the condition becomes  
\be\label{yyyyxxxzaq}
\dim\mathrm{Hom}(A,T^kA[m])=\delta^{(p)}_{k,0}\,\delta^{(2)}_{m,0}+\delta^{(p)}_{k,1}\,\delta^{(2)}_{m,0}.
\ee
In particular, an object belonging to a spherical orbit is a \emph{rigid brick.}
By footnote \ref{footTel},
we limit ourselves to $A$'s in the orbit category $\mathscr{D}(G,G^\prime)/[2\Z]$.

\subsubsection{Necessary conditions for the existence of spherical (half-)orbits}

\paragraph{Spherical full orbits.}
If the model $(G,L)$ has no flavor charge\footnote{\ Compare with the physical discussion in \S.\,\ref{phphmot}.}, there is no \emph{full} 
spherical orbit in $\mathscr{R}(G,L)$.
In facts, a stronger statement holds:

\begin{fact} \label{needdddfla}
A necessary condition for the existence of a spherical full-orbit of period $p$ is
\be\label{uuuuurggs}
\left\{0,\; \frac{1}{p},\; \frac{2}{p},\;\cdots,\; \widehat{\,\frac{1}{2}\,},\;\cdots,\; \frac{p-1}{p}\mod 1\right\}\subseteq
\mathrm{Spectrum}\!\left(\frac{1}{2\pi i}\log\boldsymbol{H}\right),
\ee
(the notation $\widehat{\tfrac{1}{2}}$ means that one-half should be omitted from the  list).
In particular, $1$ should belong to the spectrum of $\boldsymbol{H}$, that is,
$\Gamma_\text{flavor}\neq0$.
\end{fact}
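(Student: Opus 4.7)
The plan is to translate the characterizing condition \eqref{yyyyxxxzaq} into an equation involving the Euler form $\boldsymbol{E}$ and powers of the 2d monodromy matrix $\boldsymbol{H}$, and then to read off spectral information about $\boldsymbol{H}$ by a discrete Fourier transform in $k\in\Z/p\Z$. The eigenvalue $-1$ will drop out of this analysis because of a trigonometric cancellation, which is precisely the origin of the omitted $\widehat{1/2}$ in the list.

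First, I would lift the spherical orbit $A$ to an object of $\mathscr{B}(G,G^\prime)=D^b(\mathsf{mod}\,\C G\times\C G^\prime)$ with Grothendieck class $\boldsymbol{a}\in\Gamma$. Since $T=S[-2]$ acts as $\boldsymbol{H}$ on $\Gamma$ (the shift $[2]$ being trivial on classes), and since $T^p A\simeq A$ forces $\boldsymbol{H}^p\boldsymbol{a}=\boldsymbol{a}$ (the Jordan blocks of $\boldsymbol{H}$ being of size at most $2$, this in turn forces $\boldsymbol{a}$ to lie in the semisimple part of $\boldsymbol{H}$ supported on $p$-th roots of unity), we may pair the two sides of \eqref{yyyyxxxzaq} with $(-1)^m$ and sum over $m\pmod 2$ to get the Euler characteristic in $\mathscr{R}(G,G^\prime)$:
\begin{equation}
\boldsymbol{a}^{t}\,\boldsymbol{E}\,\boldsymbol{H}^{k}\boldsymbol{a}\;=\;\chi(A,T^{k}A)\;=\;\delta^{(p)}_{k,0}+\delta^{(p)}_{k,1},\qquad k\in\Z.
\end{equation}

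Next, I would run discrete Fourier transform in $k$. For each $m\in\Z/p\Z$ let
\begin{equation}
P_{m}\;\equiv\;\frac{1}{p}\sum_{k=0}^{p-1}e^{-2\pi i km/p}\,\boldsymbol{H}^{k}
\end{equation}
be the projector, valid on the $\boldsymbol{H}^p=\mathbf{1}$ subspace, onto the eigenspace of $\boldsymbol{H}$ with eigenvalue $e^{2\pi i m/p}$. Multiplying the previous identity by $e^{-2\pi i km/p}$ and summing over $k=0,\dots,p-1$ yields
\begin{equation}
p\,\boldsymbol{a}^{t}\,\boldsymbol{E}\,P_{m}\boldsymbol{a}\;=\;1+e^{-2\pi i m/p}.
\end{equation}
Whenever $m\not\equiv p/2\pmod p$, the right-hand side is non-zero, forcing $P_{m}\boldsymbol{a}\neq 0$; hence $e^{2\pi i m/p}$ is an eigenvalue of $\boldsymbol{H}$. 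Translated to $\tfrac{1}{2\pi i}\log\boldsymbol{H}$, this gives exactly the subset of $\R/\Z$ claimed in \eqref{uuuurggs}. The case $m=p/2$ (possible only for $p$ even) is excluded by the cancellation $1+e^{-i\pi}=0$, which leaves the presence or absence of the eigenvalue $-1$ completely undetermined by this argument, explaining the hat in \eqref{uuuurggs}. Finally, specializing to $m=0$ gives $p\,\boldsymbol{a}^{t}\boldsymbol{E}P_{0}\boldsymbol{a}=2$, so the $+1$ eigenspace of $\boldsymbol{H}$ is non-zero; by \eqref{whatflaH} this means $\Gamma_\text{flavor}\neq 0$, completing the necessary conditions.

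The main potential obstacle I see is the one already addressed in the first step: justifying that the orbit relation $T^pA\simeq A$ in $\mathscr{R}$ implies $\boldsymbol{H}^p\boldsymbol{a}=\boldsymbol{a}$ in the strong form needed for the projector $P_m$ to make sense on $\boldsymbol{a}$ (i.e.\ that $\boldsymbol{a}$ is indeed orthogonal to all Jordan $2$-blocks of $\boldsymbol{H}$ whose eigenvalues are $p$-th roots of unity). This is guaranteed by the size-$\leq 2$ structure recorded in \eqref{min-pol} together with the elementary computation $\boldsymbol{H}^p\big|_{\text{Jordan }\zeta}=\zeta^p\mathbf{1}+p\,\zeta^{p-1}N$, which can never equal $\mathbf{1}$. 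Everything else is a short linear-algebra computation.
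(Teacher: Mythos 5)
Your proof is correct and follows essentially the same route as the paper: the identity $\chi(A,T^kA)=\delta^{(p)}_{k,0}+\delta^{(p)}_{k,1}$, a discrete Fourier transform in $k$ turning the sum into the spectral projector of $\boldsymbol{H}$, and the observation that $1+e^{-2\pi i m/p}\neq 0$ for $m\neq p/2$ forces each such eigenvalue to occur. Your extra remark justifying that $T^pA\simeq A$ places $\boldsymbol{a}$ in $\ker(\boldsymbol{H}^p-\boldsymbol{1})$, where the averaged sum is a genuine projector despite possible Jordan blocks, is a useful precision the paper leaves implicit.
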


\begin{proof} A necessary condition for the full $T$-orbit of an object $A$ to be spherical (of period $p$) is that for all $s\in\Z/p\Z$
\be\label{jjjaq1}
\frac{1}{p}\big(1+e^{2\pi i s/p}\big)=\frac{1}{p}\sum_{k=1}^p e^{2\pi i k s/p}\,\chi(A,T^k A)= \boldsymbol{a}^t\boldsymbol{E}\!\left(\frac{1}{p}\sum_{k=1}^p e^{2\pi i s/p}\boldsymbol{H}^k\right)\!\boldsymbol{a}
\ee
where $\boldsymbol{a}$ is the integral vector representing $[A]\in K_0(\mathscr{R}(G,G^\prime))\simeq \Z^{r(G)\,r(G^\prime)}$. 
The sum in the large parenthesis  is the projector on the $e^{-2\pi i s/p}$-eigenspace of $\boldsymbol{H}$. If for some $s\neq p/2$ this eigenspace vanishes, we get a contradiction and conclude that no object may have a spherical orbit.\end{proof}

 For $(G,G^\prime)$ we have
\be \mathrm{Spectrum}\!\left(\frac{1}{2\pi i}\log\boldsymbol{H}\right)
=
\left\{ \frac{\ell}{h(G)}+\frac{\ell^\prime}{h(G^\prime)}\mod1\;\bigg|\; \ell\in e(G),\ \ell^\prime\in e(G^\prime)\right\}
\ee
where $e(G)$ is the set of exponents of the Lie algebra $G$. 

\begin{exe}
For $(A_{N-1},A_{N^\prime-1})$ we have no spherical orbit whenever $\gcd(N,N^\prime)=1$.
\end{exe}

\paragraph{Spherical half-orbits.} In agreement with the discussion in \S.\,\ref{phphmot}, spherical \emph{half}-orbit may exist even if no flavor charge is present. Indeed, for half-orbits of half-period $q$ eqn.\eqref{jjjaq1} gets replaced by\footnote{\ Recall that, if $\boldsymbol{a}$ is the dimension vector of an object belonging to a periodic \emph{half}-orbit, $\boldsymbol{H}^q\boldsymbol{a}=-\boldsymbol{a}$.}
\be
\frac{1}{q}\big(1+e^{\pi i (2s-1)/q}\big)=\frac{1}{q}\sum_{k=0}^{q-1} e^{\pi i k (2s-1)/q}\,\chi(A,T^k A)= \boldsymbol{a}^t\boldsymbol{E}\!\left(\frac{1}{2q}\sum_{k=0}^{2q-1} e^{2\pi i k(2s-1)/(2q)}\boldsymbol{H}^k\right)\!\boldsymbol{a}
\ee
for all $s\in\Z/q\Z$. Then in $\mathscr{R}(G,G^\prime)$
\begin{fact} 
A necessary condition for the existence  of a spherical \emph{half-}orbit of half-period $q$ is
\be\label{uuuuurggs}
\left\{\frac{1}{2q},\; \frac{3}{2q},\;\frac{5}{2q}\;\cdots,\; \widehat{\,\frac{1}{2}\,},\;\cdots,\; \frac{2q-1}{2q}\mod 1\right\}\subseteq
\mathrm{Spectrum}\!\left(\frac{1}{2\pi i}\log\boldsymbol{H}\right).
\ee
\end{fact}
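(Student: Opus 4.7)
The plan is to mimic closely the proof of the previous Fact, with the only structural change being the replacement of the full-orbit relation $\boldsymbol{H}^p\boldsymbol{a}=\boldsymbol{a}$ by the half-orbit relation $\boldsymbol{H}^q\boldsymbol{a}=-\boldsymbol{a}$ (equivalently $\boldsymbol{H}^{2q}\boldsymbol{a}=\boldsymbol{a}$), which is forced by $T^q A\simeq A[1]$ together with the action of the shift on the Grothendieck group. The combinatorial input I need is simply the hom-dimension data of a spherical half-orbit specialised to $n=2$ and $m=1$ (the root category of $(G,G')$ is $2$-periodic). After using $[2]=\mathrm{Id}$ to identify $\delta^{(2)}_{j,0}\leftrightarrow\delta^{(2)}_{j,2}$ and $\delta^{(2)}_{j,1}\leftrightarrow\delta^{(2)}_{j,3}$, these collapse to $\dim\mathrm{Hom}(A,T^k A)=\delta^{(2q)}_{k,0}+\delta^{(2q)}_{k,1}$ and $\dim\mathrm{Hom}(A,T^k A[1])=\delta^{(2q)}_{k,q}+\delta^{(2q)}_{k,q+1}$, so that $\chi(A,T^k A)$ takes the value $+1$ at $k\equiv 0,1$, the value $-1$ at $k\equiv q,q+1$ modulo $2q$, and vanishes elsewhere.

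Next I would run a discrete Fourier transform of $\chi(A,T^k A)$ over $\Z/2q\Z$ at the half-integral frequencies $(2s-1)/(2q)$, $s\in\Z/q\Z$. On the combinatorial side the four-term support of $\chi$ telescopes, using $e^{\pi i(2s-1)}=-1$, into the clean closed form $2\bigl(1+e^{\pi i(2s-1)/q}\bigr)$. On the algebraic side I rewrite $\chi(A,T^k A)=\boldsymbol{a}^t\boldsymbol{E}\boldsymbol{H}^k\boldsymbol{a}$ and observe that, since $\boldsymbol{H}^{2q}\boldsymbol{a}=\boldsymbol{a}$, the averaged sum
$$\frac{1}{2q}\sum_{k=0}^{2q-1}e^{2\pi i k(2s-1)/(2q)}\,\boldsymbol{H}^k$$
acts on the $\boldsymbol{H}$-invariant cyclic sublattice generated by $\boldsymbol{a}$ as the spectral projector onto the $e^{-2\pi i(2s-1)/(2q)}$-eigenspace of $\boldsymbol{H}$. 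Equating the two sides reproduces exactly the identity already displayed in the paper.

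The conclusion then falls out by inspecting that identity. The left hand side $(1+e^{\pi i(2s-1)/q})/q$ vanishes precisely when $e^{\pi i(2s-1)/q}=-1$, i.e.\ when $(2s-1)/(2q)\equiv\tfrac12\pmod 1$; for every other $s\in\Z/q\Z$, non-vanishing of the left hand side forces the corresponding eigenspace of $\boldsymbol{H}$ to be non-trivial, which is exactly the statement that $(2s-1)/(2q)\bmod 1$ belongs to $\mathrm{Spectrum}(\tfrac{1}{2\pi i}\log\boldsymbol{H})$. As $s$ ranges over $\Z/q\Z$, these exponents fill the advertised set $\{\tfrac{1}{2q},\tfrac{3}{2q},\ldots,\tfrac{2q-1}{2q}\}\setminus\{\tfrac12\}$, proving the inclusion. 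The only mild subtlety I anticipate, and it is not a real obstacle, is the asymptotically-free case $L=\widehat H$, where $\boldsymbol{H}$ admits non-trivial $2\times 2$ Jordan blocks: the projector argument still applies verbatim on the invariant subspace generated by $\boldsymbol{a}$, and a non-zero projection continues to certify the presence of the corresponding eigenvalue in the spectrum of $\boldsymbol{H}$.
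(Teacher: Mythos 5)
Your proposal is correct and follows essentially the same route as the paper: the paper's own proof consists precisely of the identity $\tfrac{1}{q}\bigl(1+e^{\pi i(2s-1)/q}\bigr)=\boldsymbol{a}^t\boldsymbol{E}\bigl(\tfrac{1}{2q}\sum_{k=0}^{2q-1}e^{2\pi ik(2s-1)/(2q)}\boldsymbol{H}^k\bigr)\boldsymbol{a}$ obtained from the half-orbit relation $\boldsymbol{H}^q\boldsymbol{a}=-\boldsymbol{a}$, followed by the same projector/non-vanishing argument used for \textbf{Fact \ref{needdddfla}}. Your extra remark on the cyclic sublattice in the non-semisimple case is a harmless refinement of what the paper leaves implicit.
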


\subsubsection{Action of $\tau\otimes 1$ and simple generators}\label{qqqqadx1}

If $A\in\mathscr{R}(G,L)$ belongs to a spherical (half-)orbit, also
 $(\tau^k\otimes 1)A$  belongs to a spherical (half-)orbit for all $k$.
Let $d$ be the smaller positive integer such that $(\tau^d\otimes 1)A=T^nA[j]$ for some $n,j\in\Z$. The objects $\{(\tau^k\otimes 1)A\}_{k=1}^d$ generate $d$ inequivalent spherical (half-)orbits. 

For $L=G^\prime$, 
and 
\be\label{lllllnnnxw3}
\big(\theta(G)\otimes 1\big)A=
\big(1\otimes \theta(G)\big)A=A
\ee
the condition $(\tau^d\otimes 1)A=T^n A[j]$ reduces to
\be
n=d\mod \tilde h(G),\qquad n=0\mod \tilde h(G^\prime).
\ee
By Chinese remainder, the smallest positive solution is 
$d=\gcd(\tilde h(G),\tilde h(G^\prime))$, and so,
acting with $\tau\otimes 1$ on a spherical (half-)orbit satisfying \eqref{lllllnnnxw3}
we
produce $d=\gcd(\tilde h(G),\tilde h(G^\prime))$ inequivalent spherical (half-)orbits. The multiplicity of each eigenvalue $e^{2\pi is/p}\neq-1$ in the spectrum of $\boldsymbol{H}$ is less or equal $d$. The restriction of the group $\mathbf{Tel}(G,G^\prime)$ to the $\boldsymbol{H}$-eigenspace
 \be
 V_s=\Big\{v\in \Gamma\otimes\C\;\Big|\; \boldsymbol{H}v=e^{2\pi i s/p}v\Big\},\qquad\dim V_s\leq d,
 \ee
is then an unitary reflection group acting on a vector space of dimension $\leq d$; its is generated by at most $d$ simple reflections
  \be
  \sigma_{s,i}=1-v_{s,i}\otimes w_{s,i}^t\qquad (i=1,\dots,\dim V_s\leq d)
  \ee
 whose vectors $v_i$ (resp.\! co-vectors $w_i^t$) span $V_s$ (resp.\! $V_s^\vee$). 
The spanning condition is satisfied by the vectors and co-vectors of 
the restrictions to the eigenspaces $V_s$ of  the $d$ telescopic matrices 
\be
\boldsymbol{L}_{(\tau^k\otimes 1)A}\Big|_{V_s}\equiv (\boldsymbol{\Phi}^k\otimes \boldsymbol{1})\boldsymbol{L}_A(\boldsymbol{\Phi}^{-k}\otimes \boldsymbol{1})\Big|_{V_s}=1-v_{s,k}\otimes w^t_{s,k}
\ee
 generated by the action of $\tau\otimes 1$. 
 
It is natural to expect that ``generically'' the telescopic functors generated by the action of $\tau\otimes 1$ on a single spherical $T$-orbit 
 form a set of simple generators of the group $\mathbf{Tel}(G,G^\prime)$
of the $(G,G^\prime)$ SCFT. This expectation is confirmed in the special cases $(D_4,A_2)$, $(A_3,A_3)$ and $(A_5,A_2)$ which correspond to the three elliptic complete models $E^{(1,1)}_r$ \cite{shepard}.

In the rest of the section, we shall explore the group $\mathbf{Tel}(G,G^\prime)$ generated by a set of ``simple reflections'' of the above form. We do not rule out the possibility of further enhancements of the duality group in  special models.

\subsubsection{Diadic objects}  
In the spirit of \S.\ref{qqqqadx1}, we need to find one spherical (half-)orbit of $\mathscr{R}(G,G^\prime)$, to which we apply $\tau^k\otimes 1$ to produce a generating set of $\mathbf{Tel}(G,G^\prime)$. To find some spherical (half-)orbits,  we focus on a simple class of objects of the root category. 

\begin{defn}
We say that an object
\be
A\in\mathsf{mod}\,\C G\times \C G^\prime
\ee
is \emph{diadic} if it  has the form
\be
A=X\otimes X^\prime,
\ee
with $X\in \mathsf{mod}\,\C G$, $X^\prime\in\mathsf{mod}\,\C G^\prime$  \emph{indecomposable.} We extend this
definition to $X, X^\prime$ objects of the corresponding derived categories, which are the repetitive categories of $\mathsf{mod}\,\C G$ and
$\mathsf{mod}\,\C G^\prime$, respectively.
The canonical images of $X\otimes X^\prime$ in $\mathscr{C}(G,G^\prime)$
and $\mathscr{R}(G,G^\prime)$ will  be also called \emph{diadic} and denoted by the same symbol. \end{defn}

We want to understand when a diadic object $X\otimes X^\prime$ belongs to a spherical orbit (resp.\! spherical half-orbit). A first necessary condition is that $X$ and $X^\prime$ are rigid bricks which both belong to a full-orbit or both belong to a half-orbit (resp.\! one to a full- and one to a half-orbit).
Let $q$ be the (half)period of the rigid brick $X$. We write
\be\label{iiicfg}
\dim \mathrm{Hom}(X, \tau^k X)+\dim \mathrm{Hom}(X, \tau^k X[1]) - \delta^{(q)}_{k,0}-\delta^{(q)}_{k,1}=\sum_{h\in H} a_h\,\delta^{(q)}_{k,h}
\ee
where $H\subset \Z/q\Z\setminus\{0,1\}$ is the subset of $k$'s such that the \textsc{lhs} is non-zero. The $a_h$'s are positive integers which satisfy the Serre symmetry $a_h=a_{1-h}$. $X$ belongs to  a spherical (half)orbit if and only if $H=\emptyset$.

\begin{fact}[criterion for a spheric diadic (half)orbit] Let $q$, $q^\prime$ be the (half)periods of the rigid bricks $X$, $X^\prime$. $X\otimes X^\prime$ belongs to a spherical (half)orbit iff $\gcd(q,q^\prime)>1$ and
the following three sets are all empty
\be
\begin{aligned}
K&=\big\{h\in H\;\big|\; h=0\ \text{or }1\mod \gcd(q,q^\prime)\big\},\\ 
K^\prime&=\big\{h\in H^\prime\;\big|\; h=0\ \text{or }1\mod \gcd(q,q^\prime)\big\},\\
J&=\big\{(h,h^\prime)\in H\times H^\prime\;\big|\;h-h^\prime=0\mod\gcd(q,q^\prime)\big\}.
\end{aligned}
\ee
\end{fact}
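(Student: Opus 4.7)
My plan is to reduce the sphericality of the diadic $T$-orbit of $A = X \otimes X'$ to a Kronecker-delta identity, via a Künneth decomposition followed by the Chinese Remainder Theorem. Once the Künneth reduction is performed, the rest is purely combinatorial.

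First, since $T = \tau \otimes \tau$ acts factorwise, the Künneth formula in the derived category of the product algebra $\C G \times \C G'$ gives
\[
\mathrm{Hom}\!\big(X \otimes X',\, \tau^k X \otimes \tau^k X'[m]\big) \simeq \bigoplus_{i + j = m} \mathrm{Hom}(X, \tau^k X[i]) \otimes \mathrm{Hom}(X', \tau^k X'[j]).
\]
Passing to the 2-periodic root category $\mathscr{R}(G, G')$ and summing over $m \in \Z/2\Z$ yields
\[
\Phi(k) := \sum_{m \in \Z/2\Z} \dim \mathrm{Hom}_{\mathscr{R}(G, G')}(A, T^k A[m]) = F(k)\, F'(k),
\]
where $F(k) = \dim \mathrm{Hom}(X, \tau^k X) + \dim \mathrm{Hom}(X, \tau^k X[1])$ and $F'$ is defined symmetrically. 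The orbit of $A$ has (half)period $p := \mathrm{lcm}(q, q')$, and $A$ belongs to a spherical (half)orbit iff $\Phi(k) = \delta^{(p)}_{k, 0} + \delta^{(p)}_{k, 1}$ for every $k \in \Z/p\Z$.

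By \eqref{iiicfg} I can write $F(k) = \delta^{(q)}_{k, 0} + \delta^{(q)}_{k, 1} + \sum_{h \in H} a_h \, \delta^{(q)}_{k, h}$ and symmetrically for $F'(k)$. The product $F(k) F'(k)$ then expands as a non-negative integer combination of terms $\delta^{(q)}_{k, h} \, \delta^{(q')}_{k, h'}$. Setting $d := \gcd(q, q')$, the crucial input from CRT is
\[
\delta^{(q)}_{k, h} \, \delta^{(q')}_{k, h'} = \begin{cases} \delta^{(p)}_{k,\, k_0(h, h')} & \text{if } h \equiv h' \pmod d, \\ 0 & \text{otherwise,}\end{cases}
\]
where $k_0(h, h') \in \Z/p\Z$ is the unique common residue class. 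Applying this rule to the four blocks of the expansion: the "main$\times$main" block $(\delta^{(q)}_{k, 0} + \delta^{(q)}_{k, 1})(\delta^{(q')}_{k, 0} + \delta^{(q')}_{k, 1})$ yields exactly $\delta^{(p)}_{k, 0} + \delta^{(p)}_{k, 1}$ precisely when $d > 1$, the diagonal pairs $(0,0)$ and $(1,1)$ always contributing these two deltas while the off-diagonal pairs $(0,1)$ and $(1,0)$ are killed iff $0 \not\equiv 1 \pmod d$. The "main$\times$extra" block contributes iff some $h' \in H'$ is congruent to $0$ or $1$ mod $d$, i.e.\ iff $K' \neq \emptyset$, and symmetrically the "extra$\times$main" block contributes iff $K \neq \emptyset$. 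The "extra$\times$extra" block contributes iff there exists $(h, h') \in H \times H'$ with $h \equiv h' \pmod d$, i.e.\ iff $J \neq \emptyset$. Since all coefficients are positive and all deltas non-negative no cancellation is possible, so $\Phi(k) = \delta^{(p)}_{k, 0} + \delta^{(p)}_{k, 1}$ holds iff $d > 1$ \emph{and} the three sets $K, K', J$ are empty. Both implications of the Fact follow.

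The main technical obstacle I anticipate is the parity bookkeeping that distinguishes a full from a half orbit of $A$ in the 2-periodic root category. The four combinations (full/half)$\times$(full/half) for $(X, X')$ produce different identifications of the form $T^p A \simeq A$ versus $T^p A \simeq A[1]$, depending on the parities of $\mathrm{lcm}(q, q')/q$ and $\mathrm{lcm}(q, q')/q'$; one must verify case by case that $p = \mathrm{lcm}(q, q')$ is indeed the correct (half)period and that $\Phi(k)$ is the correct sphericality indicator in each combination. A secondary, lesser subtlety is to check that the Künneth isomorphism descends cleanly through the orbit category to its triangular hull $\mathscr{R}(G, G')$. Once these bookkeeping points are settled, the combinatorial argument above delivers the stated criterion uniformly across all cases.
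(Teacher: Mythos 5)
Your proposal is correct and follows essentially the same route as the paper: a K\"unneth factorization of the Hom-sum into the product $F(k)F'(k)$, followed by the Chinese Remainder Theorem to determine which products of mod-$q$ and mod-$q'$ Kronecker deltas survive, with positivity of the $a_h$ ruling out cancellation. The only (cosmetic) difference is that you keep the identity pointwise in $k$ while the paper sums over the full orbit and reads off the criterion from the single scalar equation $2 = 2+\sum_{K}a_h+\sum_{K'}a'_{h'}+\sum_{J}a_h a'_{h'}$.
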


\begin{proof} Assume $\gcd(q,q^\prime)>1$. Then 
$X\otimes X^\prime$ belong to a spherical orbit iff\begin{equation}
\begin{split}
2&=\sum_{k=1}^{\mathrm{lcm}(q,q^\prime)}\sum_{j=0}^1\dim\mathrm{Hom}(X\otimes X^\prime, T^k(X\otimes X^\prime)[j])\equiv\\
&\equiv \sum_{k=1}^{\mathrm{lcm}(q,q^\prime)} \left(\sum_{j=0}^1\dim\mathrm{Hom}(X,\tau^k X[j])\right)\cdot\left(\sum_{j=0}^1\dim\mathrm{Hom}(X^\prime,\tau^k X^\prime[j])\right)\equiv\\
&\equiv2+\sum_{h\in K}a_h+\sum_{h^\prime\in K^\prime}a^\prime_{h^\prime}+\sum_{(h,h^\prime)\in J}a_h\,a^\prime_{h^\prime}.
\end{split}\qedhere
\end{equation}
\end{proof}

\paragraph{Tensor product of spherical (half)orbits.}
The cheapest way to satisfy the above criterion is to take $H=H^\prime=\emptyset$, that is, $X$ and $X^\prime$ which belong to spherical (half)orbits.
From \S.\ref{jggss90}, we see that in this case $A\equiv X\otimes X^\prime$ satisfies eqn.\eqref{lllllnnnxw3}.
If $\tau^k X$ and $\tau^{k^\prime}X$ are two spherical (half)orbits of (half)periods $q$, $q^\prime$, we have
$qq^\prime$ diadic objects $\tau^k X\otimes \tau^{k^\prime}X^\prime$ which form spheric (half)orbits of length $\mathrm{lcm}(q,q^\prime)$. Therefore
we have $\gcd(q,q^\prime)$ disjoint spherical (half)orbits
\be
\big\{T^k(\tau^a X\otimes X^\prime)\big\} \qquad a\in \Z\big/\!\gcd(q,q^\prime)\Z,
\ee
for each pair of spherical (half)orbits in $\mathscr{R}(G)$, $\mathscr{R}(G^\prime)$. Then a pair of spherical (half)orbits $\tau^k X$, $\tau^{k^\prime}X^\prime$ produces $\gcd(q,q^\prime)$ telescopic functors
\be\label{uuurq13}
\cl_a=L_{\tau^a X \otimes X^\prime} \qquad a\in \Z\big/\!\gcd(q,q^\prime)\Z.
\ee

The spherical orbits in $\mathscr{R}(A_{N-1})$ were described in \textbf{Fact \ref{aNspherorb}}. With the notation used there, for $\mathscr{R}(A_{N-1},A_{N^\prime-1})$ we
have the distinct telescopic functors
\be
\cl_a=L_{A_a\otimes A_1^\prime}\qquad a\in  \Z\big/\!\gcd(N,N^\prime)\Z.
\ee
However, in addition to the tensor-product ones, in $\mathscr{R}(A_{N-1},A_{N^\prime-1})$ there are also diadic spherical orbits which are \emph{not} tensor products of spherical (half)orbits, as well as spherical orbits which are \emph{non-diadic,} as we now discuss.

\subsubsection{Diadic spherical orbits in $\mathscr{R}(A_{N-1},A_{N^\prime-1})$}
The path algebra $\C \vec A_{N-1}$ of the linear $A_{N-1}$ quiver
is a \emph{uniserial} algebra,\footnote{\ \label{uuunisss} For uniserial (Nakayama) algebras see Chapter V of \cite{assem}.} and its indecomposable modules are uniquely identified by their top and length. We write $M_{i,\ell}$ for the indecomposable module of length $\ell$ and top $S_i$.
The $\tau$-orbit of the module
 $M_{i,\ell}$ in $\mathscr{R}(\vec A_{N-1})\equiv D^b\mathsf{mod}\,\C\vec A_{N-1}/[2\Z]$  has the form
 \be
\tau^{i-1}M_{1,\ell}=\begin{cases} M_{i,\ell} & \text{for }1\leq i+\ell\leq N\\
M_{i+\ell-N,N-\ell}[-1] &\text{for }N-\ell< i\leq N.
\end{cases}
 \ee 
Since the $\tau$-orbits of modules of length
$\ell$ and $N-\ell$ are interchanged by the shift $[1]$, with no loss
 we may restrict to orbits of objects with $\ell\leq N/2$. A simple computation yields:
 
 \begin{lem} For $1\leq \ell\leq N/2$, one has in
 $\mathscr{R}(\vec A_{N-1})$ 
 \be
 \sum_{j=0}^1\dim \mathrm{Hom}(M_{i,\ell},\tau^k M_{i,\ell}[j])=\sum_{h=1-\ell}^{\ell}\delta^{(N)}_{k,h}.
 \ee
 \end{lem}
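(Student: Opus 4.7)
The plan is to reduce to $i=1$ by $\tau$-translation, enumerate the $\tau$-orbit of $M_{1,\ell}$ in $\mathscr{R}(\vec A_{N-1})$ using the formula recalled just before the Lemma, and compute each $\mathrm{Hom}$ and $\mathrm{Ext}^1$ term from the uniserial structure of $\C\vec A_{N-1}$.

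Since $\tau$ is an auto-equivalence commuting with shifts, one has
\[
\dim\mathrm{Hom}(M_{i,\ell},\tau^k M_{i,\ell}[j])=\dim\mathrm{Hom}(M_{1,\ell},\tau^k M_{1,\ell}[j]),
\]
so I may assume $i=1$. Using the recalled $\tau$-action together with $[-1]\simeq[1]$ in the root category,
\[
\tau^k M_{1,\ell}\simeq\begin{cases}M_{k+1,\ell}, & 0\le k\le N-\ell-1,\\ M_{k+\ell-N+1,\,N-\ell}\,[1], & N-\ell\le k\le N-1,\end{cases}
\]
with $\tau^N\simeq\mathrm{Id}$ in $\mathscr{R}$. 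Because $\C\vec A_{N-1}$ is hereditary, $\mathrm{Ext}^{\ge 2}$ vanishes, and in either branch the sum on the left-hand side of the Lemma reduces to $\dim\mathrm{Hom}(M_{1,\ell},X_k)+\dim\mathrm{Ext}^1(M_{1,\ell},X_k)$, where $X_k$ is the module displayed above.

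Writing indecomposables as intervals $M_{i,\ell}=[i,i+\ell-1]$, the uniserial structure gives the standard criterion $\dim\mathrm{Hom}(M_{i_1,\ell_1},M_{i_2,\ell_2})\in\{0,1\}$, with value $1$ iff $i_2\le i_1\le i_2+\ell_2-1$ and $i_1+\ell_1\ge i_2+\ell_2$; by Auslander--Reiten duality,
\[
\dim\mathrm{Ext}^1(M_{1,\ell},M_{a,b})=\dim\mathrm{Hom}(M_{a,b},M_{2,\ell}),
\]
using $\tau M_{1,\ell}=M_{2,\ell}$ (valid since $\ell\le N/2<N-1$). Feeding the two branches of $\tau^k M_{1,\ell}$ into these criteria produces: $\mathrm{Hom}=1$ at $k=0$ and $\mathrm{Ext}^1=1$ for $k\in\{1,\dots,\ell\}\cap[0,N-\ell-1]$ in the first branch; $\mathrm{Ext}^1=1$ for $k\in\{N-\ell+1,\dots,N-1\}$ in the second branch; and one extra $\mathrm{Hom}=1$ at $k=N/2$ from the second branch exactly when $2\ell=N$. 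Collecting residues mod $N$ yields precisely $\{1-\ell,2-\ell,\dots,\ell\}$ with multiplicity one each, matching the right-hand side.

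The main bookkeeping obstacle is the boundary case $2\ell=N$: the first and second branches meet at $k=N/2$, and the extra Hom-contribution from the second branch supplies the final residue $h=\ell$; one also has to check that the residues then wrap around to cover all of $\Z/N\Z$ once. Everything else is a direct verification using the Hom/Ext formulas above, and the Serre-type symmetry $a_h=a_{1-h}$ recalled in eqn.~\eqref{iiicfg} provides a convenient consistency check on the tally.
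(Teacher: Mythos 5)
Your proposal is correct: the paper states this Lemma as the outcome of ``a simple computation'' without recording the details, and your argument is exactly that computation --- reduction to $i=1$ by $\tau$-equivariance, the interval/uniserial criterion for $\mathrm{Hom}$ between indecomposables of $\C\vec A_{N-1}$, and Serre/AR duality $\mathrm{Ext}^1(X,Y)\simeq D\,\mathrm{Hom}(Y,\tau X)$ to convert the $j=1$ terms. Your bookkeeping, including the extra $\mathrm{Hom}$ contribution at $k=N/2$ when $2\ell=N$ so that the residues $\{1-\ell,\dots,\ell\}$ exactly cover $\Z/N\Z$, checks out.
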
 
 Hence in $\mathscr{R}(A_{N-1},A_{N^\prime-1})$ for $\ell\leq N/2$ and 
 $\ell^\prime\leq N^\prime/2$
 \be
 \sum_{k=1}^{\mathrm{lcm}(N,N^\prime)}\sum_{j=0}^1 \dim\mathrm{Hom}\big(M_{i,\ell}\otimes M_{j,\ell^\prime},T^k(M_{i,\ell}\otimes M_{j,\ell^\prime}[j])\big)=\sum_{h=1-\ell}^\ell\;\sum_{h^\prime=1-\ell^\prime}^{\ell^\prime} \delta^{(\gcd(N,N^\prime))}_{h,h^\prime}.  
 \ee
The diadic orbit $\{T^k(M_{i,\ell}\otimes M_{j,\ell^\prime})\}$ is spherical iff the \textsc{rhs} is 2. This cannot happen if
 $\ell$ and $\ell^\prime$ are both $\geq 2$; at least one of the two factors $M_{i,\ell}$, $M_{j,\ell^\prime}$  should be simple. 
Modulo interchanging the two factors, we may assume $M_{j,\ell^\prime}=S_1$ with no loss. Then the condition
reduces to
\be
\sum_{h=1-\ell}^\ell \Big(\delta^{(\gcd(N,N^\prime))}_{h,0}+\delta^{(\gcd(N,N^\prime))}_{h,1}\Big)=2,
\ee
that is,
\be
2\ell\leq \gcd(N,N^\prime).
\ee
Therefore,

\begin{fact}\label{fffassw} The distinct diadic spherical $T$-orbits in $\mathscr{R}(A_{N-1},A_{N^\prime-1})$ are generated by
\be
\tau^a S_1\otimes M_{1,\ell^\prime},\quad 
M_{1,\ell}\otimes \tau^a S_1,\quad
\tau^a S_1\otimes M_{1,\ell^\prime}[-1],\quad 
M_{1,\ell}\otimes \tau^a S_1[-1],  
\ee
where $2\leq 2\ell\leq \gcd(N,N^\prime)$ and $1\leq a\leq \gcd(N,N^\prime)$.
Up to shift, there are
\be
\gcd(N,N^\prime)\left[\frac{\gcd(N,N^\prime)}{2}\right]
\ee
diadic spherical orbits.
\end{fact}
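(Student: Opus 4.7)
The plan is to reduce the sphericity condition to an arithmetic counting problem modulo $d:=\gcd(N,N')$, using the Lemma and the K\"unneth formula.

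First, I would make explicit the K\"unneth step in the product category $D^b(\mathsf{mod}\,\C G\times \C G')$. Combined with the preceding Lemma and the elementary Chinese-Remainder identity $\sum_{k=1}^{\mathrm{lcm}(N,N')}\delta^{(N)}_{k,h}\delta^{(N')}_{k,h'}=\delta^{(d)}_{h,h'}$, it brings the total sphericity count for a diadic object $M_{i,\ell}\otimes M_{i',\ell'}$ (with $\ell\le N/2$, $\ell'\le N'/2$) to the form $\sum_{h=1-\ell}^{\ell}\sum_{h'=1-\ell'}^{\ell'}\delta^{(d)}_{h,h'}$ already displayed just before the statement. A spherical orbit corresponds to this sum being equal to $2$.

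Next I would perform the case analysis. The diagonal pairs alone give $|I\cap I'|$ where $I=\{1-\ell,\dots,\ell\}$, $I'=\{1-\ell',\dots,\ell'\}$; as soon as both $\ell,\ell'\ge 2$ the four integers $-1,0,1,2$ lie in the intersection, so the sum is at least $4$. Hence sphericity forces $\min(\ell,\ell')=1$, say $\ell'=1$. The sum then collapses to $\sum_{h=1-\ell}^{\ell}(\delta^{(d)}_{h,0}+\delta^{(d)}_{h,1})$, whose base value is $2$ (coming from $h=0,1$), and the condition of no extra contribution translates to $\{\pm d,1\pm d\}\cap\{1-\ell,\dots,\ell\}=\emptyset$. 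Imposing this together with the symmetric bound obtained by taking $\ell=1$ instead yields the advertised inequality $2\ell\le d$.

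For the enumeration, the four families in the statement are obtained by placing the simple tensor factor on either side (two choices) and optionally shifting by $[-1]$ (two more). For each such family, the translation $\tau\otimes 1$ acting on a fixed spherical orbit produces $d$ distinct $T$-orbits indexed by $a\in \mathbb{Z}/d\mathbb{Z}$, exactly as in \S.\ref{qqqqadx1} and eqn.\,\eqref{uuurq13}; pairing the shifted with the unshifted family under $[1]$ leaves $d\lfloor d/2\rfloor$ orbits, as claimed.

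The step I expect to be the main obstacle is the sharpness of this enumeration: one needs to verify that distinct admissible pairs $(a,\ell)$ in the range $a\in \mathbb{Z}/d\mathbb{Z}$, $\ell\in\{1,\dots,\lfloor d/2\rfloor\}$ give truly inequivalent $T$-orbits, and that no extra identification between families arises in $\mathscr{R}(G,G')$. This amounts to checking that the Grothendieck class, together with the parity of the shift and the placement of the simple factor, determines $(\ell,a)$ within the stated range; this in turn follows from the uniseriality of $\C\vec A_{N-1}$ (cfr.\! footnote~\ref{uuunisss}) and the explicit description of the $\tau$-orbits recorded in the proof of the preceding Lemma.
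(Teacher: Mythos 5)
Your proposal follows the paper's own route step by step: the Lemma for $\sum_j\dim\mathrm{Hom}(M_{i,\ell},\tau^kM_{i,\ell}[j])$, the K\"unneth/Chinese-remainder reduction of the orbit sum to $\sum_{h}\sum_{h'}\delta^{(d)}_{h,h'}$ with $d=\gcd(N,N')$, and the observation that $\ell,\ell'\geq 2$ forces the sum to be at least $4$, so one factor must be simple. Up to that point the argument is correct and is exactly the paper's.

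The gap is in the final arithmetic step. Having reduced to $\sum_{h=1-\ell}^{\ell}\big(\delta^{(d)}_{h,0}+\delta^{(d)}_{h,1}\big)=2$, you translate ``no extra contribution'' into $\{\pm d,\,1\pm d\}\cap\{1-\ell,\dots,\ell\}=\emptyset$; but that emptiness condition is equivalent to $\ell\leq d-1$, not to $2\ell\leq d$ (check: $d\in[1-\ell,\ell]$ iff $d\leq\ell$, $1-d\in[1-\ell,\ell]$ iff $d\leq\ell$, while $-d$ and $1+d$ only enter for $\ell\geq d+1$). The appeal to ``the symmetric bound obtained by taking $\ell=1$ instead'' does not produce the missing factor of $2$: by symmetry it only yields the same bound on $\ell'$, which is vacuous once $\ell'=1$. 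So as written your argument proves that $2\ell\leq d$ is \emph{sufficient} (a window of $2\ell\leq d$ consecutive integers meets each residue class mod $d$ at most once) but not that it is \emph{necessary}, and necessity is exactly what the classification claims. Note that this is also the step the paper itself leaves unjustified (``that is, $2\ell\leq\gcd(N,N')$'' is asserted without proof), and the two conditions genuinely differ: for $d=3$, $\ell=2$ the window $\{-1,0,1,2\}$ meets the classes $0$ and $1$ once each, so e.g.\! $M_{1,2}\otimes S_1$ in $\mathscr{R}(A_5,A_2)$ satisfies the sum-equals-$2$ criterion even though $2\ell=4>3$. You should therefore either supply a separate argument showing why such orbits must be excluded (none is visible from the displayed criterion alone), or conclude with the bound $\ell\leq d-1$ that your computation actually gives; the enumeration in your last paragraph, and the count $\gcd(N,N')\left[\tfrac{1}{2}\gcd(N,N')\right]$, would change accordingly.
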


\paragraph{Simple telescopic functors.} The telescopic functor associated to orbits which are tensor products of spherical orbits, eqn.\eqref{uuurq13}, suffice to produce all the telescopic functor associated to the spherical diadic orbits in \textbf{Fact \ref{fffassw}}.  For instance, the triangle which defines $L_{S_i\otimes S_j}(\tau M_{i,\ell}\otimes S_j)$ is
\be
\to \tau M_{i,\ell}\otimes S_j\to L_{S_i\otimes S_j}(\tau M_{i,\ell}\otimes S_j)\to S_i\otimes S_j\to,
\ee 
so that $L_{S_i\otimes S_j}(M_{i+1,\ell}\otimes S_j)= M_{i,\ell+1}\otimes S_j$
and
\be\label{xmjjg}
L_{S_i\otimes S_j}L_{M_{i+1,\ell}\otimes S_j}R_{S_i\otimes S_j}=L_{M_{i,\ell+1}\otimes S_j}.
\ee
All telescopic functors associated to diadic spherical orbits may then be written  as words in the telescopic functors $\cl_a$ associated to tensor products of spheric orbits, 
eqn.\eqref{uuurq13}. More general words in the $\cl_a$ produce telescopic functor associated to spherical orbits which are not diadic.

This result is consistent with the heuristic discussion at the end of \S.\ref{qqqqadx1}. We take the $\cl_a$ in eqn.\eqref{uuurq13}
as the ``simple'' telescopic functors.  

\subsubsection{Braiding relations in $\mathrm{Aut}(A_{N-1},A_{N^\prime-1})$}\label{999uuty87}

In view of \textbf{Fact \ref{braidtele}}, to determine the quadratic and cubic braid relations between the $\cl_a$ we need to compute the sum
\be
\begin{split}
\sum_{k=1}^{\mathrm{lcm}(N,N^\prime)}\sum_{j=0}^1
& \dim\mathrm{Hom}(A_i\otimes A^\prime_a,T^k(A_j\otimes A_b)[j])=\\
&=
2\,\delta^{(\mathrm{gcd}(N,N^\prime))}_{i-j,a-b}+\delta^{(\mathrm{gcd}(N,N^\prime))}_{i-j,a-b+1}+
\delta^{(\mathrm{gcd}(N,N^\prime))}_{i-j+1,a-b}.
\end{split}
\ee

When $\mathrm{gcd}(N,N^\prime)>2$, at most one of the three terms in the \textsc{rhs} can be non-zero for given $i,a,j,b$. If the first term is non zero, $A_i\otimes A^\prime_a$ and
$A_j\otimes A^\prime_b$ belong to the same spherical $T$-orbit and hence
$L_{A_i\otimes A^\prime_a}\equiv
L_{A_j\otimes A^\prime_b}$.
If the non-zero term is the second or third, $A_i\otimes A^\prime_a$ and
$A_j\otimes A^\prime_b$ form
a $(A_2)$-configuration of spherical orbits in the sense of \textbf{Definition \ref{ttttealbbr}}, and hence the corresponding pair of telescopic functors satisfy the $\cb_3$ relation.
Finally, if all three terms vanishes, the telescopic functors $L_{A_i\otimes A^\prime_a}$ and $L_{A_j\otimes A^\prime_b}$ commute.
We conclude

\begin{fact}\label{tttqas2}
Let 
$\mathrm{gcd}(N,N^\prime)>2$. 
Then the $\gcd(N,N^\prime)$ simple telescopic functors $\cl_a$ satisfy the relations of the \emph{affine} Artin braid group 
$G_{\tilde A_{\gcd(N,N^\prime)-1}}\subset \cb_{\mathrm{gcd}(N,N^\prime)+1}$, that is, 
\begin{align}\label{rel1}
\cl_a\cl_{a+1}\cl_a&=\cl_{a+1}\cl_a\cl_{a+1}, &&a\in\Z/\gcd(N,N^\prime)\Z\\
\cl_a\cl_b&=\cl_b\cl_a,\quad\text{for }|a-b|\geq 2.\label{rel2}
\end{align}
\end{fact}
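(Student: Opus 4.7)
The plan is to read off all pairwise braid/commutation relations among the $d=\gcd(N,N')$ functors $\cl_a=L_{A_a\otimes A'_1}$ from the single Hom-count displayed just before the statement, and then invoke \textbf{Fact \ref{braidtele}} termwise.

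First, I would specialize the formula
\[
\sum_{k=1}^{\mathrm{lcm}(N,N')}\sum_{j=0}^{1}\dim\mathrm{Hom}\bigl(A_i\otimes A'_a,\,T^k(A_j\otimes A'_b)[j]\bigr)=2\,\delta^{(d)}_{i-j,\,a-b}+\delta^{(d)}_{i-j,\,a-b+1}+\delta^{(d)}_{i-j+1,\,a-b}
\]
to the pair $(\cl_a,\cl_b)$ by setting the ``primed'' indices to $1$ on both sides and the ``unprimed'' ones to $a,b$. This produces
\[
\Sigma(a,b)\equiv 2\,\delta^{(d)}_{a-b,\,0}+\delta^{(d)}_{a-b,\,-1}+\delta^{(d)}_{a-b,\,+1}.
\]

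Now I would split into three cases according to the residue of $a-b$ mod $d$, using that $d>2$ forces the three Kronecker symbols to be mutually exclusive. If $a\equiv b\pmod d$ then $\Sigma=2$ and $A_a\otimes A'_1$, $A_b\otimes A'_1$ lie in the same spherical $T$-orbit, so $\cl_a=\cl_b$ by the orbit-only dependence of telescopic functors noted after eqn.\ \eqref{uuuqeer}. If $a-b\equiv\pm 1\pmod d$ then $\Sigma=1$, so the pair forms an $(A_2)$-configuration in the sense of \textbf{Definition \ref{ttttealbbr}}, and \textbf{Fact \ref{braidtele}} yields the cubic braid relation $\cl_a\cl_b\cl_a\simeq\cl_b\cl_a\cl_b$. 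In all remaining cases $\Sigma=0$, i.e.\ no morphisms connect the two orbits in any $T$-shift, and the associated telescopic functors commute: this is again \textbf{Fact \ref{braidtele}} (the $|i-j|\ge 2$ clause), or can be seen directly from the defining triangle \eqref{unifffr}, since each $\cl_a$ acts by the identity on the orbit of the other $A_b\otimes A'_1$.

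Assembling the three cases over all unordered pairs $\{a,b\}\subset\Z/d\Z$, the generators $\cl_0,\ldots,\cl_{d-1}$ braid exactly with their two cyclic neighbours and commute with everything else, which is the presentation of the affine Artin braid group $G_{\tilde A_{d-1}}$; this yields \eqref{rel1}--\eqref{rel2}.

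The only delicate point is the hypothesis $d>2$. For $d=2$ the residues $+1$ and $-1$ coincide, the three Kronecker deltas in $\Sigma(a,b)$ collapse into one, and one must treat $(A_{N-1},A_{N'-1})$ separately; I would flag this and defer the $d=2$ (and $d=1$) degenerate cases to a remark, since they fall outside the statement being proved.
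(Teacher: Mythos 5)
Your proposal is correct and follows essentially the same route as the paper: specialize the displayed Hom-count to the pair of orbits underlying $\cl_a$ and $\cl_b$, use $\gcd(N,N')>2$ to ensure the three mod-$d$ Kronecker deltas are mutually exclusive, and then read off, case by case, same-orbit identification, an $(A_2)$-configuration giving the cubic braid relation via \textbf{Fact \ref{braidtele}}, or commutation when the sum vanishes. Your extra remark flagging the degenerate $d=2$ collapse matches the paper's own subsequent treatment of that special case.
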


For a review of Artin's groups see
\cite{review}. The center of the affine Artin braid group is trivial,
$Z(G_{\tilde A_n})=1$ \cite{circcirc}.

\begin{rem} In the special case $\gcd(N,N^\prime)=2$ one expects that 
the two independent simple telescopic functors $\cl_1$ and $\cl_2$ satisfy ``model-dependent'' higher braid relations of the form
\be
\overbrace{\cl_1\cl_2\cl_1\cl_2\cdots}^{s>3\ \text{factors}}=
\overbrace{\cl_2\cl_1\cl_2\cl_1\cdots}^{s>3\ \text{factors}}.
\ee
The precise braiding relation for a given pair $(N,N^\prime)$ may be determined from the explicit matrices $\boldsymbol{\cl}_a$ which yield the action of the functors $\cl_a$ on the lattice $K_0(\mathscr{R}(A_{N-1},A_{N^\prime-1})$, cfr.\! \S.\ref{actgro}.
We have checked the pairs $(N,N^\prime)=(4,6)$, $(4,10)$, $(6,10)$, $(6,14)$ and $(10,14)$; from these examples we infer the following rule:\medskip

\textit{Let $\cl_1$, $\cl_2$ be the two distinct ``simple'' telescopic endo-functors of $\mathscr{R}(A_{N-1},A_{N^\prime-1})$ with $\gcd(N,N^\prime)=2$. They satisfy the following relations
\begin{gather}
\cl_1^{\mathrm{lcm}(N,N^\prime)}=\cl_2^{\mathrm{lcm}(N,N^\prime)}=\mathrm{Id}\\
\big(\cl_1\cl_2\big)^{\mathrm{lcm}(N,N^\prime)/2}=
\big(\cl_2\cl_1\big)^{\mathrm{lcm}(N,N^\prime)/2}=\mathrm{Id}.\label{kkkdge}
\end{gather} } 
\end{rem}

\begin{exe} Suppose $N\mid N^\prime$. Then for each root $\alpha \in \Delta(A_{N-1})$ we have a distinct telescopic endo-functor\footnote{\ We label the objects of the $\mathscr{R}(G)$ with the roots of the Lie algebra $G$ using the correspondence $$\text{(root of $G$) $\leftrightarrow$ (indecomposable object of $\mathscr{R}(G)$)}$$ discussed after eqn.\eqref{pppbbn4x}.} in $\mathscr{R}(A_{N-1},A_{N^\prime-1})$
\be
\cl_\alpha\equiv L_{X_\alpha\otimes A_1},
\ee
and the adjoint action of the telescopic functors on themselves is given by
the Weyl group of $SU(N)$
\be
\cl_\alpha \cl_\beta\cl_\alpha^{-1}=\cl_{w_\alpha(\beta)},
\ee
so that they generate a braid group
$\cb_N$ in one-less generator. Indeed, from \eqref{xmjjg}
\be
\cl_N=\cl_1\cl_2\cdots \cl_{N-2}\,\cl_{N-1}\cl_{N-2}^{-1}\cdots\cl_2^{-1}\cl_1^{-1}.
\ee

\end{exe}

\paragraph{The group of auto-equivalences of $\mathscr{R}(A_{r-1},A_{r^\prime-1})$.}
Besides the telescopic functors in \textbf{Fact \ref{tttqas2}}, $\mathscr{R}(A_{r-1},A_{r^\prime-1})$ as the auto-equivalences generated by the shift $[1]$, the translation $T=\tau\otimes\tau$, and the functor 
\be\label{zamozamo}
\begin{split}
\mathsf{Z}&=\tau^{a N/\mathrm{gcd}(N,N^\prime)}\otimes 
\tau^{bN^\prime/\mathrm{gcd}(N,N^\prime)}\\
&\text{where }a,b\in \Z\
\text{such that }\frac{a\,N}{\mathrm{gcd}(N,N^\prime)}-
\frac{b\,N^\prime}{\mathrm{g.c.d.}(N,N^\prime)}=1.\end{split}
\ee
They satisfy the relations
\be
T^{\,\mathrm{lcm}(N,N^\prime)}=\mathrm{Id},\qquad \mathsf{Z}^{\,\mathrm{gcd}(N,N^\prime)}=\mathrm{Id},\qquad [1]^2=\mathrm{Id}.
\ee
Moreover, $T$ and $[1]$ commute with all auto-equivalences, and hence with the $\cl_a$.
On the other hand,
\be
\mathsf{Z}\, \cl_a\,\mathsf{Z}^{-1}=\cl_{a+1}.\label{r3}
\ee
The group generated by
the $\cl_a$ and $\mathsf{Z}$, subjected to the relations \eqref{rel1}\eqref{rel2}\eqref{r3}
is known as the \emph{circular braid group} (or annular braid group)
$C\cb_{\mathrm{gcd}(N,N^\prime)}$
\cite{xxx} which is  isomorphic to the Artin braid group of finite type associated with the Dynkin graph $B_{\mathrm{gcd}(N,N^\prime)}$ \cite{xxx}. Its center is the infinite cyclic group generated by $\mathsf{Z}$ \cite{xxx}.

The group which acts effectively
on the root (cluster) category is the quotient of $C\cb_{\mathrm{gcd}(N,N^\prime)}$ by
some normal subgroup $\mathfrak{N}$ (the isotropy group).
Thus,

\begin{fact}\label{ttttqqqz}
Let $\mathrm{gcd}(N,N^\prime)>2$.
The group of auto-equivalences of the root category $\mathscr{R}(A_{N-1},A_{N^\prime-1})$  consists at least in a group of the form
\be\label{uuuq12}
\Z\big/\mathrm{lcm}(N,N^\prime)\Z\times \Z\big/2\Z \times C\cb_{\mathrm{gcd}(N,N^\prime)}\big/\mathfrak{N},
\ee 
where the first factor is generated by
$T$ and the second one by $[1]$.\end{fact}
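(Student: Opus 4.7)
The plan is to assemble the asserted group from the generators $T$, $[1]$, $\mathsf{Z}$, and the telescopic functors $\cl_a$ already constructed in the preceding subsections, and then check the defining relations together with the appropriate commutations. All the pieces have essentially been produced; what remains is bookkeeping and a verification that the three listed factors genuinely sit inside $\mathrm{Aut}(\mathscr{R}(A_{N-1},A_{N^\prime-1}))$ as a direct product (up to the unknown normal subgroup $\mathfrak{N}$).

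First I would establish the two ``central'' factors. The functor $T=\tau\otimes\tau=S[-2]$ is built from the Serre functor, and since every triangle auto-equivalence commutes with $S$ (Serre functors being unique up to natural isomorphism) and with all shifts, both $T$ and $[1]$ lie in the centre of $\mathrm{Aut}(\mathscr{R}(A_{N-1},A_{N^\prime-1}))$. Their orders follow from the minimal relation \eqref{miniide} (which for $G=A_{N-1}$, $G'=A_{N'-1}$ gives $T^{\mathrm{lcm}(N,N^\prime)}\simeq\mathrm{Id}$, since $\Theta=\mathrm{Id}$ and $\Sigma$ is even in type $A$) and from the definition of $\mathscr{R}$ as an orbit category by $[2\Z]$, which forces $[1]^2=\mathrm{Id}$. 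This produces the first two factors $\Z/\mathrm{lcm}(N,N')\Z\times\Z/2\Z$.

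Next I would handle the third factor. \textbf{Fact \ref{tttqas2}} already gives the affine Artin braid relations \eqref{rel1}--\eqref{rel2} among the $\gcd(N,N')$ telescopic generators $\cl_a$, so the subgroup they generate is a quotient of $G_{\tilde A_{\gcd(N,N')-1}}$. It remains to add $\mathsf{Z}$ and check \eqref{r3}: this I would do directly from the definition \eqref{zamozamo}, using the adjoint action formula \eqref{uuuqeer}, $\mathsf{Z}\,L_{A}\,\mathsf{Z}^{-1}\simeq L_{\mathsf{Z}(A)}$, applied to $A=\tau^a S_1\otimes S_1$; by construction of the exponents $(a,b)$ in \eqref{zamozamo} the functor $\mathsf{Z}$ shifts the first tensor factor by the integer ``one step in the orbit'' required to send $\cl_a\mapsto\cl_{a+1}$. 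Combining the braid relations with \eqref{r3} identifies the group generated by $\cl_a$ and $\mathsf{Z}$ with a quotient of the circular braid group $C\cb_{\gcd(N,N^\prime)}$, which accounts for the third factor modulo its kernel $\mathfrak{N}$ of maps acting trivially on $\mathscr{R}$.

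Finally I would check that the three factors commute with each other and intersect trivially, so that the product is direct. Centrality of $T$ and $[1]$ already gives commutativity; the triviality of pairwise intersections follows from the fact that $T$ acts with finite order on the Grothendieck group by a power of $\boldsymbol{H}=\boldsymbol{\Phi}\otimes\boldsymbol{\Phi}$, $[1]$ acts by $-1$, while each non-trivial word in $\mathsf{Z}$ and the $\cl_a$ produces a matrix with a different spectral fingerprint (these three matrix types are distinguishable on $K_0$ using the action described in \S\ref{actgro}). The main obstacle in the argument is honestly verifying \eqref{r3}: one must track the effect of $\mathsf{Z}=\tau^{aN/\gcd}\otimes\tau^{bN'/\gcd}$ on the specific spherical orbit representatives, using \textbf{Fact \ref{aNspherorb}} to identify $\tau^{a N/\gcd}S_1$ with the correct shifted simple up to a sign (i.e.\! a possible $[1]$), and then absorbing that sign into the central $[1]$-factor. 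Once this is done the direct product structure \eqref{uuuq12} follows, which proves that $\mathrm{Aut}(\mathscr{R}(A_{N-1},A_{N^\prime-1}))$ contains a subgroup of the asserted form.
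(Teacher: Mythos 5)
Your proposal is correct and follows essentially the same route as the paper: the paper likewise assembles this Fact by collecting the relations $T^{\,\mathrm{lcm}(N,N^\prime)}=\mathrm{Id}$, $[1]^2=\mathrm{Id}$, the centrality of $T$ and $[1]$, the affine braid relations among the $\cl_a$, and the conjugation $\mathsf{Z}\,\cl_a\,\mathsf{Z}^{-1}=\cl_{a+1}$, and then identifies the subgroup generated by $\mathsf{Z}$ and the $\cl_a$ with a quotient $C\cb_{\gcd(N,N^\prime)}/\mathfrak{N}$. Your extra verification that the three factors intersect trivially goes slightly beyond what the paper records (it simply asserts the product form), but it does not change the argument.
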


The auto-equivalences of the cluster category $\mathscr{C}(A_{N-1},A_{N^\prime-1})$ then follows as in \S.\ref{fffistcaase}.

\paragraph{Special models and consistency checks.} We have the equivalences
\be\label{eqeeertk}
\mathscr{C}(D_4)\simeq \mathscr{C}(A_2,A_2),\qquad 
\mathscr{C}(E_6)\simeq \mathscr{C}(A_3,A_2),\qquad
\mathscr{C}(E_8)\simeq \mathscr{C}(A_4,A_2)
\ee
and hence consistency requires a correspondence between the spherical (half)orbits we found in \textbf{Facts \ref{Dorbitscirc},\;\ref{Dorbitscirc2}} for $\mathscr{R}(D_4)$, $\mathscr{R}(E_6)$, and $\mathscr{R}(E_8)$ and the ones found in the present subsection for (respectively) $\mathscr{R}(A_2,A_2)$, $\mathscr{R}(A_3,A_2)$, and $\mathscr{R}(A_4,A_2)$. 

This is trivially true for the last two pairs in eqn.\eqref{eqeeertk} since on both sides of the equivalence we found no spherical object/orbit. In $\mathscr{R}(A_2,A_2)$
we found $\gcd(h(A_2),h(A_2))=3$ spherical orbits, in perfect agreement with
the 3 spherical orbits of  \textbf{Facts \ref{Dorbitscirc}}.
Hence
\be
L_{S_\alpha}\longleftrightarrow L_{\tau^a S_1\otimes S_1},\qquad \alpha=v,s,c,\ \ a\in\Z/3\Z,
\ee
which generate a braid group $C\cb_3$.

\subsubsection{Telescopic functors in $\mathscr{R}(D_{n+1},D_{n^\prime+1})$}

Using eqn.\eqref{uuurq13} we construct the ``simple'' telescopic functors $\cl_a$ associated to tensor products of the spherical half-orbits described in \textbf{Fact \ref{Dorbitscirc}}. In the case $n,n^\prime>3$ we have $\gcd(n,n^\prime)$
independent such functors
\be
\cl_a=L_{\tau^a A_0\otimes A_0^\prime},\qquad a\in \Z\big/\!\gcd(n,n^\prime)\Z.
\ee
To get the braid relations between the $\cl_a$'s we compute
\be
\sum_{k=1}^{\mathrm{lmc}(n,n^\prime)}\sum_{j=0}^1\dim\mathrm{Hom}(A_i\otimes A^\prime_0,T^k(A_j\otimes A_0)[j])=
2\,\delta_{ij}^{(\gcd(n,n^\prime))}+\delta_{i,j+1}^{(\gcd(n,n^\prime))}+\delta_{i,j-1}^{(\gcd(n,n^\prime))}.
\ee
\textbf{Fact  \ref{ttttqqqz}}
applies to the SCFT $(D_{n+1},D_{n^\prime+1})$ ($n,n^\prime>3$) with the replacement 
\be
(N,N^\prime)\to (n,n^\prime).
\ee
For $n=3m>3$, $n^\prime=3$ we get 9 simple telescopic functors (notation as in eqn.\eqref{uuutvcm})
\be
\cl_{a,\alpha}=L_{\tau^a A_0\otimes S_\alpha},\qquad a\in\Z/3\Z,\ \alpha=v,s,c.
\ee
and, for $n=n^\prime=3$, 27 simple telescopic functors
\be
\cl_{a,\alpha,\dot\alpha}= L_{\tau^a S_\alpha\otimes S_{\dot\alpha}},\qquad a\in\Z/3\Z,\ \alpha, \dot\alpha=v,s,c.
\ee 

\paragraph{Braid relations.}
For $n=3m>3$, $n^\prime=3$ we have
\be
\sum_{k=1}^{3m}\sum_{j=0}^1\dim\mathrm{Hom}\big(\tau^a A_0\otimes S_\alpha,T^k(\tau^b\otimes S_\beta)[j]\big)=
2\delta_{\alpha,\beta}\,\delta^{(3)}_{a,b}+\delta_{a,b+1}^{(3)}+\delta_{a,b-1}^{(3)},
\ee
and the braid relations are
\be
\cl_{a+1,\alpha}\cl_{a,\beta}\cl_{a+1,\alpha}=\cl_{a,\beta}\cl_{a+1,\alpha}\cl_{a,\beta},\qquad \cl_{a,\alpha}\cl_{a,\beta}=\cl_{a,\beta}\cl_{a,\alpha}.
\ee
For $n=n^\prime=3$ we have
\be
\begin{split}
\sum_{k=1}^3\sum_{j=0}^1 \dim\mathrm{Hom}\big(\tau^a S_\alpha\otimes& S_{\dot\alpha}, T^k(\tau^b S_\beta\otimes S_{\dot\beta})\big)=\delta^{(3)}_{a,b}\big(1+3 \delta_{\alpha,\beta}\delta_{\dot\alpha,\dot\beta}-\delta_{\alpha,\beta}-\delta_{\dot\alpha,\dot\beta}\big)+\\
&+\big(\delta^{(3)}_{a,b+1}+\delta^{(3)}_{a,b-1}\big)\big(\delta_{\alpha,\beta}+\delta_{\dot\alpha,\dot\beta}-\delta_{\alpha,\beta}\delta_{\dot\alpha,\dot\beta}\big)
\end{split}
\ee
from which we read the braid relations.

\subsubsection{$L_A$ functors in $\mathscr{R}(A_{N-1},D_{n+1})$} 

If $\gcd(N,n)>1$ we have the diadic spherical half-orbits generated by
the tensor products
\be
A_i\otimes A^\prime_a\quad\text{or } A_i\otimes S_\alpha\ \text{for }n=3.
\ee
where $A_i$ ($A^\prime_0$ and $S_\alpha$) are described in \textbf{Fact \ref{aNspherorb}}
(resp.\! \textbf{Fact \ref{Dorbitscirc}}).
For $n\neq3$ we have $\gcd(N,n)$ simple auto-equivalences
\be
\cl_a=L_{\tau^a A_1\otimes A^\prime_0}\qquad a\in\Z/\!\gcd(N,n)\Z.
\ee
If $\gcd(N,n)>2$, these auto-equivalence satisfy the braid relations \eqref{rel1},\eqref{rel2},\eqref{r3} and \textbf{Fact  \ref{ttttqqqz}}
applies to the SCFT $(A_{N-1},D_{n+1})$ with the replacement $N^\prime\to n$.

\begin{rem}\label{tttuuuuura}
As in \S.\ref{999uuty87}, a consistency check is in order. The model
$(A_2,D_4)$ is equivalent to 
$E_6^{(1,1)}$. For $(A_2,D_4)$ the group of auto-equivalences we have found is (up to some finite group) a realization of $C\cb_3$ (generated by three telescopic functors)
while for $E_6^{(1,1)}$ (again modulo a finite group) we have a realization of
$\cb_3$ (generated by two telescopic functors). This is consistent since
$C\cb_3$ is a subgroup of $\cb_3$ of finite index.
\end{rem}

\subsection{$S$-duality in $(G,\widehat{H})$ models} 
The categorical description of the asymptotically-free models $(G,\widehat{H})$ is obtained from the  $(G,G^\prime)$ one by replacing the finite-type Dynkin graph $G^\prime$ with the affine one $\widehat{H}$. This has a crucial consequence: the Coxeter element $\boldsymbol{\Phi}$ of an affine Lie algebra is never semi-simple,
and hence $T=\tau\otimes\tau$ has not finite order in the root category
$\mathscr{R}(G,\widehat{H})$.
Consequently, not all its objects belong to periodic $T$-orbits, but only the proper subclass of its CY objects.\footnote{\ The class of CY-objects in $\mathscr{D}(G,\widehat{H})$ coincide with the class of objects having zero $G$ magnetic charge (cfr.\! footnote \ref{ringde}). Objects are CY iff they are mutually local with respect to the $W$ bosons. }
Since telescopic functors are constructed out of spherical \emph{finite} orbits, the categories
$\mathscr{R}(G,\widehat{H})$ are  ``poorer'' of spherical orbits than the $\mathscr{R}(G,G^\prime)$ ones, and hence have ``smaller'' $S$-duality groups.

This can be understood on physical grounds.\footnote{\ The following physical discussion holds in the derived Jacobian module category of the triangle form of the quiver, in a weakly-coupled physical regime such that the stable light modules are in fact modules of $\C G\times \C\widehat{H}$ (i.e.\! modules with vanishing extra ``diagonal'' arrows).} As explained in ref.\!\!\cite{Cecotti:2012va}, the non-trivial Jordan blocks of $\boldsymbol{H}$  measure the non-zero $\beta$-functions of the gauge couplings, that is, the deviation from conformal invariance. In a $\cn=2$ theory a scale anomaly implies a $U(1)_R$ anomaly, so that a
$U(1)_R$ rotation by an angle $\phi$ implies a shift in the Yang-Mills vacuum angle
\be
\theta\to\theta+\alpha\,\phi,
\ee
where $\alpha$ is the $\beta$-function coefficient. The physical interpretation\footnote{\ $T$ acts on the Grothendieck class of the root category as $\boldsymbol{H}$, the 2d monodromy matrix. Thus, by definition, in 2d $T$ is a chiral rotation by $2\pi$. From appendix \ref{4d2dchiral}, we see that a $U(1)_R$ rotation by an angle $2\pi$ in 2d corresponds to a $U(1)_R$ rotation in 4d by an angle $\phi=2\pi(1-\hat c/2)$. In $\mathbb{Q}$ we have $\hat c(G,\widehat{H})=2(1-1/h(G))$ (cfr. table \ref{duval}) and $\phi=2\pi/h(G)$. } of the functor $T$ is a $U(1)_R$ rotation by $\phi=2\pi/h(G)$. We know from the Witten effect \cite{effwitten} that a $U(1)_R$ rotation by $2\pi$, generated by
$T^{h(G)}$, acts on the electric and magnetic charges as
\be
\begin{pmatrix}e\\ m\end{pmatrix}\to \begin{pmatrix}e+\alpha\, m\\ m\end{pmatrix}\equiv \begin{pmatrix}1 & \alpha\\
0 & 1\end{pmatrix} \begin{pmatrix}e\\ m\end{pmatrix}\label{kkkk1c}
\ee
where $\alpha$ measures the
chiral anomaly, but also the failure of $\boldsymbol{H}$ to be semi-simple\footnote{\ The $2\times 2$ matrix in eqn.\eqref{kkkk1c} is the restriction of $\boldsymbol{H}^{h(G)}$ to the subgroup of the Grothendieck group generated by a dual pair of electric/magnetic charges.}. If $\alpha$ is not zero, there is a preferred $S$-duality frame, defined by the condition that the magnetic charge is invariant under $2\pi$ chiral rotations. Then
only transformations which preserve this magnetic charge can be dualities, and the $S$-duality group is restricted to a parabolic subgroup of the superconformal one.

\subsubsection{The regular subcategory and spherical orbits}

As mentioned in the introduction, the models $(G,\widehat{H})$ have the physical interpretation of SYM with gauge group $G$ coupled to some \emph{superconformal} matter system (which may contain its own SYM subsectors). There is a well-defined triangle sub-category of the derived category $\mathscr{D}(G,\widehat{H})$
which corresponds to the matter sector (or, more generally, to the states of zero magnetic charge).
We shall discuss such sub-constituent categories in the next section. Here we limit to consider simple objects in $\mathscr{D}(G,\widehat{H})$ which do belong to the matter sub-category. Since the matter is conformal, objects in the sub-category are fractional CY, and are mapped into periodic objects of $\mathscr{R}(G,\widehat{H})$.

We consider the subcategory $\mathcal{R}\subset\mathsf{mod}\,\C \widehat{H}$ of regular modules \cite{bowey}. It is a $\mathbb{P}^1$ family of stable tubes, almost all homogeneous, except (at most) three which have periods 
 $\{p_1,p_2,p_3\}$ (listed for the various $\widehat{H}$ in the left part of table \ref{mmmatt}). In the $i$--th tube there are $p_i$ regular simples $R_{a,i}$ ($a\in\Z/p_i\Z$) on which the AR translation $\tau$ acts periodically with period $p_i$ \cite{bowey}
 \begin{gather}
 R_{a+1,i}=\tau R_{a,i},\qquad \tau^{p_i} R_{a,i}=R_{a,i}\\
\label{uuuycm}
\dim \mathrm{Hom}^\bullet(R_{a,i},\tau^k R_{a,j})=\delta_{ij}\big(\delta^{(p_i)}_{k,0}+\delta^{(p_i)}_{k,1}\big).
 \end{gather}
 
\paragraph{Spherical orbits in $\mathscr{R}(A_{N-1},\widehat{H})$.} The diadic objects of the form $X\otimes R_{a,i}\in \mathscr{R}(A_{N-1},\widehat{H})$ are periodic of period
$\mathrm{lcm}(N,p_i)$. The same argument as in \textbf{Fact \ref{needdddfla}} shows that for $\gcd(N,p_i)=1$ there are no spherical orbits. Using eqn.\eqref{iiicfg}, for $\mathrm{lcm}(N,p_i)>1$, we get
\be
\sum_{k=1}^{\mathrm{lcm}(N,p_i)}\sum_{j=0}^1\dim\mathrm{Hom}(X\otimes R_{a,i},T^k(X\otimes R_{a,i})[j])=2+\sum_{h\in H} a_h\big(
\delta^{(\gcd(N,p_i))}_{h,0}+
\delta^{(\gcd(N,p_i))}_{h,1}\big)
\ee
The condition is as in \textbf{Fact \ref{fffassw}}: all objects $X$ of length $2\ell\leq\gcd(N,p_i)$ produce spherical orbits. The simple telescopic functors are of the form
\be\label{dusu1}
\cl_{a,i}=L_{A_0\otimes \tau^a R_{1,i}}\qquad a\in \Z/\!\gcd(N,p_i).
\ee
The independent telescopic functor in the $i$-th tube, $\cl_{i,a}$ are cyclically permuted by the auto-equivalence $\mathsf{Z}_i$ defined as in eqn.\eqref{zamozamo} (with $N^\prime\to p_i$)
\be\label{qwza1}
\mathsf{Z}_i\,\cl_{a,i}\,\mathsf{Z}_i^{-1}=\cl_{a+1,i}, \qquad
\mathsf{Z}_i^{\gcd(N,p_i)}=1.
\ee
while telescopic functors associated with distinct tubes commute.
If the condition
 $\mathrm{g.c.d.}(N,p_i)>2$ holds,  the $\cl_{a,i}$'s generate
the affine Artin braid group $G_{\tilde A_{\gcd(N,p_i)-1}}$
\be\label{qwza}
\cl_{a,i}\,\cl_{a+1,i}\,\cl_{a,i}=\cl_{a+1,i}\,\cl_{a,i}\,\cl_{a+1,i},\qquad \cl_{a,i}\cl_{b,i}=\cl_{b,i}\cl_{a,i}\ \text{for }|a-b|>1,
\ee
 while the functors $\{\mathsf{Z}_i,\ \cl_{a,i},\;|\; a\in\Z/\gcd(N,p_i)\Z)$) generate the cyclic (annular) braid group $C\cb_{\gcd(N,p_i)}$.
In particular, if only one exceptional tube is present\footnote{\ If several exceptional tubes are present the $S$-duality group contains, in addition to the $S$-dualities produced by telescopic functors, $T$ and $1\otimes\tau$, the permutation of the tubes of the same period $p_i$.} and
$\gcd(N,p)>2$, the $S$-duality group has again the form
\be\label{dusu2}
\mathbf{Tel}(A_{N-1},\widehat{A}(p,1))=\Big(\Z/\mathrm{lcm}(N,p)\Z\times C\cb_{\gcd(N,p)}\Big)/\mathfrak{N},\qquad \gcd(N,p)>2,
\ee
where $\Z/\mathrm{lcm}(N,p)\Z$ stands for the cyclic group generated by $T$ and where $\mathfrak{N}$ is a normal subgroup. 

\subsubsection{$S$-duality in $SU$ linear quiver gauge theories. I}\label{aaa15i} 

We illustrate the above conclusion in an important class of examples: the model $(A_{N-1},\widehat{A}(p,1))$ where $p\mid N$ and $p>1$
\cite{Cecotti:2013lda}.
This model is actually the linear quiver theory in figure \ref{linequiv}
where $m\equiv N/p$. It has flavor symmetry $U(1)^{p-1}$ and
all its gauge couplings are exactly marginal, except the one in the first node, which is just asymptotically-free.
Since the model is Lagrangian, $\mathfrak{m}=1$. The $S$-duality group of the full theory may act on the electric/magnetic charges of the first node only through the parabolic group \eqref{kkkk1c}. We expect a more interesting $S$-duality action on the electric/magnetic charges of the other nodes which have vanishing $\beta$-function: a finite-index subgroup $\Gamma_k\subset SL(2,\Z)$ should rotate the electric/magnetic charges of the $k$--th group leaving all the others invariant. 

\begin{figure}
$$
\xymatrix{
*++[o][F-]{\ \phantom{\Bigg|a}\text{\begin{small}$pm$
\end{small}}\phantom{a\Bigg|}\ }
\ar@{-}[r]&
*++[o][F-]{\phantom{\Bigg|}\text{\begin{small}$(p-1)m$
\end{small}}\phantom{\Bigg|}\!\!}
\ar@{-}[r]&
*++[o][F-]{\phantom{\Bigg|}\text{\begin{small}$(p-2)m$
\end{small}}\phantom{\Bigg|}\!\!}
\ar@{-}[r]&\cdots\ar@{-}[r]&
*++[o][F-]{\ \phantom{\Bigg|a}\text{\begin{small}$2m$
\end{small}}\phantom{a\Bigg|}\ }
\ar@{-}[r]&*++[o][F-]{\ \phantom{\Bigg|a}\text{\begin{small}$m$
\end{small}}\phantom{a\Bigg|}\ }
}
$$
\caption{\label{linequiv} The $(A_{pm-1},\widehat{A}(p,1))$ theory as a linear quiver gauge theory. Nodes represent $SU((p-k)m)$ gauge sectors coupled through bifundamental hypermultiplets represented by the edges. The $\beta$--functions of all gauge coupling vanish, except for the leftmost node which has a negative $\beta$-function with coefficient $(p+1)m$.}
\end{figure}
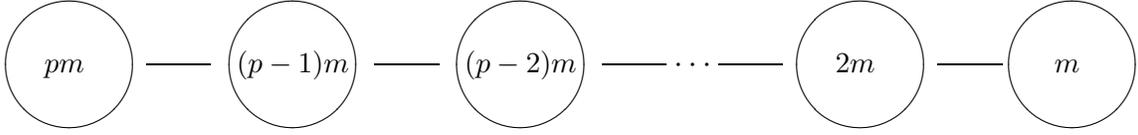

From this physical picture, we get a prediction for the $S$-duality group of the model $(A_{pm-1},\widehat{A}(p,1))$: it should contain at least the
``obvious'' sub-group
\be\label{iiiqazwe}
\Gamma_1\times \Gamma_2\times \cdots\Gamma_{p-1}\times \text{(parabolic)}.
\ee

We shall return to the relation of the $S$-duality group $\Gamma_k$ of each SYM sub-sector with the duality group of the fully interacting theory
in \S.\,\ref{againandangain}. Here we discuss the dualities of the fully interacting theory without reference to the properties of its constituents.
\medskip

The braid presentation of the $S$-duality group $\mathbf{Tel}(A_{pm-1},\widehat{A}(p,1))$ is obtained from
eqns.\eqref{dusu1}--\eqref{dusu2},
by setting $\mathrm{lcm}(N,p)=N$ and $\gcd(N,p)=p$. Note that for $p=1$ (pure SYM) there is no telescopic functor (since there is no flavor charge).
For $p>1$ we have (at least) the $p$ simple telescopic functors $\cl_a$ ($a\in \Z/p\Z$). 

If $p>2$
these functors satisfy the relations
\begin{align}\label{uuuju0}
\cl_{a}\,\cl_{a+1}\,\cl_{a}&=\cl_{a+1}\,\cl_{a}\,\cl_{a+1},
&\cl_{a}\cl_{b}&=\cl_{b}\cl_{a}\ \  \text{for }|a-b|>1,\\
\label{uuuju}
\mathsf{Z}\,\cl_{a}\,\mathsf{Z}^{-1}&=\cl_{a+1}, 
&\mathsf{Z}^p&=1,
\end{align}
that is, they yield a realization of the annular braid group $C\cb_p$.

Then, the $S$-duality group of the linear quiver in figure \ref{linequiv} is given by the product of the cyclic group generated by the quantum monodromy $\mathbb{M}$ by a group of the form
\be
C\cb_p/\mathfrak{N},
\ee
where $\mathfrak{N}$ is a model-dependent normal subgroup which can be determined by writing down the explicit matrices $\boldsymbol{L}_{\tau^a A_0\times R_1}$ specific for each model.
We shall present some concrete examples in the next section.

\paragraph{The special case $p=2$.}
Whenever $p>2$, 
the $S$-duality group $\mathbf{Tel}(A_{pm-1},\widehat{A}(p,1))$
has a presentation in terms of the generators 
$\mathsf{Z},\; \cl_{a}$ subjected to two sets of relations: the \emph{universal} ones 
\eqref{uuuju0},\eqref{uuuju} and the 
\emph{model-dependent} ones given by $\mathfrak{N}$.
When  $p=2$,
we get two simple telescopic functors $\cl_{1}$, $\cl_{2}$ and the universal relation reduce to \eqref{uuuju}.
Explicit computations suggests that the following higher braiding relation holds in all $p=2$ linear quivers starting from $SU(N)$ ($N=2m$)
\be
\overbrace{\cl_1\cl_2\cl_1\cl_2\cdots}^{N\ \text{factors}}=
\overbrace{\cl_2\cl_1\cl_1\cl_1\cdots}^{N\ \text{factors}}
\ee
in analogy with the rule \eqref{kkkdge}.

\subsubsection{$S$-duality in $SO/USp$ linear quiver theories}

The models of the form $(D_{mp+1},\widehat{A}(p,1))$ represents a linear quiver theory with alternating $SO$ and $U\!Sp$
gauge groups \cite{Cecotti:2013lda} as in figure \ref{usposqui}. Edges now represent bi-fundamental half-hypermultiplet which do not carry any flavor charge.
$p=1$ yields pure $SO(2m+2)$ SYM;
to get an interesting $S$-duality group we assume $p>1$.

\begin{figure}
$$
\xymatrix{
*++[o][F=]{\phantom{\Bigg|\!\!}\text{\begin{scriptsize}$\phantom{+}2mp\phantom{+}$
\end{scriptsize}}\phantom{\!\!\Bigg|}}
\ar@{-}[r]&
*++[o][F--]{\phantom{\Bigg|\!\!}\text{\begin{scriptsize}$2m(p-1)$
\end{scriptsize}}\phantom{\!\!\Bigg|}}
\ar@{-}[r]&
*++[o][F=]{\phantom{\Bigg|\!\!}\text{\begin{scriptsize}$2m(p-2)$
\end{scriptsize}}\phantom{\!\!\Bigg|}}
\ar@{-}[r]&
*++[o][F--]{\phantom{\Bigg|\!\!}\text{\begin{scriptsize}$2m(p-3)$
\end{scriptsize}}\phantom{\!\!\Bigg|}}
\ar@{-}[r]&\cdots
}
$$
\caption{\label{usposqui}The $SO/U\!Sp$ quiver gauge theory corresponding to the $(SO(2mp+2), \widehat{A}(p,1))$ model \cite{Cecotti:2013lda}. \textsc{Conventions:} A double circle containing an integer $K$  
stands for an $SO(K+2)$ gauge group, a dashed circle for an $USp(K)$ gauge group, and the edges connecting them bi-fundamental \emph{half}-hypermultiplets.}
\end{figure}
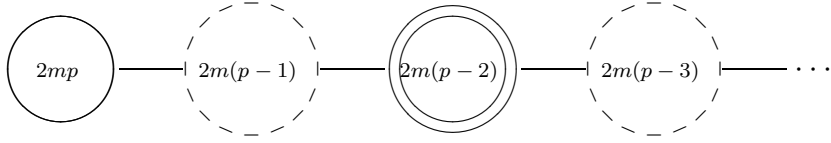

Since the model has no flavor charge,
in view of \S.\ref{phphmot} the interesting auto-equivalences arise from spherical \emph{half}-orbits.

Consider in $\mathscr{D}(D_{mp+1}, \widehat{A}(p,1))=D^b\mathsf{mod}\,\C \vec D_{mp+1}\times \C \widehat{A}(p,1)$ the objects of the form
\be
A_i\otimes R_k
\ee
 where $A_i$ belong to the spherical half-orbit described in \textbf{Fact \ref{Dorbitscirc}}, and $R_k$ ($k\in\Z/p\Z$) is a regular simple in the exceptional tube of period $p$. One has
 \be
 \begin{split}
 &S^{mp}(A_i\otimes R_k)= (\tau^{mp}A_i\otimes \tau^{mp}R_k)\big[2mp\big]=(A_i\otimes R_k)\big[2mp-1\big]\quad\Longleftrightarrow\\
 &\Longleftrightarrow\quad A_i\otimes R_k\ \text{has fractional CY dimension }\frac{a}{b}=\frac{2mp-1}{mp}.
 \end{split}
 \ee
 In particular, $a$ is odd, as required for an half-orbit. If $mp>3$ we get
 the usual $p>1$ simple auto-equivalences
 \be
 \cl_a= H_{A_0\otimes \tau^a R_1}\qquad a\in \Z/p\Z,
 \ee
which for $p>2$ satisfy the same braid relation as before.

 \section{Sub-constituents and dualities}\label{subdualities}
 
Breaking a complicated $\cn=2$ system into simpler sub-constituents is a powerful technique to study its properties. Besides being a useful tool, it provides a clear physical picture of what is going on. 
 
 \subsection{The physical principle}\label{ooooxzq93}
 
Let us start by describing, in rough terms, the physical situation we have in mind.  Often a $\cn=2$ QFT may be decomposed into several distinct physical systems weakly coupled together, the classical example being the decomposition into a Yang-Mills sector and a matter sector. The decomposition into sub-sectors is not unique in general, but depends on a choice of duality frame which specifies which degrees of freedom are weakly coupled in the given physical situation. With respect to such a duality frame, we have a collection of $\cn=2$ theories, $\cf_s$ ($s=1,\cdots, K$), 
coupled together by interaction terms of the form
 \be\label{nav1}
 S=\sum_{s=1}^K S_{\cf_s}+\lambda_{st; ij}\!\int\! d^4x d^4\theta\,\co^{(s)}_i\co^{(t)}_j+O(\lambda_{st;ij}^2),
 \ee
 where $\co^{(s)}_i$ are non-trivial chiral operators in the $s$-th QFT and $\lambda_{st;ij}$ couplings. For example, if $\cf_1$
is a SYM sector with gauge group $G$, and $\cf_2$ is a matter system with a flavor symmetry  $F\supset G$,
 the $\co^{(1)}_i$'s are gauge fields and the $\co^{(2)}_j$'s flavor currents of $G$. 
 
 Suppose that the $s$-th QFT, $\cf_s$, taken in isolation, has a certain duality group 
 $\mathbb{S}(s)$. In general
 $\mathbb{S}(s)$ acts non-trivially on
 the operators $\co^{(s)}_i$,
 \be
 \phi(\co^{(s)}_i)\neq \co^{(s)}_i,\qquad \phi\in\mathbb{S}(s),
 \ee
 and $\phi$ does not extend to a duality of the fully coupled theory $\cf_\text{fully}$.
 For instance, let the matter system $\cf_2$ be $SU(2)$ SQCD with $N_f=4$. This theory has a $SO(8)$ flavor symmetry, and a $SL(2,\Z)$ duality group which acts by $Spin(8)$ triality on the flavor charges \cite{SW2}. We may gauge a subgroup $G\subset SO(8)$.
The $SL(2,\Z)$ duality will not extend to the gauged theory, being broken by the gauge interaction. 
However, the \textit{isotropy \emph{(normal)} subgroup} 
of the interaction
\be\label{nav3}
\ci_\text{int}^{(s)}\equiv\Big\{\phi\in\mathbb{S}(s)\;\Big|\; \phi(\co^{(s)}_i)=\co^{(s)}_i\ \forall\, i\Big\}\subset \mathbb{S}(s)
\ee
is expected, on physical grounds, to extend to a duality of the fully coupled theory $\cf_\text{fully}$. In the example of $SU(2)$ $N_f=4$ SQCD, the isotropy subgroup is the principal congruence subgroup $\Gamma(2)\subset SL(2,\Z)$ \cite{SW2}, and we expect this subgroup of dualities to be preserved by any gauging of the flavor group.

In this example, the isotropy group has \emph{finite index} in the duality group $\mathbb{S}(s)$. This is true for all gauge interactions: we have seen in 
\S.\,\ref{wweeylassdwa} that the action of the $S$-duality group $\mathbb{S}(s)$ on the flavor lattice factorizes through a \emph{finite} reflection group $\mathfrak{W}_s$.
The gauge-interaction isotropy group
$\ci_\text{int}^{(s)}$ is defined by
\be
1\to \ci_\text{int}^{(s)}\to \mathbb{S}(s)\to \mathfrak{W}_s\to 1
\ee
and has finite index.

\subsubsection{UV completeness vs.\! $S$-duality}

We may generalize the argument to arbitrary couplings. A non-trivial interaction \eqref{nav1} between unitary QFTs may be consistent with UV completeness (no Landau poles) only if
\be
\text{dimension}\big(\co^{(s)}_i\co^{(t)}_j\big)\leq 2 \quad\Longrightarrow\quad
\text{dimension}\big(\co^{(s)}_i\big)<2.
\ee
It follows from formulae in \cite{Cecotti:2010fi}, reviewed in appendix \ref{4d2dchiral}, that, in 4d $\cn=2$ theories with a good 2d correspondent, the dimensions $d_4$ of the 4d chiral fields which generate the $\cn=2$ chiral ring $\mathcal{R}_4$ are related to the eigenvalues $e^{2\pi i h}$  of the 2d quantum monodromy $\boldsymbol{H}$ (with $h$ in the range $0\leq h\leq 1/2$) by the formula
\be
d_4=1+\frac{h}{1-\hat c/2},\qquad\text{with }\ 
h\in \left\{\mathrm{Spectrum}\!\left(\frac{1}{2\pi i}\log\boldsymbol{H}\right)\bigg|\; 0\leq h\leq \frac{1}{2}\right\},
\ee
where $\hat c$ is the fractional CY dimension of the 2d brane category $\mathscr{B}$. We may assume $\hat c\geq 1$ without loss.\footnote{\ Otherwise the full $S$-duality group is finite, and all its sub-groups are trivially of finite index.}  The flavor charges $\in \Gamma_\text{flavor}$ correspond to the eigenvalue $1$ of $\boldsymbol{H}$ $(h=0)$, and the corresponding 4d conserved super-currents have dimension $1$.  Only the generators of $\mathcal{R}_4$ may have dimensions $<2$, and they satisfy the bound provided
\be\label{oopibg}
0\leq h< 1-\frac{\hat c}{2}.
\ee
The $S$-duality group $\mathbb{S}(s)$ acts through a finite reflection group on
the $\boldsymbol{H}_{\!s}$ eigenspaces associated to eigenvalues $e^{2\pi i h}$ with $h$ as in \eqref{oopibg}
(compare with eqns.\eqref{veryrelevant},\eqref{veryrelevant2}). 
Then,
$\mathbb{S}(s)/\ci_\text{int}$ is a finite group for all interactions consistent with UV completeness. We are lead to the following
\medskip

\noindent\textbf{``Physical principle''.}
\textit{Modulo commensurability, the $S$-duality group $\mathbb{S}(s)$ of the $s$--th sub-sector is equal to the subgroup of the $S$-duality group $\mathbb{S}_\text{fully}$ of the fully coupled theory which maps the states of the $s$-th sub-sector into themselves.}
\medskip


We now rephrase the above physical picture in the categorial language.
 
\subsection{Constituents and cluster categories}

In general, there is no simple relation between the cluster category of the fully interacting QFT, $\mathscr{C}_\text{fully}$, and the cluster categories of its constituent sub-sectors, $\mathscr{C}(s)$. This is to be expected, since $\mathscr{C}_\text{fully}$ describes the QFT  non-perturbatively in all duality frames and at all couplings, whereas the constituent picture emerges only asymptotically in the limit in which the appropriate coupling $\lambda$ is sent to zero, $\lambda\to 0$, ($\lambda$ being defined with reference to a specific duality frame).
To ``extract'' a constituent sub-sector $\cf_s$ from the interacting cluster category $\mathscr{C}_\text{fully}$, one needs to go through a number of steps which involve extra data and choices:
\begin{itemize}
\item[\textit{i)}] select a cluster-tilting object
$\ct\in \mathscr{C}_\text{fully}$ suitable for the relevant weak coupling limit; 
\item[\textit{ii)}] with respect to the chosen $\ct$, the coupling $\lambda$ (to be sent to zero) is defined by the additional datum of a one-parameter family of  the stability functions
($\cn=2$ central charges) $Z_\lambda$ on the Jacobian module category, i.e.
\be
Z_\lambda\colon K_0\big(\mathscr{C}_\text{fully}\big/\mathsf{add}\,\ct[1]\big)\to \C;
\ee
\item[\textit{iii)}] once specified $\ct$ and $Z_\lambda$ we find the set $\Xi(\lambda)$ of Jacobian modules which are stable for a given value of $\lambda$. A $X\in \Xi(\lambda)$ corresponds to a BPS particle of mass $|Z_\lambda(X)|$.  Modules $X$ which are stable as $\lambda\to0$ such that $|Z_\lambda(X)|$ remains bounded in the limit, then should correspond to the BPS states of the several decoupled sectors. If $\Xi$ is the set of stable objects with bounded mass at $\lambda=0$, we have
$\Xi=\cup_s \Xi_s$ with $\Xi_s$ the set of stable objects describing BPS particles belonging to the $s$--th subsector;
\item[\textit{iv)}] the chosen $Z_\lambda$ produces in the limit $\lambda\to0$ the $s$--th subsector in a particular physical regime (BPS chamber) described by a pair
$(\ct_{(s)},Z_{(s)})$, where  $\ct_{(s)}\in\mathscr{C}(s)$ is a tilting object, and $Z_{(s)}\colon K_0(\mathscr{C}(s)/\mathsf{add}\,\ct_s[1])\to\C$ a stability function. The set of  stable objects of $\mathscr{C}(s)/\mathsf{add}\,\ct_s[1]$ then coincides with $\Xi_s$;
\item[\textit{v)}]  in particular,  the interacting Jacobian module category $\mathscr{J}\equiv\mathscr{C}_\text{fully}/\mathsf{add}(\ct[1])$ should contain all the simples $S_i\in \mathscr{C}(s)/\mathsf{add}\,\ct_s[1]$ (since they are stable for all choices of $Z_{(s)}$)
and
\be
\begin{split}
&\dim \mathrm{Hom}_{\!\mathscr{J}\!}(S_i,S_j)-
\dim \mathrm{Ext}^1_{\!\mathscr{J}\!}(S_i,S_j)-\\
&-\dim \mathrm{Hom}_{\!\mathscr{J}\!}(S_j,S_i)+
\dim \mathrm{Ext}^1_{\!\mathscr{J}\!}(S_j,S_i)\equiv B_{ij}
\end{split}
\ee  
should agree with the exchange matrix $B_{ij}$ of the endo-quiver of $\ct_s$ which belongs to the quiver mutation class of the $s$--th constituent QFT. 
\end{itemize}

Roughly speaking, this procedure sets a (non-intrinsic) correspondence between certain objects of $\mathscr{C}_\text{fully}$ and objects of $\mathscr{C}(s)$, $A \leftrightarrow A_{(s)}$.
Suppose that two corresponding objects,
$A\in\mathscr{C}_\text{fully}$
and $A_{(s)}\in\mathscr{C}(s)$, are both spherical, thus defining two Thomas-Seidel auto-equivalences,
$T_A$ and $T_{A_{(s)}}$, in the respective cluster categories. In this case, it is natural to interpret the duality $T_A$ of the fully interacting theory as arising from the duality $T_{A_{(s)}}$ of the $s$--th constituent sector.

\subsubsection{A simple example in full detail}\label{exindetail}

To illustrate the idea, we study in great detail a simple example which contains all the essential elements of the general case. 
We consider the 4d $\cn=2$ asymptotically free affine theory $\widehat{A}(p,1)$ QFT \cite{Cecotti:2011rv}:
$p=1$ is pure $SU(2)$ SYM, $p=2$ is $SU(2)$ SQCD with $N_f=1$, and in general it is $SU(2)$ SYM coupled to an Argyres-Douglas model of type $D_p$ \cite{Cecotti:2011rv}.
To avoid discussing special cases, we take $p\geq 3$.
We consider the regime in which the Yang-Mills coupling $g_\text{YM}$ is very small; the stable BPS spectrum then consists of the $W$ boson, the BPS states of the matter $D_p$ system, and infinite towers of heavy dyons with magnetic charges $\pm1$ and masses $O(1/g^2_\text{YM})$. As $g_\text{YM}\to0$ the dyons get infinite mass and decouple, and we remain with the $D_p$ matter states plus the $W$.  

The corresponding cluster category $\mathscr{C}(p)$ may be regarded as the category
$\widetilde{\mathsf{coh}}\,\mathbb{X}(p)$ with objects  the coherent sheaves over the weighted projective line $\mathbb{X}(p)$ with a single exceptional point of weight $p$ endowed with the $\Z_2$-graded morphism spaces \cite{BKL}
\be\label{uuuuuqq}
\begin{split}
\mathrm{Hom}_{\widetilde{\mathsf{coh}}}(X,Y)&=\mathrm{Hom}_{\mathsf{coh}}(X,Y)\oplus \mathrm{Ext}^1_{\mathsf{coh}}(X,\tau^- Y)\simeq\\
&\simeq\mathrm{Hom}_{\mathsf{coh}}(X,Y)\oplus D\mathrm{Hom}_{\mathsf{coh}}(Y,\tau^2 X).
\end{split}
\ee 
The magnetic charge of a sheaf is its rank, so the light BPS states with zero magnetic charge are described by finite-length sheaves (i.e.\! skyscrapers with support in a point of $\mathbb{X}(p)$). The $D_p$ matter states (having spins $\leq 1/2$) correspond to \emph{rigid} finite-length sheaves; they belong to the exceptional tube
(i.e.\! they are skyscrapers with support at the exceptional point of weight $p$). The full subcategory $\widetilde{C}_p\subset\mathscr{C}(p)$ over the objects in the exceptional tube is called the \textit{cluster tube} of period $p$ \cite{BKL}. The indecomposable objects of $\widetilde{C}_p$ are the same ones of the usual stable tube $C_p$, but $\widetilde{C}_p$ contains additional \emph{odd} morphisms \eqref{uuuuuqq}. 
The Abelian category $C_p$  is \emph{uniserial}  \cite{bowey}
with $p$ simple objects, $\cs_i$, $i\in\Z/p\Z$, cyclically rotated\footnote{\ We have inverted the numeration of the simple sheaves with respect to ref.\!\cite{shepard} $\cs_i\to \cs_{p-i}$.} by $\tau$:
\be\label{uuuuuqwen}
\tau \cs_i\simeq \cs_{i+1},\qquad \tau^p\cs_i\simeq \cs_i.
\ee
 An indecomposable $\mathscr{E}_{i,\ell}\in C_p$ is  uniquely determined \cite{bowey} by its top $\cs_i$ and length $\ell\in\mathbb{N}$, \begin{gather}
 \mathsf{top}\,\mathscr{E}_{i,\ell}\equiv \mathscr{E}_{i,\ell}\big/\mathrm{rad}\,\mathscr{E}_{i,\ell}\simeq \cs_i,\\ \mathrm{rad}^{\ell-1}\,\mathscr{E}_{i,\ell}\neq0,\quad 
 \mathrm{rad}^{\ell}\,\mathscr{E}_{i,\ell}=0.
 \end{gather}
The periodic tube $C_p$ then is identified with the category $\mathsf{nil}\,\C\widehat{A}(p,0)$ of \emph{nilpotent} finite-dimensional representations of the cyclic quiver $\widehat{A}(p,0)$ (figure \ref{cyccycqq}).

\begin{figure}
$$
\begin{gathered}
\xymatrix{&\bullet\ar[r]^{\phi_1} &\bullet\ar[dr]^{\phi_8}\\
\bullet\ar[ur]^{\phi_2}&&&\bullet\ar[d]^{\phi_7}\\
\bullet\ar[u]^{\phi_3}&&&\bullet\ar[dl]^{\phi_6}\\
& \bullet\ar[ul]^{\phi_4} &\bullet\ar[l]^{\phi_5}}\end{gathered}\qquad \cw=\text{(8-cycle)}=\phi_1\phi_2\phi_3\phi_4\phi_5\phi_6 \phi_7\phi_8
$$
\caption{Example: the cyclic affine quiver $\widehat{A}(8,0)$ and its superpotential.\label{cyccycqq}}
\end{figure}

The rigid bricks of $C_p$ are the indecomposables with length $\ell<p$; the additive closure of the class of rigid bricks in $C_p$  is an Abelian category equivalent to  
the category of modules of the Jacobian algebra of the cyclic quiver
$\widehat{A}(p,0)$ bounded by the ideal generated by the derivatives of the superpotential $\cw=\text{cycle}$. 
The quiver with superpotential
$(\widehat{A}(p,0),\cw=\text{cycle})$ belongs to the mutation class of the Argyres-Douglas SCFT of type $D_p$
\cite{Cecotti:2011rv}. The cluster category of the $D_p$ model is most conveniently written as
\be
\mathscr{C}(D_p)\simeq D^b\mathsf{mod}\,\C D_p/\langle \tau^{-1}[1]\rangle^\Z
\ee
for any Dynkin quiver of type $D_p$; for convenience we orient the Dynkin quiver as
\be
\begin{gathered} 
\xymatrix{&1\\
2\ar[r]& 3\ar[u]\ar[r]& 4\ar[r] &5\ar[r]&\cdots\ar[r]& p-1\ar[r]&p}
\end{gathered}\label{kkkkkawr}
\ee
The indecomposable objects of
$\mathscr{C}(D_p)$ then are the indecomposable modules of $\C D_p$ together with\footnote{\ \label{jjjjjqw12}As always, $P_i$ is the (indecomposable) projective cover of the simple $S_i$ with support at the $i$--th node of the quiver \eqref{kkkkkawr}. Note that the orientation of \eqref{kkkkkawr} differs from the one in figure \ref{refDyqi} by the inversion of the arrow between nodes 2 and 3. Under the isomorphism of the derived categories of the corresponding path algebras, the module $P_2$ for the quiver in figure \ref{refDyqi} becomes the object $P_2[1]$ for the quiver \eqref{kkkkkawr}. Thus the $\Z_2$ automorphism $\theta$ now acts as $P_2[1]\leftrightarrow P_1\equiv S_1$.} $P_i[1]$ ($i=1,\dots,p$).
The rigid bricks in
$\widetilde{C}_p\subset \mathscr{C}(p)$ correpond to the indecomposables of the cluster category $\mathscr{C}(D_p)$ which are not direct summands of $\ct_{D_p}[1]$,
\be\label{7763xv}
\mathscr{C}(D_p)/\mathsf{add}\,\ct_{D_p}[1]\simeq \mathsf{add}\text{(rigid bricks)}\subset \widetilde{C}_p,
\ee
 where the tilting object $\ct_{D_p}$ has the explicit form
\begin{gather}\label{uuurrrq}
T_1\oplus T_2\oplus T_3\oplus \cdots\oplus T_p= P_2\oplus S_1\oplus S_2\oplus P_2/P_4\oplus P_2/P_5\oplus\cdots\oplus P_2/P_p,\\
\mathrm{Hom}_{\mathscr{C}(D_p)}(T_i,T_j[k])\simeq \begin{cases}\C &\text{for }k=0\
\text{and }j\neq i+1\\
0 &\text{otherwise,}
\end{cases}\qquad \text{with }i,j\in\Z/p\Z,
\end{gather}
so that, as it should,
\be
\mathrm{End}_{\mathscr{C}(D_p)}(\ct_{D_p})^\text{op}=\mathsf{Jac}\big(\widehat{A}(p,0),\cw=\text{cycle}\big).
\ee
The AR translation $\tau$ of
$C_p$ induces an auto-equivalence of $\widetilde{C}_p$ which we write as $\widetilde{\tau}$; under the identification \eqref{7763xv}, we have
\be
\widetilde{\tau}\, T_i=T_{i+1}.
\ee
We think of $\widetilde{\tau}$
as a duality of the fully interacting theory which maps matter states into matter states and hence should correspond to an $S$-duality of the matter SCFT of type $D_p$. The $S$-dualities of the Argyres-Douglas theories were computed in \S.\ref{mmmnnbaq}: they are induced by the auto-equivalences of the derived category $\tau$, $\theta$, and $[1]$ (as defined in terms of the Dynkin quiver with the reference orientation of figure \ref{refDyqi}). Using the comment in footnote \ref{jjjjjqw12}, one easily checks that 
\be
T_{i+k}\simeq (\theta\tau)^k T_i,
\ee
where $\simeq$ means equality of their images in the
cluster category $D^b\mathsf{mod}\,\C D_p/\langle \tau^{-1}[1]\rangle^\Z$.  The two sides do not agree in other categories such as the derived or root one: this confirms the idea that the cluster one is the ``right'' physical category.
Thus, as $S$-dualities of the $D_p$ Argyres-Douglas theory,  $\widetilde{\tau}\simeq\theta\tau$.
Note that, for all $p$, we have\footnote{\ Here and below we write $=$ for equivalences which hold in the derived category $D^b\mathsf{mod}\,\C D_p$ and $\simeq$ for equivalences which hold \emph{only} in the cluster (orbit) category $\mathscr{C}(D_p)$.}
\be
(\theta\tau)^{p-1}=\theta [-1]
\simeq (\theta\tau)^{-1}\quad\Rightarrow\quad \widetilde{\tau}^p\simeq \mathrm{Id},
\ee 
in agreement with the fact that $\tau^p=\mathrm{Id}$ in $C_p$. Note that the quantum monodromy $\mathbb{M}$ in the $\widehat{A}(p,1)$ model is $\tau^2$, which induces $\widetilde{\tau}^2$ on
$\widetilde{C}_p$ which is identified with $(\theta\tau)^2=\tau^2$ which is the quantum monodromy for the Argyres-Douglas model of type $D_p$.
\medskip

The matter-sector $S$-duality $\widetilde{\tau}$ is induced by an auto-equivalence of the derived category $D^b\mathsf{mod}\,\C\widehat{A}(p,1)$ of the fully interacting theory which is implemented by the inverse of a telescopic functor $L_{\cs_i}$ (the one point shift \cite{lenzinghandbook}) 
associated to the $\tau$-orbit of an exceptional simple $\cs_i$, $i\in\Z/p\Z$, with $\tau \cs_i=
\cs_{i+1}$. In $C_p$ one has $L_{\cs_1}\simeq \tau^{-1}$ \cite{shepard}. Therefore the
Thomas-Seidel twist $T_{\cs_1}$ of the cluster category $\mathscr{C}(p)$ for the fully interacting theory restricts to the auto-equivalence $\widetilde{\tau}^{-1}$ of the cluster tube $\widetilde{C}_p$.

 All objects in the spherical orbit $\{\tau^k\cs_1\}$
belong to $C_p$ and are rigid,
so they are identified with objects in the Jacobian module category
$\mathscr{C}(D_p)/\mathsf{add}\,\ct_{D_p}[1]$
(cfr.\! eqn.\eqref{7763xv}). In particular, we have the identification
\be
\cs_i\equiv\mathrm{ker}\Big(T_{i+1}\to T_{i}\Big) \longleftrightarrow S_{i-1}\quad \text{for }i=5,\dots,p+1,
\ee
so that the spherical object $\cs_1\in\widetilde{C}_p$  is identified with the simple module $S_p\in\mathsf{mod}\,\C D_p$ whose (half)orbit in $\mathscr{R}(D_p)$ is spherical. This sets a correspondence between the 
twist functor $T_{\cs_1}$ for the fully interacting cluster category $\mathscr{C}(p)$ and the twist functor $T_{S_p}$ for the sub-constituent cluster category $\mathscr{C}(D_p)$
\be
\widetilde{\tau}^{-1}\longleftrightarrow
T_{\cs_1}\longleftrightarrow T_{S_p}.
\ee
This correspondence may be made precise: since $\tilde\tau^{-1}\simeq \tilde\tau^{p-1}\longleftrightarrow T_{\cs_1}^{1-p}$ we compare $\tilde\tau^{-1}$ with the functor $L_{S_p}^{1-p}$.
From eqn.\eqref{Dorbitscirc-1}, \textbf{Fact \ref{Dorbitscirc}}, and \textbf{Remark \ref{Dorbitscirc+1}},
\be
L_{S_p}^{1-p}=\begin{cases}
\tau^{p-1} & p\ \text{odd}\\
\theta^{p-1}\tau^{p-1}& p\ \text{even}
\end{cases}=
\begin{cases}
\theta[-1] & p\ \text{odd}\\
\theta^{-1}[-1] & p\ \text{even}
\end{cases}\simeq
(\theta\tau)^{-1},
\ee
so that the equivalence $\widetilde{\tau}\simeq \theta\tau$ in $\mathscr{C}(D_p)$ may be seen as a correspondence of Thomas-Seidel twists in the interacting and constituent cluster categories.  

\begin{rem} While we have a nice correspondence $T_{\cs_1}\leftrightarrow T_{S_p}$ between auto-equivalences in the cluster categories, the relation between the telescopic functors $L_{\cs_1}$, $L_{S_p}$ is not so good. Indeed, $L_{\cs_1}$ has order $p$ in $D^b C_p/[2]$, while $L_{S_1}$ has order $2p-2$ in 
$\mathscr{R}(D_p)$. In particular, the explicit matrices $\boldsymbol{L}_A$ do not yield (in general) the actual action of the dualities on the constituent charge lattices.   
\end{rem}

 \subsection{$SU$ linear quivers II}\label{againandangain}
 
We return to the $SU$ linear quiver gauge theories of figure \ref{linequiv}.
In \S.\,\ref{aaa15i}
we found in some simple examples a $S$-duality group big enough to accommodate the ``physically expected'' duality group \eqref{iiiqazwe}. Now we discuss how that group is related to the constituent $S$-duality.

The models in figure \ref{linequiv} are special instances of $(G,\widehat{A}(p,1))$ QFTs. They have cluster categories of the form
\be
\mathscr{C}(G,\widehat{A}(p,1))\equiv \mathsf{Hu}_\triangle\!\Big(D^b \mathsf{mod}(\C \vec G\times \C \widehat{A}(p,1))\big/\langle T\rangle^\Z\Big),\qquad T=\tau\otimes\tau.
\ee
Roughly speaking, we may repeat on the second factor,
$\C\widehat{A}(p,1)$ all the constructions we performed in the example of \S.\ref{exindetail}. We then define the \textit{cluster $G$-tube of period $p$} to be
\be\label{uuuuu6547}
\widetilde{C}_p(G)\equiv\mathsf{Hu}_\triangle\!\Big(D^b \mathsf{mod}(\C \vec G\times C_p)\big/\langle T\rangle^\Z\Big)\subset
\mathsf{Hu}_\triangle\!\Big(D^b \mathsf{mod}(\C \vec G\times \C \widehat{A}(p,1))\big/\langle T\rangle^\Z\Big).
\ee
For $G=A_1$ this gives back the usual cluster tube $\widetilde{C}_p$. 
$\widetilde{C}_p(G)$ is related to the $D_p(G)$  SCFT \cite{Cecotti:2013lda} as the cluster tube $\widetilde{C}_p$ is related to the Argyres-Douglas SCFT of type $D_p$. Roughly speaking, $D_p(G)$ is described by the ``relatively rigid'' objects of $\widetilde{C}_p(G)$.

We see the Jacobian algebra
$\mathsf{Jac}(G\boxtimes \widehat{A}(p,1))$ as the completion of the product algebra $\C\vec G\times \C\widehat{A}(p,1)$. The modules of the product algebra are then identified with a class of modules of the Jacobian one (namely the Jacobian modules with vanishing diagonal arrows).   
Let $R_a\in\mathsf{mod}\,\C\widehat{A}(p,1)$ be the regular simples in the exceptional $p$-tube. We write $S_{i,a}$ for the diadic module $S_i\otimes R_a\in\mathsf{mod}\,\C \vec G\times \C\widehat{A}(p,1)$ seen as a module of $\mathsf{Jac}(G\boxtimes \widehat{A}(p,1))$. The skew-symmetric matrix
\be\label{kkkkasqw1}
B_{ia,jb}=\dim \mathrm{Ext}^1(S_{ia},S_{jb})-\dim \mathrm{Ext}^1(S_{jb},S_{ia})
\ee
defines the quiver $Q_{D_p(G)}$ of the constituent $D_p(G)$  sub-sector which should be equipped with the appropriate superpotential $\cw_{D_p(G)}$ \cite{Cecotti:2013lda}.

Now suppose that $\gcd(p,\tilde h(G))>1$. Then $S_i\otimes R_a$ has a spherical (half)orbit in $\mathscr{R}(G,\widehat{A}(p,1))$ which induces a Thomas-Seidel auto-equivalence $T_{S_i\otimes R_a}$ in the cluster category $\mathscr{C}(G,\widehat{A}(p,1))$ of the fully interacting model. $S_i\otimes R_a$ is identified with $S_{i,a}$ which, in turn, gets identified with a simple of the Jacobian algebra of $D_p(G)$, hence with a BPS state of the constituent sub-system $D_p(G)$.
It is clear from eqn.\eqref{uuuuu6547} that $T_{S_i\otimes R_a}$ preserves the cluster $G$-tube
$\widetilde{C}_p(G)$ and it is natural to expect that it sends its ``relatively rigid'' objects into objects of the same kind. In other words, the duality of the fully interacting theory given by the auto-equivalence $T_{S_i\otimes R_a}$ ``restricts'' to a duality of the constituent $D_p(G)$ sector. 

We illustrate this idea in some simple example. 

\subsubsection{Example: $SU(4)\times SU(2)$ with bi-fundamental}\label{lavx4}
We focus on the first example in appendix B of \cite{Cecotti:2012jx}, $m=p=2$, i.e.\! the quiver gauge theory
\be\label{eeexxaqwk}
\begin{gathered}
\xymatrix{ *++[o][F-]{\ \phantom{\Bigg|}\text{\begin{small}$SU(4)$
\end{small}}\phantom{\Bigg|}\ }
\ar@{-}[r] &*++[o][F-]{\ \phantom{\Bigg|}\text{\begin{small}$SU(2)$
\end{small}}\phantom{\Bigg|}\ }}
\end{gathered}
\ee
The $SU(2)$ YM coupling, $g_2$, is exactly marginal, while the $SU(4)$
coupling $g_4$ is asymptotically-free. The quiver
$A_3\boxtimes \widehat{A}(2,1)$ for this model is presented in figure \ref{quia21a3}.

\begin{figure}
\begin{equation*}
\xymatrix{& (1,1) \ar[rrrr]\ar[dddd]\ar[ddl] &&&& (1,2) \ar[rrrr] \ar[dddd]\ar[ddl] &&&& (1,3)\ar[dddd]\ar[ddl]\\
\\
(2,1) \ar[rrrr]\ar[ddr] &&&& (2,2)\ar[rrrr]\ar[ddr]\ar[uulll] &&&& (2,3)
\ar[ddr]\ar[uulll]\\
\\
& (3,1)\ar[rrrr] &&&& (3,2)\ar[llllluu]\ar[lllluuuu]\ar[rrrr] &&&& (3,3)\ar[llllluu]\ar[lllluuuu]}
\end{equation*}
\caption{\label{quia21a3}The quiver
$A_3\boxtimes \widehat{A}(2,1)$.}
\end{figure}
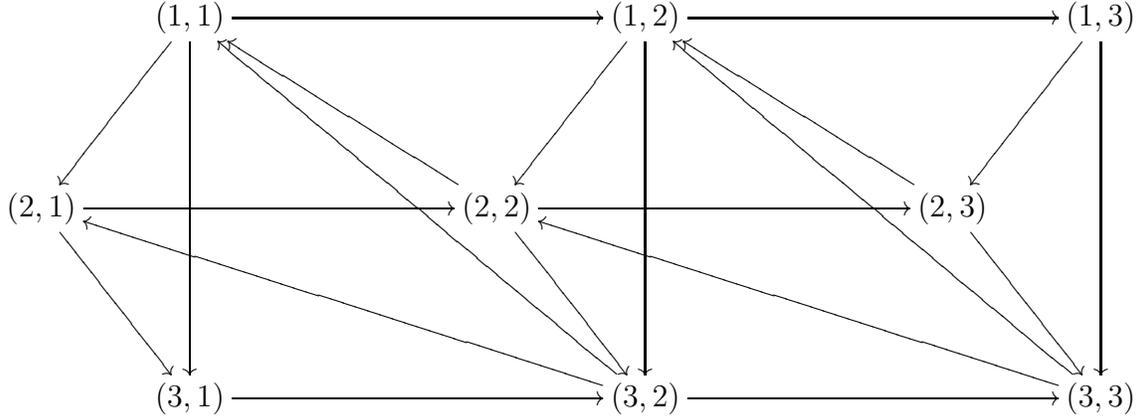

The decoupling limit $g_4\to0$ produce the (free) vectors of $SU(4)$ SYM plus a ``matter'' system which is just $SU(2)$ SQCD with $N_f=4$. 
Considering this matter system in isolation, it has flavor symmetry $SO(8)$, the quarks transforming in the vector representation. The vector representation is fixed by the $\Z_2$ outer automorphism of the $D_4$ graph which interchanges the two 
spinor representations, $s\leftrightarrow c$, and the $SU(4)$ gauge interaction breaks only the group
$\mathfrak{S}_3/\Z_2\simeq \Z/3\Z$ of the flavor triality.
Then the physical arguments of  \S.\ref{ooooxzq93} suggests

\begin{php}\label{tttgbk} The subgroup  $\cs\subset PSL(2,\Z)$ of the $S$-duality of $SU(2)$ with $N_f=4$ which extends to a duality of the fully coupled theory \eqref{eeexxaqwk} has index 3.
\end{php}

We proceed to check this physical statement from the homological side using the strategy outlined around eqn.\eqref{kkkkasqw1}.
The objects $S_{i,a}$ correspond to the diadic objects
\be\label{oiiijl}
S_i\otimes R_a\subset\mathsf{mod}\,\C \vec A_3\times \C\widehat{A}(2,1),
\quad i=1,2,3,\ a=1,2,
\ee
 where $R_2$ is the regular simple with support on the node 2 of the $\C \widehat{A}(2,1)$ quiver
 \be
\begin{gathered}
\xymatrix{& 2\ar[dr]^\rho\\
1\ar[ur]^\phi\ar[rr]_\psi&&3}
\end{gathered}\ee 
and $R_1\equiv \tau R_2$.
Using eqn.\eqref{kkkkasqw1}, we construct the quiver of the ``matter'' sector, see 
 figure \ref{didid}.
This quiver is a well-known member of the mutation class of the $SU(2)$ SQCD $N_f=4$ quiver \cite{Cecotti:2011rv,Alim:2011ae}.
 
 \begin{figure}
 \begin{equation*}
 \xymatrix{&&&S_2\otimes R_2\ar[dl]\ar[rrrd]\\
 S_1\otimes R_2\ar[rrru] && S_1\otimes R_1\ar[dr] && S_3\otimes R_1\ar[ul] && S_3\otimes R_3\ar[llld]\\
 &&& S_2\otimes R_1\ar[lllu]\ar[ru]}
 \end{equation*}
 \caption{\label{didid}The Dirac quiver of the ``matter'' sector of $(A_3,\widehat{A}(2,1))$. It is a quiver in the $SU(2)$ SCQD $N_f=4$ mutation class ($\equiv$ the elliptic Dynkin class $D^{(1,1)}_4$). Notice that the $\Z_2$ Galois automorphism \cite{arnold} of the quiver (which greatly simplifies computation of its BPS spectrum and quantum monodromy \cite{arnold,Cecotti:2013sza,Cecotti:2015lab}) is given by
$S_\alpha\otimes R_2\leftrightarrow S_\alpha\otimes R_1
$
which is the symmetry induced by the charge conjugation $C$ of the fully coupled theory \eqref{eeexxaqwk} (as well as by the involution $1\otimes \tau$).}
 \end{figure}
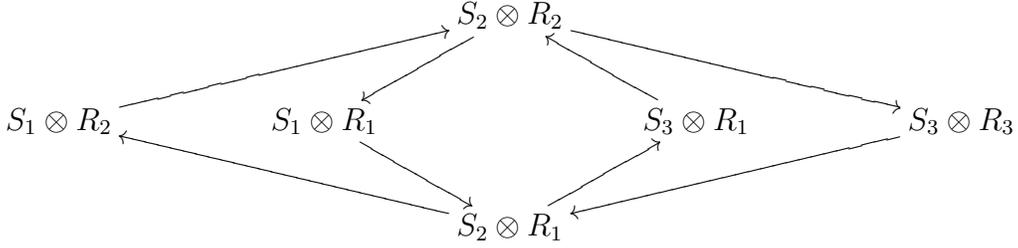
 
The orbits $\{T^k(S_\alpha\otimes R_a)\}$ ($\alpha=1,2,3$, $a=1,2$) are spherical in $\mathscr{R}(A_3,\widehat{A}(2,1)))$;
in facts they are the same spherical orbits we studied in \S.\,\ref{aaa15i}.
The corresponding telescopic functors,
 \be
 \cl_a= L_{S_1\otimes \tau^a R_2},\qquad a\in\Z/2\Z,
 \ee
 then correspond to Thomas-Seidel auto-equivalences of the  cluster category of the $SU(2)$ $N_f=4$ subsector,
 that is, to $S$-dualities of the ``matter system'' which extend to dualities of the fully interacting theory \eqref{eeexxaqwk}. 
 
Since $p=2$, we are in the special case where the usual braiding relations \eqref{uuuju0}\eqref{uuuju} do not apply. However we may read the model-dependent braid relations from the concrete realization of the duality group in terms of the known $9\times 9$ matrices $\boldsymbol{\cl}_a$. One finds the order 4 braid relation
\be
\boldsymbol{\cl}_1\boldsymbol{\cl}_2\boldsymbol{\cl}_1\boldsymbol{\cl}_2=\boldsymbol{\cl}_2\boldsymbol{\cl}_1\boldsymbol{\cl}_2\boldsymbol{\cl}_1,
\ee
which defines the Artin braid group $G_{B_2}$ associated with the Dynkin graph $B_2$ \cite{isobraid,xxx,typeB}
\be
\xymatrix{\bigcirc\!\!\ar@<0.35ex>@{-}[r]^4\ar@<-0.35ex>@{-}[r]&\!\!\bigcirc}
\ee 
For all $n$ one has

\begin{fact}[\!\!\cite{isobraid}\footnote{\ See also \cite{review} \S.\,3.1 \textbf{Example 2}.}] The Artin braid group of type $B_n$, $G_{B_n}$, is an index $n + 1$ subgroup of  $\cb_{n+1}\equiv G_{A_n}$. Indeed, it is the subgroup of braids for which the string beginning in position one also ends in position one.
\end{fact}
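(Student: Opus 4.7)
The plan is to identify the index $n+1$ subgroup in $\cb_{n+1}$ geometrically, exhibit an explicit homomorphism $G_{B_n}\to \cb_{n+1}$ whose image lies in it, and then show this homomorphism is an isomorphism onto. Let $\pi\colon \cb_{n+1}\to \mathfrak{S}_{n+1}$ denote the standard surjection sending $\sigma_i$ to the transposition $(i,i+1)$, and let $H\subset\cb_{n+1}$ be the preimage under $\pi$ of the stabilizer $\mathrm{Stab}(1)\simeq \mathfrak{S}_n$. Since $[\mathfrak{S}_{n+1}:\mathfrak{S}_n]=n+1$, one immediately gets $[\cb_{n+1}:H]=n+1$. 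By construction, $H$ is exactly the set of braids whose first strand starts and ends in position one.

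Next I would define $\phi\colon G_{B_n}\to \cb_{n+1}$ on the standard generators $t, s_1,\dots, s_{n-1}$ of the Artin group of type $B_n$ by
\[
\phi(t)=\sigma_1^{\,2},\qquad \phi(s_i)=\sigma_{i+1}\quad (i=1,\dots,n-1),
\]
and check the defining relations. The sub-$A_{n-1}$ relations among the $s_i$'s follow tautologically from the corresponding relations among $\sigma_2,\dots,\sigma_n$; the commutations $t s_i=s_i t$ for $i\geq 2$ follow from the far-commutation $\sigma_1\sigma_{i+1}=\sigma_{i+1}\sigma_1$ for $i\geq 2$. The only non-trivial check is the length-$4$ braid relation $\phi(t)\phi(s_1)\phi(t)\phi(s_1)=\phi(s_1)\phi(t)\phi(s_1)\phi(t)$, i.e.\ $\sigma_1^{\,2}\sigma_2\sigma_1^{\,2}\sigma_2=\sigma_2\sigma_1^{\,2}\sigma_2\sigma_1^{\,2}$. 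Applying $\sigma_1\sigma_2\sigma_1=\sigma_2\sigma_1\sigma_2$ once inside each side collapses both to $(\sigma_1\sigma_2)^{3}$. Hence $\phi$ is a well-defined group homomorphism. Clearly $\pi\phi(t)=e$ and $\pi\phi(s_i)=(i+1,i+2)$ all fix $1$, so $\mathrm{im}\,\phi\subseteq H$.

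The remaining step is to show that $\phi$ is injective and that $\mathrm{im}\,\phi=H$. I would do this via the standard mapping-class-group picture: $\cb_{n+1}\simeq \mathrm{MCG}(D^{2},P_{n+1})$ (disk with $n+1$ marked points, boundary fixed pointwise), and $H$ is the stabilizer of one marked point (the point moves during an isotopy but returns to itself). Deleting that marked point turns the disk into an annulus with $n$ marked points, so $H\simeq \mathrm{MCG}(\mathbb{A},P_n)$, the braid group on $n$ strings in the annulus. This annular braid group admits the well-known presentation of the Artin group of type $B_n$: $n-1$ half-twists $s_1,\dots,s_{n-1}$ exchanging neighboring strings, and one additional generator $t$ that drags the first string once around the inner boundary; the resulting relations are precisely those of $G_{B_n}$, and under the embedding $\mathrm{MCG}(\mathbb{A},P_n)\hookrightarrow\mathrm{MCG}(D^{2},P_{n+1})$ the generator $s_i$ maps to $\sigma_{i+1}$ while $t$ maps to the full twist $\sigma_1^{\,2}$ (since looping around a single extra puncture equals one full twist). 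This matches $\phi$ verbatim, so $\phi$ is an isomorphism onto $H$.

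The main obstacle is the last step: one must invoke (or re-derive) the identification of the annular braid group with $G_{B_n}$, which is the non-trivial classical content (due essentially to Brieskorn and tom Dieck) that the cited reference supplies. A purely algebraic alternative, if one prefers to avoid the geometric picture, is to apply Reidemeister--Schreier to the subgroup $H$ using a Schreier transversal $\{1,\sigma_1,\sigma_2\sigma_1,\dots,\sigma_n\sigma_{n-1}\cdots\sigma_1\}$ for the $n+1$ cosets, simplify the resulting presentation, and verify that after Tietze transformations the only surviving generators are $\sigma_1^{\,2},\sigma_2,\dots,\sigma_n$ subject exactly to the $B_n$ relations; this is elementary but combinatorially heavy.
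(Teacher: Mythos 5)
Your argument is correct, but note that the paper offers no proof of this \textbf{Fact} at all: it is stated as a citation to Chow \cite{isobraid} (with the Paris review as a secondary reference), so what you have written is essentially a reconstruction of the standard proof contained in those references rather than an alternative to anything in the paper. Your setup of $H=\pi^{-1}(\mathrm{Stab}(1))$ and the index count are right, the homomorphism $\phi(t)=\sigma_1^2$, $\phi(s_i)=\sigma_{i+1}$ is the correct one, and your verification of the length-four relation is sound: both sides reduce to $(\sigma_1\sigma_2)^3=(\sigma_2\sigma_1)^3$, the central full twist of the $\langle\sigma_1,\sigma_2\rangle$ subgroup. Two small cautions on the final step. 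First, when you ``delete'' the first marked point you should treat it as a puncture (equivalently, work with $\pi_1$ of the configuration space of $n$ unordered points in the once-punctured disk, via the fibration forgetting the distinguished point over the contractible disk); if you instead blow it up to a genuine boundary circle and fix that boundary pointwise, $\mathrm{MCG}(\mathbb{A},P_n)$ acquires an extra central $\Z$ generated by the Dehn twist about the new boundary, and $H$ is only the quotient by it. The identification $H\simeq C\cb_n\simeq G_{B_n}$ is exactly the annular braid group statement the paper itself quotes from \cite{xxx}, so your route is consistent with the surrounding machinery. Second, the transversal $\{1,\sigma_1,\sigma_2\sigma_1,\dots\}$ you propose for the Reidemeister--Schreier alternative is not prefix-closed as written (the prefix $\sigma_2$ of $\sigma_2\sigma_1$ lies in $H$ and is not the chosen representative of its coset), so it would need to be adjusted before that computation could actually be run; since you only sketch this alternative, it does not affect the validity of your main argument.
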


In the case of $SU(2)$ with $N_f=4$
the $S$-duality is realized through a 
$\cb_3$ action with a trivial action of its center (recall that $PSL(2,\Z)=\cb_3/Z(\cb_3)$). Then the braid group we find for the sub-constituent,
$G_{B_2}$, has index 3 in the braid group of the isolated $SU(2)$ $N_f=4$ theory, as expected on physical grounds.

\subsubsection{Example: $SU(3)\times SU(2)$ with $(\mathbf{3},\mathbf{2})\oplus (\mathbf{1},\mathbf{2})$}

We consider the case $m=1$, $p=3$, i.e.\! the quiver gauge theory\footnote{\ This is the third example in appendix B of \cite{Cecotti:2012jx}.}
\be\label{eeexxaqwk}
\begin{gathered}
\xymatrix{ *++[o][F-]{\ \phantom{\Bigg|}\text{\begin{small}$SU(3)$
\end{small}}\phantom{\Bigg|}\ }
\ar@{-}[r] &*++[o][F-]{\ \phantom{\Bigg|}\text{\begin{small}$SU(2)$
\end{small}}\phantom{\Bigg|}\ }\ar@{-}[r] &\!\text{\fbox{$\phantom{nn\Bigg|}1\phantom{nn\Bigg|}$}}}
\end{gathered}
\ee
Again the $SU(2)$ YM coupling, $g_2$, is exactly marginal, while the $SU(3)$
coupling $g_3$ is asymptotically-free.
Taking $g_3\to0$ we remain with the free $SU(3)$ gauge vectors plus a matter system which is again  $SU(2)$ SQCD with $N_f=4$. The quiver for the interacting theory is $\vec A_2\boxtimes \widehat{A}(3,1)$. We number the nodes of the affine quiver as in the left figure
\ref{3333qff}.

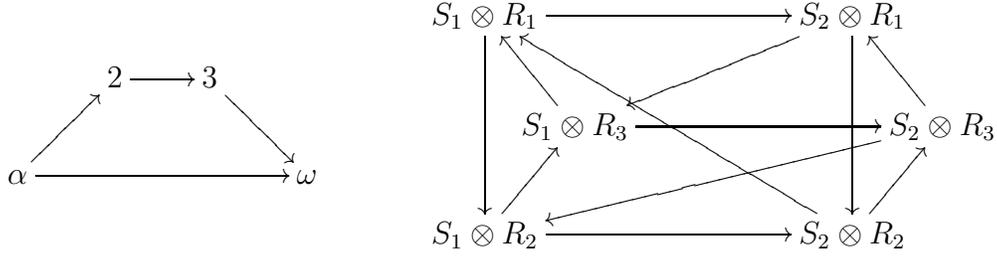
\begin{figure}
$$\begin{gathered}
\xymatrix{&2\ar[r]& 3\ar[rd]\\
\alpha\ar[ur]\ar[rrr]&&& \omega}
\end{gathered}
\qquad\quad
\begin{gathered}
\begin{xy} 0;<1pt,0pt>:<0pt,-1pt>:: 
(0,0) *+{S_1\otimes R_1} ="0",
(0,83) *+{S_1\otimes R_2} ="1",
(34,42) *+{S_1\otimes R_3} ="2",
(139,0) *+{S_2\otimes R_1} ="3",
(139,83) *+{S_2\otimes R_2} ="4",
(173,42) *+{S_2\otimes R_3} ="5",
"0", {\ar"1"},
"2", {\ar"0"},
"0", {\ar"3"},
"4", {\ar"0"},
"1", {\ar"2"},
"1", {\ar"4"},
"5", {\ar"1"},
"3", {\ar"2"},
"2", {\ar"5"},
"3", {\ar"4"},
"5", {\ar"3"},
"4", {\ar"5"},
\end{xy}
\end{gathered}
$$
\caption{\label{3333qff}\textsc{Left:} The $\widehat{A}(3,1)$ affine quiver.
\textsc{Right:} The constituent quiver of $\vec A_2\boxtimes \widehat{A}(3,1)$.}
\end{figure}

The regular simples are the simples with support on the nodes $2$ and $3$, $R_2$ and $R_3$, and the indecomposable $R_1$
of dimension $\alpha+\omega$.
One has $\tau R_a=R_{a+1}$ ($a\in\Z/3\Z$) and $\tau^3 R_a=R_a$.
The simples $S_{i,a}$ correspond to the six diadic objects
$S_i\otimes R_a$ ($i=1,2$) with Dirac pairing\footnote{\ We have changed the sign conventions by an overall $-1$.}
\begin{align}
\langle S_i\otimes R_a,S_i\otimes R_b\rangle&=-\delta^{(3)}_{a,b+1}+\delta^{(3)}_{a,b-1}\\
\langle S_1\otimes R_a,S_2\otimes R_b\rangle&=\delta^{(3)}_{a,b}-\delta^{(3)}_{a,b-1}
\end{align}
leading to the quiver in the right figure \ref{3333qff}. Mutating this quiver at any node one gets the $SU(2)$ $N_f=4$ quiver in the standard form, figure \ref{didid}. 

By the same physical argument
as in \S.\,\ref{lavx4}, we again expect the dualities of the constituent theory which survive the coupling to the $SU(3)$ SYM sector to have index 3 in the $S$-duality group. In facts, we expect the \emph{same} subconstituent duality group.
This time we have $p=3$,
so eqns.\eqref{uuuju0}\eqref{uuuju} apply, and
 we get the cyclic braid group $C\cb_3$. It has correctly index 3 in $\cb_3$. In facts it is the same braid group as in the previous example since: 
 
\begin{fact}[\!\!\cite{xxx}]
One has $C\cb_n\simeq G_{B_n}$.
\end{fact}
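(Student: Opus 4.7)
The plan is to prove $C\cb_n \cong G_{B_n}$ by identifying both groups with a common geometric object: the mapping class group of an annulus with $n$ marked interior points, equivalently the braid group on $n$ strands in a once-punctured disk. First I would recall the standard realization of $C\cb_n$ as the fundamental group of the unordered configuration space of $n$ points in an annulus $\mathbb{A} = S^1 \times [0,1]$, in which the generator $\cl_a$ is the elementary half-twist of the $a$-th and $(a+1)$-th marked points taken in the cyclic order on the core circle, while $\mathsf{Z}$ is the global rotation of $\mathbb{A}$ by $2\pi/n$ cyclically permuting the marked points. The cyclic braid relations \eqref{rel1}--\eqref{rel2} and the conjugation relation $\mathsf{Z}\cl_a\mathsf{Z}^{-1} = \cl_{a+1}$ are immediate in this model, so the algebraic presentation above indeed gives the full braid group of the annulus.

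Next, I would collapse one boundary component of $\mathbb{A}$ to a point and view this point as an $(n+1)$-st distinguished puncture. This identifies $\mathbb{A}$, up to homotopy, with a disk $D^2$ carrying $n+1$ marked points one of which is singled out. The configuration space of $n$ points in $\mathbb{A}$ then becomes the configuration space of $n+1$ points in $D^2$ with the distinguished point held fixed, and taking fundamental groups exhibits $C\cb_n$ as the subgroup of $\cb_{n+1}$ consisting of those braids that fix the first strand. By the \textbf{Fact} stated immediately before the statement under consideration, this subgroup is precisely the Artin braid group $G_{B_n}$ of type $B_n$, yielding the claimed isomorphism.

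For concreteness one may also give the isomorphism algebraically. If $t, s_1, \ldots, s_{n-1}$ are the standard Artin generators of $G_{B_n}$, with the length-$4$ relation $ts_1 ts_1 = s_1 t s_1 t$ at the ``marked'' node and the usual type-$A$ relations elsewhere, set $w = s_{n-1} s_{n-2} \cdots s_1 t$. Define $\phi: C\cb_n \to G_{B_n}$ by $\phi(\cl_i) = s_i$ for $i = 1, \ldots, n-1$, $\phi(\cl_n) = w s_1 w^{-1}$, and $\phi(\mathsf{Z}) = w$. One then checks that the defining relations of $C\cb_n$ are preserved: conjugation by $w$ cyclically permutes the $s_i$, and the $B_n$-braid relation $ts_1 ts_1 = s_1 t s_1 t$ translates, under the substitution, into the cyclic relation $\cl_n \cl_1 \cl_n = \cl_1 \cl_n \cl_1$. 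An inverse is built by expressing $t$ as a suitable conjugate of $\cl_n$ by a power of $\mathsf{Z}$ together with the $\cl_i$.

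The main obstacle is the bookkeeping required in the algebraic verification: one must confirm that $w$ really does implement the cyclic shift on the $\cl_a$'s, that the $B_n$-relation at $\{t,s_1\}$ produces precisely the ``seam'' relation closing the cyclic $\tilde{A}_{n-1}$-diagram, and that $\phi$ is bijective. The geometric route sidesteps all of this by directly identifying both groups with the same mapping class group; I would therefore present the geometric argument as the primary proof and include the explicit map $\phi$ only as a remark useful for concrete calculations with the $S$-duality action.
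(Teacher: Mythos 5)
Your proposal is correct, and it is essentially the argument of the reference the paper leans on: the paper states this Fact with no proof of its own, simply citing Kent--Peifer, whose paper contains exactly the two descriptions you give (the geometric identification of $C\cb_n$ with the braid group of the annulus, collapsed to the once-punctured disk, combined with Chow's characterization of $G_{B_n}$ as the stabilizer of the first strand in $\cb_{n+1}$ --- which is the \textbf{Fact} quoted immediately before this one --- together with an explicit algebraic isomorphism). So you are not taking a different route; you are supplying the proof the paper outsources. The geometric argument is clean and complete as stated: homotopy invariance of configuration spaces gives $UC_n(\mathbb{A})\simeq UC_n(D^2\setminus\{pt\})$, and the fibration over the position of the distinguished puncture identifies its fundamental group with the mixed braid group $\cb_{1,n}\subset\cb_{n+1}$.

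One small flag on the explicit map $\phi$: with the conventions $\phi(\mathsf{Z})=w$ and the relation $\mathsf{Z}\,\cl_a\,\mathsf{Z}^{-1}=\cl_{a+1}$, consistency forces $\phi(\cl_n)=w^{-1}s_1 w$ (equivalently $w\, s_{n-1}\, w^{-1}$), not $w\, s_1\, w^{-1}$ as written; otherwise conjugation by $w$ shifts the generators the wrong way around the cycle. This is precisely the bookkeeping you warned about, and it does not affect the geometric proof, which you rightly present as primary.
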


\subsubsection{Example: $SU(2N)\times SU(N)$ with bifundamental}

This is the example $p=2$, $m=N$.
The matter SCFT is $SU(N)$ SQCD with $N_f=2N$.
The situation is very similar to the one in \S.\,\ref{lavx4}, except that at $p=2$ the braid relations are model-dependent, and so we expect a different relation between $\boldsymbol{\cl}_1$ and $\boldsymbol{\cl}_2$. Using their explicit matrix realizations
in the Grothendieck group of the root category\footnote{\ We stress again that these matrices do not represent (in general) the action of the dualities on the charges of the constituent theories. } we checked the following relation 
\be\label{uuuuuiiwww}
\overbrace{\boldsymbol{\cl}_1\boldsymbol{\cl}_2\boldsymbol{\cl}_1\boldsymbol{\cl}_2\cdots}^{2N\ \text{factors}}=\overbrace{\boldsymbol{\cl}_2\boldsymbol{\cl}_1\boldsymbol{\cl}_2\boldsymbol{\cl}_1\cdots}^{2N\ \text{factors}}
\ee
 for all $N\leq 6$. Thus we expect that, for all this family of models, the essential part of the $S$-duality group is a quotient of the Artin braid group
 $\cb(I_2(2N))$ associated to the graph $I_2(2N)$
 \be
\xymatrix{\bigcirc\!\!\ar@<0.35ex>@{-}[rr]^{2N}\ar@<-0.35ex>@{-}[rr]&&\!\!\bigcirc}
\ee

\section{A more general framework for $S$-duality}\label{moregeneral}

The analysis of $S$-duality in $(G,L)$ models in  sect.\;\ref{produalities}
exploited the fact that they have a quiver, $G\boxtimes L$,  
of a very convenient form.
The methods of sect.\;\ref{produalities} cannot be applied directly to the models of greatest interest, the $D_p(G)$ and the $D_4^{(1,1)}(G)$, $E_r^{(1,1)}(G)$ SCFTs, since in these cases either no quiver with superpotential is known or
they are not convenient for our purposes. Even for the $SU(2)$ tubular models $D^{(1,1)}_4$, $E_r^{(1,1)}$, which do have nice quivers \cite{Cecotti:2011rv},
the quiver approach is not the best way to study their homological $S$-duality, and in ref.\!\cite{shepard} one used their alternative description in terms of coherent sheaves over weighted projective lines $\mathbb{X}(\boldsymbol{p})$. Here we replace the quiver approach of sect.\;\ref{produalities} with a precise version of the physically motivated idea of \textsf{META}-quivers \cite{Cecotti:2013lda}.
\medskip

Let $\vec G$ be the Dynkin quiver of type $G$ with the reference orientation in figure \ref{refDyqi}. We write $r\equiv r(G)$ for its rank.
 Following Bongartz and Gabriel \cite{Bongartz}, we see $\vec G$ as a bounded $\C$-linear category, whose objects are the nodes $i\in\vec G$ while the Hom-space 
$\mathrm{Hom}(i,j)$ is the vector $\C$-space over the paths in $\vec G$ connecting $i$ to $j$, composition being path concatenation.  

Let $\ch$ be a $\C$-linear Abelian category, with finite-dimensional Hom/Ext-spaces, which is \emph{hereditary} \cite{lenzinghandbook} i.e.
\be
\mathrm{Ext}^k(A,B)=0\ \text{for all }A,B\in \ch,\ \text{and }k\geq 2,
\ee
and has \emph{Serre duality} in the form
\be
\mathrm{Ext}^1(A,B)\simeq D\,\mathrm{Hom}(B,\tau A),
\ee
for a certain functor $\tau\colon\ch\to\ch$ (the AR translation).
Note that we do not ask $\ch$ to have projectives nor injectives; neither we ask the functor $\tau$ to be an equivalence.
We also do not require $\ch$ to have a \emph{tilting object}; instead we
impose the weaker condition that its Grothendieck group $K_0(\ch)$ is a finite-rank lattice.
Examples of categories $\ch$ satisfying the above requirements are:
\begin{itemize}
\item[1)] the category of modules
of a finite-dimensional basic hereditary algebra $\mathsf{mod}\,\C Q$
($Q$ an acyclic quiver). This category has enough projectives and injectives and also tilting objects;
\item[2)] the category of coherent sheaves, $\mathsf{coh}\, \mathbb{X}(\boldsymbol{p})$, over the weighted projective line $\mathbb{X}(\boldsymbol{p})$ of weights $\boldsymbol{p}\equiv (p_1,p_2,\dots, p_s)$. This category has no injectives nor projectives (so $\tau$ is an auto-equivalence), but it has tilting objects;
\item[3)] $C_p$ a stable tube of period $p$, which we can identity with the category $\mathsf{nil}\,\C \widehat{A}(p,0)$, where $\widehat{A}(p,0)$ is the $A^{(1)}_{p-1}$ affine Dynkin graph with the cyclic orientation and $\mathsf{nil}(\cdot)$ stands for the Abelian category of the \emph{nilpotent} (finite-dimensional) modules. This category has no injective, nor surjective, nor tilting objects, but it is uniserial. Moreover $\tau$ is an equivalence satisfying $\tau^p=\mathrm{Id}$.
\end{itemize}
For physical applications to UV complete 4d $\cn=2$ QFTs, we are interested only in a subset of the above examples: 1) with $Q$ Dynkin or affine, 2) with $\chi(\mathbb{X}(\boldsymbol{p}))\equiv 2-\sum_{i=1}^s(1- 1/p_i)\geq 0$, and 3). 
Examples 2) with $\chi(\mathbb{X}(\boldsymbol{p}))>0$ are equivalent to examples 1) with $Q$ affine.

\medskip

We write $\vec G(\ch)$ for the category of linear functors from $\vec G$ to $\ch$
\be
\vec G(\ch)\equiv \mathsf{Funct}(\vec G, \ch).
\ee
A functor $\cx\in \vec G(\ch)$ associates an object $\cx(i)\in\ch$ to each node $i\in \vec G$ and a morphism
\be
\cx(\psi)\in \mathrm{Hom}_\ch\big(\cx(i),\cx(j)\big)
\ee to each arrow $i \xrightarrow{\psi} j$ of $\vec G$.
$\vec G(\ch)$ is a linear Abelian category of global dimension at most $2$, and all
auto-equivalence $\sigma$ of 
$\ch$ induce an auto-equivalence 
$1\otimes \sigma$ of $\vec G(\ch)$ \cite{stillabelian}
\be
1\otimes \sigma\colon \vec G(\ch)\to \vec G(\ch),\qquad  1\otimes \sigma\colon \cx\mapsto\sigma\circ \cx\in\mathsf{Funct}(\vec G,\ch).
\ee

In the special case $\ch=\mathsf{vect}$, the functor category
$\vec G(\mathsf{vect})$ is just the category of (finite-dimensional) modules of the path algebra $\C \vec G$, 
\be
\vec G(\mathsf{vect})=\mathsf{mod}\,\C \vec G.
\ee
More generally, taking $\ch= \mathsf{mod}\,\C G^\prime$ or $\ch=\mathsf{mod}\,\C \widehat{H}$, we get
\be
\vec G(\mathsf{mod}\,\C G^\prime)\simeq \mathsf{mod}(\C G\times \C G^\prime),\qquad 
\vec G(\mathsf{mod}\,\C \widehat{H})
\simeq \mathsf{mod}(\C G\times \C \widehat{H}),
\ee 
so that the functor category $\vec G(\ch)$ may be seen as the natural  generalization of the modules of a product of two hereditary algebras.

We consider the bounded derived category $D^b \vec G(\ch)$. From $\vec G(\mathsf{vect})$
and $\ch$ the derived category inherits the two auto-equivalences
$\tau_G\otimes 1$ and $1\otimes\tau_\ch$ analogous to the ones studied in \cite{kellerper} when $\ch$ is a module category (we use the same notation in the general case). Hence we may define the orbit category
\be\label{rrrrbq}
D^b \vec G(\ch)\Big/\langle \tau_G\otimes \tau_\ch\rangle^\Z,
\ee
the cluster category
\be\label{mmzao}
\mathscr{C}(G,\ch)= \mathsf{Hu}_\triangle\!\Big(D^b \vec G(\ch)\Big/\langle \tau_G\otimes \tau_\ch\rangle^\Z\Big),
\ee
and the root category
\be\label{mmzao}
\mathscr{R}(G,\ch)= \mathsf{Hu}_\triangle\!\Big(D^b \vec G(\ch)\Big/\big[2\Z\big]\Big),
\ee

The correspondence between our 
$\cn=2$ QFTs and cluster categories is then
\begin{equation}\label{sssghjsk}
\begin{minipage}{300pt}
\begin{tabular}{cc}\hline\hline
QFT & cluster category\\\hline
$(G, G^\prime)$ & $\mathscr{C}(G,\mathsf{mod}\,\C G^\prime)\equiv \mathscr{C}(G,G^\prime)$\\
$(G, \widehat{H})$ & 
$\mathscr{C}(G,\mathsf{mod}\,\C \widehat{H})\simeq \mathscr{C}(G,\mathsf{coh}\,\mathbb{X}(p_1,p_2,p_3))$\footnote{ With $\sum_{i=1}^31/p_i>1$.}
\\
$D_p(G)$ & $\mathscr{C}(G, C_p)$\\
$D^{(1,1)}_4(G)$ & $\mathscr{C}(G,\mathsf{coh}\,\mathbb{X}(2,2,2,2))$\\
$E^{(1,1)}_6(G)$ & $\mathscr{C}(G,\mathsf{coh}\,\mathbb{X}(3,3,3))$\\
$E^{(1,1)}_7(G)$ & $\mathscr{C}(G,\mathsf{coh}\,\mathbb{X}(4,4,2))$\\
$E^{(1,1)}_8(G)$ & $\mathscr{C}(G,\mathsf{coh}\,\mathbb{X}(6,3,2))$\\\hline\hline
\end{tabular}
\end{minipage}
\end{equation}
The $S$-duality group
of each $\cn=2$ model (of this class) is then
\be
\mathrm{Aut}\,\mathscr{C}(G,\ch)\big/\mathrm{Aut}\,\mathscr{C}(G,\ch)^0,
\ee 
which is again essentially equal to the
correspondent auto-equivalence group for the root category $\mathscr{R}(G,\ch)$ using the argument in \S\S.\,\ref{comppparing},\;\ref{fffistcaase}.

\subsection{Diadic functors}
Let $X\in\vec G(\mathsf{vect})$ be a Dynkin module; we write $x_i=\dim X(i)$
and choose bases, so that $X(i)$ gets identified with $\C^{x_i}$ and 
$X(\psi)$ with a $x_{t(\psi)}\times x_{s(\psi)}$ complex matrix. 

We define a product $\otimes\colon \vec G(\mathsf{vect})\times \ch\to \vec G(\ch)$ as follows (here $A\in\ch$ and $X\in\vec G(\mathsf{vect})$)
\begin{align}
(X\otimes A)(i)&= \overbrace{A\oplus A\oplus\cdots \oplus A}^{x_i\ \text{summands}}\\
(X\otimes A)(\psi)&= X(\psi)\otimes \mathrm{Id}_A.\label{lllaq12}
\end{align} 
Functors of the form $X\otimes A$ will be called \emph{diadic.}

The category $\vec G(\ch)$ is homologically generated by the 
diadic functors.\footnote{\ For this an other homological assertions on the category $\vec G(\ch)$, see appendix B.}
In facts, the functors of the form
$S_i\otimes A$, where $S_i$ are the simples of $\vec G(\mathsf{vect})$, suffice to generate the full $\vec G(\ch)$. The same statement holds with the $S_i$ replaced by their projective covers $P_i$ (or injective envelopes $I_i$).
The following ``Kunneth formula''
\be\label{kunneth}
\mathrm{Ext}^k_{\vec G(\ch)}(X\otimes A, Y\otimes B) =\bigoplus_{i+j=k}
\mathrm{Ext}^i_{\vec G(\mathsf{vect})}(X,Y)\otimes \mathrm{Ext}^j_\ch(A,B)
\ee
is shown in appendix \ref{homapp}. In particular the Euler form
\be
\chi(\ca,\cb)=\sum_{k\in\Z} (-1)^k\dim \mathrm{Ext}^k(\ca,\cb),
\ee
for diadic objects factorizes
\be
\chi(X\otimes A, Y\otimes B)=\chi(X,Y)\cdot \chi(A,B).
\ee

Suppose that $\ch$ has a tilting object
$T=\bigoplus_{s\in S}T_s$, as it happens in all our examples but $C_p$. Let $P_i$ be the projective cover of $S_i$ in $\vec G(\mathsf{vect})$. We claim that
\be
\ct =\bigoplus_{i=1}^r \bigoplus_{s\in S} P_i\otimes T_s
\ee 
is a tilting object in $\vec G(\ch)$.
Indeed, one has to show that:
\textit{i)} $\mathrm{Ext}^k(\ct,\ct)=0$ for $k\geq 1$, and \textit{ii)} the $P_i\otimes T_s$ generate $\vec G(\ch)$. Statement \textit{ii)} follows from the observations after eqn.\eqref{lllaq12}, while \textit{i)}
is automatic in view of eqn.\eqref{kunneth} and the definition of tilting object $T$. Then
\be
D^b\big(\vec G(\ch)\big)\simeq D^b\big(\mathsf{mod}\,\mathrm{End}_{\vec G(\ch)}(\ct)\big),
\ee 
and 
\be
K_0(\vec G(\ch))\simeq\Z^{r|S|}.
\ee
 The last two equations remain true if $T$ is a tilting object in $D^b(\ch)$ (which is simply the repetitive category of the hereditary category $\ch$ \cite{lenzinghandbook}).
Now we are in a position to check 
the consistency of the identifications \eqref{sssghjsk} with the ones given by the 4d/2d correspondence.
We only need the equivalences \cite{BKL}
\begin{align}
D^b(\mathsf{coh}\,\mathbb{X}(4,4,2))&\simeq \underline{\mathsf{vect}}\,\mathbb{X}(4,4,2)\simeq D^b(\mathsf{mod}(\C A_3\times \C A_3))\\
D^b(\mathsf{coh}\,\mathbb{X}(6,3,2))&\simeq \underline{\mathsf{vect}}\,\mathbb{X}(6,3,2)\simeq D^b(\mathsf{mod}(\C A_5\times \C A_2)),
\end{align}
to realize that the categories of the $E^{(1,1)}_7(A_{n-1})$, $E^{(1,1)}_8(A_{n-1})$ SCFTs are properly identified. All other cases may be seen as straightforward generalizations of these ones.
In appendix B one shows the following

\begin{fact}
Let $X\in \vec G(\mathsf{vect})$ be a rigid brick. Then the linear functor
$J_X\colon \ch\to \vec G(\ch)$,
\be
J_X\colon A\mapsto  X\otimes A,\qquad
J_X\colon \psi\mapsto \mathrm{Id}_X\otimes\psi,
\ee
embeds $\ch$ as a full exact subcategory closed under extensions.
The left (or right) derived functor $J_X$ embeds fully faithfully $D(\ch)$ into
$D(\vec G(\ch))$.\end{fact}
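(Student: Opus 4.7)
The plan is to prove both assertions by a uniform application of the Kunneth formula \eqref{kunneth} with $Y=X$, exploiting that $X$ is simultaneously a brick ($\mathrm{End}_{\vec G(\mathsf{vect})}X=\C$), rigid ($\mathrm{Ext}^1_{\vec G(\mathsf{vect})}(X,X)=0$), and that $\vec G(\mathsf{vect})=\mathsf{mod}\,\C\vec G$ has global dimension $\leq 1$ (so $\mathrm{Ext}^{\geq 2}(X,X)=0$ automatically). All four properties—exactness, full faithfulness, closure under extensions, and derived full faithfulness—are then immediate consequences.

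For \emph{exactness}, composition with the evaluation functor $\mathrm{ev}_i\colon \vec G(\ch)\to\ch$ at each node $i\in\vec G$ is exact by the pointwise definition of exactness in $\vec G(\ch)$, and $\mathrm{ev}_i\circ J_X=(-)^{\oplus x_i}$ with $x_i=\dim X(i)$, which is exact in $\ch$; hence $J_X$ is exact. For \emph{full faithfulness} on $\ch$ itself, Kunneth in degree zero yields
\[
\mathrm{Hom}_{\vec G(\ch)}(X\otimes A,X\otimes B)=\mathrm{End}(X)\otimes\mathrm{Hom}_\ch(A,B)=\mathrm{Hom}_\ch(A,B),
\]
and by construction the isomorphism is realized by $J_X$. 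For \emph{closure under extensions}, Kunneth in degree one gives
\[
\mathrm{Ext}^1_{\vec G(\ch)}(X\otimes A,X\otimes B)=\mathrm{End}(X)\otimes\mathrm{Ext}^1_\ch(A,B)\oplus\mathrm{Ext}^1_{\vec G(\mathsf{vect})}(X,X)\otimes\mathrm{Hom}_\ch(A,B),
\]
the second summand vanishing by rigidity. Therefore every short exact sequence $0\to X\otimes A\to Z\to X\otimes B\to 0$ in $\vec G(\ch)$ is the image under $J_X$ of a short exact sequence $0\to A\to C\to B\to 0$ in $\ch$, and in particular $Z\simeq X\otimes C$ is diadic.

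For the \emph{derived embedding}, the same Kunneth computation, together with $\mathrm{Ext}^{\geq 2}_{\vec G(\mathsf{vect})}(X,X)=0$, collapses to the uniform identity
\[
\mathrm{Ext}^k_{\vec G(\ch)}(X\otimes A,X\otimes B)=\mathrm{Ext}^k_\ch(A,B),\qquad\forall\,k\geq 0,\ \forall\,A,B\in\ch.
\]
Since $J_X$ is exact it derives trivially to a triangulated functor $\mathbf{L}J_X\colon D(\ch)\to D(\vec G(\ch))$, and the above identity says precisely that $\mathbf{L}J_X$ induces isomorphisms $\mathrm{Hom}_{D(\ch)}(A,B[k])\xrightarrow{\sim}\mathrm{Hom}_{D(\vec G(\ch))}(J_XA,J_XB[k])$ for all $A,B$ in the heart $\ch$ and all $k\in\Z$.

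The main obstacle is the standard promotion of this fact on the heart to full faithfulness on all bounded complexes. I would handle it by induction on the amplitude of complexes, using the canonical truncation triangles $\sigma_{\leq n-1}F\to F\to H^n(F)[-n]\to\sigma_{\leq n-1}F[1]$: applying $\mathbf{L}J_X$ to such a triangle and using the five-lemma in the long exact sequence obtained by $\mathrm{Hom}(-,G)$ reduces the claim successively to the already-proven case of objects of $\ch$. (Equivalently, invoke Be\u{\i}linson's lemma: a triangulated functor between bounded derived categories of hereditary—or finite global-dimension—abelian categories that is fully faithful on the heart and compatible with all $\mathrm{Ext}^k$'s is fully faithful.) Since $\vec G(\ch)$ has global dimension $\leq 2$, no convergence issues arise and the argument yields the assertion unconditionally.
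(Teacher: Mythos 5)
Your proof is correct and follows essentially the same route as the paper's Appendix~B argument: the Kunneth formula \eqref{kunneth} applied with $Y=X$, using the brick property for full faithfulness in degree zero and rigidity (plus heredity of $\vec G(\mathsf{vect})$) to kill the cross terms in degree one and above. The only addition is your explicit d\'evissage for the derived statement, which the paper leaves implicit but which is the standard completion of the same computation.
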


Recall that by Gabriel theorem \cite{assem} all indecomposable objects $X\in\vec G(\mathsf{vect})$ are rigid bricks. Their dimension vectors $[X]\in K_0(\vec G(\mathsf{vect}))$ are the positive roots of 
$G$ under the natural identification of the Grothendieck group and root lattice of $G$,
and an indecomposable $X$ is uniquely identified (up to isomorphism) by its Grothendieck class $[X]$.

\begin{corl} $X\in \vec G(\mathsf{vect})$ indecomposable. Then $J_X\colon \ch\to \vec G(\ch)$ sends bricks into bricks and preserves the spin and $R$-symmetry charges of the corresponding BPS particle (cfr.\! eqn.\eqref{mmmkkklq}).\end{corl}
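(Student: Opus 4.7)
The strategy is to read everything off the K\"unneth formula \eqref{kunneth} together with the fully faithful embedding statement of the preceding \textbf{Fact}. First, by Gabriel's theorem an indecomposable $X\in \vec G(\mathsf{vect})\simeq \mathsf{mod}\,\C\vec G$ is automatically a rigid brick, so the hypotheses of the preceding \textbf{Fact} are met and $J_X$ is a fully faithful exact embedding of $\ch$ into $\vec G(\ch)$.

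To see that bricks go to bricks, I would simply apply \eqref{kunneth} at $k=0$ to the diadic pair $(X\otimes A,X\otimes A)$: since $\mathrm{End}_{\vec G(\mathsf{vect})}(X)=\C$ (Gabriel again), one gets
\[
\mathrm{End}_{\vec G(\ch)}(X\otimes A)\;=\;\mathrm{End}(X)\otimes\mathrm{End}(A)\;=\;\C
\]
whenever $A$ is a brick in $\ch$. Thus $J_X$ sends bricks to bricks.

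For the spin and $R$-symmetry statement I would use the description of BPS content in \textbf{Step 4} of \S\ref{bpsvscluster}: a BPS family is labelled by its index variety $\cm_{\boldsymbol x}$, and the full $SU(2)_R\times SU(2)_L$ content \eqref{ptqz} is read from the Dolbeault cohomology $H^{\bullet\bullet}(\cm_{\boldsymbol x})$. So it suffices to show that, for a (generically) stable family $\{A_\lambda\}_{\lambda\in\cm_A}\subset\ch$, the diadic family $\{X\otimes A_\lambda\}_{\lambda\in\cm_A}\subset \vec G(\ch)$ has the same index variety. Full faithfulness of $J_X$ on the derived level gives $X\otimes A_\lambda\simeq X\otimes A_{\lambda'}$ iff $A_\lambda\simeq A_{\lambda'}$, so the two parametrising varieties are canonically isomorphic. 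Moreover, K\"unneth at $k=1$ yields
\[
\mathrm{Ext}^1(X\otimes A,X\otimes A)=\mathrm{End}(X)\otimes\mathrm{Ext}^1(A,A)\,\oplus\,\mathrm{Ext}^1(X,X)\otimes\mathrm{End}(A)=\mathrm{Ext}^1(A,A),
\]
because $X$ is rigid, so infinitesimal deformations also agree and $J_X$ induces an isomorphism of deformation spaces, confirming that the two moduli varieties agree not just as sets but as complex varieties. Pushing forward the central charge $Z$ through $[X\otimes A]=[X]\cdot [A]$ makes stability of $A$ equivalent to stability of $X\otimes A$ in the image subcategory, so the identification is an identification of stable families.

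Once the index varieties are identified, $H^{\bullet\bullet}(\cm_{X\otimes A})=H^{\bullet\bullet}(\cm_A)$ as bigraded vector spaces with their Lefschetz $SU(2)_L$ and $U(1)_R$ actions, and the representation content \eqref{ptqz} of the corresponding BPS multiplets coincides. The only non-routine step is the stability matching, but since $X$ is rigid its contribution to $Z(X\otimes A)$ is just an overall complex factor $Z([X])$, so up to a harmless phase the stability condition on the family $\{X\otimes A_\lambda\}$ coincides with the one on $\{A_\lambda\}$; this is where one should be a little careful, but no real obstacle arises.
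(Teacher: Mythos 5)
Your argument is correct and is essentially the one the paper intends: the Corollary is presented as an immediate consequence of the preceding \textbf{Fact} together with Gabriel's theorem (indecomposables of $\vec G(\mathsf{vect})$ are rigid bricks), and your use of the K\"unneth formula \eqref{kunneth} at $k=0$ for the brick property and at $k=1$ (with rigidity of $X$) to identify the deformation spaces and index varieties, hence the $SU(2)_R\times SU(2)_L$ content of \eqref{ptqz}, is exactly the intended mechanism. The only loose point is the claim that $Z$ contributes ``an overall factor $Z([X])$'' --- a general stability function on $K_0(\vec G(\mathsf{vect}))\otimes K_0(\ch)$ need not factorize --- but this is immaterial here since the spin and $R$-charges are read off the index variety, which your full-faithfulness and $\mathrm{Ext}^1$ matching already pin down.
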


We shall write $\boldsymbol{J}_X$ for the homomorphism of Grothendieck groups induced by $J_X$
\be
\boldsymbol{J}_X\colon K_0(\ch)\to K_0(\vec G(\ch)),\qquad 
\boldsymbol{J}_X[A]\mapsto [J_X(A)]\equiv [X]\otimes [A]. 
\ee

\subsection{The canonical SYM sector}

As mentioned above,\footnote{\ See ref.\!\cite{Cecotti:2012va} for more details.} the categories $\mathsf{coh}\,\mathbb{X}(\boldsymbol{p})\equiv \vec A_1(\mathsf{coh}\,\mathbb{X}(\boldsymbol{p}))$ with $\chi(\mathbb{X}(\boldsymbol{p}))\geq0$ describe UV complete $SU(2)$ gauge theories coupled to a set $(D_{p_1},\cdots D_{p_s})$ of Argyres-Douglas matter systems. 
The BPS spectrum at weak Yang-Mills coupling consists of the $W$ boson, the BPS states of the $D_{p_i}$ matter systems, and magnetically charged states (all of them being hypermultiplets for $\chi(\mathbb{X}(\boldsymbol{p}))>0$, while for $\chi(\mathbb{X}(\boldsymbol{p}))=0$ we have also dyonic vector multiplets).
In all cases there is a canonical choice of $S$-duality frame in which the $W$ boson corresponds to the $\mathbb{P}^1$-family of skyscraper brick sheaves
$\{\cs_\lambda\}_{\lambda\in\mathbb{P}^1}$ \cite{Cecotti:2012va,shepard}.

Likewise, the functor category $\vec G(\mathsf{coh}\,\mathbb{X}(\boldsymbol{p}))$ describes $\cn=2$ SYM with gauge group $G$ coupled to a set $(D_{p_1}(G),\cdots D_{p_s}(G))$ of SCFTs of type $D_p(G)$ \cite{Cecotti:2013lda}. In the standard duality frame
the $W$ bosons of the $G$ SYM sector are described as follows.
By Gabriel theorem \cite{assem}, the map $\vec {G}(\mathsf{vect})\to K_0(\vec {G}(\mathsf{vect}))$, $X\mapsto \dim X$, yields a one-to-one correspondence between the rigid bricks $X_\alpha\in \vec G(\mathsf{vect})$ and the positive roots of $G$,
\be
\dim X_\alpha =\alpha \in \Delta^+(G).
\ee
The $W$ bosons of $G$, associated to the positive roots correspond to the $\mathbb{P}^1$-families of functors
\be
J_{X_\alpha}(\cs_\lambda)\in \vec G(\mathsf{coh}\,\mathbb{X}(\boldsymbol{p})),\qquad
\alpha\in\Delta^+(G),\ \lambda\in\mathbb{P}^1. 
\ee
This choice of duality frame defines the magnetic charges $m_i(-)$ for the gauge group $G$
defined as
\be
C_{ij}\, m_j(X)=\chi(J_{\alpha_i}(\cs_\lambda),X)-\chi(X,J_{\alpha_i}(\cs_\lambda)), 
\ee 
where $C_{ij}$ is the Cartan matrix of $G$. The states surviving the decoupling limit $g_\text{YM}\to0$ are described by the category controlled by the functions $\{m_i(-)\}$; once eliminated the gauge $W$-bosons, what remains correspond to the matter constituents $(D_{p_1}(G),\cdots D_{p_s}(G))$.

In other words, the electric charges of the $W$-bosons associated to the simple roots $\alpha_i$ of the  gauge group $G$ have the form $\alpha_i\otimes e$, 
with $e\equiv[\cs_\lambda]$ the charge vector of the $SU(2)$ $W$-boson, while
the magnetic charges have the form $\alpha_i^\vee\otimes m$, with $m$ the $SU(2)$ magnetic charge.
Concretely,  a functor $\cx\in \vec G(\mathsf{coh}\,\mathbb{X}(\boldsymbol{p}))$, associates to each object (node) $i\in\vec G$ a coherent sheaf $\cx(i)\in \mathsf{coh}\,\mathbb{X}(\boldsymbol{p})$. The electric and magnetic $G$ charges of $\cx$ are suitable linear combinations of the degrees and ranks of the sheaves $\cx(i)$. The charge vector of the $i$--th simple $W$-boson has support on the $i$--th object and vanishing rank, while the $i$--th magnetic charge of the functor $\cx$ is
\be\label{uuuuuwwwwx}
m_i(\cx)= \mathrm{rank}\,\cx(i).
\ee

 In particular,
for the category $\mathsf{coh}\,\mathbb{X}(\boldsymbol{p})\equiv \vec A_1(\mathsf{coh}\,\mathbb{X}(\boldsymbol{p}))$, the $SU(2)$ magnetic charge coincides with the rank. Hence the controlled category is the Abelian subcategory of \emph{finite length} objects $\mathscr{S}$ 
\cite{lenzinghandbook}
\be\label{uuuuqwazw}
\mathscr{S}=\bigvee_{\lambda\in \mathbb{P}^1} C_\lambda
\ee
with all $C_\lambda$ homogeneous stable tubes but for $s$ exceptional ones which have periods $p_i$.

The subcategory of $\vec G(\mathsf{coh}\,\mathbb{X}(\boldsymbol{p}))$ controlled by the magnetic charges
\eqref{uuuuuwwwwx} is then
\be
\vec G(\mathscr{S})=\bigvee_{\lambda \in \mathbb{P}^1} \vec G(C_\lambda),
\ee
and, in the decoupling limit $g_\text{YM}\to 0$ we may study the controlled category of functors tube by tube. Then we are reduced to the case of $\vec G(C_p)$, with $C_p$ a stable tube of period $p$.

\subsection{Auto-equivalences}

\subsubsection{Induced auto-equivalences}

Let $\sigma$ be an auto-equivalence of the derived category $D^b\ch$. It extends to an equivalence $1\otimes \sigma$ of the derived category $D^b\vec G(\ch)$ and, since it commutes with
$T\equiv \tau_G\otimes \tau_\ch$ as well as with the double shift $[2]$, also to auto-equivalences of the cluster $\mathscr{C}(G,\ch)$ and root  $\mathscr{R}(G,\ch)$ categories.
The same statement holds for the elements $\vartheta$ of the derived Picard group $\mathrm{Aut}(D^b\mathsf{mod}\,\C \vec G)$, 
which induce auto-equivalences $\vartheta\otimes 1$ of all three categories $D^b\vec G(\ch)$,
$\mathscr{C}(G,\ch)$ and $\mathscr{R}(G,\ch)$.

The group homomorphism
\be
1\otimes -\colon \mathrm{Aut}\,D^b\ch\to\mathrm{Aut}\,\mathscr{C}(G,\ch) 
\ee
has kernel $\tau^{h(G)}[-2]$ or $\tau^{h(G)/2}[-1]$, depending on $G$. 
In all cases
\be
\ker(1\otimes -)\subset Z(\mathrm{Aut}\, D^b \ch).
\ee
In particular, the auto-equivalences generated by telescopic functors $L_A$ of $D^b\ch$ which satisfy non-trivial braiding relations (if any), are mapped to non trivial auto-equivalences
\be\label{indueeqac}
\mathscr{L}_A\equiv 1\otimes L_A,
\ee
which satisfy the same braiding relations as the $L_A$.

\subsubsection{Telescopic functors} 

In addition, to the auto-equivalence inherited from $D^b\ch$ (and $D^b\mathsf{mod}\,\C \vec G$), we may have new auto-equivalences produced by the telescopic functors associated to spherical objects in $\mathscr{C}(G,\ch)$. The arguments of \S\S.\,\ref{comppparing},\;\ref{fffistcaase}
extend to this more general case: the spherical objects in 
$\mathscr{C}(G,\ch)$ are in correspondence with the spherical (half)orbits in the root category $\mathscr{R}(G,\ch)$.

The spherical (half)orbits which are easy to describe are the diadic ones. Here we focus on the group of dualities associated to the diadic telescopic functors.
As before, we have

\begin{fact} If $\gcd(p,\tilde h(G))>1$, the tensor product of an object in $A\in D^b\ch$ which belongs to a spherical orbit $\{\tau^k A\}$ times an object in $X\in D^b\mathscr{R}(G)$ which belongs to a spherical (half)orbit $\{\tau^k X\}$ yields a spherical (half)orbit $T^k(X\otimes A)\in \mathscr{R}(G,\ch)$, and hence telescopic autoequivalences $L_{X\otimes \tau^a A}$, where $a\in \Z/\!\gcd(p,\tilde h)\Z$.
If $\gcd(p,\tilde h)>2$, the telescopic auto-equivalences $L_{X\otimes\tau^a A}$ satisfy the braiding relations
of $C\cb_{\gcd(p,\tilde h)}$.
\end{fact}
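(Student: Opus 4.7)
The plan is to adapt the template established in Section 5 to the present, more abstract setting by exploiting the Kunneth formula \eqref{kunneth}, which reduces all Ext-space computations on diadic functors to products of Ext spaces in $\mathscr{R}(G)$ and in $D^b\ch$ separately. Since $T=\tau_G\otimes\tau_\ch$ acts on a diadic object as $T^k(X\otimes A)=\tau^kX\otimes\tau^kA$, the $T$-(half)period of $X\otimes A$ in $\mathscr{R}(G,\ch)$ is $\mathrm{lcm}(p,\tilde h(G))$, just as in \S\ref{telinR}.

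The first substantive step is to verify the sphericality condition in the form of eqn.\eqref{four2}: using Kunneth,
\[
\dim\mathrm{Hom}_{\mathscr{R}(G,\ch)}\!\bigl(X\otimes A,\,T^k(X\otimes A)[j]\bigr)
=\!\!\sum_{j_1+j_2=j}\!\!\dim\mathrm{Ext}^{j_1}(X,\tau^kX)\,\dim\mathrm{Ext}^{j_2}(A,\tau^kA).
\]
The spherical (half)orbit hypotheses on $X$ and $A$ force the $X$-factor to be non-zero only for $k\equiv 0,1\pmod{\tilde h(G)}$ and the $A$-factor only for $k\equiv 0,1\pmod{p}$. By Chinese remainder, simultaneous solvability when $\gcd(p,\tilde h)>1$ is possible only for $k\equiv 0,1\pmod{\gcd(p,\tilde h)}$, producing exactly two compatible residues in $\Z/\mathrm{lcm}(p,\tilde h)\Z$, each contributing $1$. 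This yields the sum $2$ required for sphericality.

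The second step mirrors \S\ref{qqqqadx1}: acting with $\tau_G\otimes 1$ on the orbit $\{T^k(X\otimes A)\}$ produces $d\equiv\gcd(p,\tilde h)$ distinct spherical (half)orbits, with representatives $X\otimes\tau^aA$, $a\in\Z/d\Z$. The braid relations then follow from a second Kunneth computation: one checks that
\[
\sum_{k,j}\dim\mathrm{Hom}\!\bigl(X\otimes\tau^aA,\,T^k(X\otimes\tau^bA)[j]\bigr)=2\,\delta^{(d)}_{a,b}+\delta^{(d)}_{a,b+1}+\delta^{(d)}_{a,b-1},
\]
exhibiting an $(A_{d-1})$-configuration arranged cyclically. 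Applying \textbf{Fact \ref{braidtele}} to each consecutive pair gives the cubic braid relations $\mathscr{L}_a\mathscr{L}_{a+1}\mathscr{L}_a=\mathscr{L}_{a+1}\mathscr{L}_a\mathscr{L}_{a+1}$, while non-consecutive pairs commute. Finally, the ``obvious'' auto-equivalence $\mathsf Z$ built out of $\tau_G\otimes 1$ and $1\otimes\tau_\ch$ (in analogy with \eqref{zamozamo}) satisfies $\mathsf Z\,\mathscr{L}_a\mathsf Z^{-1}=\mathscr{L}_{a+1}$, producing the cyclic braid group $C\cb_{d}$ as defined in \S\ref{999uuty87}.

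The main obstacle is not the combinatorics itself but the technical bridge between the setting where Kunneth holds and the setting where sphericality is defined. Kunneth \eqref{kunneth} lives in $D^b\vec G(\ch)$, whereas the spherical orbits and the telescopic functors $L_{X\otimes\tau^aA}$ live in $\mathscr{R}(G,\ch)$, which in general is the triangular hull of $D^b\vec G(\ch)/[2\Z]$ and need not coincide with the orbit category. One must therefore repeat the argument of \S\ref{fffistcaase} showing that the spherical orbits all come from genuine objects of $D^b\vec G(\ch)$ (exploiting \textbf{Fact \ref{norigg}} that non-orbit objects are never rigid), and that the $L$-functor built on the derived category descends to the telescopic functor on the root category via the relevant projection. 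A secondary subtlety is the half-orbit bookkeeping: when one of $X$, $A$ contributes a half-orbit (so that the relation $\tau^{\tilde h/2}X\simeq X[1]$ or $\tau^{p/2}A\simeq A[m]$ enters), one must track the parity shifts carefully through the Kunneth sum to confirm that the full orbit or half-orbit structure of $X\otimes A$ comes out consistently; this is the analogue of the case analysis after eqn.\eqref{miniide} and produces no genuine new complication, only book-keeping.
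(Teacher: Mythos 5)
Your proposal is correct and follows essentially the same route the paper intends: the paper states this Fact with ``as before,'' deferring to the diadic-orbit analysis of section 5 (the sphericality criterion for tensor products, the $\tau\otimes 1$ count of $\gcd$ many orbits, and the $(A_2)$-configuration computation yielding the $C\cb_d$ relations), and your argument is precisely that analysis transported to $\vec G(\ch)$ via the Kunneth formula of appendix B. Your closing paragraph on descending from $D^b\vec G(\ch)$ to the triangular hull $\mathscr{R}(G,\ch)$ correctly identifies the one genuine technical bridge, which the paper handles by invoking \S\S\,\ref{comppparing} and \ref{fffistcaase}.
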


The induced auto-equivalences $\mathscr{L}_A$ act on the telescopic auto-equivalence $L_{X\otimes B}$ by the adjoint action
\be
\mathscr{L}_A\, L_{X\otimes B}=L_{X\otimes\mathscr{L}_A(B)}\,\mathscr{L}_A,
\ee
thus generating the telescopic functors associated to new spherical orbits.

\subsection{The case $\ch=\mathsf{coh}\,\mathbb{X}(\boldsymbol{p})$ with $\mathbb{X}(\boldsymbol{p})$   tubular}

The essentially new case, where novel phenomena appear is when $\ch$ is the category of coherent sheaves over a tubular line.

\subsubsection{Review of the $G=SU(2)$ case}
The hereditary categories with interesting telescopic autoequivalences are the coherent sheaves of the four tubular weighed projective lines, $\mathsf{coh}\,\mathbb{X}(\boldsymbol{p})$ with $\chi(\boldsymbol{p})=0$, having weights,
\be
\boldsymbol{p}=(2,2,2,2),\quad (3,3,3),\quad (4,4,2),\quad (6,3,2).
\ee
They correspond to the four  $SU(2)$ SCFT with matter in the fundamental. In this sub-section we shall write simply $\mathbb{X}$ for a weighted projective line of tubular type.
We set $p\equiv p_1\equiv \mathrm{lcm}(p_i)$ equal to a maximal weight, that is, respectively, $2$, $3$, $4$, and $6$. One has
\be
\tau^p=\mathrm{Id}\qquad\text{in }D^b\mathsf{coh}\,\mathbb{X}(\boldsymbol{p})\ \ \text{with }\chi(\mathbb{X}(\boldsymbol{p}))=0.
\ee

On $D^b \mathsf{coh}\,\mathbb{X}(\boldsymbol{p})$
act 
two independent telescopic functors which we may take to be associated to the spherical $\tau$-orbits of the structure sheaf $\co$ and of a simple sheaf with support at an exceptional point of maximal period $p$, $\cs_{i,1}$. These two telescopic functors
satisfy the $\cb_3$ relation \cite{lenzingmeltzer,meltzer}
\be
L_\co L_{\cs_{i,1}} L_\co=
L_{\cs_{i,1}} L_\co L_{\cs_{i,1}},
\ee
and in fact generate the full $\cb_3$ group.
The center $Z(\cb_3)$ of $\cb_3$ is the infinite cyclic group generated by 
\be
(L_\co L_{\cs_{i,1}})^3=\begin{cases}\tau^{-3}[1] &p\neq 3\\
\pi_{23}[1] & p=3
\end{cases}
\ee
where $\pi_{23}$ is the permutation which exchanges the last two special points.
The full autoequivalence group $\mathrm{Aut}(D^b\mathsf{coh}\,\mathbb{X})$ is a semi-direct product of $\cb_3$ and the geometric auto-equivalences
\be\label{iiiiggsa}
1\to \mathsf{Pic}(\mathbb{X})^0\ltimes \mathrm{Aut}(\mathbb{X})\to
\mathrm{Aut}(D^b\mathsf{coh}\,\mathbb{X})\to \cb_3\to 1,
\ee 
where $\mathsf{Pic}(\mathbb{X})^0$ is the group of the degree zero line bundles (acting by tensor product $X\mapsto X\otimes \cl$) and $\mathrm{Aut}(\mathbb{X})$ is the group of geometric automorphisms of $\mathbb{X}$ which is (roughly) the group of permutation of the special points having the same weight $p_i$.

One has
\be
SL(2,\Z)=\cb_3/Z(\cb_3)^2
\ee
and this is the quotient of $\mathrm{Aut}(D^b \mathsf{coh}\,\mathbb{X})$ which acts effectively on the $SU(2)$ electromagnetic charges\footnote{\ The $SU(2)$ electric charge $e$ of a coherent sheaf $\ce\in \mathsf{coh}\,\mathbb{X}$ is its degree normalized so that a simple in the tube of largest period has degree $1$ (that is, the generic skyscraper has degree $p$); the $SU(2)$ magnetic charge $m$ of $\ce$ is its rank as a sheaf.}   $(e,m)$,
\be\label{ssssllzzz}
\begin{pmatrix}e\\ m\end{pmatrix}
\longmapsto \begin{pmatrix} a & b\\
c& d\end{pmatrix}\begin{pmatrix}e\\ m\end{pmatrix},\qquad
\begin{pmatrix} a & b\\
c& d\end{pmatrix}\in SL(2,\Z), 
\ee and which is the $S$-duality group in the strict sense of the word. Writing eqn.\eqref{ssssllzzz} we mean that, for all $M\in SL(2,\Z)$, we may find (non uniquely) an auto-equivalence $\sigma_M\in \mathrm{Aut}(D^b\,\mathsf{coh}\, \mathbb{X})$ such that, for all derived coherent sheaves $X$,
\be
\begin{pmatrix}
\mathrm{deg}\,\sigma_M(X)\\
\mathrm{rank}\,\sigma_M(X)
\end{pmatrix}= M 
\begin{pmatrix}
\mathrm{deg}\,X\\
\mathrm{rank}\,X
\end{pmatrix}.
\ee

The subgroup which acts trivially on the other conserved charges (flavor as well as internal electromagnetic charges of the $D_{p_i}$ matter systems)
is the principal congruence subgroup $\Gamma(p)\subset SL(2,\Z)$ \cite{shepard}. Then we have an effective action on the matter charges of the (finite) quotient group $SL(2,\Z/p\Z)$.

\subsubsection{General gauge group $G$}
We consider the
cluster (resp.\! root) category
\be
\mathscr{C}(G,\mathsf{coh}\,\mathbb{X}), \qquad \Big(\text{resp. } \mathscr{R}(G,\mathsf{coh}\,\mathbb{X})\Big),
\ee
where $\mathbb{X}$ is a weighted projective line of tubular type.
The corresponding 4d $\cn=2$ theories are the $D^{(1,1)}_4(G)$, $E^{(1,1)}_r(G)$ SCFTs of ref.\!\cite{DelZotto:2015rca}. 
They have the physical interpretation of SYM with gauge group $G$ coupled to four or three ``matter'' SCFTs of type $D_{p_i}(G)$ such that the $G$ Yang-Mills coupling $g_\text{YM}$ is exactly marginal.
\medskip

$\mathscr{C}(G,\mathsf{coh}\,\mathbb{X})$ has at least two kinds of auto-equivalences:
\begin{itemize}
\item[A)] the ones inherited by $\mathrm{Aut}(D^b\mathsf{coh}\,\mathbb{X})$, eqn.\eqref{iiiiggsa}.
In particular, all DZVX models have a $SL(2,\Z)$ group of dualities generated by 
the two induced functors 
(cfr.\! eqn.\eqref{indueeqac}),
\be\label{oooiqa75}
\mathscr{L}_\co\quad \text{and}\quad \mathscr{L}_{\cs_{i,1}},
\ee
which acts diagonally on all  the electric/magnetic charges of the gauge group $G$
\be
\begin{pmatrix}
\mathrm{deg}(1\otimes\sigma_M\cdot \cx)(i)\\
\mathrm{rank}(1\otimes\sigma_M\cdot\cx)(i)
\end{pmatrix}= M\!
\begin{pmatrix}
\mathrm{deg}\,\cx(i)\\
\mathrm{rank}\,\cx(i)
\end{pmatrix},\qquad M=
\begin{pmatrix}
a& b\\ c & d
\end{pmatrix}\in SL(2,\Z).
\ee
 This is the duality predicted in \cite{DelZotto:2015rca}. Note that the group of induced auto-equivalences $\simeq\mathrm{Aut}(D^b\mathsf{coh}\,\mathbb{X})$, eqn.\eqref{iiiiggsa}, is strictly larger than the expected modular group $SL(2,\Z)$;
\item[B)] for particular pairs (affine star, simply-laced Lie algebra) there is a further enhancement of the $S$-duality group due to the presence of spherical objects in $\mathscr{C}(G,\ch)$ 
(equivalently, of spherical orbits in $\mathscr{R}(G,\ch)$).
\end{itemize}

As in section \ref{produalities}, the spherical objects (orbits) which are easy to 
detect are the diadic ones (but, of course, there may  also be non diadic spherical orbits harder to find).
Again, tensor products of spherical orbits with periods which are not coprime, yield spherical orbits. In the present case we get two different kinds of such orbits (and associated telescopic auto-equivalences):
\begin{itemize}
\item[a)] for each period $p_a\in \boldsymbol{p}$ such that $\gcd(p_a,\tilde h(G))>1$ and each spherical (half)orbit $\{\tau^k X\}\in\mathscr{R}(G)$ we have the spherical (half)orbit $T^k(X\otimes\cs_{j,a})$, $j\in \Z/\!\gcd(p_a,\tilde h(G))\Z$, where $\cs_{j,a}$ are the simple sheaves with support at the $a$--th exceptional point of $\mathbb{X}$. This class of orbits exists independently of the fact that the $G$ Yang-Mills coupling is marginal, and correspond to $S$-dualities of the individual matter constituent of type $D_{p_a}(G)$ which extend to dualities of the fully coupled QFT, as described in section \ref{subdualities};
\item[b)] if the overall Yang-Mills beta-function vanish, i.e.\! for $\mathbb{X}$ of tubular type, we typically have additional spherical orbits (and dualities) of a more interesting kind since they correspond to an \emph{unexpected} enhancement of the $S$-duality group.
If $\gcd(p,\tilde h(G))>1$ we have 
telescopic functors of the form
\be
L_{X\otimes\tau^j \co}\qquad j\in\Z/\!\gcd(p, \tilde h(G))\Z,
\ee
which are additional dualities of the fully coupled theory which do not arise from the dualities of the single constituents.
\end{itemize} 

The telescopic auto-equivalences in items a), b) satisfy a number of braiding relations. If $\gcd(p,\tilde h(G))>2$ we have the universal relations of third order as discussed in section \ref{produalities}.
In addition, we have the model dependent braiding relations
similar to the ones encountered in section \ref{produalities}.
Finally, we have relations between the telescopic auto-equivalences and the induced
auto-equivalences,
in particular the ones induced by the telescopic functors of $D^b\mathsf{coh}\,\mathbb{X}$, eqn.\eqref{oooiqa75}.
 
In the next section we shall discuss a few interesting examples and write down the duality group in detail.

\begin{rem}\label{ttrema}
As discussed around eqn.\eqref{oooqe}, sometimes we have more than one convenient way to write the cluster category of a $\cn=2$ QFT. Some dualities may be manifest in one realization, and other dualities in a different realization. Of course, the physical $S$-duality group should contain all such dualities, and the comparison between different realizations  may help (in some instance) to detect dualities which otherwise would go unnoticed. For instance, we have
\be
E^{(1,1)}_6(A_2)\simeq (D_4,D_4)\simeq E^{(1,1)}_6(A_2)^\prime,
\ee
where $E^{(1,1)}_6(A_2)^\prime$ stands for a second \emph{inequivalent} identification of the $(D_4,D_4)$
SCFT with the $E^{(1,1)}_6(A_2)$ theory obtained by interchanging the role of the two $D_4$'s in the identification.
In facts, using the automorphisms of $D_4$, we get two other inequivalent ``$E^{(1,1)}_6(A_2)$ structures'', 
$E^{(1,1)}_6(A_2)^{\prime\prime}$ and  $E^{(1,1)}_6(A_2)^{\prime\prime\prime}$.
Each identification induces its own auto-equivalences 
\be\{\mathscr{L}_\co, \mathscr{L}_{\cs_{i,1}}\},\ \{\mathscr{L}^\prime_\co, \mathscr{L}^\prime_{\cs_{i,1}}\},\ 
\{\mathscr{L}^{\prime\prime}_\co, \mathscr{L}^{\prime\prime}_{\cs_{i,1}}\},\ \text{and }
\{\mathscr{L}^{\prime\prime\prime}_\co, \mathscr{L}^{\prime\prime\prime}_{\cs_{i,1}}\}
\ee 
all of which belong to the $S$-duality group of the theory.
The corresponding argument says that the $E_6^{(1,1)}(D_4)$ SCFT has 10 inequivalent  ``$E_6^{(1,1)}(D_4)$ structures'', each of which induces a distinct pair
$\mathscr{L}_\co$, $\mathscr{L}_{\cs_{i,1}}$, each pair generating a $SL(2,\Z)$ subgroup of the $S$-duality group (and also a copy of the finite group
$\mathsf{Pic}(\mathbb{X})^0\ltimes \mathrm{Aut}(\mathbb{X})$, cfr.\! 
eqn.\eqref{iiiiggsa}). We shall discuss this more in detail in the next section.
\end{rem}


\section{The  sequence of ``cubic'' $\cn=2$ SCFTs}

We have seen above that for certain special models we have an enhancement of the $S$-duality group which becomes much larger than usual. This enhancement is maximized for a family of $\cn=2$ models which we call the ``cubic'' sequence.\medskip 

We consider the 4d $\cn=2$ SCFTs whose 2d correspondent is a Landau-Ginzburg with a cubic superpotential in $n$ chiral superfields
\be\label{llllq12}
W=\sum_{i=1}^n X_i^3+\sum_{i,j,k\atop\text{all distinct}} a_{ijk}\,X_iX_jX_k+\text{lower order}
\ee
One has $\hat c= n/3$, so the condition $\hat c<2$ yields $n\leq 5$. The five SCFTs in the cubic sequence are listed in table \ref{pqpqrt}. The coupling constants $a_{ijk}$ correspond to exactly marginal deformations of the 4d SCFT
\be
\#\big(\text{marginal deformations}\big)={n \choose 3}.
\ee
The 4d model has flavor symmetry $F$ if and only if $n$ is even. In this case
\be
\mathrm{rank}\,F={n\choose n/2}.
\ee
The last two models in the sequence have a weakly-coupled Lagrangian formulation: they are the generalized quiver gauge theories in figures \ref{ggg123}, \ref{ggg124}.

\begin{table}
\begin{center}
\begin{tabular}{lcllccc}\hline\hline
$n$ & ${n\choose 3}$ & SCFT & a.k.a. &$\mathscr{L}$? & $G$ & $F$\\\hline
1 & 0 & Argyres-Douglas $A_2$ &$(A_2,A_1)$&&& \\
2 & 0 & Argyres-Douglas $D_4$ & $(A_2,A_2)$&&&$SU(3)$\\
3 & 1
& $E^{(1,1)}_6$ i.e.  $E^{(1,1)}_6(A_1)$& $(D_4,A_2)$&&$SU(2)$&\\
4 & 4 
& $E_6^{(1,1)}(A_2)$ & $(D_4,D_4)$ &$\checkmark$ & $SU(3)\times SU(2)^3$ & $U(1)^6$ \\
5 & 10 & $E^{(1,1)}_6(D_4)$ &&
$\checkmark$ &$SO(8)\times SO(5)^3\times SO(3)^6$&\\\hline\hline
\end{tabular}
\caption{\label{pqpqrt} The cubic $\cn=2$ SCFTs. The symbol $\checkmark$ means that the SCFT has a weakly coupled Lagrangian formulation. $G$ is the Yang-Mills gauge group whose coupling constants are exactly marginal. $F$ is the flavor group.}
\end{center}
\end{table}

The $n$-th model in the sequence has a root category
with $3^{n-1}$ obvious spherical orbits
which lead to $3^{n-1}$ telescopic functors. In addition
we have
\be
2{n\choose 3}
\ee
induced auto-equivalences of the
 form $\mathscr{L}_A$, see \textbf{Remark \ref{ttrema}}. 
These generators are related by several generalized braiding relations.

With some abuse of notation, we write the $3^{n-1}$ telescopic functors in the form
\be\label{unnsqw31}
L_{\tau^{k_1}S_1\otimes \tau^{k_2}S_1\otimes\cdots\otimes \tau^{k_{n-1}}S_1\otimes S_1},\qquad k_i\in \Z/3\Z.
\ee
The precise meaning of the abusive notation is that the 
matrices giving the action of the telescopic functors in the Grothendieck group of the root category have the form
\be\label{jjjkkasw}
\boldsymbol{L}_{\tau^{k_1}S_1\otimes \tau^{k_2}S_1\otimes\cdots\otimes \tau^{k_{n-1}}S_1\otimes S_1}=\boldsymbol{1}-\sum_{k=1}^3\left(\bigotimes_{s=1}^n \boldsymbol{H}_s^{k+k_s}\boldsymbol{a}_s\right)\!\otimes\!\left(\bigotimes_{s=1}^n\boldsymbol{a}_s^t\boldsymbol{E}_s\boldsymbol{H}_s^{-k-k_s}\right),
\ee
where $\boldsymbol{H}_s$, $\boldsymbol{E}_s$, and $\boldsymbol{a}_s$ are $n$ copies of (respectively) the monodromy, Euler matrix, and simple charge vector for the (2,2) minimal model of type $A_2$. 
In the \textsc{rhs} of \eqref{jjjkkasw} we set $k_n=0$ as a convention. To simplify the notation, we write $\boldsymbol{L}(k_i)$ for the matrix in eqn.\eqref{jjjkkasw}.
In addition, for each triplet of integers $\{i,j,k\}\subset \{1,2,\dots,n\}$ we have the two induced functors of the several ``$E^{(1,1)}_6(G)$ structures''
see \textbf{Remark \ref{ttrema}}.
Using \textbf{Remark \ref{tttuuuuura}},
we may equivalently write  \emph{three} induced functors per triplet; their Grothedieck  matrix
takes the form ($\alpha=1,2,3$)
\begin{gather}
\boldsymbol{L}_{ijk}(\alpha) 
=\boldsymbol{1}-\sum_{k=1}^3\! \left(\bigotimes_{s=1}^n \boldsymbol{H}_s^{k+\alpha\delta_{s,i}}\right)\!\boldsymbol{K}_{ijk}\!
\left(\bigotimes_{s=1}^n \boldsymbol{H}_s^{-k-\alpha\delta_{s,i}}\right)\\
\boldsymbol{K}_{ijk}=
\boldsymbol{1}\otimes\cdots \otimes \boldsymbol{K}_i\otimes \cdots\otimes \boldsymbol{K}_j\otimes \cdots\otimes \boldsymbol{K}_k\otimes \cdots\otimes \boldsymbol{1}\\
\boldsymbol{K}_s= \boldsymbol{a}_s\otimes \boldsymbol{a}_s^t \boldsymbol{E}_s
\end{gather}
A moment thought suggests that this is not yet the full story.
Let $\Xi\subset \{1,2,\dots,n\}$ be a \emph{non-empty} subset and write
\be\label{tt1}
\boldsymbol{K}_\Xi=\bigotimes_{s=1}^n \boldsymbol{K}_\Xi(s),\qquad 
\boldsymbol{K}_\Xi(s)=\begin{cases}
\boldsymbol{K}_s & \text{if }s\in\Xi,\\
\boldsymbol{1}_s &\text{otherwise}.
\end{cases}
\ee
and
\be\label{tt2}
\boldsymbol{L}_\Xi\Big(\{k_s\}_{s\in\Xi}\Big)=\sum_{k=1}^3\! \left(\bigotimes_{s=1}^n \boldsymbol{H}_s^{k+k_s}\right)\!
\!\left(\frac{\boldsymbol{1}}{3}+(-1)^{|\Xi|}\boldsymbol{K}_\Xi\right)\!\!
\left(\bigotimes_{s=1}^n \boldsymbol{H}_s^{-k-k_s}\right)
\ee
Note that the \textsc{rhs} depends on the $k_s$ mod 
$k_s\to k_s+k$, so that for each
$\Xi$ we get $3^{|\Xi|-1}$ generators. In total
\be
\frac{1}{3}\sum_{\emptyset\neq\Xi\subset\{1,\dots,n\}} 3^{|\Xi|}=
\frac{1}{3}\left(\sum_{\alpha_i\in \{0,1\}^n}3^{\sum_i \alpha_i}-1
\right)=\frac{4^n-1}{3}=\sum_{k=1}^n{n\choose k}3^{k-1}.
\ee
For instance, for $n=5$ we have 341 braid generators. Of course, the same matrix group is generated by
one generator $\boldsymbol{L}_\Xi$ per subset $\Xi$
together with the $n$ matrices
$\boldsymbol{H}_s$; since $\boldsymbol{H}_s^{-1}=\boldsymbol{L}_{\{s\}}$, we remain with only $2^n-1$
generators, but in this economical presentation the generalized braid group structure is less manifest.

\begin{figure}
$$
\xymatrix{ &&\text{\fbox{1}}\ar@{-}[d]\\
&&*++[o][F-]{2}
\ar@{-}[d]\\
\text{\fbox{1}}\ar@{-}[r] &*++[o][F-]{2}\ar@{-}[r]&*++[o][F-]{3}\ar@{-}[r]& *++[o][F-]{2}\ar@{-}[r]&\text{\fbox{1}}}
$$
\caption{\label{ggg123}The gauge theory $E^{(1,1)}(A_2)$. A circled $K$
stands for a $SU(K)$ gauge group, \fbox{$f$} for $f$ fundamentals, and edges for bi-fundamental hypermultiplets.
}
\end{figure}
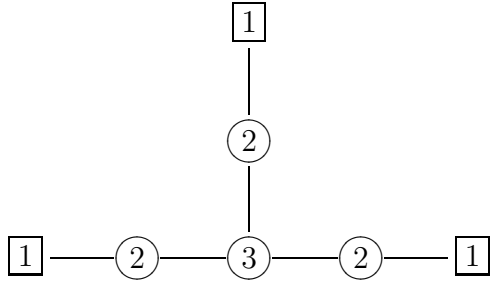


\begin{figure}
$$
\xymatrix{ &*++[o][F=]{3}\ar@{-}[r]&\bullet\ar@{-}[d]\ar@{-}[r]&*++[o][F=]{3}\\
*++[o][F=]{3}\ar@{-}[d]&&*++[o][F=]{5}
\ar@{~}[d]&&*++[o][F=]{3}\ar@{-}[d]\\
\bullet\ar@{-}[r] &*++[o][F=]{5}\ar@{~}[r]&*++[o][F=]{8}\ar@{~}[r]& *++[o][F=]{5}\ar@{-}[r]&\bullet\\
*++[o][F=]{3}\ar@{-}[u]&&&&*++[o][F=]{3}\ar@{-}[u]}
$$
\caption{\label{ggg124}The gauge theory $E^{(1,1)}(D_4)$. A doubly-circled $K$
stands for a $SO(K)$ gauge group, a curly edge for a $\tfrac{1}{2}(\boldsymbol{8},\boldsymbol{4})$ half bi-spinor,
and a $\bullet$ for a $\tfrac{1}{2}(\boldsymbol{4},\boldsymbol{2},\boldsymbol{2})$ half tri-spinor. 
}
\end{figure}
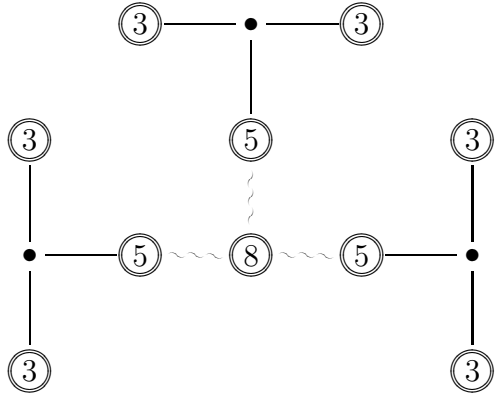

We also have a ``geometric''
$\mathfrak{S}_n$ auto-equivalence which generalizes the auto-equivalence $\mathrm{Aut}(\mathbb{X})\equiv\mathfrak{S}_3$ for the tubular projective line of weights $(3,3,3)$ (which describes the $n=3$ cubic SCFT). Up to the action of $\mathfrak{S}_n$
we may reduce to one
subset per cardinality, say
to the sub-sets $[k]\equiv\{1,\dots,k\}\subset \{1,\dots,n\}$.
Then
\be
\boldsymbol{L}_{[k]}= \boldsymbol{M}_{[k]}\otimes \overbrace{\boldsymbol{1}\otimes \cdots\otimes\boldsymbol{1}}^{n-k\ \text{factors}}
\ee
where $\boldsymbol{1}$ stands for the $2\times 2$ identity matrix and $\boldsymbol{M}_{[k]}$ is the
$2^k\times 2^k$ matrix such that
\be
\begin{aligned}
\big(\boldsymbol{M}_{[k]}\big)_{1,i}&=\big(v_{[k]}\big)_i\\
\big(\boldsymbol{M}_{[k]}\big)_{i,1}&=-1\\
\big(\boldsymbol{M}_{[k]}\big)_{i,j}&=\delta_{i,j}-\delta_{i,2^k}\,\delta_{j,2^k}\qquad\text{for } i,j\neq 1
\end{aligned}
\ee
with $v_{[k]}$ the vector in $ \Z^{2^k}$
defined by the recursion
\be
\left\{\begin{aligned}
& v_{[0]}=-1\\
&v_{[k+1]}=\big(v_{[k]},-v_{[k]}\big).
\end{aligned}\right.
\ee
\be
(1-\boldsymbol{L}_{[k]})(\boldsymbol{L}_{[k]}+\boldsymbol{H}_{[k]}^{-1})=0.
\ee
The minimal equations for the
matrices $\boldsymbol{L}_\Xi$ are written in table \ref{minhhh}. We stress that these matrices have no direct bearing on the action of the dualities on the charges. However the braiding relation between the $\boldsymbol{L}_\Xi$ are mapped into braiding relations between the corresponding Thomas-Seidel twists $T_\Xi$ for the cluster category. This yields a presentation of the subgroup $\mathrm{Tel}(\mathscr{C})$ of the $S$-duality group.

\begin{table}
$$\begin{array}{c}\hline\hline
1+\boldsymbol{L}_{\{s\}}+\boldsymbol{L}_{\{s\}}^2=0\\
(1-\boldsymbol{L}_{\{s,t\}}+\boldsymbol{L}_{\{s,t\}}^2)(\boldsymbol{L}_{\{s,t\}}^2-1)=0\\ 
\big(\boldsymbol{L}_{\{s,t,u\}}^3-1\big)\big(\boldsymbol{L}_{\{s,t,u\}}-1\big)=0 \\
 (1-\boldsymbol{L}_{\{s,t\}}+\boldsymbol{L}_{\{s,t,u,v\}}^2)(\boldsymbol{L}_{\{s,t,u,v\}}^2-1)=0\\
\big(\boldsymbol{L}_{\{1,2,3,4,5\}}^3-1\big)\big(\boldsymbol{L}_{\{1,2,3,4,5\}}-1\big)=0\\\hline\hline
\end{array}$$
\caption{\label{minhhh} Minimal equations for the matrices $\boldsymbol{L}_\Xi$.}
\end{table}

The generalized braid groups have two kinds of relations,  \emph{binary} and \emph{ternary}
\cite{stbraid}:
\begin{itemize}
\item binary of order $n$: they involve two letters $s$ and $t$
\be
\overbrace{ststst\cdots}^{n\ \text{factors}}=\overbrace{tststs\cdots}^{n\ \text{factors}} 
\ee
\item ternary of order $n$: they involve three letters $s$, $t$
and $u$
\be
\overbrace{stustu\cdots}^{n\ \text{factors}}=\overbrace{tustus\cdots}^{n\ \text{factors}}
\ee 
which may or not be cyclic 
in the three letters $s$,$t$,$u$.
\end{itemize}

For instance, consider the $n=5$ cubic model and focus on the subgroup of the
$S$-duality group generated by the 81 telescopic functors of 
the form \eqref{unnsqw31}.
A search by Mathematica of generalized braid relations between the corresponding 81
matrices \eqref{jjjkkasw}
produced relations of the following kinds:
\begin{itemize}
\item binary relations of order 2 and 3;
\item cyclic ternary relations of order 4 and 12.
\end{itemize}

\begin{rem}
Pragmatically, the way we determined the duality group used only properties of the hypersurface singularity $W=0$ with  $W$ as in eqn.\eqref{llllq12}. For 4d $\cn=2$ SCFT defined by a quasi-homogeneous hypersurface singularity $W$, we can use the same explicit formulae.
Most such models are already covered by the analysis in the previous sections, but there are some special 4d SCFT defined by a quasi-homogeneous singularity which have no simple description in terms of a category of the form $\vec G(\ch)$. Examples are the model defined by the singularity
\be\label{555tri}
W=X^5+Y^5+Z^5
\ee
with fractional CY dimension $\hat c=9/5$, or some higher Arnold singularities \cite{arnold,DelZotto:2011an}.
The brane category is well known \cite{orlov1}, and we easily write down
the matrices $\boldsymbol{L}_A$ which represent the telescopic auto-equivalences on its Grothendieck group: 
they are written in terms of simple data of the singularity
(Stokes matrix, monodromy, and roots). It is tempting to conjecture that the $S$-duality group is again commensurable to the matrix group constructed out of the geometric data of the singularity.
As an example for $(5,5,5)$ triangle singularity in eqn.\eqref{555tri}, the relevant matrices are given by eqns.\eqref{tt1}\eqref{tt2} $n=3$ and the order of $\boldsymbol{H}_s$ for the $A_1$ minimal model, 3, replaced by the order of the monodromy of the $A_4$ minimal model, $5$.

\end{rem}

\appendix

\section{4d/2d correspondence and 4d chiral operators}\label{4d2dchiral}
%
%
%
%
%

The 4d/2d correspondence \cite{Cecotti:2010fi} states that --- for a certain class of 4d $\cn=2$ models --- the exchange matrices $B_{ij}$ of their quivers arise as the BPS counting matrices of 2d $(1,1)$ models with $\hat c<2$. More precisely, for a $\cn=2$ QFT in this class there is a 2d $(1,1)$ theory with $n$ supersymmetric vacua and $0\leq \hat c <2$ such that
\begin{equation}
 B=S^t-S
\end{equation}
where $S$ is the unipotent integral $tt^*$ Stokes matrix of the 2d model \cite{Cecotti:1992rm}. In a suitable basis the matrix $S$ is upper triangular with 1's along the diagonal, and the off--diagonal (generically) integral entries count the 2d BPS states as in \cite{Cecotti:1992rm}. Quiver mutations correspond to 2d wall--crossing.
The matrix $\boldsymbol{H}=(S^t)^{-1}S$ is the 2d quantum monodromy with eigenvalues
\begin{equation}\label{ssss}
 \Big\{\exp\!\big(2\pi i (q_a-\hat c/2)\big)\ \colon\ q_a\equiv \text{UV $U(1)_R$ charges of 2d chiral primaries}\Big\}.
\end{equation}
In particular, only Stokes matrices such that the eigenvalues of $\boldsymbol{H}$ are roots of unity may correspond to unitary 2d QFT.

It follows from 2d PCT that the set $\{q_a\}$ is symmetric under \begin{equation}\label{rrrr}q_a\longleftrightarrow \hat c-q_a.\end{equation} The 2d theory has always an operator with $q_a=0$, namely the identity, so $\exp(\pm 2\pi i \hat c/2)$ are always eigenvalues of $\boldsymbol{H}$. \emph{A priori} this fixes $\hat c/2$ only mod 1, but since $0\leq \hat c/2 <1$, the value of $\hat c$ is uniquely fixed once we know \emph{which}  
eigenvalue of $\boldsymbol{H}$ is to be identified with $\exp(2\pi i \hat c/2)$.  Only eigenvalues consistent with the symmetry \eqref{rrrr} may be identified with $\hat c$. $\hat c$ is also determined as the fractional CY dimension of the corresponding derived brane category.

4d flavor charges correspond to zero--eigenvectors of $B$, $(S-S^t)\psi=0$. Now
\begin{equation}
 S\psi=S^t\psi \quad\Longleftrightarrow\quad \boldsymbol{H}\psi\equiv (S^t)^{-1}S\psi=\psi,
\end{equation}
so flavor charges correspond to eigenvectors of the 2d quantum monodromy associated to the eigenvalue $+1$, that is,
comparing with eqn.\eqref{ssss}, to 2d chiral primaries of dimension $\hat c/2\mod 1$. Since 2d unitarity imples $q_a\leq \hat c<\hat c/2+1$, the dimension of the 2d `flavor' operators $\co_f$ is precisely $\hat c/2$. The dual parameters $m_f$ in the 2d action
\begin{equation}
 S_0+\sum_f\left(\int d^2z\, d^2\theta\: m_f\,\co_f+\mathrm{H.c.}\right)
\end{equation}
have a 2d $U(1)_R$ charge $1-\hat c/2$. From the 4d viewpoint the $m_f$'s, being dual to conserved flavor charges, have the dimension of masses; so
\begin{equation}
 \text{4d dimension}= \frac{\text{2d $U(1)_R$ charge}}{1-\hat c/2}.
\end{equation}
In particular, the dimensions of the operators parametrizing the Coulomb branch are given by the $k$ numbers
\begin{equation}
\Big\{\Delta_1,\Delta_2,\cdots,\Delta_k\Big\}\equiv \left\{ \frac{1-q_a}{1-\hat c/2}\ \ \text{such that }q_a < \hat c/2\right\}
\end{equation}
which are determined by $\boldsymbol{H}$ and the identification of which eigenvalue is identified with $\exp(2\pi i \hat c/2)$, up to a few mod 1 ambiguities.
Note that for an interacting theory $\Delta_\ell >1$, as required by 4d unitarity.
Since the minimal $q_a$ is always zero, the \textit{largest} dimension of a Coulomb branch operator is given by
\begin{equation}\label{pppa}
 \Delta_k= \frac{1}{1-\hat c/2}.
\end{equation}

\section{Some homological results}\label{homapp}

\begin{prop} 
The category $\vec G(\mathcal H)$ is homologically generated by the diadic objects of the form $S_i\otimes A$ (resp.\! $P_i\otimes A$) where $A\in\ch$ and the $S_i$ are the simples of $\vec G(\mathsf{vect})$ (resp.\! $P_i$ are the projective covers of the $S_i$).
\end{prop}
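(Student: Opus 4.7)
The plan is to prove the statement first for the simples $S_i$ via induction on the size of $\vec G$, then to pass from the $S_i\otimes A$ to the $P_i\otimes A$ by using that $\C\vec G$ is hereditary.

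\medskip

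\textbf{Step 1 (reduction to simples).} Fix $\cx\in\vec G(\ch)$ and induct on $r=r(G)$. Choose a source $i_0$ of $\vec G$ (which exists because $\vec G$ is acyclic). Define a morphism $\pi\colon\cx\to S_{i_0}\otimes\cx(i_0)$ by $\pi_{i_0}=\mathrm{Id}_{\cx(i_0)}$ and $\pi_j=0$ for $j\ne i_0$. Naturality amounts to checking that for every arrow $\psi$ of $\vec G$ the associated square commutes: since $i_0$ is a source, either $s(\psi)\ne i_0$ (so both vertical maps are zero) or $s(\psi)=i_0$ and $t(\psi)\ne i_0$ (so the right vertical map is zero), and the square collapses to $0=0$. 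Thus $\pi$ is a well-defined epimorphism whose kernel $\cx'$ satisfies $\cx'(i_0)=0$; because no arrow has target $i_0$, the remaining structure of $\cx'$ only involves nodes $\ne i_0$, and $\cx'$ is canonically an object of the functor category $\vec G'(\ch)$ associated to the full subquiver $\vec G'=\vec G\setminus\{i_0\}$, extended by zero at $i_0$. The inductive hypothesis applied to $\cx'$ yields a finite filtration with successive quotients of the form $S_j\otimes A_j$ ($j\ne i_0$), which concatenated with the top layer $S_{i_0}\otimes\cx(i_0)$ gives the required filtration of $\cx$.

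\medskip

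\textbf{Step 2 (from simples to projectives).} Since $\C\vec G$ is hereditary, every simple $S_i\in\mathsf{mod}\,\C\vec G$ admits a minimal projective resolution of length at most one,
\begin{equation*}
0\to \bigoplus_{\psi\colon i\to j}P_j\to P_i\to S_i\to 0.
\end{equation*}
The operation $X\mapsto X\otimes A$ is exact in $X$ because at each node it is just a direct sum of $\dim X(-)$ copies of $A$. Applying it to the above resolution yields the short exact sequence
\begin{equation*}
0\to \bigoplus_{\psi\colon i\to j}P_j\otimes A\to P_i\otimes A\to S_i\otimes A\to 0
\end{equation*}
in $\vec G(\ch)$, so each $S_i\otimes A$ is a cone of a morphism between $P_j\otimes A$'s. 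Combined with Step 1 this proves that the $P_i\otimes A$ also homologically generate $\vec G(\ch)$. The variant with injective envelopes $I_i\otimes A$ follows by the dual argument, using the Serre duality $\mathrm{Ext}^1(A,B)\simeq D\,\mathrm{Hom}(B,\tau A)$ assumed for $\ch$ to produce injective copresentations.

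\medskip

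\textbf{Main obstacle.} The delicate point is in Step 1: the kernel $\cx'$ must actually define a functor on the original quiver $\vec G$ (extended by zero at $i_0$), not merely on the subquiver $\vec G'$. This is precisely what forces the induction to start at a \emph{source}; for a sink $i_0$ the surviving arrows $\cx(i)\to\cx(i_0)=0$ would create obstructions incompatible with $\cx'(i)=\cx(i)$. A secondary technical issue arises in Step 2 when $\ch$ lacks projectives (as for $\ch=\mathsf{coh}\,\mathbb{X}(\boldsymbol{p})$ or $\ch=C_p$); in that case ``homologically generates'' must be read in the sense of generating $D^b\vec G(\ch)$ as a triangulated category, and one must rely solely on the exactness of $-\otimes A$ and on the Yoneda adjunction $\mathrm{Hom}_{\vec G(\ch)}(P_i\otimes A,\cx)\simeq\mathrm{Hom}_\ch(A,\cx(i))$, which hold whether or not $\ch$ has enough projectives internally.
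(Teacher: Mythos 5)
Your proof is correct and follows essentially the same route as the paper's: a one-node-at-a-time filtration of $\cx$ with diadic simple subquotients, followed by tensoring the standard hereditary projective resolution of $S_i$ with $A$. The only difference is cosmetic — you peel off a \emph{source} as a quotient $\cx\twoheadrightarrow S_{i_0}\otimes\cx(i_0)$, whereas the paper orders the nodes linearly and peels off a \emph{sink} as a subobject $S_i\otimes\cx(i)\hookrightarrow\cx_{[i]}$ — and your closing remarks on why a source is needed for your variant and on reading ``homologically generates'' at the derived level when $\ch$ lacks projectives are both sound.
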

Before giving the proof, we define the concept of \emph{linear orientation} of a Dynkin quiver:
\begin{defn}
A Dynkin quiver with $n$ vertices numbered as $1,2,\dots, n$ is \emph{linearly oriented} iff there exists only one source, labelled by ``1'', and for all arrows $i \to j$, one has $j>i$.
\end{defn}
\begin{proof}[Proof of the proposition]
Let $\vec G$ be a \emph{linearly oriented} Dynkin quiver. 
Let $\vec G_{[i]}$ be the full subquiver of $\vec G$ over the nodes $\{1,2,....,i\}$. The node $i+1$ is connected to $\vec G_{[i]}$ by a single arrow $\psi_{i+1}$ with target $i+1$.
Let $\cx \in \vec G(\mathcal H)$ be a functor;
we write $\cx_{[i]}$ for its 
restriction to the full $\C$-linear subcategory $\vec G_{[i]}\subset \vec G$. Clearly, we may see $\cx_{[i]}$ as an object of $\vec G(\ch)$  extending it by zero. In the Abelian category $\vec G(\ch)$ we have the exact sequence
\be
0\to S_i\otimes \cx(i)\to \cx_{[i]}\to\cx_{[i-1]}\to 0.
\ee
Since $\cx_{[1]}=\cx(1)\otimes S_1$, and $\cx\equiv\cx_{[n]}$, by recursion on $i$ we get the first claim. The second one follows from the exact sequence
\be
0\to \bigoplus_{j\in J(i)}P_j\otimes A\to P_i\otimes A \to S_i\otimes A\to 0.
\ee
\end{proof}

It is clear that
\begin{equation}
\mathrm{Hom}_{\vec G(\mathcal H)}(X_1 \otimes A_1,X_2 \otimes A_2)\simeq \mathrm{Hom}_{\vec G(\mathsf{vect})}(X_1,X_2) \otimes \mathrm{Hom}_{\ch}(A_1,A_2).
\label{eq:homhom}
\end{equation}
More generally, for diadic functors one has the ``Kunneth formula''
\begin{prop}
For $A,B \in \mathcal H$ and $X,Y \in \vec G(\mathsf{vect})$, one has
\begin{equation}
\mathrm{Ext}^k_{\vec G(\ch)}(X \otimes A,Y \otimes B)=\bigoplus_{i+j=k}\mathrm{Ext}^i_{\vec G(\mathsf{vect})}(X,Y)\otimes \mathrm{Ext}^j_{\ch}(A,B),\ \ \forall k
\label{eq:exts}
\end{equation}
In particular, the global dimension of $\vec G(\ch)$ is one more the global dimension of $\ch$. Since we assume $\ch$ to be hereditary, $\mathrm{gl.dim.}\,\vec G(\ch)=2$. The Serre functor
of $\vec G(\ch)$ is the product of the Serre functors for $\vec G(\mathsf{vect})$ and $\ch$.
\be
S=\tau[1]\otimes \tau[1]\equiv \tau\otimes\tau [2].
\ee
\end{prop}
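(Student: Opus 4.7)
The plan is to exploit the hereditary structure of both $\vec G(\mathsf{vect})$ and $\ch$ via a ``transfer of projectives'' adjunction which identifies the diadic objects $P_i\otimes A$ as the right relative-projective building blocks in $\vec G(\ch)$. First I would establish the base case $X=P_i$: the evaluation functor $\mathrm{ev}_i\colon\vec G(\ch)\to\ch$, $\cy\mapsto \cy(i)$, is exact, and its left adjoint is precisely $A\mapsto P_i\otimes A$ (by the standard representability of $\mathrm{ev}_i$ on $\vec G(\mathsf{vect})$). Consequently $\mathrm{Hom}_{\vec G(\ch)}(P_i\otimes-,-)$ factors as an exact functor composed with $\mathrm{Hom}_{\ch}(-,-)$, so
\[
\mathrm{Ext}^k_{\vec G(\ch)}(P_i\otimes A,\,Y\otimes B)\simeq \mathrm{Ext}^k_{\ch}(A,(Y\otimes B)(i))\simeq Y(i)\otimes \mathrm{Ext}^k_\ch(A,B),
\]
which matches the Kunneth right-hand side, since $\mathrm{Ext}^i_{\vec G(\mathsf{vect})}(P_i,Y)=Y(i)\,\delta_{i,0}$.

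For general $X$, I would take a projective resolution $0\to P^1\to P^0\to X\to 0$ in $\vec G(\mathsf{vect})$, available because $\C\vec G$ is hereditary. Since tensoring a finite-dimensional vector space with $A$ is exact, the sequence $0\to P^1\otimes A\to P^0\otimes A\to X\otimes A\to 0$ remains exact in $\vec G(\ch)$. Applying $\mathrm{Hom}_{\vec G(\ch)}(-,Y\otimes B)$ and using the base case term-by-term gives a two-row double complex whose rows are the exact sequences computing $\mathrm{Ext}^\bullet_{\vec G(\mathsf{vect})}(X,Y)$ tensored with $\mathrm{Ext}^k_\ch(A,B)$. Because the resolution has length $\leq 1$, the Grothendieck spectral sequence degenerates at $E_2$, giving
\[
\mathrm{Ext}^k_{\vec G(\ch)}(X\otimes A,Y\otimes B)=\bigoplus_{i+j=k}\mathrm{Ext}^i_{\vec G(\mathsf{vect})}(X,Y)\otimes \mathrm{Ext}^j_\ch(A,B),
\]
as desired. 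The bound $\mathrm{gl.dim.}\,\vec G(\ch)=2$ follows immediately: each summand requires $i\leq 1$ and $j\leq 1$, so $\mathrm{Ext}^k=0$ for $k\geq 3$, and the bound is attained by picking $X,A$ non-projective. By the homological generation from the previous Proposition (diadics generate $\vec G(\ch)$), these vanishing statements propagate to the whole category.

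For the Serre functor, I would combine the Kunneth formula with Serre duality in each factor ($\mathrm{Hom}_{\vec G(\mathsf{vect})}(X,Y)\simeq D\,\mathrm{Ext}^1(Y,\tau_G X)$ and the analogue in $\ch$) to obtain, on diadics,
\[
\mathrm{Ext}^k_{\vec G(\ch)}(X\otimes A,Y\otimes B)\simeq D\,\mathrm{Ext}^{2-k}_{\vec G(\ch)}(Y\otimes B,\,(\tau_G X)\otimes(\tau_\ch A)),
\]
so that the functor $\tau_G\otimes\tau_\ch\,[2]$ serves as Serre functor on diadic objects. Uniqueness of Serre functors and the generation by diadics then extend this to all of $D^b\vec G(\ch)$, giving $S=\tau\otimes\tau\,[2]$.

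The main obstacle I anticipate is not the vanishing bookkeeping but the verification that the Kunneth isomorphism is \emph{bifunctorial} and compatible with the two separate Serre-duality isomorphisms, so that the resulting natural transformation really exhibits $\tau_G\otimes\tau_\ch[2]$ as a Serre functor (and not merely as a functor producing the correct Hom-spaces on each diadic object individually). Carefully identifying the signs/shifts coming from each factor and checking the hexagon compatibility is the step requiring genuine care; once done, everything else is formal homological algebra.
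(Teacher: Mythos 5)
Your proof is correct, but it takes a genuinely different route from the paper's. The paper argues by induction on the lengths of \emph{both} $X$ and $Y$: the formula is checked for simples and then propagated through extensions by simples. You instead anchor the computation at the projectives, using the adjunction $\mathrm{Hom}_{\vec G(\ch)}(P_i\otimes A,\cy)\simeq\mathrm{Hom}_\ch(A,\cy(i))$ together with exactness of the evaluation functor to get the base case for arbitrary $Y\otimes B$ in one stroke, and then resolve only $X$ by a length-one projective resolution; the long exact sequence splits into the two Kunneth summands because everything is a vector space. This buys you two things the paper's sketch does not make explicit: (i) you need a single dévissage (in $X$) rather than a double induction, since the adjunction already handles arbitrary second arguments; and (ii) the isomorphism is manifestly bifunctorial, being induced by an adjunction of exact functors, which is exactly what is needed to upgrade the Hom-space identities to the Serre-functor statement $S=\tau\otimes\tau[2]$ --- a claim the paper asserts without argument and which you correctly flag as the step requiring care. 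Two cosmetic repairs: the index $i$ is overloaded in $\mathrm{Ext}^i(P_i,Y)=Y(i)\delta_{i,0}$ (you mean $\mathrm{Ext}^j(P_i,Y)=Y(i)\delta_{j,0}$), and since several of the relevant $\ch$ (e.g.\ $C_p$, $\mathsf{coh}\,\mathbb{X}(\boldsymbol{p})$) lack enough injectives, the derived-functor/Grothendieck spectral sequence language should be replaced by Yoneda $\mathrm{Ext}$ or by the induced adjunction on bounded derived categories, where $\mathrm{Ext}^k=\mathrm{Hom}(-,-[k])$; the argument goes through verbatim.
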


The proof proceeds by induction on the lengths of $X$ and $Y$. The expression is true for simples; then one shows that if it is true for $X$ (resp.\! $Y$) it is true also for an extension of $X$ (resp.\! $Y$) by a simple.

One has a simple formula for the Euler form
\be
\chi_{\vec G(\ch)}(X, Y ) :=\sum_{ s
=0}^2
(-1)^k \mathrm{Ext}^k_{\vec G(\ch)}(X, Y )
\ee
which depends only on the Grothedieck classes $[X]$, $[Y ]$ of $X$, $Y$. One has
\be
\chi_{\vec G(\mathcal H )}(X_1 \otimes A_1; X_2 \otimes A_2) = \chi_{\vec G(\mathsf{vect})} (X_1, X_2) \chi_{\ch}(A_1, A_2).\ee

\subsection{Embedding of autoequivalences}

\begin{prop}
$J_R:\mathcal H \to \mathsf{Funct}(\vec G, \mathcal H), \ A \mapsto R \otimes A$, with $R\in\vec G(\mathsf{vect})$ a rigid brick, is a full exact embedding
whose image is closed under extension.
\end{prop}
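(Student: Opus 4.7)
The plan is to deduce all four assertions — faithfulness, fullness, exactness, and closure under extensions in $\vec G(\ch)$ — directly from the Künneth formula \eqref{eq:exts} applied with $X=Y=R$, using only the two defining properties of a rigid brick: $\mathrm{End}_{\vec G(\mathsf{vect})}(R)=\C$ and $\mathrm{Ext}^1_{\vec G(\mathsf{vect})}(R,R)=0$.

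First, specializing \eqref{eq:exts} at $k=0$ gives
\be
\mathrm{Hom}_{\vec G(\ch)}(R\otimes A,\,R\otimes B)\;\simeq\; \mathrm{End}_{\vec G(\mathsf{vect})}(R)\otimes \mathrm{Hom}_\ch(A,B)\;\simeq\;\mathrm{Hom}_\ch(A,B),
\ee
and the isomorphism is induced by $J_R$ itself (on the nose it sends $f\colon A\to B$ to $\mathrm{Id}_R\otimes f$). This simultaneously proves full and faithful. Exactness of $J_R$ is immediate: for a short exact sequence $0\to A\to B\to C\to 0$ in $\ch$, the evaluation of $J_R$ at a node $i\in\vec G$ is the $r(i)$-fold direct sum $0\to A^{r(i)}\to B^{r(i)}\to C^{r(i)}\to 0$ (with $r(i)=\dim R(i)$), which is exact, and exactness in $\vec G(\ch)$ can be checked pointwise.

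The substantive point is closure under extensions. I would take a short exact sequence
\be
0\to R\otimes A\to X\to R\otimes B\to 0
\ee
in $\vec G(\ch)$ and show $X\simeq R\otimes C$ for some $C\in\ch$. Applying \eqref{eq:exts} at $k=1$ and using that $R$ is a rigid brick,
\be
\mathrm{Ext}^1_{\vec G(\ch)}(R\otimes B,R\otimes A)\simeq \mathrm{End}(R)\otimes\mathrm{Ext}^1_\ch(B,A)\;\oplus\;\mathrm{Ext}^1(R,R)\otimes\mathrm{Hom}_\ch(B,A)\simeq \mathrm{Ext}^1_\ch(B,A).
\ee
The key step is to identify this abstract isomorphism with the map induced by $J_R$: by naturality of Künneth with respect to long-exact sequences in both variables (which is the content of the inductive proof of \eqref{eq:exts} sketched after its statement), the natural map $\mathrm{Ext}^1_\ch(B,A)\to \mathrm{Ext}^1_{\vec G(\ch)}(R\otimes B,R\otimes A)$ sending an extension class $\xi\colon 0\to A\to C\to B\to 0$ to the class of $0\to R\otimes A\to R\otimes C\to R\otimes B\to 0$ coincides with the Künneth isomorphism. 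Surjectivity of this map then gives the required $C$.

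The only non-trivial step is the last identification, i.e.\! checking that the Künneth isomorphism is realized concretely by $J_R$; the rest is bookkeeping. I would verify it by tracing through Yoneda representatives: pick an injective resolution $A\to I^\bullet$ in $\ch$, observe that $R\otimes I^\bullet$ computes $\mathrm{Ext}^\bullet_{\vec G(\ch)}(R\otimes B,R\otimes A)$ up to the contribution of a projective resolution of $R$ in $\vec G(\mathsf{vect})$ (whose length-one part vanishes because $\mathrm{Ext}^1(R,R)=0$), so the spectral sequence degenerates into the single line $\mathrm{Hom}(R,R)\otimes\mathrm{Ext}^\bullet_\ch(B,A)=\mathrm{Ext}^\bullet_\ch(B,A)$, and the edge map is precisely $J_R$. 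Once this is in hand, fullness, faithfulness, exactness, and closure under extensions all follow, proving that $J_R$ embeds $\ch$ as a full exact extension-closed subcategory of $\vec G(\ch)$.
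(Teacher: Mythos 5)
Your proof is correct and follows essentially the same route as the paper's: full faithfulness from the degree-zero K\"unneth formula together with $\mathrm{End}(R)=\C$, and closure under extensions from the degree-one K\"unneth formula together with $\mathrm{Ext}^1(R,R)=0$. The only difference is that you explicitly justify the step the paper merely asserts --- that the K\"unneth isomorphism $\mathrm{Ext}^1_\ch(B,A)\simeq\mathrm{Ext}^1_{\vec G(\ch)}(R\otimes B,R\otimes A)$ is realized by $\xi\mapsto \mathrm{Id}_R\otimes\xi$ --- which is a worthwhile addition rather than a deviation.
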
 
\begin{proof}
We have to show that the functor is fully faithful:
\be
\begin{split}
&\mathrm{Hom}_{\vec G(\mathcal H)}(J_R(A),J_R(B)) \cong  \mathrm{Hom}_{\vec G(\mathcal H)}(R\otimes A,R\otimes B ) \cong \\
&\cong\mathrm{Hom}_{\vec G(\mathsf{vect})}(R,R) \otimes \mathrm{Hom}_{\ch}(A,B) \cong \mathrm{Hom}_{\mathcal H}(A,B),
\end{split}\ee
where in the last equality we  used that $R$ is a brick. 
To see that the image is closed under extensions notice that
for a rigid brick $R$
\be
\mathrm{Ext}^1_{\vec G(\mathcal H)}(R \otimes A, R \otimes B)\simeq
\mathrm{Ext}^1_\ch(A,B).
\ee
The elements of the \textsc{rhs} are the classes of exact sequences of the form $0\to B\xrightarrow{\beta} C\xrightarrow{\alpha} A\to 0$,
which correspond in the \textsc{lhs} to the classes of exact sequences of the form   \be
0\to R\otimes B\xrightarrow{\mathrm{Id}_R\otimes \beta} R\otimes C\xrightarrow{\mathrm{Id}_R\otimes \alpha} R\otimes A\to 0.\qedhere\ee
\end{proof}

\end{document}